\DeclareRobustCommand\citepos
\def\NAT@nmfmt##1{{\NAT@up##1's}}%
\let\NAT@ctype\z@\NAT@partrue
\renewcommand*{\eqref}[1]{\hyperref[{#1}]{\textup{\tagform@{\ref*{#1}}}}}
\definecolor{pink}{RGB}{219, 48, 122} 
\def \expandafter \normalsize \expandafter{\normalsize \setlength \abovedisplayskip{8pt plus 2pt minus 5pt}}
\def \expandafter \normalsize \expandafter{\normalsize \setlength \abovedisplayshortskip{0pt plus 2pt}}
\def \expandafter \normalsize \expandafter{\normalsize \setlength \belowdisplayskip{8pt plus 2pt minus 5pt}}
\def \expandafter \normalsize \expandafter{\normalsize \setlength \belowdisplayshortskip{5pt plus 2pt minus 3pt}}
\DeclareMathOperator{\ran}{ran}
\DeclareMathOperator{\spn}{span}
\DeclareMathOperator{\HS}{HS}
\DeclareMathOperator{\sgn}{sgn}
\DeclareMathOperator{\K}{{\mathrm{K}}}
\theoremstyle{plain}
\newtheorem{theorem}{Theorem}
\newtheorem{theo}{Theorem}
\newtheorem{lemma}{Lemma}
\newtheorem{corollary}{Corollary}
\theoremstyle{definition}
\newtheorem{assumption}{Assumption}
\theoremstyle{example}
\newtheorem{example}{Example}
\newtheorem{rmk}{Remark}
\newenvironment{remark}{\begin{rmk}}{\hfill\end{rmk}}
\newtheorem{assumpM}{Assumption}
\newtheorem{assumpMM}{Assumption}
\newtheorem{assumpA}{Assumption}
\newtheorem{assumpB}{Assumption}
\def\dto{\overset{d}\rightarrow}
\def\pto{\overset{p}\rightarrow}
\numberwithin{equation}{section}
\newcommand{\abar}{\upalpha}
\newcommand{\bt}{\textcolor{black}}  
\newcommand{\op}{\mathrm{op}}  
\renewcommand{\hat}{\widehat}
\newcommand{\commRV}[1]{{\leavevmode\color{black}#1}}
\newcommand{\commRVV}[1]{{\leavevmode\color{black}#1}}
\begin{document} 
	
	\title{Functional instrumental variable regression with an application to estimating the impact of immigration on native wages\thanks{The authors express
deep appreciation to the co-editor and three anonymous referees for their invaluable and insightful suggestions. We are also thankful to  Morten {\O}.\ Nielsen  and seminar participants at the University of Sydney, University of Queensland and SETA 2022 for their helpful comments. Data and R code to replicate the empirical results in Table~\ref{tab.emp1} are available on the authors' websites. } }
	\author{Dakyung Seong\textsuperscript{a}\thanks{Corresponding author. Address: School of Economics, University of Sydney, Camperdown, 2006, NSW, Australia.\\ E-mail addresses: \texttt{dakyung.seong@sydney.edu.au (D.\ Seong), \texttt{won-ki.seo@sydney.edu.au (W.-K.\ Seo)}}}
		\qquad\qquad
		Won-Ki Seo\textsuperscript{a}   \\
		\small{\textsuperscript{a} School of Economics, University of Sydney, Australia}
	}
	\date{}
	\maketitle
	\vspace{-2em}
	\begin{abstract}
		Functional linear regression gets its popularity as a statistical tool to study the relationship between function-valued response and exogenous explanatory variables. However, in practice, it is hard to expect that the explanatory variables of interest are perfectly exogenous, due to, for example, the presence of omitted variables and measurement error. Despite its empirical relevance, it was not until recently that this issue of endogeneity was studied in the literature on functional regression, and the development in this direction does not seem to sufficiently meet practitioners' needs;  for example, this issue has been discussed with paying particular attention on consistent estimation and thus distributional properties of the proposed estimators still remain to be further explored. 	To fill this gap, this paper proposes new consistent FPCA-based instrumental variable estimators and  develops their asymptotic properties in detail. 
Simulation experiments under a wide range of settings show that the proposed estimators perform considerably well.  We apply our methodology to estimate the impact of immigration on native wages.	
		
		
		\noindent \textbf{JEL codes}: C13, C36\\
		\noindent \textbf{Keywords}: Functional regression, Instrumental variables, Endogeneity, Regularization
	\end{abstract}
	
	\section{Introduction \label{sec:intro}}
	The recent developments in data collection and storage technologies ignite studies on how to use more complicated observations such as curves, probability density functions, or images. This area of study, commonly called functional data analysis, has become popular in statistics, and researchers in various fields, including economics, have been benefited from advances in this area. In particular, for practitioners who are interested in studying the relationship between two or more such variables, functional linear models are of central importance, and crucial contributions on this topic include \cite{Bosq2000}, \cite*{Yao2005}, \cite{Mas2007}, \cite{Hall2007}, \cite{Park2012}, \cite{crambes2013asymptotics}, \cite*{Benatia2017}, \cite{imaizumi2018}, \cite*{Sun2018}, and \cite*{Chen_et_al_2020} to name only a few. 
	
	The existing statistical approaches for estimating the functional linear model, including those proposed in the aforementioned literature, 
	are mostly established under the assumption that the explanatory variable of interest is exogenous, meaning that it is uncorrelated with the regression error. However, this assumption is not likely to hold in practice;  that is, explanatory variables are often {\it{endogenous}}. The issue of endogeneity is particularly relevant in the context of functional linear models because functional observations used in the analysis are typically constructed by smoothing their discrete, and often sparsely observed, realizations 
(see, e.g., \citealp{Yao2005}). If this being the case, the functional observations may inevitably involve small or large measurement errors, which leads to the violation of the exogeneity condition  at least to some degree (see, e.g., Sections \ref{model} and  \ref{sec:sim2}). This issue may hinder practitioners from applying the functional linear model. 
	
	If the volume of the literature on functional linear regression under the exogeneity condition provides any indication, the developments made so far to deal with endogeneity do not seem to sufficiently meet  practitioners' needs. Although a few papers, such as \cite{Benatia2017} and \cite{Chen_et_al_2020}, study the issue of endogeneity in the functional linear model, not much is known about asymptotic distributions of their estimators and how to implement statistical inference on the parameter of interest;  
		this may limit the practical applicability of the functional endogenous linear model. We will fill this gap to some extent by providing new estimators and inferential methods based on their asymptotic properties. This is a crucial point where the present paper is differentiated from the existing ones concerning the issue of endogeneity in the functional linear model.

Specifically, this paper provides new estimation results for the functional endogenous linear model  based on (i) the functional principal component analysis (FPCA) and (ii) the instrumental variable (IV) approach. The former has been widely adopted by researchers dealing with functional data (see e.g., \citealp{Ramsay2005}; \citealp{shang2014survey}), and the latter has also been widely adopted in order to address endogeneity not only in the conventional Euclidean space setting (see e.g., \citealp{Bekker1994}; \citealp{Chao2005}; \citealp{Newey2009a}), but also in the setting involving functional observations  (see, e.g., \citealp{Carrasco2012}; \citealp{Florence2015}; \citealp{Benatia2017}; \citealp{Chen_et_al_2020};  \citealp{Babii2021}). However, the application of the FPCA to the functional endogenous linear model has not been fully explored.


We consider the case where the response variable $y_t$, explanatory variable $x_t$, and instrumental variable $z_t$ are all function-valued; of course, with slight modifications, our results to be subsequently given can be adjusted for the case where $y_t$ is scalar- or vector-valued.  Unlike in the most of the papers mentioned above, we do not require the variables of interest to be independent and identically distributed (iid), but allow those to exhibit some weak dependence so that our methodology can be applied to various empirical examples. Given that many functional observations considered in the literature for applications in fields of energy, environmental and financial economics tend to involve time dependence, this extension may  be attractive to practitioners. 

 Among the aforementioned papers, the study by \cite{Chen_et_al_2020} is most closely related to the present paper in the sense that they consider FPCA-based consistent estimation of function-on-function regression models with endogeneity introduced by measurement errors. \cite{Benatia2017} earlier considered a similar model and proposed a consistent estimation method, but their theoretical results are obtained from a quite different theoretical methodology (ridge-type regularization). We complement these studies by providing new FPCA-based estimators and in-depth discussion on their asymptotic properties.

Technically, we view the function-valued variables of interest as random variables taking values in a Hilbert space of square-integrable functions, and then propose our FPCA-based functional IV estimator (FIVE).  As is well known in the literature, estimation of a model involving function-valued random variables is not straightforward because some important sample operators, such as the covariance of such a random variable, are not invertible over the entire Hilbert space(s). 	We circumvent this issue by employing a rank-regularized inverse of such an operator,  and this is the point where we make use of the FPCA.   The reason why we focus on this  regularzation scheme comes not from its theoretical superiority, but merely from its popularity in the literature. Other schemes such as ridge-type regularization (e.g., \citealp{Florence2015}; \citealp{Benatia2017}) may be alternatively adopted, and are expected to have their own merits (see Remark~\ref{rem:comb}).
\phantomsection\label{rvlabel01}\commRVV{It is worth summarizing some crucial differences between our estimators and the alternative estimator proposed by \cite{Benatia2017} based on ridge regularization. To the best of our knowledge, there has not been exploration of asymptotic inference on specific characteristics of the regression (coefficient) operator for their estimator; this seems to be because of a nontrivial challenge associated with a particular asymptotic bias (see Section 4 of \citealp{Benatia2017}). In contrast, in this paper, we tackle the issue by employing the FPCA augmented with a proper extension of the asymptotic approach introduced by \cite{Hall2007}. As a result, this paper provides mathematical conditions that support a valid asymptotic inference on the regression operator. Moreover, \citepos{Benatia2017} methodology does not take into account for the more general presence of weak dependence, although we believe their results can be extended to such a setting.}

This paper studies  in depth the asymptotic properties of the proposed estimators. It is first shown that, under some mild conditions on the data generating process of $\{y_t,x_t,z_t\}_{t\geq1}$,  the FIVE achieves the weak (convergence in probability) and strong (almost sure convergence) consistencies as long as the regularization parameter, which is introduced for a rank-regularized inverse of a certain sample operator used to construct the estimator, decays to zero at an appropriate rate. We then establish more detailed asymptotic properties of the FIVE under some nonrestrictive assumptions on the eigenstructure of the cross-covariance operator of the explanatory variable $x_t$ and the IV $z_t$. By doing so, we can see how the cross-covariance structure of $x_t$ and $z_t$ and  the choice of the regularization parameter affect the convergence of the FIVE toward its true counterpart. In addition to these results, we show that 
the FIVE is asymptotically normal in a pointwise sense if it is centered at a certain operator that is slightly biased from the true parameter of interest;  moreover, if certain additional conditions are satisfied, such a bias becomes asymptotically negligible and thus, in this case,  the FIVE centered at the true parameter becomes asymptotically normal.   The asymptotic normality results given in this paper are quite different from similar results given in a finite dimensional setting in the sense that the convergence rate is (i) possibly random and (ii) not uniformly given over the entire Hilbert space on which our estimator is defined. This result implies that the proposed estimator does not weakly converge to any elements in the usual operator topology, which \commRV{generalizes} what  \cite{Mas2007} earlier found in the context of functional autoregressive (AR) models of order 1.  
Based on our study of the FIVE, we also propose a different but closely related estimator, called the functional two-stage least square estimator (F2SLSE) and obtain its asymptotic properties in a similar manner.  We discuss how  our estimators and their asymptotic properties can be used to implement usual statistical inference on the parameter of interest. 

To see how the asymptotic properties of our estimators are revealed in finite samples, we implement Monte Carlo experiments under various simulation designs. The simulation results are quite satisfactory. Overall, it seems that our estimators can be good alternatives or sometimes complements to some existing estimators that are closely related to ours.

As an empirical illustration, we study the impact of immigration on native wages. Specifically, we employ a model that is similar to those considered by, e.g., \cite*{Dustmann2012} and \cite{Sharpe2020101902}. The previous literature in this area, including \cite{Ottaviano2011}, \cite{Card2009}, and the aforementioned articles, show that  an inflow of immigrants differently affects native wages depending on skill levels (captured by, e.g., years of education and experience) of both natives and immigrants.	We, in this paper, investigate such heterogeneous effects using our functional linear model, which is initiated by viewing both the labor supply and the native wage  as functions of a certain measure of workers' skill (will be detailed in Section~\ref{sec:emp}). This approach has a couple of advantages compared to that taken in the earlier literature. For example, in the previous literature, workers of various levels of skill are often classified into a few skill groups before analysis, which  is necessitated to reduce dimensionality of the considered model (see Example \ref{exam1} and Section \ref{sec:emp}). However, such a pre-classification, which may affect estimation results and their interpretation, is not required in our approach. Moreover, our methodology allows for studying if an inflow of immigrants in a particular skill group heterogenously affects workers equipped with different skill levels.  Using the  methodology developed in this paper, we find evidence supporting the presence of heterogeneous effects of immigration.

This paper is organized as follows. Section~\ref{model} introduces a functional endogenous linear model and provides  motivating examples. In Section~\ref{sec:estimators}, we define the FIVE and discuss its asymptotic properties. Section~\ref{sec:f2sls00} introduces the F2SLSE and discusses its asymptotic properties. 
Section~\ref{sec:sim} reports simulation results and details our empirical example. Section~\ref{sec:con} concludes. The mathematical proofs of the theoretical results can be found in the Supplementary Material.  
\section{Functional endogenous linear model}\label{model}
\subsection{Endogeneity and motivating examples}\label{model1}
We suppose that a stationary  sequence of random functions $\{y_t, x_t,u_t\}_{t\geq 1}$ satisfies the following:
\begin{equation} \label{eqmodel0}
y_t = c_y +  \mathcal Ax_t + u_t,  
\end{equation}
where $c_y$ is the intercept function and  $\mathcal A$ is a linear operator satisfying certain conditions to be clarified.  In \eqref{eqmodel0}, $y_t$, $x_t$ and $u_t$ will be technically understood as random variables taking values in separable Hilbert spaces. Appendix~\ref{appintro} briefly introduces the definitions of a Hilbert-valued random variable $X$, its expectation (denoted $\mathbb{E}[X]$), covariance operator (denoted $\mathcal C_{XX}\coloneqq\mathbb{E}[(X-\mathbb{E}[X]) \otimes (X-\mathbb{E}[X])]$),    and cross-covariance operator with another Hilbert-valued random variable $Y$ (denoted $\mathcal C_{XY} \coloneqq \mathbb{E}[(X-\mathbb{E}[X]) \otimes (Y-\mathbb{E}[Y])]$), where $\otimes$ signifies the tensor product defined \commRV{by $X\otimes Y(\cdot) = \langle X, \cdot \rangle Y $ for any random or nonrandom $X$ and $Y$ taking values in $\mathcal H$ (see \eqref{eqtensor})}. 


We say that the explanatory variable $x_t$ is endogenous if the cross-covariance of $x_t$ and $u_t$, given by the operator $\mathbb{E}[(x_t-\mathbb{E}[x_t])\otimes (u_t-\mathbb{E}[u_t])]$, is nonzero. The present paper focuses on estimation and inference of the functional linear model in the presence of endogeneity. Below we provide specific and practical examples that motivate this model of interest.  

\begin{example}[Effects of immigration on the native labor market]\normalfont \label{exam1} 
\commRV{
In Section \ref{sec:emp}, we will explore a functional version of a well-known linear regression model that examines the skill-dependent effects of immigrant inflows on native workers' wages. In this example, the dependent variable \(y_t\) and the explanatory variable \(x_t\) are functions representing skill-specific changes in wage and the share of immigrants, respectively, at time $t$. As will be detailed in Section \ref{sec:emp}, this model suitably extends existing approaches that typically require pre-classifying workers into only a few groups (e.g., low, mid, and high skilled groups) and/or may not allow spillover effects across different skill groups. As is known in the literature (see, e.g.,\ \citealp{llull2018}), estimating this model involves challenging issues, including the endogenous occupational adjustment of workers}.
\end{example}

\begin{example}[Functional AR model with measurement errors] \label{exam2}\normalfont 	
	The functional AR model has been used in many applications involving functional data. We in this example consider the functional AR model where each observation is contaminated by a measurement error; this may be understood as a special case of the model considered in \cite{Chen_et_al_2020}. An example can be found in the recent literature on forecasting of probability densities; see e.g., \cite*{kokoszka2019forecasting}. Since true probability density functions are not observable in practice, they need to be replaced by appropriate nonparametric estimates that inherently involve estimation errors. 
Beyond this specific case, it seems to be quite common in practice that the true functional realization $y_t^\circ$ cannot be observed and thus has to be replaced by an estimate $y_t$, obtained by smoothing its discrete realizations. In these cases, it may be natural to assume that $y_t$ contains a measurement error $e_t$, i.e.,  $y_t = y_t^\circ + e_t$. If $\{y_t^\circ\}_{t\geq 1}$ satisfies the stationary AR law of motion given by  $y_{t}^{\circ}  = \mathcal A y_{t-1} ^\circ + \epsilon_{t}$ for $t \geq 1$, with  $\mathbb{E}[\varepsilon_t] = 0$ and $\mathbb{E}[y_{t-1}^\circ \otimes \varepsilon_t] = 0$, we have 		\begin{equation*}  	y_{t}  = \mathcal A y_{t-1} + u_{t}, \quad\text{where}\quad u_t = e_{t} - \mathcal Ae_{t-1} + \epsilon_{t} . 
	\end{equation*}		In this case, $\mathbb{E}[y_{t-1}\otimes u_t] \neq 0$ in general, and hence $y_{t-1}$ is endogenous. 
\end{example}

It is expected from Example \ref{exam2} that endogeneity can arise in many practical applications of the functional linear model, where $x_t$ is incompletely observed. In such a case, the exogeneity condition is likely violated. 
\phantomsection\label{rvlabel03}\commRV{In particular, due to advancements in data collection techniques, it is now possible to construct density- or curve-valued economic variables from large datasets. As a result, the analysis of these variables has gained popularity, as evidenced by various empirical examples in the literature (e.g., \citealp{Benatia2017}; \citealp{Babii2021}; \citealp{Chen_et_al_2020}; \citealp{Nielsen2019}; \citealp{seo2020functional}). In economic functional data, observations are often incomplete, and the discrete and finite realizations used to construct functional observations may not be enough to fully capture the entire function. Therefore, practitioners may remain cautious about potential endogeneity, even if the considered regressor $x_t$ is presumed to be  exogenous. This limitation could hinder the practical use of the functional linear model.}  As well expected from the literature on the standard linear simultaneous equation model, endogeneity should be properly addressed for consistent estimation of the regression operator. A widely used strategy to do  this is the IV approach, which will be pursued in our Hilbert space setting.


\subsection{Model, assumptions, and notation}\label{model2}
	To facilitate the subsequent discussions, it may be helpful to introduce some additional notation. We first let  $\mathcal H$ denote the Hilbert space of square-integrable functions defined on the unit interval $[0,1]$, where the inner product $\langle \cdot, \cdot \rangle$ is defined by 
	$\langle \zeta_1,\zeta_2 \rangle = \int_{0}^1 \zeta_1(s) \zeta_2(s) ds$ for $\zeta_1,\zeta_2 \in \mathcal H$ and $\|\cdot\|= \langle \cdot,\cdot\rangle^{1/2}$ defines the norm of $\mathcal H$. $\mathcal{L}_{\mathcal H}$ denotes the space of bounded  linear operators acting on $\mathcal H$, equipped with the  operator norm $\|\mathcal T\|_{\op} = \sup_{ \|\zeta\|\leq 1} \|\mathcal T\zeta\|$. For any $\mathcal T \in \mathcal L_{\mathcal H}$,  we let $\mathcal T^\ast$, $\ran \mathcal T$, $\ker \mathcal T$, and $\|\mathcal T\|_{\HS}$ denote the adjoint, range, kernel, and the Hilbert-Schmidt norm of $\mathcal T$, which are briefly reviewed in Appendix \ref{appprelim}; in that section,  various properties of $\mathcal T \in \mathcal L_{\mathcal H}$ such as nonnegativity, positivity, self-adjointness, compactness and Hilbert-Schmidtness are also reviewed. For any nonnegative, self-adjoint and compact $\mathcal T$, we may write  $\mathcal T = \sum_{j=1}^\infty a_j\zeta_{j} \otimes \zeta_{j}$ for some nonnegative sequence $\{a_j\}_{j \geq 1}$ and some orthonormal basis $\{\zeta_{j}\}_{j \geq 1}$. Then $\mathcal T^{1/2}$ can be well defined by replacing $a_j$ with $\sqrt{a_j}$.

This paper concerns the case where the response variable $y_t$ and the endogenous explanatory variable $x_t$ are infinite-dimensional random variables taking values in separable Hilbert spaces. We hereafter conveniently assume that all of such variables 
take values in $\mathcal H$.
This setup in fact encompasses a more general scenario where $y_t$ and $x_t$ take values in different separable Hilbert spaces of infinite dimension, say $\mathcal H_y$ and $\mathcal H_x$. This is because that these spaces are all isomorphic to $\mathcal H$ (see e.g., Corollary 5.5 in \citeauthor{Conway1994}, \citeyear{Conway1994}), and thus there is no loss of generality by assuming that $\mathcal H_y=\mathcal H_x =\mathcal H$. 
We further assume for convenience that  $y_t$ and $x_t$ have zero means, i.e., $\mathbb{E}[y_t]=\mathbb{E}[x_t]=0$; this assumption naturally makes $c_y$ in \eqref{eqmodel0} be suppressed to zero. The extension to the case where the means are unknown and needed to be estimated is straightforward. 
After adopting all such simplifying assumptions, the functional endogenous linear model, which will  be subsequently considered, is given as follows:  for a linear operator $\mathcal A:\mathcal H\mapsto \mathcal H$,
\begin{equation}\label{eqmodel}
	y_t = \mathcal A x_t + u_t, \quad\text{where}\quad \mathbb{E}[x_t \otimes u_t] \neq 0\quad\text{and} \quad \mathbb{E}[u_t] = 0. 
\end{equation}
\phantomsection\label{rvlabel07}\commRV{We then let $z_t$ (to be called the IV) be another zero-mean $\mathcal H$-valued random variable satisfying $\mathbb{E}[z_t\otimes u_t]=0$.}
For notational convenience, we use $\mathcal C_{zz}$, $\mathcal C_{xz}$, $\mathcal C_{yz}$ and   $\mathcal C_{uz}$ to denote the following operators,  
\begin{equation*}
	\mathcal C_{zz}=\mathbb{E} [z_{t} \otimes z_{t}], \quad \mathcal  C_{xz} = \mathbb{E} [x_{t} \otimes z_{t}], \quad \mathcal C_{yz}=\mathbb{E} [y_{t} \otimes z_{t}], \quad\text{and}\quad \mathcal C_{uz}=\mathbb{E} [u_{t} \otimes z_{t}].
\end{equation*}
Similarly, let $\widehat{\mathcal C}_{zz}$,  $\widehat{\mathcal C}_{xz}$, $\widehat{\mathcal C}_{yz}$ and  $\widehat{\mathcal C}_{uz}$ denote their sample counterparts that are computed as follows:   
\begin{equation*}
	\widehat{\mathcal C}_{zz} = \frac{1}{T} \sum_{t=1}^T z_t \otimes z_{t}, \quad  \widehat{\mathcal C}_{xz} = \frac{1}{T} \sum_{t=1}^T x_t \otimes z_{t}, \quad \widehat{\mathcal C}_{yz}= \frac{1}{T} \sum_{t=1}^T y_t \otimes z_{t}, \quad \text{and}\quad\widehat{\mathcal C}_{uz}= \frac{1}{T} \sum_{t=1}^T u_t \otimes z_{t}.
\end{equation*}

 We will employ the following assumptions throughout the paper: below, $\mathfrak F_t$ denotes the filtration given by $\mathfrak F_t = \sigma\left( \{z_s\}_{s\leq t+1}, \{u_s\}_{s\leq t} \right)$, and $\widehat{\mathcal C}_{uu} = T^{-1} \sum_{t=1}^T u_t \otimes u_t$.
\begin{assumpM} \label{assum1} \begin{enumerate*}[(a)]
		\item\label{assum1.1} \eqref{eqmodel}  holds, 
		\item\label{assum1.2} $\{x_t,z_t\}_{t\geq 1}$ is  stationary and geometrically strongly mixing in $\mathcal H \times \mathcal H$, $\mathbb{E}[\|x_t\|^2] < \infty$, and $\mathbb{E}[\|z_t\|^2] < \infty$,
		\item\label{assum1.3} $\mathbb{E}[u_t|\mathfrak F_{t-1}]=0$, 	
		\item\label{assum1.4}  $\mathbb{E}[u_t\otimes u_t|\mathfrak F_{t-1}] = \mathcal C_{uu}$, and  $ \sup_{1\leq t \leq T} \mathbb{E}[\|u_t\|^{2+\delta}|\mathfrak F_{t-1}]<\infty$ for $\delta>0$, 
		\item\label{assum1.5} $\mathcal A$ is Hilbert-Schmidt,
		\item\label{assum1.7} $\Vert \widehat{\mathcal C}_{xz} - \mathcal C_{xz} \Vert_{\HS} $, $\Vert \widehat{\mathcal C}_{zz} - \mathcal C_{zz} \Vert_{\HS}$ and $\Vert \widehat{\mathcal C}_{uz} - \mathcal C_{uz} \Vert_{\HS}$ are $O_p (T^{-1/2})$,
		\item\label{assum1.7a} $\|\widehat{\mathcal C}_{uu} - {\mathcal C}_{uu} \|_{\HS} = o_p(1)$, 	\item\label{assum1.add} $\ker \mathcal C_{xz} = \{0\}.$
	\end{enumerate*}
\end{assumpM} 
By Assumption \ref{assum1}.\ref{assum1.2}, we allow $\{x_t,z_t\}_{t\geq 1}$ to be a weakly dependent sequence; this is because (i) we want to accommodate various empirical examples such as those given in \citet[Chapters 13-16]{HK2012} by not restricting our attention to the iid case and (ii) the variables to be considered in our empirical application (Section~\ref{sec:emp}) naturally exhibit time series dependence. In Assumptions \ref{assum1}.\ref{assum1.3} and \ref{assum1}.\ref{assum1.4}, the error term $u_t$ is assumed to be a homoscedastic martingale difference sequence. Assumption \ref{assum1}.\ref{assum1.3} states the conditions required for the IV $z_t$ in this setting (see Example \ref{exam2a} for a possible IV presented for the model in Example \ref{exam2}), and this condition implies that $u_t$ is uncorrelated with $z_t$. 
In Assumption \ref{assum1}.\ref{assum1.4}, we impose some requirements on the moments of $u_t$. 
We here note that, if $\{z_t,u_t\}_{t\geq 1}$ is an iid sequence, as often assumed in the literature, Assumptions \ref{assum1}.\ref{assum1.3} and \ref{assum1}.\ref{assum1.4} reduce to the following: 
$$\mathbb{E}[u_t|z_t] = 0, \quad  \mathbb{E}[u_t\otimes u_t|z_{t}] = \mathcal C_{uu},\quad \text{and}\quad     \mathbb{E}[\|u_t\|^{2+\delta}|z_t]<\infty \text{ for some } \delta>0.$$  
The Hilbert-Schmidt condition of $\mathcal A$ given in Assumption \ref{assum1}.\ref{assum1.5}  
would become redundant if we considered a finite dimensional Hilbert space, but in our setting it imposes a nontrivial mathematical condition on $\mathcal A$.
In Assumptions  \ref{assum1}.\ref{assum1.7}  and \ref{assum1}.\ref{assum1.7a}, high-level conditions on  limiting behaviors of some sample operators are given, and these are for mathematical convenience.   We  first note that $\{x_t\otimes z_t-\mathcal C_{xz}\}_{t\geq 1}$, $\{ z_t \otimes z_t-\mathcal C_{zz} \}_{t\geq1}$,  and $\{u_t \otimes z_t -\mathcal C_{uz}\}_{t\geq1}$ are sequences in the Hilbert space of Hilbert-Schmidt operators, denoted by $\mathcal S_{\mathcal H}$ (see Section \ref{sec:strongfive} of the Supplementary Material). If those sequences are iid (resp.\ geometrically strongly mixing), then Assumption  \ref{assum1}.\ref{assum1.7}  holds once $\mathbb E [(\Vert x_t \Vert \Vert z_t\Vert)^{\upsilon}]$, $\mathbb E [\Vert z_t \Vert ^{2\upsilon}]$, and $\mathbb E [(\Vert u_t \Vert \Vert z_t \Vert)^{\upsilon}]$ are finite for some $\upsilon \geq 2$ (resp.\ $\upsilon \geq 2 + \delta$ for some $\delta>0$) \citep[Theorems 2.7 and 2.17]{Bosq2000}; such primitive sufficient conditions can also be found for martingale differences \citep[Theorem 2.16]{Bosq2000} and weakly stationary sequences \citep[Theorems 2.18]{Bosq2000}. We also observe that $\{u_t \otimes u_t-\mathcal C_{uu}\}_{t\geq1}$ is a martingale difference sequence in $\mathcal S_{\mathcal H}$, and some primitive sufficient conditions for Assumption  \ref{assum1}.\ref{assum1.7a} can be found in e.g.,\ Theorems 2.11 and 2.14 of \cite{Bosq2000}. \commRV{Lastly, Assumption \ref{assum1}.\ref{assum1.add} enables us to identify the unique bounded linear operator $\mathcal A$ satisfying \eqref{eqmodel} using the IV $z_t$, which will be discussed in more detail in Section \ref{sec:estimators}}.



\begin{example}\label{exam2a} \normalfont Consider Example \ref{exam2} in Section \ref{model1}.
\commRV{Suppose that the sequence of measurement errors $\{e_t\}_{t\geq 1}$ satisfies that $\mathbb{E}[e_t|\mathcal G_{t-1}]=0$ where 
	$\mathcal G_{t-1} = \sigma\left( \{y_s\}_{s\leq t-1},   \{e_s\}_{s\leq t-1}, \{\varepsilon_s\}_{s\leq t}\right)$.} Then $y_{t-\ell}$ for $\ell > 1$ satisfies {the exogeneity condition implied by} Assumption \ref{assum1}.\ref{assum1.3}. Note that $y_{t-2}$ satisfies \begin{equation}
		\mathbb E [y_{t-2} \otimes y_{t}] = \mathcal A\mathbb E [ y_{t-2} \otimes y_{t-1}] + \mathbb E [y_{t-2}\otimes u_t] = \mathcal A\mathbb E [ y_{t-2} \otimes y_{t-1}].  \label{modified_yule}
	\end{equation} 
	This reveals a theoretical connection between our approach and the modified Yule-Walker method, which was introduced to deal with uncorrelated measurement errors in AR models. In a univariate AR(1) case, the modified Yule-Walker  estimator can be obtained by replacing the population moments in \eqref{modified_yule} by their sample counterparts; see \cite{walker1960some} and \citet[Section 4.4]{staudenmayer2005measurement}. As may be expected from this example, if $\{y_t,x_t\}_{t\geq 1}$ is a time series satisfying \eqref{eqmodel}, the lagged explanatory variables $\{x_{t-\ell}\}_{t\geq 1}$ for $\ell=1,\ldots, L$ may be suitable candidates for $z_t$. \cite{Chen_et_al_2020} noted this and proposed $\sum_{\ell=1}^{L} x_{t-\ell}$ as the IV to obtain a consistent estimator of $\mathcal A$. In this regard, our model is related to that of \cite{Chen_et_al_2020}, but the theory and methodology that are subsequently pursued in this paper move in an apparently different direction. 
\end{example}

\begin{remark}\label{remrvadd1}\normalfont 
\commRV{Consider the standard Euclidean space setting where \(\mathcal C_{xz}\) is invertible. From \eqref{eqmodel} and Assumption \ref{assum1}.\ref{assum1.3}, the standard IV estimator, \( \widehat{\mathcal C}_{xz}^{-1} \widehat{\mathcal C}_{yz}\) can be defined and it can be understood as a sample analogue of \(\mathcal A\). Alternatively, \eqref{eqmodel} and Assumption \ref{assum1}.\ref{assum1.3} imply that  \(\mathcal C_{xz}^\ast \mathcal C_{yz}=\mathcal C_{xz}^\ast \mathcal C_{xz}    \mathcal A^\ast\).  Thus, based on this equation, we may define another IV estimator, \((\widehat{\mathcal C}_{xz}^\ast \widehat{\mathcal C}_{xz})^{-1} \widehat{\mathcal C}_{xz}^\ast\widehat{\mathcal C}_{yz}\), which can be understood as a GMM estimator with an identity weight. Our FIVE is defined as in the latter case. This choice offers a theoretical advantage especially in our functional setup because we can utilize well-established mathematical results concerning the spectral properties of self-adjoint operators ($(\widehat{\mathcal C}_{xz}^\ast \widehat{\mathcal C}_{xz})^{-1}$) to study the asymptotic properties of the FIVE.}
\end{remark}

\section{Functional IV estimator  \label{sec:estimators}}
This section discusses estimation of the model \eqref{eqmodel} given observations $\{y_t,x_t,z_t\}_{t=1}^T$. We first propose the FIVE in detail and study its asymptotic properties.    
\subsection{The proposed estimator}\label{sec:estimators1}
We  find from \eqref{eqmodel}  that $\mathcal C_{yz}^\ast = \mathbb E[z_t \otimes y_t] = \mathcal A \mathbb E[z_t \otimes x_t] = \mathcal A \mathcal C_{xz}^\ast$ and hence, 
\begin{equation}
	\mathcal C_{yz}^\ast \mathcal C_{xz} = \mathcal A \mathcal C_{xz}^\ast \mathcal C_{xz}. \label{popmoment1}
\end{equation}
As discussed in \cite{Mas2007} and \cite{Benatia2017}, $\mathcal A$ is a uniquely identified bounded linear operator if and only if $ \ker \mathcal C_{xz}  = \{0\}$ (see Assumption \ref{assum1}.\ref{assum1.add}), and we note that all the eigenvalues of $\mathcal C_{xz}^\ast \mathcal C_{xz}$ are positive under the condition \citep[Remark 2.1]{Mas2007}. In the sequel,  we thus let $\{\lambda_j^2\}_{j\geq 1}$ denote the collection of the eigenvalues of $\mathcal C_{xz}^\ast \mathcal C_{xz}$  ordered from the largest to the smallest, and represent $\mathcal C_{xz}^\ast \mathcal C_{xz}$ as its spectral decomposition given by 
\begin{equation*}
	\mathcal C_{xz}^\ast \mathcal C_{xz} = \sum_{j=1}^\infty \lambda_j^2 f_j \otimes f_j,
\end{equation*}   
where $f_j$ is the eigenfunction corresponding to $\lambda_j^2$.
Given \eqref{popmoment1}, it may be natural to consider an estimator $\bar{\mathcal A}$ that satisfies the equation $\widehat{\mathcal C}_{yz}^\ast\widehat{\mathcal C}_{xz}=\bar{\mathcal A} \widehat{\mathcal C}_{xz}^\ast \widehat{\mathcal C}_{xz}$, obtained by replacing $\mathcal C_{yz}$ and $\mathcal C_{xz}$ with their sample counterparts.  However, it is  generally impossible to directly compute the estimator $\bar{\mathcal A}$ from this equation since $\widehat {\mathcal C}_{xz} ^\ast \widehat {\mathcal C}_{xz}$ is not invertible over the entire Hilbert space $\mathcal H$. We circumvent this issue by employing a regularized inverse of $\widehat {\mathcal C}_{xz} ^\ast \widehat {\mathcal C}_{xz}$ which may be understood as the well defined inverse on a strict subspace of $\mathcal H$.  

To this end, we first note that $\widehat{\mathcal C}_{xz}^\ast  \widehat{\mathcal C}_{xz}$ is nonnegative, self-adjoint and compact and hence allows the following  representation: 
\begin{equation*}
	\widehat{\mathcal C}_{xz}^\ast  \widehat{\mathcal C}_{xz} = \sum_{j=1}^\infty \hat{\lambda}_j ^2 \hat{f}_j \otimes \hat{f}_j,  
\end{equation*} 
where $\{ \hat{\lambda}_j^2 , \hat{f}_j  \}_{j\geq 1}$ are the pairs of eigenvalues and eigenfunctions, and $\hat{\lambda}_1^2 \geq \ldots \geq \hat{\lambda}_T^2 \geq 0 = \hat{\lambda}_{T+1}^2=\ldots$.
We then define $\K$  as the random integer determined by the threshold parameter $\abar >0$ such that
\begin{equation}
	\K = \#\{j  : \hat{\lambda}_j^2 > 1/\abar\} .  \label{eqdef2}
\end{equation}
Using the first $\K$ eigenfunctions of $\widehat{\mathcal C}_{xz}^\ast  \widehat{\mathcal C}_{xz}$, its rank-regularized inverse, denoted $(\widehat{\mathcal C}_{xz}^\ast  \widehat{\mathcal C}_{xz})_{\K} ^{-1}$, and the FIVE, denoted $\widehat { \mathcal A}$, are defined as follows: 
\begin{equation}
	\widehat{\mathcal A} =\widehat{\mathcal C}_{yz}^\ast  \widehat{\mathcal C}_{xz} (\widehat{\mathcal C}_{xz}^\ast  \widehat{\mathcal C}_{xz})_{\K}^{-1}, \quad \text{where}\quad	(\widehat{\mathcal C}_{xz}^\ast  \widehat{\mathcal C}_{xz})_{\K}^{-1} = \sum_{j=1}^{\K} \hat{\lambda}_j^{-2} \hat{f}_j \otimes \hat{f}_j.\label{eqdef3}
\end{equation} 
The largest eigenvalue of the regularized inverse $(\widehat{\mathcal C}_{xz}^\ast  \widehat{\mathcal C}_{xz})_{\K}^{-1}$ is bounded above by $\abar$ and thus the regularized inverse is a well defined bounded linear operator for every $\abar>0$. It is worth mentioning that the FIVE becomes equivalent to the estimator proposed by \cite{Park2012} in the case where  $z_t = x_t$ and $\K$ is deterministically chosen by researchers (see Remark~\ref{remini} below), so our estimator may be understood as an extension of their estimator. We also note that the FIVE $\widehat{\mathcal A}$ may be viewed as a sample-analogue of $\mathcal A$ satisfying \eqref{popmoment1} in the sense that $\widehat{\mathcal A}$ is the solution to $\widehat{\mathcal C}_{yz}^\ast\widehat{\mathcal C}_{xz}=\widehat{\mathcal A} \widehat{\mathcal C}_{xz}^\ast \widehat{\mathcal C}_{xz}$ on the restricted domain given by $\widehat{\mathcal H}_{\K} = \spn\{\hat{f_j}\}_{j=1}^{\K}$. {Section \ref{sec_comp} of the Supplementary Material discusses on how $\widehat{\mathcal A}$ can be computed from the data using the FPCA.}

\begin{remark} \label{remini} \normalfont
\commRV{It should be noted that \(\K\) in \((\widehat{\mathcal C}_{xz}^\ast  \widehat{\mathcal C}_{xz})_{\K}^{-1}\) is by construction a random variable associated with the choice of \(\abar\). In the literature where a similar regularized inverse is discussed,  \(\K\) is chosen by practitioners and hence regarded as deterministic (e.g., \citealp{Mas2007}; \citealp{Park2012}). However, even in this case, it is generally recommended to choose  \(\K\) taking the eigenvalues of  \(\widehat{\mathcal C}_{xz}^\ast  \widehat{\mathcal C}_{xz}\) into account, and thus treating $\K$ as a random variable appears to be natural. This alternative perspective on  \(\K\) helps practitioners more directly control the degree of instability of the regularized inverse, measured by its largest eigenvalue, by means of the  parameter  \(\abar\) that they choose. Moreover, this approach distinguishes our asymptotic approach from those in the aforementioned papers.}
\end{remark}

\subsection{General asymptotic properties} \label{sec:asym1}
As may be deduced from our construction of the FIVE, $\widehat{\mathcal A} = 0$ is imposed outside a subspace whose dimension increases as $\abar$ gets larger. Thus, for $\widehat{\mathcal A}$ to be a consistent estimator of $\mathcal A$ defined on the entire space $\mathcal H$,  the {regularization} parameter $\abar$ given in \eqref{eqdef2} needs to diverge to infinity. Taking this into consideration, we  investigate the asymptotic properties of the FIVE when $T \to \infty$ and $\abar\to \infty$ jointly. We will employ the following assumption:
\begin{assumpA}\label{assum1eigen}
	$\lambda_1^2 > \lambda_2^2> \ldots > 0$.
\end{assumpA}
That is, the eigenvalues of $\mathcal C_{xz}^\ast \mathcal C_{xz}$ are required to be distinct. This is employed to see asymptotic properties of the FIVE in detail and does not seem to be restrictive in practice; in fact, similar assumptions have been employed in the literature on functional linear models, see e.g., \citet[Section 8.3]{Bosq2000}, \cite{Mas2007}, \cite{Hall2007} and \cite{Park2012} to name only a few.

We now provide the asymptotic properties of the estimator $\widehat{\mathcal A}$ when $\abar$ and $T$ grow jointly without bound.  To this end, we consider the following decomposition of $\widehat {\mathcal A} - \mathcal A$: 
\begin{equation}\label{eqdecom}
	\widehat{\mathcal A}-\mathcal A = (	\widehat{\mathcal A}-\mathcal A\widehat{\Pi}_{\K}) - \mathcal A (\mathcal I-\widehat{\Pi}_{\K}),
\end{equation} 
where  $\widehat{\Pi}_{\K}$ denotes the orthogonal projection defined by $	\widehat{\Pi}_{\K} = \sum_{j=1}^{\K} \hat{f}_j \otimes \hat{f}_j$ and $\mathcal  I$ is the identity operator  acting on $\mathcal H$. Given that the FIVE is computed on the restricted domain $\ran \widehat{\Pi}_{\K}$ (note that $\widehat{\mathcal A}=0$ on $\ran (\mathcal I-\widehat{\Pi}_{\K})$ by construction)  the first term of \eqref{eqdecom} may be understood as the deviation of $\widehat{\mathcal A}$ from $\mathcal A$ on  $\ran \widehat{\Pi}_{\K}$. Thus, this term is hereafter called the deviation component on the restricted domain (the DR component). On the other hand, the second term $\mathcal A (\mathcal  I-\widehat{\Pi}_{\K})$ may be understood as the bias induced by the fact that $\widehat{\mathcal A}$ is enforced to zero on  $\ran (\mathcal  I-\widehat{\Pi}_{\K})$. We thus call this term  the regularization bias component (the RB component). 
Our first result below shows that both the DR and RB components are asymptotically negligible and thus $\widehat{\mathcal A}$ becomes weakly consistent once the regularization parameter $\abar$ diverges to infinity at an appropriate rate; in the next theorem,  we let $\tau(\abar)$ be a random function that increases without bound as $\abar\to \infty$, which is defined by $\tau(\abar) = \sum_{j=1}^{\K} \tau_j$, where $\tau_j =  2\sqrt{2} \max\{(\lambda_{j-1}^2-\lambda_{j}^2)^{-1}, (\lambda_{j}^2-\lambda_{j+1}^2)^{-1}\}$.
\begin{theorem} \label{thm2w1}
	Suppose that Assumptions \ref{assum1} and \ref{assum1eigen} are satisfied, 
	$T^{-1/2}\tau(\abar) \pto 0$ and  $T^{-1}\abar \to 0$ as $\abar \to \infty$ and $T\to \infty$. Then 
	\begin{equation*}
		\|\widehat{\mathcal A} - \mathcal A\widehat{\Pi}_{\K}\|_{\op}^2 =O_p(T^{-1}\abar)\quad \text{and} \quad 	\|\mathcal A (\mathcal I-\widehat{\Pi}_{\K})\|_{\op}^2 =o_p(1).	
	\end{equation*} 
\end{theorem}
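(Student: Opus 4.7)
The proof naturally splits along the decomposition (\ref{eqdecom}), with the shared ingredient being the identification $\mathcal C_{uz}=0$ coming from (\ref{eqpopmoment}). For the DR component, I would first substitute $y_t=\mathcal A x_t+u_t$ into $\widehat{\mathcal C}_{yz}^\ast=T^{-1}\sum_t z_t\otimes y_t$ to obtain the operator identity $\widehat{\mathcal C}_{yz}^\ast=\mathcal A\widehat{\mathcal C}_{xz}^\ast+\widehat{\mathcal C}_{uz}^\ast$. Using the fact that $\widehat{\mathcal C}_{xz}^\ast\widehat{\mathcal C}_{xz}(\widehat{\mathcal C}_{xz}^\ast\widehat{\mathcal C}_{xz})_{\K}^{-1}=\widehat{\Pi}_{\K}$ by construction of the regularized inverse, this collapses to the clean identity
\begin{equation*}
\widehat{\mathcal A}-\mathcal A\widehat{\Pi}_{\K}=\widehat{\mathcal C}_{uz}^\ast\widehat{\mathcal C}_{xz}(\widehat{\mathcal C}_{xz}^\ast\widehat{\mathcal C}_{xz})_{\K}^{-1}.
\end{equation*}
Submultiplicativity of the operator norm then gives $\|\widehat{\mathcal A}-\mathcal A\widehat{\Pi}_{\K}\|_{\op}\le\|\widehat{\mathcal C}_{uz}\|_{\HS}\,\|\widehat{\mathcal C}_{xz}(\widehat{\mathcal C}_{xz}^\ast\widehat{\mathcal C}_{xz})_{\K}^{-1}\|_{\op}$. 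Since $\mathcal C_{uz}=0$, Assumption \ref{assum1}.\ref{assum1.7} forces the first factor to be $O_p(T^{-1/2})$. For the second factor, computing adjoint-times-itself telescopes to $(\widehat{\mathcal C}_{xz}^\ast\widehat{\mathcal C}_{xz})_{\K}^{-1}$ whose largest eigenvalue $\hat\lambda_{\K}^{-2}$ is strictly bounded by $\alpha^{-1}$ by the very definition of $\K$ in (\ref{eqdef2}). Squaring yields the claimed $O_p(T^{-1}\alpha^{-1})$ rate.

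For the RB component I would further decompose $\mathcal A(\mathcal I-\widehat{\Pi}_{\K})=\mathcal A(\mathcal I-\Pi_{\K})+\mathcal A(\Pi_{\K}-\widehat{\Pi}_{\K})$, where $\Pi_{\K}=\sum_{j=1}^{\K}f_j\otimes f_j$ is the population analogue. The first piece I control through the Hilbert--Schmidt norm: $\|\mathcal A(\mathcal I-\Pi_{\K})\|_{\op}^2\le\|\mathcal A(\mathcal I-\Pi_{\K})\|_{\HS}^2=\sum_{j>\K}\|\mathcal A f_j\|^2$, which is the tail of the convergent series $\|\mathcal A\|_{\HS}^2$, finite by Assumption \ref{assum1}.\ref{assum1.5}. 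Applying Weyl's inequality together with the standard expansion $\widehat{\mathcal C}_{xz}^\ast\widehat{\mathcal C}_{xz}-\mathcal C_{xz}^\ast\mathcal C_{xz}=\widehat{\mathcal C}_{xz}^\ast(\widehat{\mathcal C}_{xz}-\mathcal C_{xz})+(\widehat{\mathcal C}_{xz}^\ast-\mathcal C_{xz}^\ast)\mathcal C_{xz}$ and Assumption \ref{assum1}.\ref{assum1.7} yields $|\hat\lambda_j^2-\lambda_j^2|=O_p(T^{-1/2})$ for each $j$; since $\lambda_j^2>0$ for every $j$ under Assumption \ref{assum1}.\ref{assum1.5} and $\alpha\to 0$, this forces $\K\pto\infty$, so the tail vanishes in probability. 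For the second piece, I invoke the one-dimensional spectral perturbation bound $\|\hat f_j\otimes\hat f_j-f_j\otimes f_j\|_{\op}\le\tau_j\|\widehat{\mathcal C}_{xz}^\ast\widehat{\mathcal C}_{xz}-\mathcal C_{xz}^\ast\mathcal C_{xz}\|_{\op}$ (e.g., Lemma 4.3 of \citealp{Bosq2000}, valid because Assumption \ref{assum1}.\ref{assum1.6} gives simple eigenvalues), sum over $j\le\K$ by the triangle inequality to get $\|\Pi_{\K}-\widehat{\Pi}_{\K}\|_{\op}\le\tau(\alpha)\cdot O_p(T^{-1/2})=o_p(1)$ by hypothesis, and finally multiply by the finite constant $\|\mathcal A\|_{\op}\le\|\mathcal A\|_{\HS}$.

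The main obstacle is the last step of the RB argument: summing deterministic $j$-wise perturbation bounds up to a \emph{random} truncation level $\K$. As $\alpha\to 0$, $\K$ ventures into indices where neighboring $\lambda_j^2$ can be arbitrarily close, so individual $\tau_j$ may blow up; the hypothesis $T^{-1/2}\tau(\alpha)\pto 0$ is precisely the quantitative assumption needed to guarantee that this (random length) telescoping sum still vanishes. One should also verify that the sign ambiguity of the $\hat f_j$ is harmless, but this is automatic at the level of the one-dimensional projections $\hat f_j\otimes\hat f_j$, which are sign-invariant; everything else reduces to routine manipulation of the bounds already assembled.
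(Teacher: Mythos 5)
Your proof is correct and follows essentially the same route as the paper: the identical identity $\widehat{\mathcal A}-\mathcal A\widehat{\Pi}_{\K}=\widehat{\mathcal C}_{uz}^\ast\widehat{\mathcal C}_{xz}(\widehat{\mathcal C}_{xz}^\ast\widehat{\mathcal C}_{xz})_{\K}^{-1}$ with the bounds $\|\widehat{\mathcal C}_{uz}\|_{\HS}=O_p(T^{-1/2})$ and $\|\widehat{\mathcal C}_{xz}(\widehat{\mathcal C}_{xz}^\ast\widehat{\mathcal C}_{xz})_{\K}^{-1}\|_{\op}\leq\alpha^{-1/2}$ handles the DR component, and the RB component rests on the same two facts (vanishing Hilbert--Schmidt tail plus the Bosq eigenfunction perturbation bound controlled by $T^{-1/2}\tau(\alpha)\pto 0$). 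The only cosmetic difference is that you split $\mathcal A(\mathcal I-\widehat{\Pi}_{\K})$ explicitly into $\mathcal A(\mathcal I-\Pi_{\K})+\mathcal A(\Pi_{\K}-\widehat{\Pi}_{\K})$, whereas the paper packages the same content into a single inequality on $\sum_{j>\K}\|\mathcal A\widehat f_j\|^2$ in the spirit of Bosq's (8.63).
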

The following is an immediate consequence of Theorem \ref{thm2w1} and  Assumption  \ref{assum1}.\ref{assum1.7a}.
\begin{corollary}\label{cor1} Suppose that the assumptions in Theorem \ref{thm2w1} are satisfied, and let $\hat{u}_t = y_t-\widehat{\mathcal A}x_t$. Then 
	$\|T^{-1}\sum_{t=1}^T \hat{u}_t \otimes \hat{u}_t  - \mathcal C_{uu} \|_{\op} \pto 0$.
\end{corollary}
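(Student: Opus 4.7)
The plan is to write $\hat u_t = u_t - (\widehat{\mathcal A}-\mathcal A)x_t$ and expand the outer product so that the sample second moment of the residuals decomposes as a sum of $\widehat{\mathcal C}_{uu}$ plus three error terms, each of which will be shown to vanish in probability (in operator norm). Concretely,
\begin{equation*}
\frac{1}{T}\sum_{t=1}^T \hat u_t \otimes \hat u_t \;=\; \widehat{\mathcal C}_{uu} \;-\; \widehat{\mathcal C}_{ux}(\widehat{\mathcal A}-\mathcal A)^{\ast} \;-\; (\widehat{\mathcal A}-\mathcal A)\widehat{\mathcal C}_{xu} \;+\; (\widehat{\mathcal A}-\mathcal A)\,\widehat{\mathcal C}_{xx}\,(\widehat{\mathcal A}-\mathcal A)^{\ast},
\end{equation*}
using the elementary identities $a\otimes(\mathcal B b) = (a\otimes b)\mathcal B^{\ast}$ and $(\mathcal B a)\otimes b = \mathcal B(a\otimes b)$, where $\widehat{\mathcal C}_{ux} = T^{-1}\sum_{t} u_t\otimes x_t$ etc.

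Next, I would collect the ingredients. The first term handles itself: Assumption \ref{assum1}.\ref{assum1.7a} gives $\|\widehat{\mathcal C}_{uu}-\mathcal C_{uu}\|_{\op}\le \|\widehat{\mathcal C}_{uu}-\mathcal C_{uu}\|_{\HS}=o_p(1)$. For the remaining terms, I would extract from Theorem \ref{thm2w1} and the decomposition \eqref{eqdecom} that
\begin{equation*}
\|\widehat{\mathcal A}-\mathcal A\|_{\op} \;\le\; \|\widehat{\mathcal A}-\mathcal A\widehat{\Pi}_{\K}\|_{\op} + \|\mathcal A(\mathcal I-\widehat{\Pi}_{\K})\|_{\op} \;=\; O_p\!\bigl((T\alpha)^{-1/2}\bigr)+o_p(1)=o_p(1).
\end{equation*}
Finally, I would show $\|\widehat{\mathcal C}_{xx}\|_{\op}=O_p(1)$ and $\|\widehat{\mathcal C}_{xu}\|_{\op}=O_p(1)$ by Markov's inequality: the stationarity in Assumption \ref{assum1}.\ref{assum1.2} together with $\mathbb E\|x_t\|^2<\infty$ and $\mathbb E\|u_t\|^2<\infty$ (the latter from Assumption \ref{assum1}.\ref{assum1.4}) imply $\mathbb E\|\widehat{\mathcal C}_{xx}\|_{\op}\le T^{-1}\sum_t \mathbb E\|x_t\|^2<\infty$ and $\mathbb E\|\widehat{\mathcal C}_{xu}\|_{\op}\le T^{-1}\sum_t \mathbb E[\|x_t\|\|u_t\|]<\infty$ by Cauchy--Schwarz.

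Combining, each error term is bounded in operator norm by the product of an $o_p(1)$ factor (from $\widehat{\mathcal A}-\mathcal A$) and an $O_p(1)$ factor (from the sample cross moments), using the submultiplicativity $\|\mathcal S \mathcal T\|_{\op}\le \|\mathcal S\|_{\op}\|\mathcal T\|_{\op}$ and $\|\mathcal T^{\ast}\|_{\op}=\|\mathcal T\|_{\op}$, hence each is $o_p(1)$. Summing the four contributions via the triangle inequality yields the claim. The only mildly subtle step is the operator-norm bound on $\widehat{\mathcal C}_{xu}$, where one must resist the temptation to invoke consistency (no such statement is assumed) and instead settle for boundedness in probability via a moment bound; everything else is routine.
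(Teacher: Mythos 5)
Your proof is correct and is essentially the argument the paper intends: the corollary is stated there as an ``immediate consequence'' of Theorem~\ref{thm2w1} and Assumption~\ref{assum1}.\ref{assum1.7a}, and your expansion of $T^{-1}\sum_{t}\hat u_t\otimes\hat u_t$ combined with $\|\widehat{\mathcal A}-\mathcal A\|_{\op}=o_p(1)$ (from the decomposition \eqref{eqdecom} and the two bounds of the theorem) and the $O_p(1)$ moment bounds on $\widehat{\mathcal C}_{xx}$ and $\widehat{\mathcal C}_{xu}$ is exactly the standard way to make that immediacy explicit. One cosmetic remark: under the paper's convention \eqref{eqtensor} one has $(\mathcal B a)\otimes b=(a\otimes b)\mathcal B^{\ast}$ and $a\otimes(\mathcal B b)=\mathcal B(a\otimes b)$, so the adjoints in your displayed decomposition sit on the opposite sides from where they should; this is immaterial here since every term is controlled in operator norm and $\|\mathcal T^{\ast}\|_{\op}=\|\mathcal T\|_{\op}$.
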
 
The condition imposed on $\tau(\abar)$ in Theorem \ref{thm2w1} does not place any essential restrictions on the eigenvalues of $\mathcal C_{xz}^\ast \mathcal C_{xz}$. Given that $\tau(\abar)$ increases as $\abar$ (and thus $\K$) gets larger, the condition, together with the requirement that $T^{-1}\abar \to 0$, merely tells us that $\abar$ needs to grow with a sufficiently slower rate than $T$ for the weak consistency of the FIVE. In fact, under some additional conditions, the strong (almost sure) consistency of the estimator can also be derived; we need more mathematical preliminaries to present this result, and thus leave the discussion to Section~\ref{sec:strongfive} of the Supplementary Material.

\begin{remark}\normalfont \label{rem:comb}
	{The result in Theorem \ref{thm2w1} is, at least to some extent, related to a similar consistency result given by  \cite{Benatia2017} for their functional IV estimator.} In order to obtain an estimator from \eqref{popmoment1}, the authors employ the ridge regularized inverse of $\widehat{\mathcal C}_{xz}^\ast \widehat{\mathcal C}_{xz}$, while we use a rank-regularized inverse of $\widehat{\mathcal C}_{xz}^\ast \widehat{\mathcal C}_{xz}$. This makes a significant difference in asymptotic approaches to establish consistency in the two papers. For example, our result is based on the FPCA and thus we require the eigenvalues of $\mathcal C_{xz} ^\ast \mathcal C_{xz}$ to be distinct, 
	which is not required in \cite{Benatia2017}. In addition, \citepos{Benatia2017} approach  restricts the range of $\mathcal A$ to a certain subspace of $\mathcal H$, called the $\beta$-regularity space, while we need to restrict the increasing rate of $\abar$ or $\K$ depending on $\tau(\abar)$. 
\end{remark}

Under stronger assumptions than what we require for the weak consistency of $\widehat {\mathcal A}$, we can further find that (i) the decaying rate of $	\widehat{\mathcal A}-\mathcal A$ is not uniform over the entire Hilbert space $\mathcal H$ and (ii) the choice of  $\abar$  can affect the decaying rates of the DR and RB components in different directions. These results are given as consequences of the following asymptotic normality result of the DR component; in the theorem below, $({\mathcal C}_{xz}^\ast{\mathcal C}_{xz})_{\K}^{-1}$ denotes the operator {given by $\sum_{j=1}^{\K}\lambda_j^{-2}f_j\otimes f_j$ and $N(0,\mathcal G)$ denotes Gaussian random element in $\mathcal H$ with covariance operator $\mathcal G$.}

\begin{theorem}\label{thm2w2}  
	Suppose that Assumptions \ref{assum1} and \ref{assum1eigen} are satisfied, $\abar^{1/2}T^{-1/2}\tau(\abar) \pto 0$ and $T^{-1}\abar \to 0$ as $\abar \to \infty$ and $T\to \infty$. Then the following hold for any $\zeta \in \mathcal H$.\begin{enumerate}[(i)] 
		\item\label{thm2w2a} 	$\sqrt{{{T}}/{{\theta_{\K}(\zeta)}}}(\widehat{\mathcal A}- \mathcal A \widehat{\Pi}_{\K})\zeta \dto  N(0, \mathcal  C_{uu})$, 	where $	{\theta_{\K}(\zeta)}= \langle \zeta,  ({\mathcal C}_{xz}^\ast{\mathcal C}_{xz})_{\K}^{-1}{\mathcal C}_{xz}^\ast \mathcal C_{zz}{\mathcal C}_{xz}({\mathcal C}_{xz}^\ast{\mathcal C}_{xz})_{\K}^{-1}\zeta \rangle$. 
		\item \label{thm2w2b} If $\widehat{\theta }_{\K}(\zeta) \coloneqq \langle \zeta,  (\widehat{\mathcal C}_{xz}^\ast\widehat{\mathcal C}_{xz})_{\K}^{-1}\widehat{\mathcal C}_{xz}^\ast \widehat{\mathcal C}_{zz}\widehat{\mathcal C}_{xz}(\widehat{\mathcal C}_{xz}^\ast\widehat{\mathcal C}_{xz})_{\K}^{-1} \zeta \rangle$, then  $ |\widehat{\theta }_{\K}(\zeta) -  \theta_ {\K}(\zeta) | \pto 0$.
	\end{enumerate} 
\end{theorem}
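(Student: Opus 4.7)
The approach is to exploit the identity
\begin{equation*}
(\widehat{\mathcal A} - \mathcal A\widehat\Pi_{\K})\zeta = \widehat{\mathcal C}_{uz}^\ast \widehat{\mathcal C}_{xz}(\widehat{\mathcal C}_{xz}^\ast\widehat{\mathcal C}_{xz})_{\K}^{-1}\zeta = \widehat{\mathcal C}_{uz}^\ast\widehat\eta_{\K},\quad \widehat\eta_{\K}\coloneqq \widehat{\mathcal C}_{xz}(\widehat{\mathcal C}_{xz}^\ast\widehat{\mathcal C}_{xz})_{\K}^{-1}\zeta,
\end{equation*}
obtained by substituting the model-implied $\widehat{\mathcal C}_{yz}^\ast = \mathcal A\widehat{\mathcal C}_{xz}^\ast + \widehat{\mathcal C}_{uz}^\ast$ into the definition of $\widehat{\mathcal A}$ and using $\widehat{\mathcal C}_{xz}^\ast\widehat{\mathcal C}_{xz}(\widehat{\mathcal C}_{xz}^\ast\widehat{\mathcal C}_{xz})_{\K}^{-1}=\widehat\Pi_{\K}$. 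Setting $\eta_{\K}\coloneqq \mathcal C_{xz}(\mathcal C_{xz}^\ast\mathcal C_{xz})_{\K}^{-1}\zeta$, I would split $\widehat{\mathcal C}_{uz}^\ast\widehat\eta_{\K} = \widehat{\mathcal C}_{uz}^\ast\eta_{\K} + \widehat{\mathcal C}_{uz}^\ast(\widehat\eta_{\K}-\eta_{\K})$ and handle the leading piece by a Hilbert-valued martingale CLT and the remainder by perturbation bounds. To neutralize the randomness of $\K$, I would introduce $K_\alpha=\#\{j:\lambda_j^2>\alpha\}$ and use Assumption~\ref{assum1}.\ref{assum1.6} together with the eigenvalue perturbation bound (Bosq, 2000, Lemma~4.2) and Assumption~\ref{assum1}.\ref{assum1.7} to show $\mathbb P(\K=K_\alpha)\to 1$; on this event $\eta_{\K}=\eta_{K_\alpha}$ is deterministic, so $T^{-1}\sum_{t=1}^T\langle z_t,\eta_{K_\alpha}\rangle u_t$ is an $\mathfrak F_t$-martingale difference sum (Assumption~\ref{assum1}.\ref{assum1.3}) with predictable covariance $\langle z_t,\eta_{K_\alpha}\rangle^2\mathcal C_{uu}$ (Assumption~\ref{assum1}.\ref{assum1.4}). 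Normalizing by $\sqrt{T\theta_{K_\alpha}(\zeta)}$ reduces the predictable-covariance condition of a triangular-array Hilbert-valued martingale CLT to $\langle\eta_{K_\alpha},\widehat{\mathcal C}_{zz}\eta_{K_\alpha}\rangle/\theta_{K_\alpha}(\zeta)\pto 1$, which follows from Assumption~\ref{assum1}.\ref{assum1.7}, while the Lindeberg condition follows from the $(2+\delta)$-moment control in Assumption~\ref{assum1}.\ref{assum1.4}; together these deliver $\sqrt{T/\theta_{\K}(\zeta)}\,\widehat{\mathcal C}_{uz}^\ast\eta_{\K}\dto N(0,\mathcal C_{uu})$.

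For the remainder, the bound $\|\widehat{\mathcal C}_{uz}\|_{\op}\leq\|\widehat{\mathcal C}_{uz}-\mathcal C_{uz}\|_{\HS}=O_p(T^{-1/2})$ (using $\mathcal C_{uz}=0$ and Assumption~\ref{assum1}.\ref{assum1.7}) reduces $\sqrt{T/\theta_{\K}(\zeta)}\,\widehat{\mathcal C}_{uz}^\ast(\widehat\eta_{\K}-\eta_{\K})$ to the task of showing $\|\widehat\eta_{\K}-\eta_{\K}\|=o_p(\sqrt{\theta_{\K}(\zeta)})$. I would expand $\widehat\eta_{\K}-\eta_{\K}$ in the singular systems of $\widehat{\mathcal C}_{xz}$ and $\mathcal C_{xz}$ and bound each coordinate using Bosq's (2000) eigenvector perturbation inequalities, which produce the factor $\tau(\alpha)T^{-1/2}$; combined with the $\alpha^{-1/2}$ amplification coming from the largest eigenvalue of $(\widehat{\mathcal C}_{xz}^\ast\widehat{\mathcal C}_{xz})_{\K}^{-1/2}$, this yields a bound of order $\alpha^{-1/2}\tau(\alpha)T^{-1/2}$, which is $o_p(\sqrt{\theta_{\K}(\zeta)})$ precisely under the sharper rate condition $\alpha^{-1/2}T^{-1/2}\tau(\alpha)\pto 0$ of the theorem. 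Part (ii) then follows from the triangle inequality
\begin{equation*}
\widehat\theta_{\K}(\zeta)-\theta_{\K}(\zeta) = \langle\widehat\eta_{\K}-\eta_{\K},\widehat{\mathcal C}_{zz}\widehat\eta_{\K}\rangle + \langle\eta_{\K},(\widehat{\mathcal C}_{zz}-\mathcal C_{zz})\widehat\eta_{\K}\rangle + \langle\eta_{\K},\mathcal C_{zz}(\widehat\eta_{\K}-\eta_{\K})\rangle,
\end{equation*}
each summand being controlled by Assumption~\ref{assum1}.\ref{assum1.7} and the same perturbation bound on $\|\widehat\eta_{\K}-\eta_{\K}\|$ used above.

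The main difficulty, in my view, will be making the perturbation analysis of $\|\widehat\eta_{\K}-\eta_{\K}\|$ sharp enough under the data-driven truncation $\K$: the empirical eigenfunctions $\hat f_j$ with indices close to $\K$ correspond to empirical eigenvalues that barely exceed the threshold $\alpha$ and are therefore poorly identified, so one must carefully combine the eigengap factors $\tau_j$ packaged in $\tau(\alpha)$ with the inverse-eigenvalue amplification of $(\widehat{\mathcal C}_{xz}^\ast\widehat{\mathcal C}_{xz})_{\K}^{-1}$ in order to hit exactly the rate $\alpha^{-1/2}T^{-1/2}\tau(\alpha)$ appearing in the hypothesis. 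Once this bound is established, asymptotic normality in part (i) and consistency in part (ii) follow as described.
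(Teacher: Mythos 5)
Your proposal follows essentially the same route as the paper's proof: the same identity $(\widehat{\mathcal A}-\mathcal A\widehat{\Pi}_{\K})\zeta=\widehat{\mathcal C}_{uz}^\ast\widehat{\mathcal C}_{xz}(\widehat{\mathcal C}_{xz}^\ast\widehat{\mathcal C}_{xz})_{\K}^{-1}\zeta$, the same replacement of the empirical singular system by the population one with error of order $\alpha^{-1/2}T^{-1/2}\tau(\alpha)$ via Bosq's eigenvalue/eigenfunction perturbation lemmas, a Hilbert-valued martingale CLT (pointwise CLT plus a tightness check along the eigenbasis of $\mathcal C_{uu}$) for the leading term, and the same expansion of $\widehat{\theta}_{\K}(\zeta)-\theta_{\K}(\zeta)$ for part (ii). The one step you add that the paper does not take, and which is not justified under Assumption~\ref{assum1} alone, is $\mathbb P(\K=K_\alpha)\to1$: since only $\sup_j|\widehat{\lambda}_j^2-\lambda_j^2|=O_p(T^{-1/2})$ is available, the threshold $\alpha$ may sit within $O(T^{-1/2})$ of some $\lambda_j^2$, in which case $\K$ fluctuates around $K_\alpha$ with non-vanishing probability. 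Fortunately this device is unnecessary — the normalization by $\theta_{\K}(\zeta)$ makes the (conditional) second moment of the summands exactly $\mathcal C_{uu}$ for every realized value of $\K$, so the limit law is free of $\K$, which is how the paper proceeds; you should either condition on $\K$ or drop the $K_\alpha$ step.
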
 

Depending on the choice of $\zeta$, $\theta_{\K}(\zeta)$ may be convergent or divergent in probability (see Remark \ref{rem:conv} below), 
and thus the convergence rate of  the DR component $(\widehat{\mathcal A}- \mathcal A \widehat{\Pi}_{\K})\zeta$ depends on $\zeta$. This finding is not completely new; similar results were formerly observed by \cite{Mas2007} and  \cite{Hu2016} in the context of functional AR(1) models. If $\theta_{\K} (\zeta)$ is convergent {in probability}, then $(\widehat{\mathcal A}- \mathcal A \widehat{\Pi}_{\K})\zeta$ converges at $\sqrt{T}$-rate, otherwise it converges at a slower rate given by $\sqrt{{{T}}/{{\theta_{\K}(\zeta)}}}$ which is random (because of the randomness of $\K$). As noted by  \cite{Mas2007}, this discrepancy in convergence rates implies that (i) there exists no sequence of normalizing constants $c_T$ such that $c_T (\widehat{\mathcal A}- \mathcal A \widehat{\Pi}_{\K})\zeta$  weakly converges to a well defined limiting distribution uniformly in $\zeta \in \mathcal H$, and therefore it is impossible that $\widehat{\mathcal A}- \mathcal A \widehat{\Pi}_{\K}$ weakly converges to a well defined bounded linear operator in the topology of $\mathcal L_{\mathcal H}$, and (ii) this statement is also true if $\mathcal A \widehat{\Pi}_{\K}$ is replaced by $\mathcal A$ (see Theorem 3.1 of \citealp{Mas2007}).  \phantomsection\label{rvlabel06}\commRV{Moreover, it can be deduced from Theorem~\ref{thm2w2} that as $\K$ increases, the regularization parameter $\abar$ induces a trade-off between the decaying rates of the DR and RB components when $\theta_{\K}(\zeta)$ is not convergent.  If $\K$ relative to $T$ increases by a larger choice of $\abar$, then the operator norm of the RB component tends to shrink to zero at a faster rate. On the other hand,  this change results in a faster divergence of  ${\theta}_{\K}(\zeta)$, and  Theorem~\ref{thm2w2} shows that the DR component will decay at a slower rate in such a case.}

\begin{remark}\label{rem:conv} \normalfont
	As a simple way to see if $\theta_{\K}(\zeta)$ can be either convergent or divergent depending on the choice of $\zeta$, it is useful to assume that $x_t = z_t + v_t$, where $\{v_t\}_{t\geq 1}$ satisfies that $\mathbb{E}[z_t \otimes v_t] = 0$. For $j \geq 1$, let $\{\mu_j,g_j\}_{j\geq 1}$ be  the pairs of eigenvalues and eigenfunctions  of $\mathcal C_{zz}$. Then it can be shown that  $\text{plim}_{T\to \infty}\theta_{\K}(\zeta)$ is simply given by $\sum_{j=1}^{\infty} \mu_j^{-1} \langle g_j, \zeta \rangle^2$, and this quantity is bounded only when $\zeta$ belongs to a certain strict subspace of $\mathcal H$; see \citet*[Section 3.2]{carrasco2007linear}. 
\end{remark}

In applications involving economic or statistical time series, practitioners are often interested in the marginal effect of some additive and hypothetical perturbation, say $\zeta$, in $x_t$ on $y_t$. In the considered linear model, this marginal effect is simply given by $\mathcal A\zeta$, which can be consistently estimated by $\widehat{\mathcal A}\zeta$. Let $\{\zeta_{\K}\}$ be a sequence of random elements given by $\zeta_{\K} = \widehat{\Pi}_{\K}\zeta$. Then $\zeta_{\K}$ is given by  the orthogonal projection of the new perturbation $\zeta$ onto the subspace on which the sample cross-covariance of $x_t$ and $z_t$ is the most explained, in a certain sense (see Remark \ref{rem3}), among all the subspaces of  dimension $\K$; that is, $\zeta_{\K}$ is the best linear approximation of $\zeta$ based on the covariation of the explanatory and instrumental variables. Therefore, $\zeta_{\K}$ may be interpreted as a nice approximation showing how a hypothetical perturbation $\zeta$ can be revealed given the dataset, and we thus call $\zeta_{\K}$ a data-supporting approximation of $\zeta$. The following is an immediate consequence  of Theorem \ref{thm2w2}: under the assumptions employed in Theorem \ref{thm2w2}, 
\begin{equation*}
	|\hat{\theta}_{\K}(\zeta_{\K}) - \theta_{\K}(\zeta_{\K})|\pto 0 \quad \text{and}  \quad \sqrt{{{T}}/{{\theta_{\K}(\zeta_K)}}}(\widehat{\mathcal A}- \mathcal A)\zeta_{\K} \dto  N(0, \mathcal  C_{uu}).
\end{equation*} Based on this result, we may implement standard statistical inference on various characteristics of $\mathcal A \zeta_{\K}$, which may provide a practical and interpretable insight for practitioners.  We illustrate this by constructing a confidence interval for the random variable given by $\langle {\mathcal A}\zeta_{\K}, \psi\rangle$ for some $\psi\in \mathcal H$.
In fact, various characteristics of ${\mathcal A}\zeta_{\K}$ may be written in this form; for example, if $\psi(s)=1\{s_1\leq s\leq s_2\}$ then  $\langle {\mathcal A}\zeta_{\K}, \psi\rangle = \int_{s_1}^{s_2}  {\mathcal A}\zeta_{\K}(s)ds$ means the locally (if $s_1 \neq 0$ or $s_2 \neq 1$) or globally (if $s_1=0$ and $s_2 = 1$) aggregated marginal effect on $y_t$. 
We then consider the interval whose endpoints are given as follows: 
\begin{equation}
	\langle\widehat{\mathcal A}\zeta_{\K},\psi\rangle \pm \Phi^{-1}(1-{\varpi}/2) \sqrt{{\hat{\theta}_{\K}(\zeta_{\K})}\langle \widehat{\mathcal C}_{\hat{u}\hat{u}} \psi, \psi \rangle/T} , \label{eq:conf}
\end{equation}
where  $\Phi^{-1}(\cdot)$ is the quantile function of the standard normal distribution and $\widehat{\mathcal C}_{\hat{u}\hat{u}}=T^{-1}\sum_{t=1}^T \hat{u}_t \otimes \hat{u}_t$. Based on  Theorem~\ref{thm2w2} and Corollary~\ref{cor1}, the intervals that are repeatedly constructed as in \eqref{eq:conf} are expected to include $\langle \mathcal A  \zeta_{\K} , \psi \rangle$ with $100(1-{\varpi}$)$\%$ of probability for a large $T$. 	Of course, \eqref{eq:conf} may not be quite satisfactory for practitioners who want to consider a purely hypothetical perturbation $\zeta$ without data-supporting approximation;   however, given that (i) the discrepancy between $\zeta$ and $\zeta_{\K}$ caused by the noninvertibility of $\widehat{\mathcal C}_{xz}^\ast  \widehat{\mathcal C}_{xz}$, which inevitably arises in our functional setting, and (ii) the magnitude of discrepancy is expected to be small since it is, anyhow, asymptotically negligible, a small bias caused by the data-supporting approximation may be understood as a cost to implement standard inference based on asymptotic normality in our setting. Furthermore and more importantly, it will be shown in Section \ref{sec:asym2} that, if certain conditions, which are not that restrictive, are satisfied, then the convergence result given in Theorem \ref{thm2w2}.\ref{thm2w2a} holds even if $\mathcal A\widehat{\Pi}_{\K}$ is replaced by $\mathcal A$  (see Remark \ref{remnormality}); this, of course, implies that \eqref{eq:conf} can be understood as a confidence interval for $\langle \mathcal A\zeta,\psi \rangle$ with no data-supporting approximation.

		\begin{remark}\normalfont\label{rem3}
			Note that $\|\widehat{\mathcal C}_{xz}\|_{\HS}^2 = \sum_{j=1}^\infty \widehat{\lambda}_j^2= \sum_{j=1}^\infty \|\widehat{\mathcal C}_{xz}\zeta_j\|^2$ holds for any arbitrary orthonormal basis $\{\zeta_{j}\}_{j\geq 1}$ of $\mathcal H$. We then may define the proportion of the sample cross-covariance operator explained by the first $\K$ orthonormal vectors as 
			\begin{equation*}  
				\sum_{j=1}^{\K} \|\widehat{\mathcal C}_{xz} \zeta_j\|^2 \Big/ 	\sum_{j=1}^\infty \widehat{\lambda}_j^2  =   \sum_{k=1}^{\K} \sum_{j=1}^\infty \hat{\lambda}_j^2  \langle \hat{f}_j,\zeta_k \rangle^2\Big/ 	\sum_{j=1}^\infty \widehat{\lambda}_j^2 .
			\end{equation*}
			Provided that $\{\widehat{\lambda}_j^2,\widehat{f}_j\}_{j\geq 1}$ is the sequence of the eigenelements of $\widehat{\mathcal C}_{xz}^\ast\widehat{\mathcal C}_{xz}$, it is deduced from the results given in \citet[Theorem 3.2 and Section 3.2]{HK2012} that  the above quantity is bounded above by $\sum_{j=1}^{\K}  \widehat{\lambda}_j^2 \Big/ 	\sum_{j=1}^\infty \widehat{\lambda}_j^2$, and this upper bound is attained if and only if $\zeta_j = \pm \widehat{f}_j$ for $j=1,\ldots,\K$. This shows that,  among all the subspaces of dimension $\K$, $\ran \widehat{\Pi}_{\K}$ is the unique subspace that explains the most proportion of the squared Hilbert-Schmidt norm of $\widehat{\mathcal C}_{xz}$.
		\end{remark}
		
\begin{remark} \label{remsigtest}\normalfont
\commRV{In Section \ref{sec:hypo} of the Supplementary Material, we develop a significance test to examine whether various characteristics of \(y_t\), expressed as \(\langle y_t,\psi \rangle\) for some \(\psi \in \mathcal H\), depend on $x_t$. A crucial input to this test is a consistent estimator of $\mathcal A$, and the FIVE (and also the F2SLSE to be developed in Section \ref{sec:f2sls00}) can be used.}
\end{remark}
		
		\subsection{Refinements of the general asymptotic results} \label{sec:asym2}
		In Section \ref{sec:asym1}, we established some general asymptotic properties of the FIVE, which do not require any specific assumptions on the eigenstructure of the cross-covariance of $x_t$ and $z_t$ other than the assumption of distinct eigenvalues. The results given in the previous section tell us that the FIVE is a reasonable estimator in this functional setting. However,  what can be learned from Theorems  \ref{thm2w1} and \ref{thm2w2} is not rich enough; we only know that the FIVE is consistent (Theorem \ref{thm2w1}) and its DR component is asymptotically normal in a pointwise sense (Theorem \ref{thm2w2}) if $\abar$ diverges to infinity at a sufficiently slow rate. We in this section investigate the asymptotic behavior of the FIVE in more detail under a set of assumptions \commRV{which is stronger than Assumption \ref{assum1eigen} but not restrictive in practice}. By doing so, we will obtain useful refinements of Theorems \ref{thm2w1} and \ref{thm2w2}.  The specific assumptions that we need are given as follows: in the assumption below, {we note that $\mathcal C_{xz}\mathcal C_{xz}^\ast$ allows the spectral decomposition $\mathcal C_{xz} \mathcal C_{xz}^\ast=\sum_{j=1}^\infty \lambda_j^2 \xi_{j} \otimes \xi_j$ for some orthonormal basis $\{\xi_j\}_{j\geq 1}$, and let}   $\upsilon_{t}(j,\ell) = \langle x_t,f_j \rangle\langle z_t,\xi_{\ell}\rangle -\mathbb{E}[\langle x_t,f_j \rangle\langle z_t,\xi_{\ell}\rangle]$ for $j,\ell \geq 1$.
		\begin{assumpA}		\label{assumconvrate} There exist constants $c_{\circ}>0$, $\rho >2 $, $\varsigma > 1/2$, $\gamma >1/2$ and $m>1$  satisfying the following: 	\begin{enumerate*}[(a)] \item\label{assumconvrate.1} $\lambda_j^2 \leq c_{\circ} j^{-\rho}$, \item\label{assumconvrate.2} $\lambda_{j} ^2 -\lambda_{j+1} ^2 \geq c_{\circ}^{-1} j^{-\rho-1}$, \item\label{assumconvrate.3} $|\langle \mathcal A f_j , \xi_\ell \rangle| \leq c_{\circ} j^{-\varsigma} \ell ^{-\gamma} $, \item\label{assumconvrate.4} $\mathbb{E}[\upsilon_{t}(j,\ell)\upsilon_{t-s}(j,\ell)] \leq c_{\circ} s^{-m} \mathbb E[\upsilon_t ^2 (j,\ell)]$ for $s\geq 1$, and furthermore,  $\mathbb E[ \Vert \langle x_t , f_j\rangle z_t \Vert ^2 ] \leq c_{\circ} \lambda_j^2$ and $\mathbb E[ \Vert \langle z_t , \xi_j\rangle x_t \Vert ^2 ] \leq c_{\circ}  \lambda_j^2$. 
			\end{enumerate*}
		\end{assumpA}
		Assumptions~\ref{assumconvrate}.\ref{assumconvrate.1} and \ref{assumconvrate}.\ref{assumconvrate.2} restrict the eigenstructure of $\mathcal C_{xz}$ (or equivalently $\mathcal C_{xz}^\ast\mathcal C_{xz}$), and these are adapted from similar conditions in \cite{Hall2007} and \cite{imaizumi2018}.\footnote{In fact, Assumptions~\ref{assumconvrate}.\ref{assumconvrate.1} and ~\ref{assumconvrate}.\ref{assumconvrate.2} can be replaced by the following conditions: $|\lambda_j|\leq c_{\circ} j^{-\rho/2}$ and $|\lambda_j|-|\lambda_{j+1}|\geq c_{\circ} j^{-\rho/2-1}$ for $\rho>2$. These conditions are directly comparable with similar conditions given for the eigenvalues of $\mathbb{E}[x_t\otimes x_t]$ in \cite{Hall2007} and \cite{imaizumi2018}.} Assumption~\ref{assumconvrate}.\ref{assumconvrate.3} is a very natural condition given that $\langle \mathcal Af_j, \xi_{\ell} \rangle$ must be square-summable with respect to both $j$ and $\ell$; in this assumption, it is worth mentioning that $\varsigma$ is the parameter determining the smoothness of $\mathcal A$ on $\ran \mathcal C_{xz}^\ast \mathcal C_{xz} $.  As may be deduced from the definition of $\upsilon_{t}(j,\ell)$ and Assumption \ref{assum1}.\ref{assum1.2}, $\{\upsilon_{t}(j,\ell)\}_{t\geq 1}$ is a stationary sequence in $\mathbb{R}$ for each $j$ and $\ell$, and the former condition of Assumption~\ref{assumconvrate}.\ref{assumconvrate.4} states that its lag-$s$ autocovariance function decays at a sufficiently fast rate; this condition is satisfied for a wide class of stationary processes. Note that  both $\mathbb{E}[\|\langle x_t,f_j\rangle z_t\|^2]$ and $\mathbb{E}[\|\langle z_t,\xi_j\rangle x_t\|^2]$ naturally decrease as $j$ gets larger and its decaying rate is restricted by  Assumption~\ref{assumconvrate}.\ref{assumconvrate.4}. 
		{Specifically, we require the second moments of $\|\langle x_t,f_j\rangle z_t\|$ and $\|\langle z_t,\xi_j\rangle x_t\|$ as functions of $j$ have a constant multiple of $\lambda_j^2$ as their upper envelope}; a similar condition for the iid case can be found in e.g., \cite{Hall2007}.  
		
		The following theorem refines the result given in Theorem \ref{thm2w1} under Assumption \ref{assumconvrate}. 
		\begin{theorem}\label{thm:convrate}
			Suppose that Assumptions \ref{assum1} and \ref{assumconvrate} are satisfied and $\abar = o(T^{\rho/(2\rho+2)})$. Then, $\|\widehat{\mathcal A} - \mathcal A\widehat{\Pi}_{\K}\|_{\op}^2 =O_p(T^{-1}\abar)$ as in Theorem \ref{thm2w1}, and
			\begin{equation} \label{eqthmconvrate}
				\|\mathcal A (\mathcal I-\widehat{\Pi}_{\K})\|_{\op}^2 = O_p (T^{-1}\abar \max\{1, \abar^{(3-2\varsigma)/\rho}\} + \abar^{(1-2\varsigma)/\rho} ). 
			\end{equation} 
			Thus, $\|\widehat{\mathcal A} - \mathcal A\|_{\op} = o_p(1)$ for any $\rho > 2$ and $\varsigma > 1/2$. 
		\end{theorem}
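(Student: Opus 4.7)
The proof plan is to refine the decomposition $\widehat{\mathcal A} - \mathcal A = (\widehat{\mathcal A} - \mathcal A\widehat\Pi_{\K}) - \mathcal A(\mathcal I - \widehat\Pi_{\K})$ of \eqref{eqdecom} under the polynomial-decay structure supplied by Assumption~\ref{assumconvrate}. For the first (DR) term, I would substitute $\widehat{\mathcal C}_{yz}^\ast = \mathcal A\widehat{\mathcal C}_{xz}^\ast + \widehat{\mathcal C}_{uz}^\ast$, obtained from $y_t = \mathcal A x_t + u_t$, into the formula \eqref{eqdef3} for $\widehat{\mathcal A}$ and use the identity $\widehat{\mathcal C}_{xz}^\ast\widehat{\mathcal C}_{xz}(\widehat{\mathcal C}_{xz}^\ast\widehat{\mathcal C}_{xz})_{\K}^{-1} = \widehat\Pi_{\K}$ to obtain $\widehat{\mathcal A} - \mathcal A\widehat\Pi_{\K} = \widehat{\mathcal C}_{uz}^\ast \widehat{\mathcal C}_{xz}(\widehat{\mathcal C}_{xz}^\ast\widehat{\mathcal C}_{xz})_{\K}^{-1}$. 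The elementary SVD estimate $\|\widehat{\mathcal C}_{xz}(\widehat{\mathcal C}_{xz}^\ast\widehat{\mathcal C}_{xz})_{\K}^{-1}\|_{\op}\leq \alpha^{-1/2}$ combined with $\|\widehat{\mathcal C}_{uz}\|_{\op} = O_p(T^{-1/2})$ (from Assumption~\ref{assum1}.\ref{assum1.7} and $\mathcal C_{uz}=0$) yields the first claim directly; this route avoids Theorem~\ref{thm2w1} and its hypothesis $T^{-1/2}\tau(\alpha)\pto 0$, which under Assumption~\ref{assumconvrate} would only be verifiable under a condition strictly stronger than the stated $\alpha^{-1}=o(T^{\rho/(2\rho+2)})$.

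For the regularization-bias component, I would introduce the population projection $\Pi_{\K} = \sum_{j=1}^{\K} f_j\otimes f_j$ (with $\K$ still random) and split $\mathcal A(\mathcal I-\widehat\Pi_{\K}) = \mathcal A(\mathcal I - \Pi_{\K}) + \mathcal A(\Pi_{\K} - \widehat\Pi_{\K})$. The truncation piece is handled by passing to Hilbert--Schmidt norm: Assumption~\ref{assumconvrate}.\ref{assumconvrate.3} together with $\gamma>1/2$ gives $\|\mathcal A(\mathcal I-\Pi_{\K})\|_{\HS}^2 = \sum_{j>\K}\|\mathcal A f_j\|^2 \lesssim \K^{1-2\varsigma}$, while Assumptions~\ref{assumconvrate}.\ref{assumconvrate.1}--\ref{assumconvrate}.\ref{assumconvrate.2} combined with the Weyl-type bound $\max_j |\hat\lambda_j^2 - \lambda_j^2|\leq \|\widehat{\mathcal C}_{xz}^\ast\widehat{\mathcal C}_{xz}-\mathcal C_{xz}^\ast\mathcal C_{xz}\|_{\op}=O_p(T^{-1/2})$ force $\K\asymp\alpha^{-1/\rho}$ with probability tending to one, producing the $O_p(\alpha^{(2\varsigma-1)/\rho})$ summand in \eqref{eqthmconvrate}.

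The heart of the proof is the perturbation piece $\|\mathcal A(\Pi_{\K}-\widehat\Pi_{\K})\|_{\op}^2$, from which the factor $\max\{1,\alpha^{(2\varsigma-3)/\rho}\}$ arises and which is the main obstacle. A naive use of Bosq's eigenfunction inequality $\|f_j-\hat f_j\|\lesssim \tau_j\Delta_T$ combined with $\|\mathcal A(f_j-\hat f_j)\|\leq \|\mathcal A\|_{\op}\|f_j-\hat f_j\|$ discards the smoothness of $\mathcal A$ and yields a rate that is much too loose. Instead I would use the first-order spectral perturbation expansion
\[
\hat f_j - f_j = \sum_{k\neq j}(\lambda_j^2 - \lambda_k^2)^{-1}\langle f_k,(\widehat{\mathcal C}_{xz}^\ast\widehat{\mathcal C}_{xz} - \mathcal C_{xz}^\ast\mathcal C_{xz})f_j\rangle f_k + R_j,
\]
with $R_j$ collecting second-order terms, so that $\mathcal A$ can be distributed term-by-term and the damping $\|\mathcal A f_k\|\lesssim k^{-\varsigma}$ from Assumption~\ref{assumconvrate}.\ref{assumconvrate.3} acts on each directional contribution. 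The second-moment controls in Assumption~\ref{assumconvrate}.\ref{assumconvrate.4} then supply bounds of the form $\mathbb E\bigl[\langle f_k,(\widehat{\mathcal C}_{xz}^\ast\widehat{\mathcal C}_{xz}-\mathcal C_{xz}^\ast\mathcal C_{xz})f_j\rangle^2\bigr]\lesssim T^{-1}\lambda_j^2$, and the heavy lifting is a careful accounting of the double sum over $j\leq \K$ and $k\neq j$ against the inverse-gap weights $(\lambda_j^2-\lambda_k^2)^{-2}$. The $\max$ reflects a genuine regime change: when $\varsigma\geq 3/2$ the damping $k^{-\varsigma}$ is strong enough that the standard $T^{-1}\alpha^{-1}$ rate of the DR component dominates, while for $\varsigma<3/2$ the off-diagonal contributions with $k$ close to $j$ inflate this by $\alpha^{(2\varsigma-3)/\rho}$. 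Once both pieces are in hand, the consistency $\|\widehat{\mathcal A}-\mathcal A\|_{\op}=o_p(1)$ is a routine check that each of $T^{-1}\alpha^{-1}$, $T^{-1}\alpha^{(2\varsigma-3-\rho)/\rho}$, and $\alpha^{(2\varsigma-1)/\rho}$ tends to zero under $\alpha^{-1}=o(T^{\rho/(2\rho+2)})$ whenever $\rho>2$ and $\varsigma>1/2$.
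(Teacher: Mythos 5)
Your proposal follows essentially the same route as the paper's proof: the DR bound via $\widehat{\mathcal A}-\mathcal A\widehat\Pi_{\K}=\widehat{\mathcal C}_{uz}^\ast\widehat{\mathcal C}_{xz}(\widehat{\mathcal C}_{xz}^\ast\widehat{\mathcal C}_{xz})_{\K}^{-1}$ with $\|\widehat{\mathcal C}_{xz}(\widehat{\mathcal C}_{xz}^\ast\widehat{\mathcal C}_{xz})_{\K}^{-1}\|_{\op}\leq\alpha^{-1/2}$, the split of the RB term through the population projection $\Pi_{\K}$ with $\K\asymp\alpha^{-1/\rho}$, and the spectral perturbation expansion of $\hat f_j-f_j$ damped by the smoothness of $\mathcal A$ and summed against inverse-gap weights (the paper's Lemma \ref{lem:convrate}), with the same regime change at $\varsigma=3/2$. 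The only cosmetic difference is that the paper writes the eigenfunction expansion exactly, with $\hat\lambda_j^2$ and $\hat f_j$ inside (its equation \eqref{eq:b1}), using $\alpha^{-1}=o(T^{\rho/(2\rho+2)})$ to replace empirical gaps by population gaps uniformly, rather than carrying a separate second-order remainder $R_j$.
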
 
		Some comments on the requirement $\abar= o(T^{\rho/(2\rho+2)})$ are first in order. This condition is needed in our proof of Theorem \ref{thm:convrate} to deal with estimation errors associated with $\widehat{\lambda}_j$ (see Remark \ref{remalpha}). This may be replaced by a sufficient and more convenient condition given by $\abar = o(T^{1/3})$, which does not depend on the value of $\rho$ under Assumption \ref{assumconvrate} requiring $\rho>2$.

		Theorem~\ref{thm:convrate} not only gives us a more detailed consistency result than that given in Theorem \ref{thm2w1}, but also better clarifies how certain parameters appearing in Assumption \ref{assumconvrate} can affect the convergence rate of the FIVE. 
		Specifically, in the above theorem, the convergence rate is characterized by the regularization parameter $\abar$, the smoothness $\varsigma$ of $\mathcal A$ on $\ran \mathcal C_{xz}^\ast \mathcal C_{xz}$, and  the decaying rate $\rho$ of $\lambda_j^2$ (as a function of $j$). From \eqref{eqthmconvrate}, it is evident that if $\abar$ grows to infinity at a sufficiently slow rate, then the convergence rate of the RB term will be dominantly determined by the second term (appearing in \eqref{eqthmconvrate}), whose convergence rate is positively related to $\abar$. Therefore, in this case, we expect a slower convergence rate of the RB component; this is a quite natural property that can also be deduced from our earlier discussion following Theorem \ref{thm2w2} in Section \ref{sec:asym1}. Moreover, it can be shown that the convergence rate of the FIVE is generally positively (resp.\ negative) related to $\varsigma$ (resp.\ $\rho$); the former is immediately seen from  \eqref{eqthmconvrate}, and the latter is discussed in detail in  Remark \ref{fiverho}.

		\begin{remark}\normalfont
			In fact, the results given in Theorems \ref{thm2w1} and \ref{thm:convrate} hold even if $\|\cdot\|_{\op}$ is replaced by $\|\cdot\|_{\HS}$, which can be seen in our proofs of those theorems (see Section \ref{sec:pf} of the Supplementary Material). 
		\end{remark}
		
		\begin{remark}\normalfont\label{remalpha}
			The requirement $\abar = o(T^{\rho/(2\rho+2)})$ is used to ensure the existence of a small constant, say $\tilde{c}_{\circ}$, such that $\mathbb{P}(|\widehat{\lambda}_j-\lambda_{\ell}|\geq \tilde{c}_{\circ}|{\lambda}_j-\lambda_{\ell}|$ for all $1\leq j  \leq \K$ and $\ell \neq j) \to 1$; the detailed reason why this result can be obtained from the requirement on $\abar$ is given in our proof of Theorem \ref{thm:convrate}, which is quite similar to the discussion given by \cite{imaizumi2018} following their Theorem 1.
		\end{remark}
		
		\begin{remark}\normalfont\label{fiverho} If the eigenvalues of $\mathcal C_{xz}^\ast\mathcal C_{xz}$ decay to zero at a fast rate  (and thus the eigenvalues of $\widehat{\mathcal C}_{xz}^\ast\widehat{\mathcal C}_{xz}$ tend to do so), then the rank-regularized inverse $(\widehat{\mathcal C}_{xz}^\ast \widehat{\mathcal  C}_{xz})_{\K}^{-1}$ tends to be more unstable (unless $\K$ becomes smaller) than in the case with more slowly decaying eigenvalues. Thus, it is expected that the convergence rate of the FIVE generally becomes slower as $\rho$ increases. This can be seen from the  asymptotic results given in Theorem \ref{thm:convrate}. Let  $\rho$ change with the other parameters being fixed. In addition, it is assumed that $\abar$ satisfies the condition $\abar = o(T^{\rho/(2\rho+2)})$ both before and after any change in $\rho$ and thus no adjustment in $\abar$ is required; note that this can always be done by letting $\abar =o(T^{1/3})$ if necessary. From \eqref{eqthmconvrate} it can be shown that the RB component is (i) $O_p(T^{-1}\abar) +O_p(\abar ^{(1-2\varsigma)/\rho})$ if $\varsigma\geq3/2$ and (ii) $O_p(T^{-1} \abar ^{(\rho-2\varsigma+3)/\rho}) + O_p(\abar^{(1-2\varsigma)/\rho} )$ if $\varsigma \in(1/2,3/2)$. In case (i), the decaying rate of the second term is negatively related to $\rho$, and thus an increase in $\rho$ does not yield a faster convergence rate. In case (ii), the decaying rates of both terms are negatively related to $\rho$ since $(\rho-2\varsigma +3)/\rho$ is positive and strictly increasing in $\rho$. 
\end{remark}

We next refine our pointwise asymptotic normality result  under Assumption \ref{assumconvrate}. To this end, it is convenient to decompose  the RB component again as follows:
\begin{equation}\label{eqdecom1}
\mathcal A(\widehat{\Pi}_{\K} - \mathcal I) = \mathcal A(\widehat{\Pi}_{\K} - \Pi_{\K}) + \mathcal A({\Pi}_{\K} - \mathcal I),
\end{equation}
where $\Pi_{\K} = \sum_{j=1}^{\K} f_j \otimes f_j$, and this may be understood as the population counterpart of $\widehat{\Pi}_{\K}$. The next theorem refines the results  in Theorem \ref{thm2w2}, \commRV{but in order to simplify the subsequent discussion, we for now only consider the case where $\rho/2+2<\varsigma+\delta_{\zeta}$; the result without this condition is given in Appendix \ref{sec_thm_conti1}.} 
\begin{theorem}\label{thm:convrate2}
Suppose that Assumptions \ref{assum1} and \ref{assumconvrate} are satisfied,  $\zeta \in \mathcal H$ satisfies $\langle f_{j} , \zeta \rangle   \leq c_\zeta j^{-\delta_{\zeta}} $ for some $c_{\zeta} >0$ and $\delta_{\zeta} > 1/2$,  \commRV{$\rho/2+2<\varsigma+\delta_{\zeta}$} and 			\begin{equation}\label{eqrbalpha1}
\commRV{T^{-1} \max\left\{\abar^{(3\rho -2\delta_{\zeta}+1)/\rho}, \abar^{(\rho+1)/\rho} \right\} = o(1)}. 
\end{equation} 
Then,  Theorem \ref{thm2w2} holds and 
\begin{align}
\Vert \mathcal A(\widehat \Pi_{\K}-\Pi_{\K}) \zeta \Vert =  O_p(T^{-1/2})\quad \text{ and }\quad	\|\mathcal A(\Pi_{\K} - \mathcal I)\zeta\| = O_p(\abar^{(1/2-\varsigma- \delta_{\zeta})/\rho}). \nonumber 
\end{align}
\end{theorem}
Theorem \ref{thm:convrate2} refines  Theorem \ref{thm2w2} by providing a detailed asymptotic order of the RB component. Some remarks on the theorem are given in  Remarks \ref{remalpha2} and \ref{remnormality} below; particularly, in the latter remark, an improvement of the asymptotic normality result  in Theorem \ref{thm2w2} is discussed.  

\begin{remark}\normalfont \label{remalpha2}
The growing rate of $\abar$ required for Theorem \ref{thm:convrate2} depends on both $\rho$ and $\delta_{\zeta}$. If $\delta_{\zeta}$ is sufficiently large so that $2\delta_{\zeta} \geq \rho-1$, then \eqref{eqrbalpha1} can be simplified to $\abar = o(T^{1/3})$. Moreover, the condition $\delta_{\zeta} > 1/2$ is natural since $\langle f_j,\zeta\rangle$ must be square-summable with respect to $j$. 
\end{remark}
\begin{remark}[Pointwise asymptotic normality of the FIVE]\normalfont \label{remnormality}
A consequence of Theorem \ref{thm:convrate2} is that, if $\mathcal A$ is smooth enough and $\langle f_j,{\zeta} \rangle$ decays to zero at a sufficiently fast rate as $j$ increases, then $\sqrt{{{T}}/{{\theta_{\K}(\zeta)}}}\|\mathcal A(\widehat{\Pi}_{\K} - \mathcal I)\|_{\op} \pto 0$ and thus the result given in Theorem \ref{thm2w2}.\ref{thm2w2a} can be strengthened to the following: 
\begin{equation} \label{eqnormality}
\sqrt{{{T}}/{{\theta_{\K}(\zeta)}}}(\widehat{\mathcal A}- \mathcal A )\zeta \dto  N(0, \mathcal  C_{uu}).
\end{equation}
In this case,  of course, \eqref{eq:conf} is understood as a confidence interval for $\langle \mathcal A\zeta,\psi \rangle$.  
In particular, if (i) $\varsigma + \delta_{\zeta} > \rho/2 +2$ and (ii) $T\abar^{1-2\varsigma-2\delta_{\zeta}} = O(1)$ (note that both conditions are easier to hold if $\varsigma$ and $\delta_{\zeta}$ are large), we have 
\begin{equation*}
\sqrt{{{T}}/{{\theta_{\K}(\zeta)}}}\mathcal A(\widehat{\Pi}_{\K} - \mathcal I)\zeta= O_p(1/\sqrt{\theta_{\K}(\zeta)}).
\end{equation*} 
The above quantity converges to zero if $\theta_{\K}(\zeta) \pto \infty$, which is likely to happen in practice for many possible choices of $\zeta$; for example, if we assume that $\zeta$ is arbitrarily and randomly chosen from $\mathcal H$, $\mathbb{P}\{\theta_{\K}(\zeta) < c <\infty\} \to 0$ as $\K \to \infty$ since $\theta_{\K}$ is convergent only on a strict subspace of $\mathcal H$. Given that $\delta_{\zeta} > 1/2$, the aforementioned conditions for \eqref{eqnormality} are satisfied if (i)$'$ $\varsigma > \rho/2 + 3/2$ and (ii)$'$ $T\abar^{-2\varsigma} = O(1)$ regardless of the value of $\delta_{\zeta}$. The former condition (i)$'$ requires $\mathcal A$ to have a sufficient smoothness depending on the decaying rate of the eigenvalues of $\mathcal C_{xz}^\ast \mathcal C_{xz}$, and this seems not to be restrictive; in fact, in the literature on functional linear models, it is  common to impose such a smoothness condition on $\mathcal A$ depending on the eigenvalues of a certain covariance or cross-covariance operator (e.g. \citealp{Hall2007}; \citealp{imaizumi2018}; \citealp{Chen_et_al_2020}).
\end{remark}

\section{Functional two-stage least square estimator}\label{sec:f2sls00} 

\subsection{The proposed estimator}\label{sec:f2sls0}
\phantomsection\label{rvlabel08}\commRV{In the context of Euclidean space setting, our FIVE reduces to a certain IV estimator as in Remark~\ref{remrvadd1}. As may be expected from the existing results, 
this estimator generally exhibits a larger asymptotic variance compared to the two-stage least square estimator (2SLSE), which achieves asymptotic efficiency within some context.  In this section, we explore an extension of the conventional 2SLSE tailored to our functional setting.} 
To the best of the authors' knowledge, such an estimator has not been considered in the context of function-on-function regression, although a similar estimator was studied by \cite{Florence2015} in the case where the dependent variable is scalar-valued.
For this estimator, it is assumed that $x_t$ and $z_t$ satisfy the so-called first-stage relationship:
\begin{assumpMM} \label{assum1a} \begin{enumerate*}[(a)]\item\label{assum1a1}  Assumption~\ref{assum1} holds,  \item \label{assum1a2}  $x_t = \mathcal B z_t + v_t$, where $\mathcal B \in \mathcal L_{\mathcal H}$ and  $\mathbb{E}[v_t|\mathfrak F_{t-1}]= 0$ ($\mathfrak F_{t-1}$ is defined in Assumption~\ref{assum1}), \item\label{assum1a4}   $\widehat{\mathcal C}_{vz}=T^{-1} \sum_{t=1}^T v_t\otimes z_t$ satisfies that $\|\widehat{\mathcal C}_{vz}\|_{\HS} = O_p(T^{-1/2})$.
	\end{enumerate*}

\end{assumpMM} 


If we consider the case $\mathcal H = \mathbb{R}^n$, then the 2SLSE is defined as follows:
\begin{equation}
\widetilde{\mathcal A}^\circ = \widehat{\mathcal C}_{yz}^\ast \widehat{\mathcal C}_{zz}^{-1} \widehat{\mathcal C}_{xz} \left(\widehat{\mathcal C}_{xz}^\ast \widehat{\mathcal C}_{zz}^{-1} \widehat{\mathcal C}_{xz}\right)^{-1}, \nonumber
\end{equation} 
and it is widely known that $\widetilde{\mathcal A}^\circ$ has many desirable properties as an estimator of $\mathcal A$. Coming back to our functional setting, it is not difficult to see that the use of the standard 2SLSE is problematic since it involves $\widehat{\mathcal C}_{zz}^{-1}$ and  $(\widehat{\mathcal C}_{xz}^\ast \widehat{\mathcal C}_{zz}^{-1} \widehat{\mathcal C}_{xz})^{-1}$ which are not well defined as bounded linear operators.  

To have a well-behaved analogue of the 2SLSE in our setting, we regularize those inverses as we did in Section~\ref{sec:estimators} and  propose an alternative estimator. To this end, we hereafter let  $\mathcal T_{\K}^{-1}$ denote the regularized inverse of a compact operator $\mathcal T$ based on its first $\K$ eigenelements (this is defined in the same way as $(\widehat{\mathcal C}_{xz}^\ast  \widehat{\mathcal C}_{xz})_{\K}^{-1}$ given in \eqref{eqdef3}). Our proposed estimator is defined as follows:
\begin{equation*} \label{estf2slse}
\widetilde{\mathcal A} =  \widehat{\mathcal P} \widehat{ \mathcal Q}_{\K_2}^{-1}, \quad \text{where}\quad \widehat{\mathcal P}= \widehat{\mathcal C}_{xz}^\ast (\widehat{\mathcal C}_{zz})_{\K_1}^{-1} \widehat{\mathcal C}_{xz}  \quad \text{and}\quad  \widehat{ \mathcal Q} =  \widehat{\mathcal C}_{xz}^\ast (\widehat{\mathcal C}_{zz})_{\K_1}^{-1} \widehat{\mathcal C}_{xz},
\end{equation*} 
and, if we let $\{\widehat{\mu}_j\}_{j\geq 1}$ (resp.\ $\{\widehat{\nu}_j\}_{j\geq 1}$) be the ordered (from the largest to the smallest) eigenvalues of $\widehat{\mathcal C}_{zz}$ (resp.\ $\widehat{\mathcal Q}$),\footnote{The eigenvalues of $\widehat{\mathcal C}_{zz}$ and $\widehat{\mathcal Q}$ are almost surely positive since they are nonnegative self-adjoint by construction.} then $\K_1$ and $\K_2$ are defined as 
\begin{equation*} \label{2slsek1}
\K_1 = \# \{ j: \hat\mu_j^2 >1/\abar_1 \} \quad \text{and} \quad 		\K_2  = \#\{j  : \hat{\nu}_j^2 > 1/\abar_2\}.
\end{equation*}
Note that by definition, $\K_2 \leq \K_1$ holds almost surely because  $(\widehat{\mathcal C}_{zz})_{\K_1}^{-1}$ is of finite rank $\K_1$. We conveniently call $\widetilde{\mathcal A}$ the F2SLSE.

To investigate the asymptotic properties of the F2SLSE, it is necessary to establish some preliminary results and fix notation. First, we note that the operator $\mathcal Q$ defined by $\mathcal Q = \mathcal C_{xz}^\ast \mathcal C_{zz}^{-1} \mathcal C_{xz}$ can be understood as a well defined compact operator. To be more specific,  Lemma \ref{lem:ftsls} (and the following discussion given in Section \ref{sec:pfe} of the Supplementary Material) shows that $\mathcal C_{zz}^{-1/2} \mathcal C_{xz} = \mathcal R_{xz}\mathcal C_{xx}^{1/2}$ for some unique bounded linear operator $\mathcal R_{xz}$, which may be understood as the correlation operator of $x_t$ and $z_t$, and thus $\mathcal Q = \mathcal C_{xx}^{1/2}\mathcal R_{xz}^\ast\mathcal R_{xz}  \mathcal C_{xx}^{1/2}$. From similar arguments, it can be easily shown that the operator $\mathcal P = 	\mathcal C_{yz}^\ast 	\mathcal C_{zz}^{-1} \mathcal C_{xz}$ is also well defined. We then let $\{\mu_j,g_j\}_{j\geq 1}$ (resp.\ $\{\nu_j,h_j\}_{j\geq 1}$) be the eigenelements of $\mathcal C_{zz}$ (resp.\ $\mathcal Q$), i.e.,
\begin{equation*}
\mathcal C_{zz} = \sum_{j=1}^\infty \mu_j g_j \otimes g_j \quad \text{and} \quad  \mathcal Q = \sum_{j=1}^\infty \nu_j h_j\otimes h_j.
\end{equation*}   
Since $\mathcal C_{zz}$ and $\mathcal Q$ are self-adjoint and nonnegative, $\mu_j$ and $\nu_j$ are all nonnegative. We know from \eqref{eqmodel} that the population relationship $\mathcal P=\mathcal A \mathcal Q$ holds.  

\subsection{General asymptotic properties}\label{sec:asym1a}
This section discusses the asymptotic properties of the F2SLSE under the following assumption:
\begin{assumpB}\label{assum1eigen2}
 $\mu_1 >\mu_2 >\ldots > 0$ and $\nu_1 >\nu_2 >\ldots > 0$.
\end{assumpB}
It should be noted that the operator $\mathcal A$ satisfying $\mathcal P=\mathcal A\mathcal Q$ is uniquely identified if Assumptions \ref{assum1a} and \ref{assum1eigen2} are satisfied. To see this in detail,  note that $\mathcal A$ satisfying $\mathcal C_{yz}^\ast \mathcal C_{xz} = \mathcal A \mathcal C_{xz}^\ast \mathcal C_{xz}$ (resp.\ $\mathcal P=\mathcal A \mathcal Q$) is identified if and only if $\mathcal C_{xz} $ (resp.\ $\mathcal C_{zz} ^{-1/2} \mathcal C_{xz}$) is injective. If Assumption~\ref{assum1eigen2} is satisfied and hence $\mathcal C_{zz} $ is injective, then $\ker \mathcal C_{xz} (=\ker \mathcal C_{xz}^\ast \mathcal C_{xz})$  becomes identical to $ \ker \mathcal C_{zz} ^{-1/2} \mathcal C_{xz} (= \ker \mathcal Q)$.

As in Section \ref{sec:asym1}, we also consider the following decomposition:
\begin{align}\label{eqdecom2}
\widetilde{\mathcal A}-\mathcal A = (	\widetilde{\mathcal A}-\mathcal A\widetilde{\Pi}_{\K_2}) - \mathcal A (\mathcal I-\widetilde{\Pi}_{\K_2}),
\end{align}
where  $\widetilde{\Pi}_{\K_2}$ denotes the orthogonal projection defined by $	\widetilde{\Pi}_{\K_2} = \sum_{j=1}^{\K_2} \widehat{h}_j \otimes \widehat{h}_j$ and $\{\widehat{h}_j\}_{j=1}^{\K_2}$ is the collection of the eigenvectors of $\widehat{\mathcal Q}$ corresponding to the first $\K_2$ leading eigenvalues. The two terms in \eqref{eqdecom2} are similarly interpreted as in the case of the FIVE (see \eqref{eqdecom1}), and we thus call the first (resp.\ the second) term the DR (resp.\ RB) component.  

We first show that  both the DR and RB components are asymptotically negligible (and thus $\widehat{\mathcal A}$ is weakly consistent) if the regularization parameters $\abar_1$ and $\abar_2$ diverge to infinity at appropriate rates: in the theorem below, we let 
$\tau_{1,j} =  2\sqrt{2} \max\{(\mu_{j-1}-\mu_{j})^{-1},(\mu_{j}-\mu_{j+1})^{-1}\}$ and $\tau_{2,j} =  2\sqrt{2} \max\{(\nu_{j-1}-\nu_{j})^{-1}, (\nu_{j}-\nu_{j+1})^{-1}\}$.
\begin{theorem} \label{thm2w}
Suppose that Assumptions \ref{assum1a} and \ref{assum1eigen2} are satisfied, and $T^{-1/2} (\sum_{j=1}^{\K_1} \mu_j \tau_{1,j} ) (\sum_{j=1}^{\K_2}\tau_{2,j} )\pto0$,  $(\sum_{j=\K_1+1}^{\infty}\mu_j)(\sum_{j=1}^{\K_2}\tau_{2,j} ) \pto 0$,  $\abar_1^{-1} \abar_2 \to 0$,  
and $T^{-1}\abar_1 \to 0$ as $\abar_{1}\to \infty$, $\abar_{2}\to \infty$ and $T\to \infty$. Then 	
\begin{equation*}
\|\widetilde{\mathcal A} - \mathcal A\widetilde{\Pi}_{\K_2}\|_{\op}^2 =O_p(T^{-1} \abar_1^{1/2}\abar_2^{1/2})\quad \text{and} \quad 	\|\mathcal A (\mathcal I-\widetilde{\Pi}_{\K_2})\|_{\op}^2 =o_p(1).	
\end{equation*}
\end{theorem}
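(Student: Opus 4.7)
The plan is to mirror the strategy used for the FIVE in Theorem~\ref{thm2w1}, but account for the fact that $\widehat{\mathcal Q}$ itself carries a first-stage regularized inverse, which makes the perturbation analysis two-layered.

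\textbf{DR component.} I would first use $y_t = \mathcal A x_t + u_t$ together with the linearity of the tensor product to write $\widehat{\mathcal C}_{yz}^\ast = \mathcal A\widehat{\mathcal C}_{xz}^\ast + \widehat{\mathcal C}_{uz}^\ast$. Substituting into the definition of $\widehat{\mathcal P}$ and using the identity $\widehat{\mathcal Q}\widehat{\mathcal Q}_{\K_2}^{-1} = \widetilde{\Pi}_{\K_2}$ yields the exact formula
\[
\widetilde{\mathcal A} - \mathcal A\widetilde{\Pi}_{\K_2} = \widehat{\mathcal C}_{uz}^\ast (\widehat{\mathcal C}_{zz})_{\K_1}^{-1/2}\,\widehat{\mathcal E}\,\widehat{\mathcal Q}_{\K_2}^{-1/2},
\]
where $\widehat{\mathcal E} := (\widehat{\mathcal C}_{zz})_{\K_1}^{-1/2}\widehat{\mathcal C}_{xz}\widehat{\mathcal Q}_{\K_2}^{-1/2}$. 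A direct computation gives $\widehat{\mathcal E}^\ast\widehat{\mathcal E} = \widehat{\mathcal Q}_{\K_2}^{-1/2}\widehat{\mathcal Q}\widehat{\mathcal Q}_{\K_2}^{-1/2} = \widetilde{\Pi}_{\K_2}$, so $\widehat{\mathcal E}$ is a partial isometry with $\|\widehat{\mathcal E}\|_{\op}\leq 1$. Crucially, because the thresholds $\alpha_1,\alpha_2$ are imposed on $\widehat\mu_j^2$ and $\widehat\nu_j^2$ (not on $\widehat\mu_j,\widehat\nu_j$), one has $\|(\widehat{\mathcal C}_{zz})_{\K_1}^{-1/2}\|_{\op}\leq \alpha_1^{-1/4}$ and $\|\widehat{\mathcal Q}_{\K_2}^{-1/2}\|_{\op}\leq \alpha_2^{-1/4}$. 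Combined with $\|\widehat{\mathcal C}_{uz}\|_{\op}=O_p(T^{-1/2})$ from Assumption~\ref{assum1a}.\ref{assum1a4} and submultiplicativity, this gives $\|\widetilde{\mathcal A}-\mathcal A\widetilde{\Pi}_{\K_2}\|_{\op}^2 = O_p(T^{-1}\alpha_1^{-1/2}\alpha_2^{-1/2})$ as stated.

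\textbf{RB component.} Introducing the population projection $\Pi_{\K_2}^{\mathcal Q}:=\sum_{j=1}^{\K_2} h_j\otimes h_j$, I would split
\[
\mathcal A(\mathcal I-\widetilde{\Pi}_{\K_2}) = \mathcal A(\mathcal I-\Pi_{\K_2}^{\mathcal Q}) + \mathcal A(\Pi_{\K_2}^{\mathcal Q}-\widetilde{\Pi}_{\K_2}).
\]
The first summand has $\|\mathcal A(\mathcal I-\Pi_{\K_2}^{\mathcal Q})\|_{\HS}^2 = \sum_{j>\K_2}\|\mathcal A h_j\|^2 \to 0$ since $\mathcal A$ is Hilbert--Schmidt by Assumption~\ref{assum1a}.\ref{assum1a2} and $\K_2\pto\infty$ under the given rate conditions. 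For the second summand, standard Bosq-type eigenvector perturbation bounds give $\|\widehat h_j-h_j\|\leq \tau_{2,j}\|\widehat{\mathcal Q}-\mathcal Q\|_{\op}$ and hence $\|\Pi_{\K_2}^{\mathcal Q}-\widetilde{\Pi}_{\K_2}\|_{\op}\leq C\,\bigl(\sum_{j=1}^{\K_2}\tau_{2,j}\bigr)\|\widehat{\mathcal Q}-\mathcal Q\|_{\op}$ for a universal constant $C$. It remains to bound $\|\widehat{\mathcal Q}-\mathcal Q\|_{\op}$, which I would decompose as
\[
\widehat{\mathcal Q}-\mathcal Q = \bigl[\widehat{\mathcal C}_{xz}^\ast(\widehat{\mathcal C}_{zz})_{\K_1}^{-1}\widehat{\mathcal C}_{xz} - \mathcal C_{xz}^\ast(\mathcal C_{zz})_{\K_1}^{-1}\mathcal C_{xz}\bigr] + \mathcal C_{xz}^\ast\bigl[(\mathcal C_{zz})_{\K_1}^{-1}-\mathcal C_{zz}^{-1}\bigr]\mathcal C_{xz}.
\]
The first bracket is handled by a telescoping argument exchanging $\widehat{\mathcal C}_{xz}$ for $\mathcal C_{xz}$ and the sample eigenelements for the population ones; the Bosq bounds on the eigenstructure of $\widehat{\mathcal C}_{zz}$ give each eigenvector perturbation a factor $\tau_{1,j}$, while a $\mu_j$ weighting arises from the identity $\mathcal C_{xz}^\ast g_j = \mu_j^{1/2}\mathcal R_{xz}^\ast g_j$ (with $\mathcal R_{xz}$ the correlation operator from Section~\ref{sec:f2sls0}), producing the $T^{-1/2}\sum_{j=1}^{\K_1}\mu_j\tau_{1,j}$ factor. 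The second bracket (first-stage truncation bias) is bounded by $C'\sum_{j>\K_1}\mu_j$ after the same $\mathcal R_{xz}$ identity cancels the otherwise divergent $\mu_j^{-1}$ factors. Multiplying by $\sum_{j=1}^{\K_2}\tau_{2,j}$ reproduces the two summability hypotheses of the theorem.

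\textbf{Main obstacle.} The principal difficulty is the two-layered nature of $\widehat{\mathcal Q}$: it is itself a nonlinear functional of $\widehat{\mathcal C}_{xz}$ and $\widehat{\mathcal C}_{zz}$ containing a regularized inverse, and its eigenstructure is then perturbed a second time to form $\widetilde{\Pi}_{\K_2}$. Propagating sampling errors through both layers without the bias blowing up requires the correlation-operator cancellation $\mathcal C_{xz}^\ast\mathcal C_{zz}^{-1/2}=\mathcal R_{xz}^\ast$ (without which the truncation bias would involve the divergent sum $\sum_{j>\K_1}\mu_j^{-1}$). The alignment condition $\alpha_1\alpha_2^{-1}\to 0$ serves precisely to keep the first-stage truncation level negligibly fine relative to the second-stage one, so that the perturbation of $\widehat{\mathcal Q}$ as measured against the gap penalty $\sum_{j\leq \K_2}\tau_{2,j}$ remains controllable.
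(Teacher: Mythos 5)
Your treatment of the DR component is correct and is essentially the paper's argument: the paper bounds $\|\widehat{\mathcal C}_{uz}\|_{\HS}\|(\widehat{\mathcal C}_{zz})_{\K_1}^{-1/2}\|_{\op}\|(\widehat{\mathcal C}_{zz})_{\K_1}^{-1/2}\widehat{\mathcal C}_{xz}\widehat{\mathcal Q}_{\K_2}^{-1}\|_{\op}$, and your observation that $\widehat{\mathcal E}^\ast\widehat{\mathcal E}=\widetilde{\Pi}_{\K_2}$ is exactly the identity that makes the last factor $O_p(\alpha_2^{-1/4})$. Your outer skeleton for the RB component (Hilbert--Schmidt tail plus a projection perturbation controlled by $(\sum_{j=1}^{\K_2}\tau_{2,j})\|\widehat{\mathcal Q}-\mathcal Q\|_{\op}$) also matches the paper.

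The gap is in how you bound $\|\widehat{\mathcal Q}-\mathcal Q\|_{\op}$. Your decomposition swaps $\widehat{\mathcal C}_{xz}$ for $\mathcal C_{xz}$ and the eigenelements of $\widehat{\mathcal C}_{zz}$ for those of $\mathcal C_{zz}$ \emph{separately}, and you claim the identity $\mathcal C_{xz}^\ast\mathcal C_{zz}^{-1/2}=\mathcal C_{xx}^{1/2}\mathcal R_{xz}^\ast$ supplies the $\mu_j$ weights. It does not: that identity contributes only $\mu_j^{1/2}$ per factor of $\mathcal C_{xz}$ applied to a \emph{population} eigenvector $g_j$, and in every eigenvector-perturbation term one of the two $\mathcal C_{xz}$ factors hits $\hat g_j-g_j^s$ instead, where no $\mu_j^{1/2}$ is available. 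Against the $\mu_j^{-1}$ from the regularized inverse, the net weight on $\tau_{1,j}\|\widehat{\mathcal C}_{zz}-\mathcal C_{zz}\|_{\op}$ is at best $O(1)$ (using $\mathcal C_{xz}^\ast=\mathcal B\mathcal C_{zz}$), and $\mu_j^{-1/2}$ if you only use the correlation-operator identity. Either way you obtain $O_p(T^{-1/2})\sum_{j=1}^{\K_1}\tau_{1,j}$ rather than $O_p(T^{-1/2})\sum_{j=1}^{\K_1}\mu_j\tau_{1,j}$, and the former is \emph{not} controlled by the hypothesis $T^{-1/2}(\sum_{j=1}^{\K_1}\mu_j\tau_{1,j})(\sum_{j=1}^{\K_2}\tau_{2,j})\pto 0$, so your route proves only a weaker theorem. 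The missing idea is to substitute the first-stage equation at the \emph{sample} level, $\widehat{\mathcal C}_{xz}=\widehat{\mathcal C}_{zz}\mathcal B^\ast+\widehat{\mathcal C}_{vz}$, before touching the regularized inverse: then $\widehat{\mathcal C}_{xz}^\ast(\widehat{\mathcal C}_{zz})_{\K_1}^{-1}\widehat{\mathcal C}_{xz}=\mathcal B\widehat{\Pi}_{\K_1}\widehat{\mathcal C}_{zz}\widehat{\Pi}_{\K_1}\mathcal B^\ast+\mathcal S$ with $\mathcal S$ collecting the $\widehat{\mathcal C}_{vz}$ terms, and since $\widehat{\Pi}_{\K_1}\widehat{\mathcal C}_{zz}\widehat{\Pi}_{\K_1}=\sum_{j=1}^{\K_1}\hat\mu_j\,\hat g_j\otimes\hat g_j$ exactly, its comparison with $\sum_{j=1}^{\K_1}\mu_j\,g_j\otimes g_j$ attaches the small eigenvalue $\mu_j$ to each eigenvector error $\|\hat g_j-g_j^s\|\leq\tau_{1,j}\|\widehat{\mathcal C}_{zz}-\mathcal C_{zz}\|_{\op}$, which is precisely where $\sum_{j=1}^{\K_1}\mu_j\tau_{1,j}$ comes from. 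The same substitution is what makes the cross terms in $\mathcal S$ only $O_p(T^{-1/2})+O_p(\alpha_1^{-1/2}T^{-1})$ and the truncation bias $\|\mathcal B(\mathcal I-\Pi_{\K_1})\mathcal C_{zz}(\mathcal I-\Pi_{\K_1})\mathcal B^\ast\|_{\op}\leq\|\mathcal B\|_{\op}^2\sum_{j=\K_1+1}^{\infty}\mu_j$; your bound for the latter is recoverable, but only via $\mathcal B$, not via $\mathcal R_{xz}$ alone.
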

An immediate consequence of Theorem \ref{thm2w} is given as follows:
\begin{corollary}\label{cor2}  Suppose that the assumptions in Theorem \ref{thm2w2} are satisfied and let $\tilde{u}_t = y_t-\widetilde{\mathcal A}x_t$.  Then $
\|T^{-1}\sum_{t=1}^T \tilde{u}_t \otimes \tilde{u}_t  - \mathcal C_{uu} \|_{\op} \pto 0$. 
\end{corollary}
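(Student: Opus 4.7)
The plan is to reduce the statement to the consistency of $\widetilde{\mathcal A}$ established in Theorem~\ref{thm2w} together with the convergence of the sample residual covariance $\widehat{\mathcal C}_{uu}$ to $\mathcal C_{uu}$ granted by Assumption~\ref{assum1}.\ref{assum1.7a}. The starting point is the identity $\tilde u_t = y_t - \widetilde{\mathcal A} x_t = u_t + (\mathcal A - \widetilde{\mathcal A})x_t$, which follows from \eqref{eqmodel}. Writing $\mathcal D = \mathcal A - \widetilde{\mathcal A}$ for brevity and expanding the tensor product yields
\begin{equation*}
\tilde u_t \otimes \tilde u_t = u_t \otimes u_t + u_t \otimes (\mathcal D x_t) + (\mathcal D x_t) \otimes u_t + (\mathcal D x_t)\otimes (\mathcal D x_t).
\end{equation*}

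Averaging over $t=1,\ldots,T$ and using the tensor-product identity $a \otimes (\mathcal T b) = (a \otimes b)\mathcal T^\ast$ and $(\mathcal T a)\otimes b = \mathcal T(a\otimes b)$, the four sample averages can be written as
\begin{equation*}
T^{-1}\sum_{t=1}^T \tilde u_t \otimes \tilde u_t = \widehat{\mathcal C}_{uu} + \widehat{\mathcal C}_{ux}\mathcal D^\ast + \mathcal D\,\widehat{\mathcal C}_{ux}^\ast + \mathcal D\,\widehat{\mathcal C}_{xx}\mathcal D^\ast,
\end{equation*}
where $\widehat{\mathcal C}_{ux}=T^{-1}\sum_t u_t\otimes x_t$ and $\widehat{\mathcal C}_{xx}=T^{-1}\sum_t x_t\otimes x_t$. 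The first term converges to $\mathcal C_{uu}$ in operator norm by Assumption~\ref{assum1}.\ref{assum1.7a} and the inequality $\|\cdot\|_{\op}\leq \|\cdot\|_{\HS}$. For the remaining three terms I would apply the submultiplicativity of the operator norm to get bounds of the form $\|\widehat{\mathcal C}_{ux}\|_{\op}\|\mathcal D\|_{\op}$ and $\|\widehat{\mathcal C}_{xx}\|_{\op}\|\mathcal D\|_{\op}^2$.

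It then remains to control each factor. The operator norms of $\widehat{\mathcal C}_{ux}$ and $\widehat{\mathcal C}_{xx}$ are bounded by their Hilbert--Schmidt norms, and the crude bounds $\|\widehat{\mathcal C}_{ux}\|_{\HS}\leq T^{-1}\sum_t \|u_t\|\|x_t\|$ and $\|\widehat{\mathcal C}_{xx}\|_{\HS}\leq T^{-1}\sum_t \|x_t\|^2$ together with Cauchy--Schwarz and the second-moment hypotheses in Assumption~\ref{assum1}.\ref{assum1.2} and Assumption~\ref{assum1}.\ref{assum1.4} show that both are $O_p(1)$. Meanwhile, Theorem~\ref{thm2w} (combined with the decomposition \eqref{eqdecom2}) yields $\|\mathcal D\|_{\op}=\|\widetilde{\mathcal A}-\mathcal A\|_{\op}=o_p(1)$. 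Consequently the three cross/quadratic terms are $o_p(1)$ in operator norm, and the triangle inequality delivers the claimed convergence $\|T^{-1}\sum_{t=1}^T \tilde u_t\otimes\tilde u_t - \mathcal C_{uu}\|_{\op}\to 0$ in probability.

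This is essentially bookkeeping, so I do not anticipate a genuine obstacle; the only mild point of care is that $\widehat{\mathcal C}_{ux}$ does not vanish asymptotically (because of endogeneity), so the argument must rely on the $o_p(1)$ factor $\mathcal D$ rather than on a vanishing sample cross-moment. All the required ingredients—operator consistency of $\widetilde{\mathcal A}$, operator-norm domination by the Hilbert--Schmidt norm, and finite second moments—are already in place in Assumption~\ref{assum1a} and Theorem~\ref{thm2w}, so the full argument is short.
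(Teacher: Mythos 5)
Your argument is correct and is exactly the bookkeeping the paper has in mind: the paper gives no explicit proof, presenting the corollary as an immediate consequence of Theorem~\ref{thm2w} (note the hypothesis ``Theorem~\ref{thm2w2}'' in the corollary's statement is evidently a typo for Theorem~\ref{thm2w}, which is what your proof correctly uses), and your decomposition of $\tilde u_t\otimes\tilde u_t$ plus the $o_p(1)$ factor $\|\widetilde{\mathcal A}-\mathcal A\|_{\op}$ is the same route one takes for Corollary~\ref{cor1}. One cosmetic remark: under the paper's convention $\zeta_1\otimes\zeta_2(\cdot)=\langle\zeta_1,\cdot\rangle\zeta_2$ your two tensor identities are interchanged --- one actually gets $u_t\otimes(\mathcal D x_t)=\mathcal D(u_t\otimes x_t)$ and $(\mathcal D x_t)\otimes u_t=(x_t\otimes u_t)\mathcal D^{\ast}$ --- but since taking adjoints preserves the operator norm this swap changes none of your bounds.
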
 
As in the case of the FIVE, the conditions imposed on the quantities $(\sum_{j=1}^{\K_1} \mu_j \tau_{1,j})(\sum_{j=1}^{\K_2}\tau_{2,j})$ and $(\sum_{j=\K_1+1}^{\infty}\mu_j)(\sum_{j=1}^{\K_2}\tau_{2,j} )$ are understood not as special restrictions on the eigenvalues, but as requirements on the growing rates of $\abar_1$ and $\abar_2$ in our asymptotic theory. Specifically, the condition on the former quantity merely requires $\abar_1$ and $\abar_2$ to increase slowly so that  $\K_1$ and $\K_2$ tend to grow with sufficiently slower rates than $T$. Moreover, given that, for fixed $\abar_2$, $(\sum_{j=\K_1+1}^{\infty}\mu_j)(\sum_{j=1}^{\K_2}\tau_{2,j} )$ can be arbitrarily small by choosing $\abar_1$ large enough, the condition on the latter quantity merely tells us that the growing rate of $\abar_1$ needs to be sufficiently higher than that of $\abar_2$. In addition to the weak consistency given by Theorem \ref{thm2w},  the strong consistency of the F2SLSE can be derived under  additional conditions; this is discussed in Section \ref{sec:strong2sls} of the Supplementary Material. 

We also obtain  an asymptotic normality result similar to that given by Theorem \ref{thm2w2} for the FIVE: 
\begin{theorem}\label{thm2wa}
Suppose that the assumptions in Theorem \ref{thm2w} are satisfied, $ T^{-1/2} \abar_{1}^{1/2}\sum_{j=1}^{\K_1}\tau_{1,j}\pto 0$, $ T^{-1/2}\abar_2^{1/2} (\sum_{j=1}^{\K_1} \mu_j \tau_{1,j} ) (\sum_{j=1}^{\K_2}\tau_{2,j} )\pto0$,  $\abar_2^{1/2}(\sum_{j=\K_1+1}^{\infty}\mu_j)(\sum_{j=1}^{\K_2}\tau_{2,j} ) \pto 0$, $\abar_1^{-1} \abar_2 \to 0$, and $T^{-1}\abar_1\to 0$ as $\abar_1\to \infty$, $\abar_2\to \infty$ and $T \to \infty$.
Then the following hold for any $\zeta \in \mathcal H$. 
\begin{enumerate}[(i)]
\item\label{thm2waa} $\sqrt{{{T}}/{{\phi_{\K_2}(\zeta)}}}(\widetilde{\mathcal A}- \mathcal A \widetilde{\Pi}_{\K_2}) \zeta \dto N(0, \mathcal C_{uu})$, 	where $	{\phi_{\K_2}(\zeta)}= \langle \zeta,  \mathcal Q_{\K_2}^{-1}\zeta \rangle$.  
\item\label{thm2wab} If $\widehat{\phi}_{\K_2}(\zeta)  \coloneqq \langle \zeta,  \widehat{\mathcal Q}_{\K_2}^{-1} \zeta \rangle$, then  $ |\widehat{\phi}_{\K_2}(\zeta) - {\phi}_{\K_2}(\zeta)|\pto 0$.
\end{enumerate}
\end{theorem}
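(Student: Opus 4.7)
The plan is to mirror the strategy of Theorem~\ref{thm2w2}, adapted to the cascaded regularization in the F2SLSE, namely: (a) derive an algebraic identity that expresses the DR component as a sample linear functional of $\{u_t\}$ with a random ``direction''; (b) replace that random direction by a population analogue at negligible cost under the stated rate conditions; and (c) apply a Hilbert-valued martingale CLT to the leading term. For part~(ii), invoke a spectral-perturbation bound on $\widehat{\mathcal Q}_{\K_2}^{-1}-\mathcal Q_{\K_2}^{-1}$.

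Concretely, substituting $y_t=\mathcal A x_t+u_t$ gives $\widehat{\mathcal C}_{yz}^\ast=\mathcal A\widehat{\mathcal C}_{xz}^\ast+\widehat{\mathcal C}_{uz}^\ast$, hence $\widehat{\mathcal P}=\mathcal A\widehat{\mathcal Q}+\widehat{\mathcal C}_{uz}^\ast(\widehat{\mathcal C}_{zz})_{\K_1}^{-1}\widehat{\mathcal C}_{xz}$. Since $\widehat{\mathcal Q}\,\widehat{\mathcal Q}_{\K_2}^{-1}=\widetilde{\Pi}_{\K_2}$ by construction, one obtains
\begin{equation*}
(\widetilde{\mathcal A}-\mathcal A\widetilde{\Pi}_{\K_2})\zeta \;=\; \widehat{\mathcal C}_{uz}^\ast(\widehat{\mathcal C}_{zz})_{\K_1}^{-1}\widehat{\mathcal C}_{xz}\widehat{\mathcal Q}_{\K_2}^{-1}\zeta \;=\; \frac{1}{T}\sum_{t=1}^T \langle z_t,\widehat\eta_T\rangle u_t,
\end{equation*}
where $\widehat\eta_T := (\widehat{\mathcal C}_{zz})_{\K_1}^{-1}\widehat{\mathcal C}_{xz}\widehat{\mathcal Q}_{\K_2}^{-1}\zeta$. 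Next, let $\eta_T := \mathcal C_{zz}^{-1}\mathcal C_{xz}\mathcal Q_{\K_2}^{-1}\zeta$ (well-defined through the identity $\mathcal C_{zz}^{-1/2}\mathcal C_{xz}=\mathcal R_{xz}\mathcal C_{xx}^{1/2}$ recalled in Section~\ref{sec:f2sls0}). By Cauchy--Schwarz one has $\|T^{-1/2}\sum_t \langle z_t,\widehat\eta_T-\eta_T\rangle u_t\|^2 \le \|\widehat\eta_T-\eta_T\|^2\cdot T^{-1}\sum_t\|z_t\|^2\|u_t\|^2$, and $\|\widehat\eta_T-\eta_T\|$ can be bounded by combining perturbation inequalities for the two cascaded regularized inverses (the Bosq/Mas bounds already used in the proof of Theorem~\ref{thm2w}, controlled by the eigengap reciprocals $\tau_{1,j}$ and $\tau_{2,j}$) with the Hilbert--Schmidt convergence rates from Assumptions~\ref{assum1a}.\ref{assum1a4}. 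The awkward product conditions $T^{-1/2}\alpha_{1}^{-1/2}\sum_j\tau_{1,j}\pto 0$, $T^{-1/2}\alpha_2^{-1/2}(\sum_j \mu_j\tau_{1,j})(\sum_j\tau_{2,j})\pto 0$, and $\alpha_2^{-1/2}(\sum_{j>\K_1}\mu_j)(\sum_j\tau_{2,j})\pto 0$ are precisely calibrated so that, after multiplication by $\sqrt{T/\phi_{\K_2}(\zeta)}$, the resulting ``replacement error'' is $o_p(1)$.

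The surviving leading term is $(T\phi_{\K_2}(\zeta))^{-1/2}\sum_t\langle z_t,\eta_T\rangle u_t$. Because $z_t$ is $\mathfrak F_{t-1}$-measurable and $\mathbb E[u_t\mid\mathfrak F_{t-1}]=0$, $\mathbb E[u_t\otimes u_t\mid \mathfrak F_{t-1}]=\mathcal C_{uu}$ by Assumptions~\ref{assum1}.\ref{assum1.3}--\ref{assum1}.\ref{assum1.4}, the summands form an $\mathcal H$-valued martingale difference array (conditionally on the $\sigma$-field generated by $\K_2$). A direct calculation using $\mathcal Q\mathcal Q_{\K_2}^{-1}\zeta=\sum_{j\le\K_2}\langle h_j,\zeta\rangle h_j$ yields $\mathbb E[\langle z_t,\eta_T\rangle^2\mid\K_2]=\langle\eta_T,\mathcal C_{zz}\eta_T\rangle=\langle\mathcal Q_{\K_2}^{-1}\zeta,\mathcal Q\mathcal Q_{\K_2}^{-1}\zeta\rangle=\phi_{\K_2}(\zeta)$, justifying the normalizing factor. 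Invoking a Hilbert-valued martingale CLT (e.g.\ Theorem 2.17 of \citealp{Bosq2000} or the analogue used by \citealp{Mas2007}) then delivers the $N(0,\mathcal C_{uu})$ limit, once the Lindeberg condition is verified via the $(2+\delta)$-moment bound in Assumption~\ref{assum1}.\ref{assum1.4} and the quadratic-variation condition is verified via a law-of-large-numbers step on $T^{-1}\sum_t\langle z_t,\eta_T\rangle^2 \pto\phi_{\K_2}(\zeta)$ together with Assumption~\ref{assum1}.\ref{assum1.7a}.

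For part~(ii), writing $\widehat{\phi}_{\K_2}(\zeta)-\phi_{\K_2}(\zeta) = \langle\zeta,(\widehat{\mathcal Q}_{\K_2}^{-1}-\mathcal Q_{\K_2}^{-1})\zeta\rangle$ and noting that on the event $\{\K_2=k\}$ the eigenvalue thresholds imply $\hat\nu_k^2>\alpha_2$ and (by perturbation) eventually $\nu_k^2>\alpha_2/2$, the standard spectral-perturbation bounds (cf.\ Lemma~4.3 of \citealp{Bosq2000}) give $|\widehat{\phi}_{\K_2}(\zeta)-\phi_{\K_2}(\zeta)|=o_p(1)$, provided $\|\widehat{\mathcal Q}-\mathcal Q\|_{\op}\pto 0$, which itself follows from Assumption~\ref{assum1a}.\ref{assum1a4} and the rate conditions.

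The main obstacle will be Step~(b): the doubly regularized object $\widehat\eta_T$ is the output of two nested rank-regularized inverses with data-dependent truncations $\K_1$ and $\K_2$, so perturbation errors compound multiplicatively. Establishing that the error propagation is tight enough to be killed by $\sqrt{T/\phi_{\K_2}(\zeta)}$ pointwise — and not merely in operator norm — is the principal source of the intricate product-form rate conditions in the theorem, and where the bulk of the bookkeeping will lie.
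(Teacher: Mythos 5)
Your overall architecture coincides with the paper's: the same identity $(\widetilde{\mathcal A}-\mathcal A\widetilde{\Pi}_{\K_2})\zeta=\widehat{\mathcal C}_{uz}^{\ast}(\widehat{\mathcal C}_{zz})_{\K_1}^{-1}\widehat{\mathcal C}_{xz}\widehat{\mathcal Q}_{\K_2}^{-1}\zeta$, the same replacement of the data-dependent direction by a population analogue via cascaded eigen-perturbation bounds, the same martingale CLT (finite-dimensional convergence plus a tightness/truncation step) with the variance identity $\langle \eta,\mathcal C_{zz}\eta\rangle=\phi_{\K_2}(\zeta)$, and the same perturbation bound on $\widehat{\mathcal Q}_{\K_2}^{-1}-\mathcal Q_{\K_2}^{-1}$ for part (ii). The only structural difference is that the paper's population direction is $(\mathcal C_{zz})_{\K_1}^{-1}\mathcal C_{xz}\mathcal Q_{\K_2}^{-1}\zeta$ rather than your $\mathcal C_{zz}^{-1}\mathcal C_{xz}\mathcal Q_{\K_2}^{-1}\zeta=\mathcal B^{\ast}\mathcal Q_{\K_2}^{-1}\zeta$; this is immaterial, since the gap between the two is exactly what the condition on $\alpha_2^{-1/2}(\sum_{j>\K_1}\mu_j)(\sum_{j\le\K_2}\tau_{2,j})$ is there to control.

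There is, however, one step that fails as written: your Cauchy--Schwarz control of the replacement error,
$\|T^{-1/2}\sum_{t}\langle z_t,\widehat\eta_T-\eta_T\rangle u_t\|^2\le\|\widehat\eta_T-\eta_T\|^2\cdot T^{-1}\sum_t\|z_t\|^2\|u_t\|^2$.
The inequality itself is off by a factor of $T$ (a termwise bound gives $\|\widehat\eta_T-\eta_T\|^2\sum_t\|z_t\|^2\|u_t\|^2$), and, more fundamentally, any bound of this form discards the cancellation in $\sum_t z_t\otimes u_t$ and can only yield $O_p(\sqrt{T})\|\widehat\eta_T-\eta_T\|$. The stated rate conditions are calibrated to make $\|\widehat\eta_T-\eta_T\|$ merely $o_p(1)$, not $o_p(T^{-1/2})$ (e.g.\ the middle term $(\mathcal C_{zz})_{\K_1}^{-1}(\widehat{\mathcal C}_{xz}-\mathcal C_{xz})\widehat{\mathcal Q}_{\K_2}^{-1}$ is of order $\mu_{\K_1}^{-1}\alpha_2^{-1/2}T^{-1/2}$, which vanishes but whose product with $\sqrt{T}$ diverges), so under your bound the replacement error does not vanish. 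The repair is to keep the martingale structure intact: write the error as $\sqrt{T}\,\widehat{\mathcal C}_{uz}^{\ast}(\widehat\eta_T-\eta_T)$ and use $\sqrt{T}\|\widehat{\mathcal C}_{uz}\|_{\op}=O_p(1)$, which follows from Assumption \ref{assum1}.\ref{assum1.7} together with $\mathcal C_{uz}=0$, so that the error is $O_p(1)\cdot\|\widehat\eta_T-\eta_T\|=o_p(1)$. This is precisely how the paper closes the argument; with that substitution the rest of your proof goes through.
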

As shown, we need more stringent requirements on the growing rates of $\abar_1$ and $\abar_2$. This is mainly due to that the F2SLSE involves the doubly regularized inverse $\widehat{\mathcal Q}_{\K_2}^{-1}$, which may be ill-behaved if $\abar_1$ and $\abar_2$ do not grow at sufficiently slow rates. This implies that there is no reason why the F2SLSE is generally preferred to the FIVE in this functional setting, unlike what we can expect from the general preference for the 2SLSE by practitioners in the Euclidean space setting.\footnote{Moreover, the two estimators have different RB terms whose magnitudes depend on various parameters (e.g., the eigenvalues of $\mathcal C_{xz}^\ast\mathcal C_{xz}$ and $\mathcal Q$), and thus the RB component of the FIVE can have a smaller asymptotic order. } 

\begin{remark}\normalfont \label{remadd1}
\commRV{While the functional form of the F2SLSE closely resembles the conventional 2SLSE in the Euclidean space setting, a crucial distinction exists. In the conventional case, the 2SLSE gains its theoretical superiority by efficiently combining regressors using a larger number of instruments. However, in our setting, one endogenous functional regressor is instrumented by one another functional variable. This explains, at least to some degree, why the F2SLSE does not supersede the FIVE and why the standard properties of the 2SLSE do not naturally extend to our functional framework. For example, for any element $\zeta \in \mathcal H$, Theorems \ref{thm2w2}  and \ref{thm2wa} tell us that the DR components of the FIVE and F2SLSE, respectively, converge to the same Gaussian random element with $\sqrt{T/\theta_{\K}(\zeta)}$-rate and $\sqrt{T/\phi_{\K_2}(\zeta)}$-rate, which are possibly random quantities depending on $\zeta$. Given this, it is not generally possible to conclude that the F2SLSE is asymptotically better than the FIVE, although we disregard the RB components of the FIVE and F2SLSE.  This insight originates from a detailed discussion by an anonymous referee, to whom we are indebted.} 
\end{remark}

\subsection{Refinements of the general asymptotic results}\label{sec:asym2a}
We provide refinements of Theorems \ref{thm2w} and \ref{thm2wa} \commRV{under the following set of assumptions which is stronger than Assumption \ref{assum1eigen2}}: below, we let  $\tilde{\upsilon}_{t}(j,\ell) = \langle z_t,g_j \rangle\langle z_t,g_{\ell}\rangle -\mathbb{E}[\langle z_t,g_j \rangle\langle z_t,g_{\ell}\rangle]$ for $j,\ell \geq 1$.
\begin{assumpB}\label{assconvratetsls}  There exist constants $c_{\circ}>0$, $\rho_{\mu} > 2$, $\rho_{\nu} > 2$, $\varsigma_{\mu}>1/2$, $\varsigma_{\nu}>1/2$, $\gamma_{\mu}>1/2$, $\gamma_{\nu}>1/2$ and $m>1$  satisfying the following:
\begin{enumerate*}[(a)] 
\item \label{assconvratetsls.1}${\mu}_j^2 \leq  c_{\circ}j^{-\rho_{\mu}}$,
\item \label{assconvratetsls.2} $\mu_j^2 - \mu_{j+1}^2 \geq c_\circ ^{-1} j^{-\rho_{\mu}-1}$, 
\item \label{assconvratetsls.3} ${\nu}_j^2 \leq  c_{\circ}j^{-\rho_{\nu}}$,
\item \label{assconvratetsls.4} $\nu_j^2 - \nu_{j+1}^2 \geq c_\circ ^{-1} j^{-\rho_{\nu}-1}$,
\item  \label{assconvratetsls.70} $\langle   h_j ,\mathcal Ah_\ell \rangle \leq c_\circ j^{-\gamma_{\nu}} \ell ^{-\varsigma_{\nu}}$, 
\item  \label{assconvratetsls.7}$\langle   h_j ,\mathcal Bg_\ell \rangle \leq c_\circ j^{-\gamma_{\mu}} \ell ^{-\varsigma_{\mu}}$ and $\gamma_{\mu} {\leq \rho_{\nu}/4+1/2}$,
\item\label{assconvratetsls.5} $\mathbb{E}[\tilde{\upsilon}_{t}(j,\ell)\tilde{\upsilon}_{t-s}(j,\ell)] \leq c_{\circ} s^{-m} \mathbb E[\tilde{\upsilon}_t ^2 (j,\ell)]$ for $s\geq 1$, $\mathbb E [\Vert\langle z_t, g_j \rangle z_t \Vert ^2]\leq c_{\circ} \mu_j^2$, 
and  $\mathbb{E}[\Vert \langle x_t, h_j \rangle z_t \Vert ^2 ] \leq c_{\circ}\Vert \mathcal C_{xz} h_j \Vert ^2$.
\end{enumerate*}
\end{assumpB}
 
The conditions are somewhat similar to those in Assumption \ref{assumconvrate}, and thus we omit detailed comments except the following two points:  (i) from a technical point of view,  Assumption \ref{assconvratetsls}.\ref{assconvratetsls.5} is similar to Assumption \ref{assumconvrate}.\ref{assumconvrate.4} employed for our study of the FIVE and helps us obtain convergence rates of the eigenelements of $\widehat{\mathcal Q}$ (which are crucial inputs to our main results of the F2SLSE), and \commRV{(ii) we require a smoothness condition on \(\mathcal B\) which characterizes the linear relationship between \(x_t\) and \(z_t\) whereas such a condition is not necessary in the case of the FIVE. This  reveals that the data generating process (DGP) is more restricted for our asymptotic analysis of the F2SLSE.} 

Our next result refines Theorem \ref{thm2w} by providing a more detailed result on the RB component.
\begin{theorem}\label{thm3:convrate}
Suppose that Assumptions \ref{assum1a} and \ref{assconvratetsls} are satisfied, $\abar_1 = o(T^{\rho_{\mu}/(2\rho_{\mu}+2)})$ and $\abar_2 = o(\abar_1^{\rho_{\nu}/(2\rho_{\nu}+2)})$.  Then,  $\|\widetilde{\mathcal A} - \mathcal A\widetilde{\Pi}_{\K_2}\|_{\op}^2 = O_p(T^{-1}\abar_1^{1/2}\abar_2^{1/2})$ as in Theorem \ref{thm2w}, and 
\begin{equation} \label{eqthmconvrate2}
\|\mathcal A (\mathcal I-\widetilde{\Pi}_{\K_2})\|_{\op}^2  =  O_p(\abar_1^{-1} \abar_2 \max\{1, \abar_2^{(3-2\varsigma_{\nu})/\rho_{\nu}}\} + \abar_2^{(1-2\varsigma_{\nu})/\rho_{\nu}}). 
\end{equation}
Thus, $\|\widetilde{\mathcal A}-\mathcal A\|_{\op} = o_p(1)$ for any $\rho_{\mu} > 2$, $\rho_{\nu} > 2$, $\varsigma_{\mu} > 1/2$ and $\varsigma_{\nu} > 1/2$. 
\end{theorem}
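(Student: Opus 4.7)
My plan is to mirror the two-step strategy behind Theorem~\ref{thm:convrate}: the DR bound is inherited from Theorem~\ref{thm2w}, while the RB bound requires the analogue of decomposition~\eqref{eqdecom1} together with careful control of how the two regularization parameters interact through $\widehat{\mathcal Q}$. First I would verify that the assumptions of Theorem~\ref{thm2w} hold under Assumption~\ref{assconvratetsls} and the imposed rates. Using the eigenvalue-decay and consecutive-gap conditions in Assumption~\ref{assconvratetsls} together with $\nu_j-\nu_{j+1}=(\nu_j^2-\nu_{j+1}^2)/(\nu_j+\nu_{j+1})$, one obtains $\mu_j\asymp j^{-\rho_\mu/2}$, $\nu_j\asymp j^{-\rho_\nu/2}$, $\tau_{1,j}\lesssim j^{\rho_\mu/2+1}$, $\tau_{2,j}\lesssim j^{\rho_\nu/2+1}$, $\K_1\asymp\alpha_1^{-1/\rho_\mu}$, and $\K_2\asymp\alpha_2^{-1/\rho_\nu}$; the four composite conditions in Theorem~\ref{thm2w} then reduce to polynomial inequalities in $\alpha_1,\alpha_2,T$ that hold under $\alpha_1^{-1}=o(T^{\rho_\mu/(2\rho_\mu+2)})$ and $\alpha_2^{-1}=o(\alpha_1^{-\rho_\nu/(2\rho_\nu+2)})$, so the DR rate $O_p(T^{-1}\alpha_1^{-1/2}\alpha_2^{-1/2})$ is inherited directly.

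For the RB term I decompose
\[\mathcal A(\mathcal I-\widetilde{\Pi}_{\K_2}) = \mathcal A(\Pi_{\K_2}^{\circ}-\widetilde{\Pi}_{\K_2}) + \mathcal A(\mathcal I-\Pi_{\K_2}^{\circ}),\]
with $\Pi_{\K_2}^{\circ}=\sum_{j=1}^{\K_2}h_j\otimes h_j$ the population projection onto the first $\K_2$ eigenfunctions of $\mathcal Q$ ($\K_2$ is still the random integer defined through the eigenvalues of $\widehat{\mathcal Q}$). The pure-truncation term is the easier of the two: Assumption~\ref{assconvratetsls}.\ref{assconvratetsls.70} together with $\gamma_\nu>1/2$ gives $\|\mathcal A h_\ell\|^2=\sum_j\langle h_j,\mathcal A h_\ell\rangle^2\leq C\ell^{-2\varsigma_\nu}$, so
\[\|\mathcal A(\mathcal I-\Pi_{\K_2}^{\circ})\|_{\op}^2\leq\|\mathcal A(\mathcal I-\Pi_{\K_2}^{\circ})\|_{\HS}^2\leq C\sum_{\ell>\K_2}\ell^{-2\varsigma_\nu}=O(\K_2^{-(2\varsigma_\nu-1)}),\]
which combined with $\K_2\asymp\alpha_2^{-1/\rho_\nu}$ delivers the $O_p(\alpha_2^{(2\varsigma_\nu-1)/\rho_\nu})$ summand in \eqref{eqthmconvrate2}.

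The projection-perturbation term $\mathcal A(\Pi_{\K_2}^{\circ}-\widetilde{\Pi}_{\K_2})$ is the crux. The key intermediate bound I would establish is
\[\|\widehat{\mathcal Q}-\mathcal Q\|_{\op}=O_p\bigl(T^{-1/2}+\alpha_1^{1/2}\bigr)=O_p(\alpha_1^{1/2}),\]
with the second equality using $T^{-1}\alpha_1^{-1}\to 0$. The sampling piece $T^{-1/2}$ comes from Assumption~\ref{assum1a}.\ref{assum1a4}, while the regularization piece is handled via the factorization $\mathcal C_{zz}^{-1/2}\mathcal C_{xz}=\mathcal R_{xz}\mathcal C_{xx}^{1/2}$ (so one never inverts $\mathcal C_{zz}$ directly) and the tail $\sum_{j>\K_1}\mu_j\langle h_\ell,\mathcal B g_j\rangle^2$ is shown to be $O(\alpha_1)$ uniformly in $\ell\leq\K_2$ using Assumption~\ref{assconvratetsls}.\ref{assconvratetsls.1} together with Assumption~\ref{assconvratetsls}.\ref{assconvratetsls.7} (in particular $\gamma_\mu\geq\rho_\nu/4$). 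Standard eigenfunction perturbation (e.g.\ Bosq, 2000, Lemma~4.3) then yields $\|\hat h_j-h_j\|\lesssim\tau_{2,j}\|\widehat{\mathcal Q}-\mathcal Q\|_{\op}$ and the first-order expansion
\[\hat h_j-h_j=\sum_{\ell\neq j}\frac{\langle(\widehat{\mathcal Q}-\mathcal Q)h_j,h_\ell\rangle}{\nu_j-\nu_\ell}\,h_\ell+O_p(\alpha_1\tau_{2,j}^2).\]
Applying $\mathcal A$ term-by-term, using $|\langle h_k,\mathcal A h_\ell\rangle|\leq c_\circ k^{-\gamma_\nu}\ell^{-\varsigma_\nu}$ from Assumption~\ref{assconvratetsls}.\ref{assconvratetsls.70}, summing over $j\leq\K_2$, and splitting according to whether $\varsigma_\nu\geq 3/2$ (tail $\sum_\ell\ell^{-2\varsigma_\nu+2}$ converges) or $\varsigma_\nu<3/2$ (tail $\sim\K_2^{3-2\varsigma_\nu}$), produces the claimed $O_p(\alpha_1\alpha_2^{-1}\max\{1,\alpha_2^{(2\varsigma_\nu-3)/\rho_\nu}\})$. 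The final $o_p(1)$ consistency claim then follows since $\alpha_1,\alpha_2\to 0$ together with the rate relations forces every piece to vanish.

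The hardest step is establishing the sharp $O_p(\alpha_1^{1/2})$ rate for $\|\widehat{\mathcal Q}-\mathcal Q\|_{\op}$: the operator couples sampling error in $\widehat{\mathcal C}_{xz},\widehat{\mathcal C}_{zz}$ with the rank-regularization of $\widehat{\mathcal C}_{zz}^{-1}$, and naive term-by-term expansions blow up because $\mathcal C_{zz}^{-1}$ is unbounded. The factorization $\mathcal Q=\mathcal C_{xx}^{1/2}\mathcal R_{xz}^\ast\mathcal R_{xz}\mathcal C_{xx}^{1/2}$ developed in Section~\ref{sec:f2sls0} is exactly what sidesteps the unbounded inverse, and the slightly unusual lower bound $\gamma_\mu\geq\rho_\nu/4$ in Assumption~\ref{assconvratetsls}.\ref{assconvratetsls.7} is precisely tuned to make the regularization-bias tail propagate as $O(\alpha_1)$ rather than something larger; making these bounds combine cleanly with the eigenvalue-perturbation requirement $\alpha_2^{-1}=o(\alpha_1^{-\rho_\nu/(2\rho_\nu+2)})$ so that $|\hat\nu_j-\nu_j|$ stays strictly below a constant multiple of the gap for all $j\leq\K_2$ is the main technical work.
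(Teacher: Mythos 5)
Your overall architecture coincides with the paper's: you verify the hypotheses of Theorem~\ref{thm2w} so the DR rate is inherited, you split the RB term into a pure truncation piece plus a projection-perturbation piece, and you bound the truncation piece by $\sum_{\ell>\K_2}\ell^{-2\varsigma_\nu}\asymp\K_2^{-(2\varsigma_\nu-1)}$ exactly as the paper does. The gap is in the perturbation piece. You designate the uniform bound $\|\widehat{\mathcal Q}-\mathcal Q\|_{\op}=O_p(\alpha_1^{1/2})$ as the decisive estimate, but that bound alone cannot deliver \eqref{eqthmconvrate2}. Fed into the Bosq-type inequality $\|\hat h_j-h_j^s\|\le\tau_{2,j}\|\widehat{\mathcal Q}-\mathcal Q\|_{\op}$ it gives only $\|\hat h_j-h_j^s\|\lesssim \alpha_1^{1/2}j^{\rho_\nu/2+1}$, whence
\begin{equation*}
\sum_{j\le\K_2}\|\mathcal A(\hat h_j-h_j^s)\|^2 \lesssim \alpha_1\sum_{j\le\K_2}j^{\rho_\nu+2}\asymp \alpha_1\alpha_2^{-1}\alpha_2^{-3/\rho_\nu},
\end{equation*}
which exceeds the claimed $\alpha_1\alpha_2^{-1}\max\{1,\alpha_2^{(2\varsigma_\nu-3)/\rho_\nu}\}$ by a factor that diverges (at least $\alpha_2^{-\min\{3,\,2\varsigma_\nu\}/\rho_\nu}$). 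Your first-order eigenfunction expansion does not rescue this: with only $|\langle(\widehat{\mathcal Q}-\mathcal Q)h_j,h_\ell\rangle|\le\|\widehat{\mathcal Q}-\mathcal Q\|_{\op}$ the leading sum $\sum_{\ell\ne j}|\nu_j-\nu_\ell|^{-1}\ell^{-\varsigma_\nu}$ already diverges for $\varsigma_\nu\in(1/2,1]$, and the remainder you write as $O_p(\alpha_1\tau_{2,j}^2)=O_p(\alpha_1 j^{\rho_\nu+2})$ is itself too large once summed over $j\le\K_2$ for $\rho_\nu$ close to $2$.

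What the paper actually proves, and what your sketch is missing, is an \emph{anisotropic} perturbation bound: $\|(\widehat{\mathcal Q}-\mathcal Q)h_\ell\|^2=O_p(\alpha_1)\,\ell^{-\rho_\nu/2}$ and $\langle(\widehat{\mathcal Q}-\mathcal Q)h_j,h_\ell\rangle^2=O_p(\alpha_1)\,j^{-\rho_\nu/2}\ell^{-\rho_\nu/2}$, obtained from the moment conditions in Assumption~\ref{assconvratetsls}.\ref{assconvratetsls.5} (in particular $\mathbb{E}[\Vert\langle x_t,h_j\rangle z_t\Vert^2]\le c_\circ\Vert\mathcal C_{xz}h_j\Vert^2$) together with the decay of $\langle h_j,\mathcal Bg_\ell\rangle$. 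These direction-dependent estimates supply exactly the $\ell^{-\delta}$ weights needed to invoke Lemma~\ref{lem:convrate}, yielding the sharp bounds $\|\hat h_j-h_j^s\|^2=O_p(\alpha_1)j^{2}$ and $\|\mathcal A(\hat h_j-h_j^s)\|^2=O_p(\alpha_1)(j^{2-2\varsigma_\nu}+j^{\rho_\nu-2\varsigma_\nu+2})$, from which \eqref{eqthmconvrate2} follows (with case splits at $\varsigma_\nu=2$ and $\varsigma_\nu=\rho_\nu/2+3/2$, not at $3/2$ as you have). Your identification of the factorization through $\mathcal B$ and of the condition $\gamma_\mu\ge\rho_\nu/4$ as the devices that keep the first-stage regularization tail at $O(\alpha_1)$ is correct; the missing ingredient is that these devices must be applied componentwise in $\ell$, not merely in operator norm.
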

The convergence rate of the RB component, described in the above theorem, depends not only on the regularization parameters, but also on smoothness of $\mathcal A$ as in the case of the FIVE. However, the convergence rate described in \eqref{eqthmconvrate2} is generally slower than that of the FIVE, and this is somewhat expected from the fact that the F2SLSE involves a doubly regularized (and thus less stable) inverse. Despite this  disadvantage of the F2SLSE over the FIVE, our simulation results support that the F2SLSE performs comparably well among a set of the competing estimators (including the FIVE), and thus this estimator can also be used in practice. 

Using Assumption \ref{assconvratetsls}, the next theorem refines Theorem \ref{thm2wa}, but as in Section \ref{sec:asym2}, we for now only focus on the case where $\rho_{\nu}/2+2<\varsigma_{\nu}+\delta_{\zeta}$. The  result without this condition is provided in the Supplementary Material (see Section \ref{sec_thm_conti2}). In the theorem below, we, as in \eqref{eqdecom1}, consider the decomposition of the RB component given by
\begin{equation*}\label{eqdecom3}
\mathcal A(\widetilde{\Pi}_{\K_2} - \mathcal I) = \mathcal A(\widetilde{\Pi}_{\K_2} - \Pi_{\K_2}) + \mathcal A({\Pi}_{\K_2} - \mathcal I),
\end{equation*}
where $\Pi_{\K_2} = \sum_{j=1}^{\K_2} h_j \otimes h_j$ is understood as the population counterpart of $\widetilde{\Pi}_{\K_2}$.

\begin{theorem}\label{thm4:convrate}
Suppose that Assumptions  \ref{assum1a} and \ref{assconvratetsls} are satisfied, $\zeta \in \mathcal H$ satisfies $\langle h_{j} , \zeta \rangle   \leq c_\zeta j^{-\delta_{\zeta}} $ for some $c_{\zeta} >0$ and $\delta_{\zeta} > 1/2$,  \commRV{$\rho_{\nu}/2+2<\varsigma_{\nu}+\delta_{\zeta}$}, and the following hold:
\begin{equation} \label{eq001thm}
\abar_1 = o(T^{\rho_{\mu}/(2\rho_{\mu}+2)}), \quad
\commRV{\abar_1^{-1} \max\left\{\abar_2^{{(3\rho_{\nu}-2\delta_{\zeta}+1)}/{\rho_{\nu}}},\abar_2^{(\rho_{\nu}+1)/\rho_{\nu}}\right\} = o(1).}
\end{equation}
Then Theorem \ref{thm2wa} holds, and furthermore,
\begin{equation*}
\Vert \mathcal A(\widetilde \Pi_{\K_2}-\Pi_{\K_2}) \zeta \Vert    = 
O_p(\abar_1^{-1/2})\quad\text{ and }\quad
	\|\mathcal A(\Pi_{\K_2} - \mathcal I)\zeta\|  =O_p(\abar_2^{(1/2-\varsigma_{\nu}-\delta_{\zeta})/\rho_{\nu}}).  \label{eqthmrb4a}
\end{equation*}
\end{theorem}
Obtaining a pointwise asymptotic normality result that is not dependent on the RB component as in Remark \ref{remnormality} requires more stringent conditions, which will be detailed in Remark \ref{remnormality2} below. 
\begin{remark}[Pointwise asymptotic normality of the F2SLSE]\normalfont \label{remnormality2}	
In order to strengthen the result given by Theorem~\ref{thm2wa}.\ref{thm2waa} using Theorem~\ref{thm4:convrate} as in the case of the FIVE, we need more stringent conditions. For example, suppose as in Remark \ref{remnormality} that $\mathcal A$ is smooth enough so that $\varsigma_{\nu} > \rho_{\nu}/2 + 3/2$ and $T \abar_2^{-(1-2\varsigma_{\nu}-2\delta_{\zeta})} = O(1)$. We know from Theorem \ref{thm4:convrate} that  $\sqrt{{{T}/{{\phi_{\K_2}(\zeta)}}}}\|\mathcal A(\widetilde{\Pi}_{\K_2} - \mathcal I)\zeta\| = O_p(\sqrt{T\abar_1^{-1}/\phi_{\K_2}(\zeta)})$, and find that $\sqrt{{{T}}/{{\phi_{\K_2}(\zeta)}}}(\mathcal A\widetilde{\Pi}_{\K_2} - \mathcal A)\zeta \dto  N(0, \mathcal  C_{uu})$  if $\phi_{\K_2}(\zeta)$ diverges at a faster rate than that of $T\abar_1^{-1}$. In the case of the FIVE and under an analogous smoothness condition, recall that only $\theta_{\K}(\zeta) \pto \infty$ is needed to obtain a similar result; see Remark \ref{remnormality}.
\end{remark}

\section{Numerical Studies \label{sec:sim}}
We first investigate the finite sample performance of our estimators via Monte Carlo studies. In Sections~\ref{subsub: exp1.1}--\ref{sec:sim2}, the number of replications is set to 1,000 and all the considered random variables are demeaned before  computing  the estimators of $\mathcal A$. Section \ref{sec:emp} provides an empirical application. 
\subsection{Experiment 1: Functional linear simultaneous equation model \label{subsub: exp1.1}}
We consider the following functional linear simultaneous model: for  $t\geq 1$,
\begin{equation} \label{eqsimdgp}y_t = \mathcal A x_t  + u_t,\quad\quad\quad x_t  = \vartheta \mathcal{B}   z_t + v_t,		
\end{equation}
where $u_t =0.8 v_t + 0.6 \varepsilon_t$, $ \{v_t\}_{t \geq 1}$  and $\{\varepsilon_t\}_{t \geq 1} $ are mutually independent iid sequences of standard Brownian bridges satisfying $\mathbb{E}[v_t\otimes \varepsilon_{\ell}] =0$ for all $t,\ell \geq 1$. The constant $\vartheta$ is chosen in such a way that the first-stage functional coefficient of determination (see, \citealp{Yao2005}), defined by $\mathbb E[\Vert \vartheta \mathcal B z_t \Vert ^2]/\mathbb E [\Vert x_t\Vert^2]$, has a specific value of $\mathtt{r}^2$. In this section, we will focus on empirical MSEs of a few estimators at various levels of $\mathtt{r}^2$, and in particular we consider $\mathtt{r}^2 \in \{ 0.1,0.2,\ldots,0.5 \}$.

The DGP here is specially designed to examine the performance of our estimators when all the employed assumptions (Assumptions \ref{assum1}, \ref{assum1eigen}, \ref{assumconvrate}, \ref{assum1a}, \ref{assum1eigen2} and \ref{assconvratetsls}) are satisfied (see Section \ref{sec_app_num1} of the Supplementary Material). Specifically, we let $\{z_t\}_{t \geq 1}$ be an iid sequence of standard Brownian bridges satisfying $\mathbb{E}[z_t\otimes v_t]=\mathbb{E}[z_t\otimes u_t]=0$. Then, we have $\mu_j = (j\pi)^{-2} $ and $g_j(s)   = \sqrt{2}\sin(j\pi s)$ for $s \in [0,1]$, see, e.g., \cite*{JaimezBonnet}. The operators $\mathcal A$ and $\mathcal B$ are defined as follows:  \begin{equation*}
		\mathcal A = \sum_{j=1} ^{\infty} a_j g_j \otimes g_j, \quad 	\mathcal B = \sum_{j=1}^\infty b_jg_j \otimes g_j,   \quad a_j=  j^{-n_a}, \quad b_j=j^{-n_b}, \quad n_a \in \{3,5\},\quad n_b \in \{0.75, 1.5\}.
\end{equation*}  
In this setup,  $f_j = g_j$. In view of the fact that function-valued random variables are only partially observed in practice, we assume that the discrete realizations of   $y_t$, $x_t$ and $z_t$ at 50 equally-spaced points of $[0,1]$ are available. Then, following the literature, e.g., \citet[Chapter 5]{Ramsay2005}, we represent functional variables $y_t$, $x_t$ and $z_t$ by using 31 Fourier basis functions.  

We will compare the performance of our estimators with the ridge regularized IV estimator (RIVE) of \citet[eqn.\ 34]{Benatia2017} with denoting their regularization parameter to $\abar^{-1}$ to keep notational consistency. To compute the FIVE and RIVE, we consider $ \delta_{\abar}{T^{-0.4}}\Vert  \widehat{\mathcal C}_{xz}\Vert_{\HS}^2 $  as candidates for the inverse of $\abar$. This candidate value is calculated at 20 equidistant points of $\delta_{\abar}$ ranging from $0.1$ to $T^{0.2}$. 
Among such candidates,  we choose the value that minimizes the empirical MSE of each estimator.  The F2SLSE needs two regularization parameters: $\abar_{1}$ and $\abar_{2}$. The parameter $\abar_1$ is chosen as the FIVE and RIVE with $\Vert  \widehat{\mathcal C}_{xz}\Vert_{\HS} ^2$ being replaced by $\Vert  \widehat{\mathcal C}_{zz} \Vert_{\HS}^2$. Once $\abar_1$ is chosen, we similarly choose the inverse of $\abar_2$ from $\delta_{\abar_2} (\abar_1^{-1}\Vert \mathcal C_{zz}\Vert_{\HS}^2)^{1/2}\Vert \widehat{\mathcal Q}_{\K_1}\Vert_{\HS} ^2 $ with $\delta_{\abar_2}$ being 20 equidistant points between $T^{0.05}$ and $ T^{0.2}$. This setup enforces $\abar_1$ to grow at a faster rate than that of $\abar_2$.


 To save space, we report estimation results only for the case with $T=500$; the results with a smaller sample size are qualitatively similar and are reported in Section \ref{sec:addtab} of the Supplementary Material.  Figure~\ref{fig:box} reports boxplots  (without outliers) of the empirical MSE estimated with the FIVE (red), the F2SLSE (blue) and the RIVE (green).  The first interesting observation in the figure is that our estimators tend to produce smaller MSEs  when the signal from $x_t$ to $y_t$ is more concentrated on the first few components, i.e., when $n_a =5$. This observation is consistent regardless of the values of $n_b$ and $\mathtt{r}^2$. This may not be surprising because when $n_a$ is large,  the first few $f_j$'s (=$g_j$'s) summarize the most significant information of $\mathcal A$. 


\begin{figure}[h!]
\centering	\caption{Boxplots of the empirical MSEs ($T=500$)}\label{fig:box} 
	\begin{subfigure}{.4\textwidth}\subcaption{$(n_{a}, n_b ) =(3,3/4)$ }
	\includegraphics[width=\textwidth ]{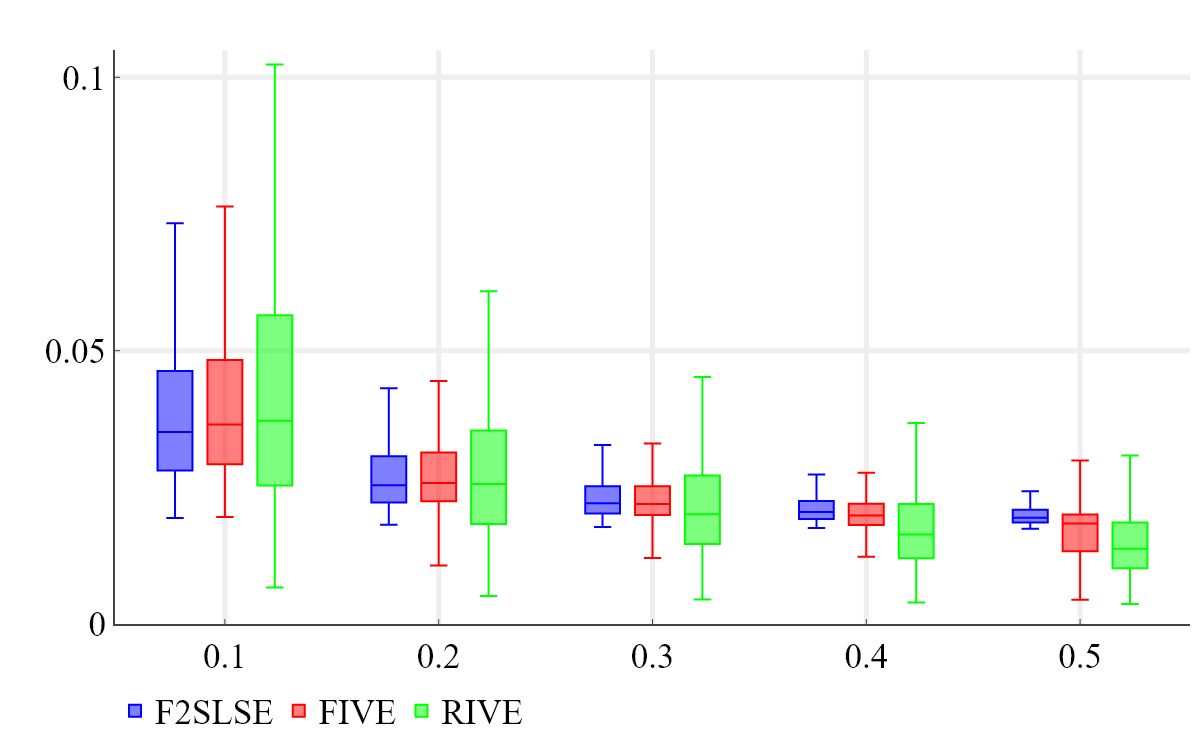}
	\end{subfigure}
\begin{subfigure}{.4\textwidth}\subcaption{$(n_{a}, n_b ) =(3,1.5)$ }
	\includegraphics[width=\textwidth ]{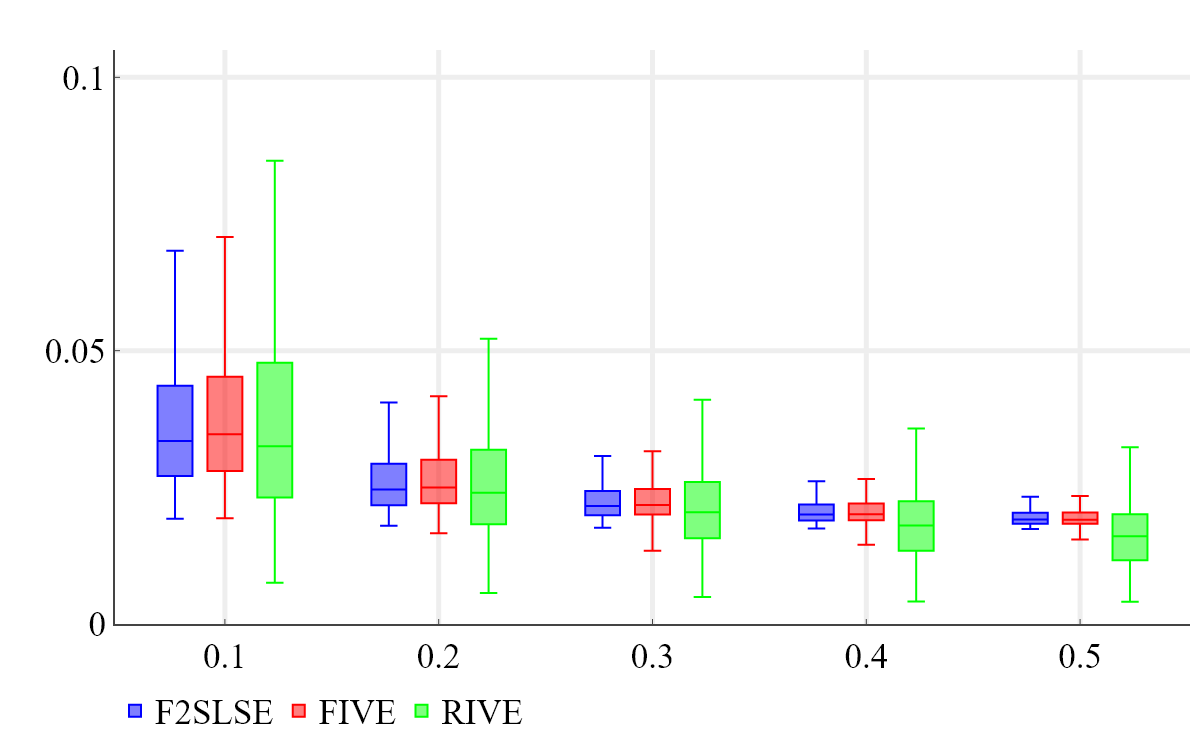}
\end{subfigure} \\
	\begin{subfigure}{.4\textwidth}\subcaption{$(n_{a}, n_b ) =(5,3/4)$ }
	\includegraphics[width=\textwidth ]{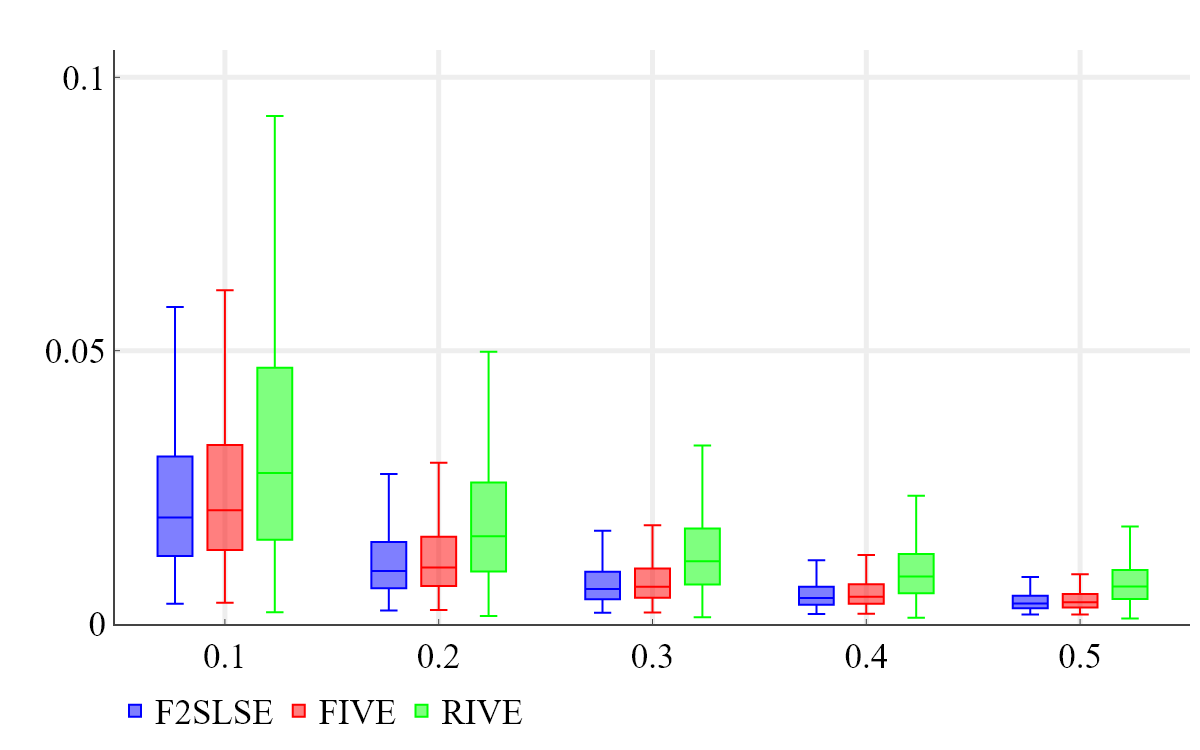}
	\end{subfigure}
\begin{subfigure}{.4\textwidth}\subcaption{$(n_{a}, n_b ) =(5,1.5)$ }
	\includegraphics[width=\textwidth ]{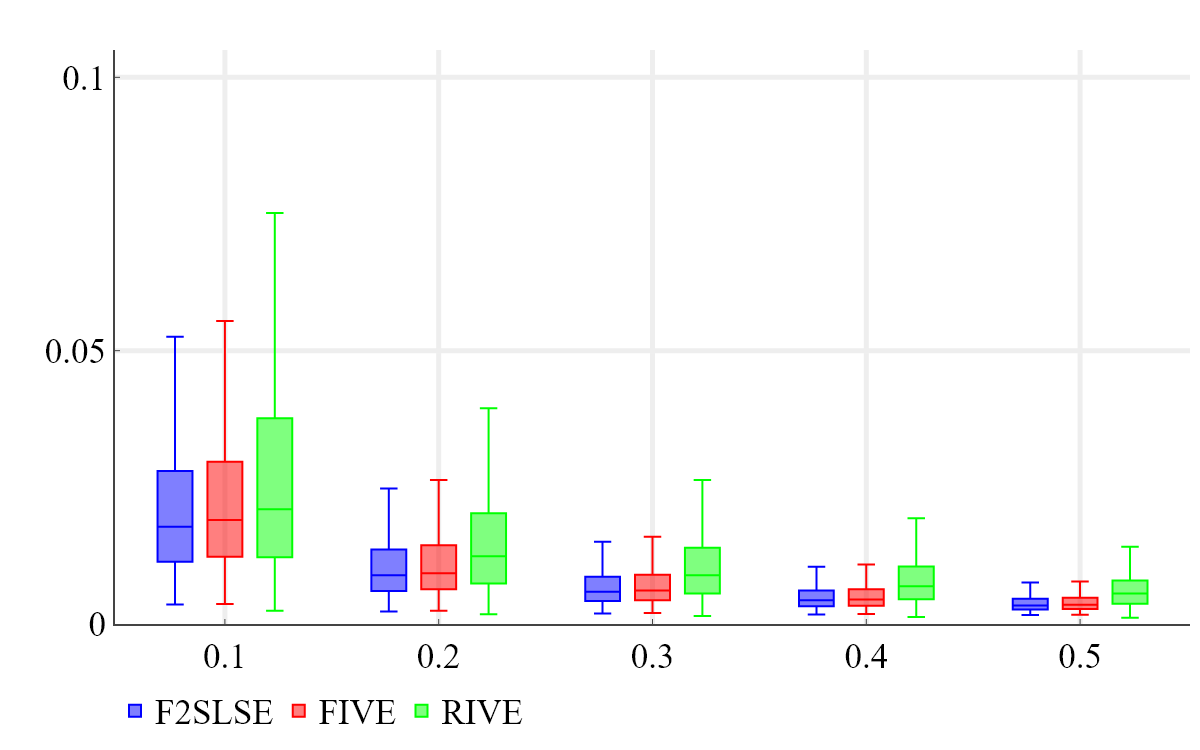}
\end{subfigure} \\ 
\footnotesize{Notes: Boxplots of the empirical MSEs of the FIVE (red), the F2SLSE (blue) and the RIVE (green) are reported for each value of the first-stage functional coefficient of determinations $\mathtt{r}^2 \in \{0.1,0.2,0.3,0.4,0.5\}$. }
\end{figure}

 In the figure, as the value of $\mathtt{r}^2$ decreases, the considered estimators tend to exhibit larger MSEs. Similar observations can be found in the standard IV literature, in which the so-called concentration parameter is used to measure the strength of IVs. Given that the coefficient of determination $\mathtt{r}^2$ is closely related to the concentration parameter in the IV literature, such a larger MSE may be understood as the distortion related to weak instruments.

In the subsequent sections, we will consider a more general setting to investigate the robustness of our estimators when the assumptions are unlikely to hold. Unlike the DGP considered in this section, verifying if all the required conditions are satisfied for the DGPs under consideration becomes nearly impossible. Given the practical challenges of confirming these conditions, practitioners may find it valuable to observe the performance of our estimators in the presence of potential violations of the required conditions.
\subsection{\label{subsub: exp1.2} Experiment 2: a modification of \citepos{Benatia2017} simulation DGP}
In this section, we consider a simulation DGP similar to that in \citepos{Benatia2017}.
Specifically, we let $\mathcal B$ in \eqref{eqsimdgp} be the identity operator $\mathcal I$ and let $\mathcal A$ be the integral operator with kernel $\kappa_{\mathcal A} (s_1, s_2) =1-|s_1-s_2|^2$ for $s_1,s_2 \in [0,1]$. In this setup, the first-stage signal is solely determined by the constant $\vartheta$. The IV $z_t$ is given as follows:  
\begin{equation}\label{eqsimu2}
z_t(s) = \tilde z_t (s; a_t, b_t) + \eta_t (s) \quad\text{and}\quad\tilde z_t (s; a_t, b_t)=   \frac{\Gamma (a_t) \Gamma(b_t)}{\Gamma (a_t +b_t)} s ^{a_t-1} (1-s) ^{b_t-1}  \text{ for } s \in [0,1],
\end{equation}
where $a_t$ and $b_t$ are randomly drawn from the uniform distribution $\text{U}[2,5]$ for each $t$. That is, $z_t$ is obtained by adding an additive noise $\eta_t$ to the beta density function with parameters $a_t$ and $b_t$. The IV in \eqref{eqsimu2} is analogous to that used for the simulation experiments in \cite{Benatia2017}, in which the additive noise   
 $\eta_t(s)$ is given by $q_t$ for all $s\in [0,1]$ with $q_t$ being randomly drawn from $\text{N}(0,1)$. In this section, we allow a more general form of $\eta_t$ by letting $\eta_t = \sum_{j=1} ^{n_J}  \sigma_{j} q_{t,j} \xi_j $ where $n_J=31$, $\{\xi_j\}_{j\geq1}$ is the  Fourier basis functions with the constant basis function $\xi_1$, and $q_{t,j}\sim_{\text{iid}}\text{N}(0,1)$ across $t$ and $j$.  \citepos{Benatia2017} IV can be understood to the case $n_J = 1$ under our notation. Then, we consider three different designs of $\{\sigma_j\}_{j\geq1}$.
Firstly, we consider the case $\sigma_{j} = c_1 \sigma_\eta$ for $j \leq 2$ and $\sigma_{j} = c_1 \sigma_\eta (0.1)^{j-2}$ for $j > 2$; this is called the sparse design. Secondly, we set $\sigma_{j} = \sigma_\eta (0.9)^{j-1}$ and call this setting the exponential design.  In the last design, which we call the geometric design, we let $\sigma_j =c_2 \sigma_\eta j^{-1}$. The parameter $\sigma_\eta$ is set to $0.5$ and $0.9$, and the constants $c_1$ and $c_2$ are chosen in such a way as to have the same Hilbert-Schmidt norm of $\mathbb E[\eta_t \otimes \eta_t]$ in all three designs. Lastly, the parameter $\vartheta$ is chosen as in Section~\ref{subsub: exp1.1} with $\mathtt{r}^2$ being set to 0.5; this can be done by using that $\mathbb{E}[\|v_t\|^2] = 1/6$, $\mathbb{E}[\|\eta_t\|^2] = \sum_{j=1}^{31}\sigma_j^2$, and the value of $\mathbb E [\Vert \tilde z_t(s_i; a_t, b_t)\Vert ^2]$ can be approximated from a large number of simulations.

\begin{table}[b!] 
\caption{Simulation results for Experiment 2: empirical MSEs and coverage probabilities}\label{tab1}
\vspace{-.5em}	\rule{1\textwidth}{.5pt}\vspace{.5em} 
\begin{tabular*}{1\textwidth}{@{\extracolsep{\fill}}rcccccccccccc} 
&\multicolumn{4}{c}{Sparse Design}&\multicolumn{4}{c}{Exponential Design}&\multicolumn{4}{c}{Geometric Design}\\\cmidrule{2-5}\cmidrule{6-9}\cmidrule{10-13}
$\sigma_\eta$	&\multicolumn{2}{c}{$ 0.5$}&\multicolumn{2}{c}{$ 0.9$}	&\multicolumn{2}{c}{$  0.5$}&\multicolumn{2}{c}{$  0.9$}	&\multicolumn{2}{c}{$ 0.5$}&\multicolumn{2}{c}{$  0.9$}\\\cmidrule{2-3}\cmidrule{4-5}\cmidrule{6-7}\cmidrule{8-9}\cmidrule{10-11}\cmidrule{12-13}
$T$ & 200 & 500 & 200 & 500 & 200 & 500 & 200 & 500 & 200 & 500 & 200 & 500 \\ \midrule
\multicolumn{13}{l}{Empirical MSE}\\
FIVE & ${0.043}$ & ${0.030}$ & ${0.042}$ & ${0.030}$ & ${0.111}$ & ${0.057}$ & ${0.108}$ & ${0.055}$ & ${0.046}$ & ${0.033}$ & ${0.046}$ & ${0.034}$ \\ 
F2SLSE & ${0.043}$ & ${0.030}$ & ${0.042}$ & ${0.030}$ & ${0.168}$ & ${0.081}$ & ${0.142}$ & ${0.076}$ & ${0.045}$ & ${0.033}$ & ${0.045}$ & ${0.033}$ \\ 
RIVE & ${0.041}$ & ${0.030}$ & ${0.040}$ & ${0.029}$ & ${0.141}$ & ${0.082}$ & ${0.134}$ & ${0.079}$ & ${0.045}$ & ${0.031}$ & ${0.043}$ & ${0.031}$ \\ 
\midrule
\multicolumn{13}{l}{Coverage probability for $  \langle  \mathcal A \widehat{\Pi}_{\K} \zeta ,\psi\rangle  $ or $  \langle  \mathcal A \widetilde{\Pi}_{\K_2} \zeta ,\psi\rangle  $}\\
FIVE & ${0.947}$ & ${0.950}$ & ${0.943}$ & ${0.949}$ & ${0.923}$ & ${0.938}$ & ${0.929}$ & ${0.936}$ & ${0.948}$ & ${0.951}$ & ${0.951}$ & ${0.952}$ \\ 
F2SLSE & ${0.947}$ & ${0.950}$ & ${0.943}$ & ${0.949}$ & ${0.907}$ & ${0.930}$ & ${0.905}$ & ${0.927}$ & ${0.950}$ & ${0.954}$ & ${0.949}$ & ${0.956}$ \\  \midrule
\multicolumn{13}{l}{Coverage probability for $ \langle   \mathcal A   \zeta , \psi\rangle  $}\\
FIVE & ${0.944}$ & ${0.940}$ & ${0.943}$ & ${0.941}$ & ${0.932}$ & ${0.950}$ & ${0.937}$ & ${0.945}$ & ${0.942}$ & ${0.946}$ & ${0.938}$ & ${0.944}$ \\ 
F2SLSE & ${0.944}$ & ${0.940}$ & ${0.943}$ & ${0.941}$ & ${0.855}$ & ${0.923}$ & ${0.882}$ & ${0.930}$ & ${0.941}$ & ${0.948}$ & ${0.937}$ & ${0.946}$ \\ 
\end{tabular*}
\rule{1\textwidth}{1pt} 
{\footnotesize Notes: Based on 1,000  replications. In the top, each cell reports the empirical mean squared error (MSE) of   three estimators: FIVE, F2SLSE, and \citepos{Benatia2017} RIVE. The last four rows report the coverage probabilities of the designated quantities; the  nominal level is 95\%.   } 
\end{table}


Table~\ref{tab1} summarizes simulation results. Overall, the MSEs  of our estimators  and those of the RIVE are similar to each other. However, in the exponential design, our estimators tend to have smaller MSEs compared to the RIVE. 
	
	 We note that  our estimators and related asymptotic results can be used to discuss the coverage probability of the interval \eqref{eq:conf} that is computed from the FIVE. 	The interval is expected to contain the random quantity $\langle \mathcal A \widehat{\Pi}_{\K} \zeta, \psi \rangle$ with ($100-\varpi$)\% of probability; moreover, if certain conditions are satisfied (see Remark \ref{remnormality}) the interval \eqref{eq:conf} can be understood as the ($100-\varpi$)\% confidence interval for $\langle \mathcal A \zeta, \psi \rangle$ which is nonrandom. Based on  Theorem \ref{thm2wa} (and also Remark \ref{remnormality2}), we may construct a similar interval with the F2SLSE and the interval is expected to include  $\langle \mathcal A \widetilde{\Pi}_{\K_2} \zeta, \psi \rangle$ (and also $\langle \mathcal A \zeta, \psi \rangle$ under certain conditions) with ($100-\varpi$)\% of probability; the coverage of this confidence interval will also be examined in this experiment. 
In order to compute the coverage probabilities, we let  $\psi=\ell_1$ and let $\zeta$ be randomly generated by $\zeta = \sum_{j=1} ^{11} \ddot q_{1,j} \ell_j$ for each realization of the DGP, where $\{ \ell_j \}_{j\geq1}$ is the polynomial basis with the constant basis function $\ell_1$ and $\ddot q_{j}\sim_{\text{iid}}\text{N}(0,j^{-4})$ across $j$. 
The simulation results are reported at the bottom of Table~\ref{tab1}. We note that, in all the considered cases, the coverage probabilities for $\langle \mathcal A \widehat{\Pi}_{\K} \zeta, \psi \rangle$  or  $\langle \mathcal A \widetilde{\Pi}_{\K_2} \zeta, \psi \rangle$ are close to the nominal level, which supports our findings in Theorems~\ref{thm2w2} and~\ref{thm2wa}. Moreover, even if  the reported coverage probabilities for $\langle \mathcal A  \zeta, \psi \rangle$ tend to be worse than those for $\langle \mathcal A \widehat{\Pi}_{\K} \zeta, \psi \rangle$ or  $\langle \mathcal A \widetilde{\Pi}_{\K_2} \zeta, \psi \rangle$,  they are still reasonably close to the nominal level of 95\%. This is what can be expected from Remarks~\ref{remnormality} and \ref{remnormality2}.  
In unreported simulations, we further experimented with different choices of $\zeta$ and $\psi$, but found no significant difference.

\subsection{Experiment 3: AR(1) model of probability density functions \label{sec:sim2}}
In this section, we examine the performance of the proposed estimators in the AR(1) model of probability density functions. What is mainly different from the earlier experiments given in Sections~\ref{subsub: exp1.1} and \ref{subsub: exp1.2} is that endogeneity is not explicitly imposed, but implicitly introduced by estimation errors.

We  let $\{p_t ^\circ\}_{t \geq 1}$ be a sequence of probability densities supported on $[0,1]$, and consider the linear prediction model of $p_t ^\circ$ given $p_{t-1}^\circ$.  Each density may be treated as a random variable taking values in $\mathcal H$, but the collection of probability densities in $\mathcal H$ is not a linear subspace. 
As a result, a direct application of the statistical methods developed in a Hilbert space setting may not be recommended; see e.g., \cite{delicado2011dimensionality}, \cite{petersen2016}, \cite*{Hron2016330}, \cite{kokoszka2019forecasting}, and \cite*{zhang2020wasserstein}. 
As a way to circumvent such issues, we consider the centered-log-ratio (clr) transformation $ y_t^\circ(s) =  \log p_t^\circ(s) - \int \log p_t^\circ(s)  ds$, $s \in [0,1]$ (see e.g., \citealp{Egozcue2006}). 
Then, $\{y_t^\circ\}_{t\geq 1}$ turns out to be a sequence in $\mathcal H_c$, the collection of all $\zeta \in \mathcal H$ satisfying $\int_{0}^1 \zeta(s)ds = 0$, and $\mathcal H_c$ is obviously a Hilbert space. Any element in $\mathcal H_c$ may be understood as a probability density via the inverse transformation $y_t^{\circ}(s) \mapsto \exp(y_t^{\circ}(s))/\int_{0}^{1} \exp(y_t^{\circ}(s))ds$. Thus, the linear prediction model of $p_t^\circ$ given $p_{t-1}^\circ$ may be recast into that of $y_t ^\circ$ given $y_{t-1} ^\circ$ in $\mathcal H_c$. 
We thus consider the following  prediction model: 
\begin{equation*}
y_t^{\circ} = c_y + \mathcal A (y_{t-1}^{\circ}-c_y) + \varepsilon_t, \label{eqtrans2}\end{equation*}
where $y_{t-1}^\circ$ and $\varepsilon_t$ are uncorrelated. 
To mimic situations commonly encountered in practice, we assume that $p_t^\circ$ (and thus $y_t^\circ$) is not observed, but only random samples  $\{s_{i,t}\}_{i=1}^{n_t}$ drawn from $p_t^\circ$ are available. If so, by replacing the density $p_t^\circ$ or the log-density $\log p_t^\circ$ with its proper nonparametric estimate, we may obtain an estimate $y_t$ of $y_t^\circ$, and then, as shown in Example \ref{exam2}, $\{y_t\}_{t\geq 1}$ satisfies $y_t = c_y + \mathcal A (y_{t-1}-c_y) + u_t$, but now $y_{t-1}$ and $u_t$ are generally correlated due to errors arising from the nonparametric estimation. We will compute the FIVE and F2SLSE by assuming that $y_{t-2}$ is a proper IV, as in Example \ref{exam2a}. Of course, this assumption may not be true depending on how estimation errors are generated. Even with this possibility, it may be of interest to practitioners, who are very often have no choice but to replace $p_t^\circ$ or $\log p_t^\circ$ with a standard nonparametric estimate, to see if a naive use of  our estimators  can make any actual improvements in estimating $\mathcal A$. This is the purpose of simulation experiments in this section. 

Specifically, we first estimate $y_t^\circ$ from $n$ random samples that are generated from $p_t^\circ$ by (i) the local likelihood density estimation method proposed by \cite{loader1996} (see Appendix \ref{sec:simul} for more details) and (ii) the standard kernel density estimation method with the Gaussian kernel and Silverman's rule-of-thumb bandwidth \citep{silverman2018density}. Even if the former is more suitable for estimation of  $y_t^\circ$ \citep[Section 4.2]{SEO2019}, the latter is considered as well because of its popularity in empirical studies. Once $y_t$ is computed, it is represented by the first 30 nonconstant Fourier basis functions for implementation of the FPCA in $\mathcal H_c$. We let $c_y$ be the clr transformation of the normal density function with mean $0.5$ and variance $0.25^2$ that is truncated on $[0,1]$. In addition, $\varepsilon_t = \sum_{j=1} ^{\infty}  \sigma_{j} q_{t,j} \xi_j^c$, where $\{\xi^c_j\}_{j\geq1}$ is the Fourier basis functions except for the constant basis function, and $q_{t,j}\sim_{\text{iid}}\text{N}(0,1)$ across $t$ and $j$.\footnote{In actual computation, $\varepsilon_t$ can be approximated by $ \sum_{j=1}^{L}  \sigma_{j} q_{t,j} \xi_j^c$ for some large $L$. We set $L$ to 50 in this example and found no significant difference even from big changes in $L$ as long as $L \geq 50$.} Below we consider two different specifications of $\sigma_j$, which are respectively called the exponential design and the sparse design; in the exponential design, $\sigma_{j} = 0.1 (0.9)^{j-1}$, and in the sparse design, $\sigma_{j} = c_{\sigma}$ for $j \leq 2$ and $\sigma_{j} =c_{\sigma}(0.1)^{j-2}$ for $j > 2$, where $c_{\sigma}$ is chosen so that the Hilbert-Schmidt norms of $\mathbb{E}[\varepsilon_t \otimes \varepsilon_t]$ in both designs are equal. These two designs are respectively obtained by setting $\sigma_{\eta}$ to $0.1$ in the sparse and exponential designs considered for $\eta_t$ in Section \ref{subsub: exp1.2}, and the reason why we choose a relatively smaller scale of $\sigma_j$ in this experiment is only to avoid as much as possible that the simulated densities have shapes that are rarely observed in practice (e.g., densities that are U-shaped or highly multimodal). We let $\mathcal A$ be defined by $\sum_{j=1}^\infty a_j \xi_j ^c \otimes \xi_j ^c$,\footnote{$\mathcal A$ is approximated by $\sum_{j=1}^{50} a_j \xi_j ^c \otimes \xi_j ^c$  in actual computation as in the case of $\varepsilon_t$} and, for each realization of the DGP, the coefficients $\{a_j\}_{j\geq 1}$ are independently determined across $j$ as follows,   
\begin{equation*}
a_1\sim \text{U}[0.4,0.9], \quad a_2 \sim \text{U}[0.4,0.9], \quad a_j = a_{u,j} (0.5)^{j-2}\quad\text{and}\quad a_{u,j} \sim_{\text{iid}}\text{U}[0,0.9] \quad\text{for}\quad j \geq 3.
\end{equation*}
Note here that we let the first two coefficients $a_{1}$ and $a_{2}$ be bounded below by $0.4$, which is to ensure that the operator norm of the cross-covariance operator of $y_{t-1}^{\circ}$ and $y_{t-2}^\circ$ is bounded away from zero. If this quantity is close to zero, then the employed IV {may }become `weak' and this case is not considered in the present paper. 
\begin{table}[t!]
\caption{Simulation results for Experiment 3: empirical MSEs ($a_1,a_2\geq 0.4$)}\label{tab3}
\vspace{-.5em}	\rule{1\textwidth}{.5pt}\vspace{.5em} 
\begin{tabular*}{1\textwidth}{@{\extracolsep{\fill}}llcccccccc} 
&&\multicolumn{4}{c}{Sparse Design}&\multicolumn{4}{c}{Exponential Design}\\\cmidrule{3-6}\cmidrule{7-10}
&$n$	&\multicolumn{2}{c}{$100$}&\multicolumn{2}{c}{$ 150$}	&\multicolumn{2}{c}{$  100$}&\multicolumn{2}{c}{$150$}\\\cmidrule{3-4}\cmidrule{5-6}\cmidrule{7-8}\cmidrule{9-10}
&T & 200 & 500 & 200 & 500 & 200 & 500 & 200 & 500 \\ \midrule
\multirow{4}{*}{Loader's}  & FIVE & ${0.206}$ & ${0.158}$ & ${0.187}$ & ${0.152}$ & ${0.400}$ & ${0.227}$ & ${0.315}$ & ${0.194}$ \\ 
& F2SLSE & ${0.204}$ & ${0.157}$ & ${0.186}$ & ${0.151}$ & ${0.392}$ & ${0.219}$ & ${0.310}$ & ${0.189}$ \\ 
& FLSE & ${0.255}$ & ${0.228}$ & ${0.208}$ & ${0.187}$ & ${0.427}$ & ${0.354}$ & ${0.333}$ & ${0.267}$ \\  \midrule
\multirow{4}{*}{Silverman's}
& FIVE & ${0.272}$ & ${0.218}$ & ${0.223}$ & ${0.188}$ & ${0.395}$ & ${0.257}$ & ${0.314}$ & ${0.215}$ \\ 
& F2SLSE & ${0.272}$ & ${0.215}$ & ${0.223}$ & ${0.186}$ & ${0.396}$ & ${0.247}$ & ${0.315}$ & ${0.209}$ \\ 
& FLSE & ${0.326}$ & ${0.291}$ & ${0.251}$ & ${0.226}$ & ${0.418}$ & ${0.351}$ & ${0.322}$ & ${0.258}$ \\ 
\end{tabular*}
\rule{1\textwidth}{1pt} 
{\footnotesize Notes: Based on 1,000  replications. Each cell reports the empirical mean squared error (MSE) of the three considered estimators: FIVE, F2SLSE, and \citepos{Park2012} FLSE. \commRV{The RIVE considered in Sections \ref{subsub: exp1.1} and \ref{subsub: exp1.2} is excluded in this experiment since the estimator is developed for iid functional data.}}
\end{table} 

Table \ref{tab3}  reports the empirical MSEs of the proposed estimators and \citepos{Park2012} FLSE  
when $n=100$ and $150$ (recall that $n$ is the number of random samples drawn from the distribution $p_t ^\circ$ to estimate $\log p_t^\circ$ or $p_t^\circ$). The IV estimators tend to exhibit smaller empirical MSEs than the FLSE. 
The superior comparative performance of the IV estimators is more noticeable  when $n$ is small and $T$ is large. This is what can be conjectured from our earlier discussion; as $n$ gets smaller, $y_t$ becomes a less accurate estimate of $y_t^\circ$, and hence the estimators that address the possible endogeneity caused by estimation errors will work better. The FIVE or the F2SLSE exhibits the smallest MSE in most of the cases (see also Table \ref{tab2aa} in Appendix reporting the simulation results for the case where the lower bound of $a_{1}$ and $a_{2}$ increases to $0.6$).  
However, it is hard to conclude the relative performance between the IV estimators; this may depend on various factors such as the DGP and the method of density estimation. Thus, it would be advisable to use those IV estimators complementarily in practice.

\subsection{Empirical application: effect of immigration on native wages \label{sec:emp}}  

In this section, we use our estimation methods to investigate the effect of immigrant inflows on the labor market outcomes of workers with heterogeneous skills, which has received due attention from both researchers and policymakers, see, e.g., \cite{Card2009}, \cite*{borjas2011}, \cite{Ottaviano2011}, and \cite{Glitz2012}. To begin with, we use national level data and generalize a widely used empirical model by viewing the variables of interest as functions depending on a  measure of relative communication skill provision. Our measure of relative communication skill provision is similar to \citepos{Peri2009} measure of occupation-specific relative provision of communication versus manual skills.  The number of distinct skill levels, denoted $s_j$, is 223, and, by construction, each occupation is uniquely identified by the skill score $s_j\in[0,1]$. Its formal definition is provided in the Supplementary Material.

We merge the percentile scores of relative communication skill provision to individuals in the monthly CPS data running from January 1996 to December 2019. The CPS data, which can be downloaded from the Integrated Public Use Microdata Series (IPUMS)\nocite{Flood2020}, provide information on various characteristics of individuals: hourly wage, citizenship status, age, employment status, and occupation. We focus on individuals who (i) are aged between 18 and 64 years, (ii) are not self-employed, and (iii) have positive income. Immigrants are defined by those who are not a citizen or are a naturalized citizen. The skill-dependent labor supply of immigrants ($\ell^\circ_{it}(s_j)$) and that of natives ($\ell^\circ_{nt}(s_j)$) are computed by the total hours of work per week (weighted by the variable WTFINL) provided by the foreign- and native-born workers for each $s_j$. The skill-dependent native wage is computed by weighted averaging weekly wages of native workers\footnote{The weekly wage of a native worker is computed as $(\text{hourly wage})\times (\text{usual hours of work})$, and the variables required to compute this quantity are also available in the CPS. We use the variable EARNWT as a weight.} in the occupation corresponding to $s_j$, and its logged value ($w_t^\circ(s_j)$) is used for the analysis.

The empirical models used in the labor economics literature (e.g., \citealp{Dustmann2012}; \citealp{Sharpe2020101902}) can be written as follows:  $
\Delta w_{t} ^\circ (s_j) =    \beta_j^\circ \Delta h_t ^\circ(s_j)  + u_t^\circ (s_j),
$ where $\Delta w_t ^\circ (s_j) = w_t^\circ(s_j)-w_{t-1}^\circ(s_j)$, $u_t^\circ (s_j)$ denotes the disturbance term, $\beta_j^\circ$ is the parameter of interest,  the explanatory variable $\Delta h_t ^\circ(s_j)$ is the first difference of $h_t ^\circ (s_j)$, and $h_t^\circ (s_j)= \ell_{it}^\circ (s_j) /( \ell_{nt}^\circ (s_j)+\ell_{it}^\circ (s_j))$.  
In the above model, an inflow of immigrants in the occupation with $s_j$ is assumed to affect only the  wages of natives in the occupation requiring the same skill level, which  seems to be restrictive. To resolve this issue, one may instead allow spillover effects across occupations, but this requires researchers to estimate too many parameters; for example, if we allow a spillover effect from the occupation corresponding to $s_i$ to another occupation corresponding to $s_j$ for any arbitrary $i,j\in \{1,\ldots,223\}$, then there are $223^2$ elements to be estimated. 
As an alternative, we view observations $w_t ^\circ(s_j)$ and $h_t^\circ (s_j)$ for each $t$ as imperfect realizations of curves $w_t$ and $h_t$, and use our methodology  developed in the previous sections.
To this end, we first estimate each of those curves with the standard Nadaraya-Watson estimator employing the second-order Gaussian kernel and the bandwidth minimizing the least square cross validation criterion. The smoothed curves are represented by 15 cubic B-Spline functions and are denoted by $  w_t  $ and $  h_t  $, respectively. Then, we estimate the following model:\begin{equation}
\Delta	{w}_t  = \mathcal A \Delta {h}_t  +u_t  , \label{eq:emp1}
\end{equation} 
where  $\Delta w_t=w_t-w_{t-1}$, $\Delta h_t=h_t-h_{t-1}$, and $\Delta h_t$ is likely to be correlated with $u_t$ due to, e.g., the  self-selection bias  pointed out by \cite{borjas1987} and \cite{llull2018}. Thus, we use the changes in the imputed share of immigrants as an IV, which has been employed in various contexts, including \cite{Card2009}, \cite{Peri2009}, and \cite{david2013growth}. Specifically, the imputed share of immigrants in the occupation corresponding to $s_j$, denoted $z_t ^\circ (s_j)$, is defined as follows: \begin{equation*}
z_t ^\circ (s_j) = \frac{\tilde{\ell}_{it}^\circ (s_j)}{ \ell_{nt} (s_j) +  \tilde{\ell}_{it}^\circ (s_j) }\quad\text{ and }\quad	\tilde{\ell}_{it} ^\circ (s_j) = \frac{1}{12}\sum_{b=1} ^B\sum_{t=1} ^{12} \frac{\ell_{it1994,b} ^\circ (s_j)}{\ell_{it1994,b} ^\circ } \ell_{it,b} ^\circ,
\end{equation*}
where $b$ denotes the country of birth of immigrants, $\ell ^\circ _{it1994,b} (s_j)$ is the labor supply of immigrants in the occupation corresponding to $s_j$ from the country $b$ in the month $t$ of the year 1994, and $\ell ^\circ_{it1994,b}$ is its aggregation over $s_j$. 
 The curve of imputed shares of immigrants, denoted $  z_t$, is  obtained by smoothing $   z_t ^\circ (s_j)$, and the instrument, denoted $\Delta z_t$, is the first difference of $  z_t$.

\begin{figure}[h!]
\caption{Functional data (grey) and their mean functions (black)\label{fig1}}\begin{subfigure}{.31\textwidth}\subcaption{average log native wages\label{fig1a}}
\includegraphics[width=\textwidth]{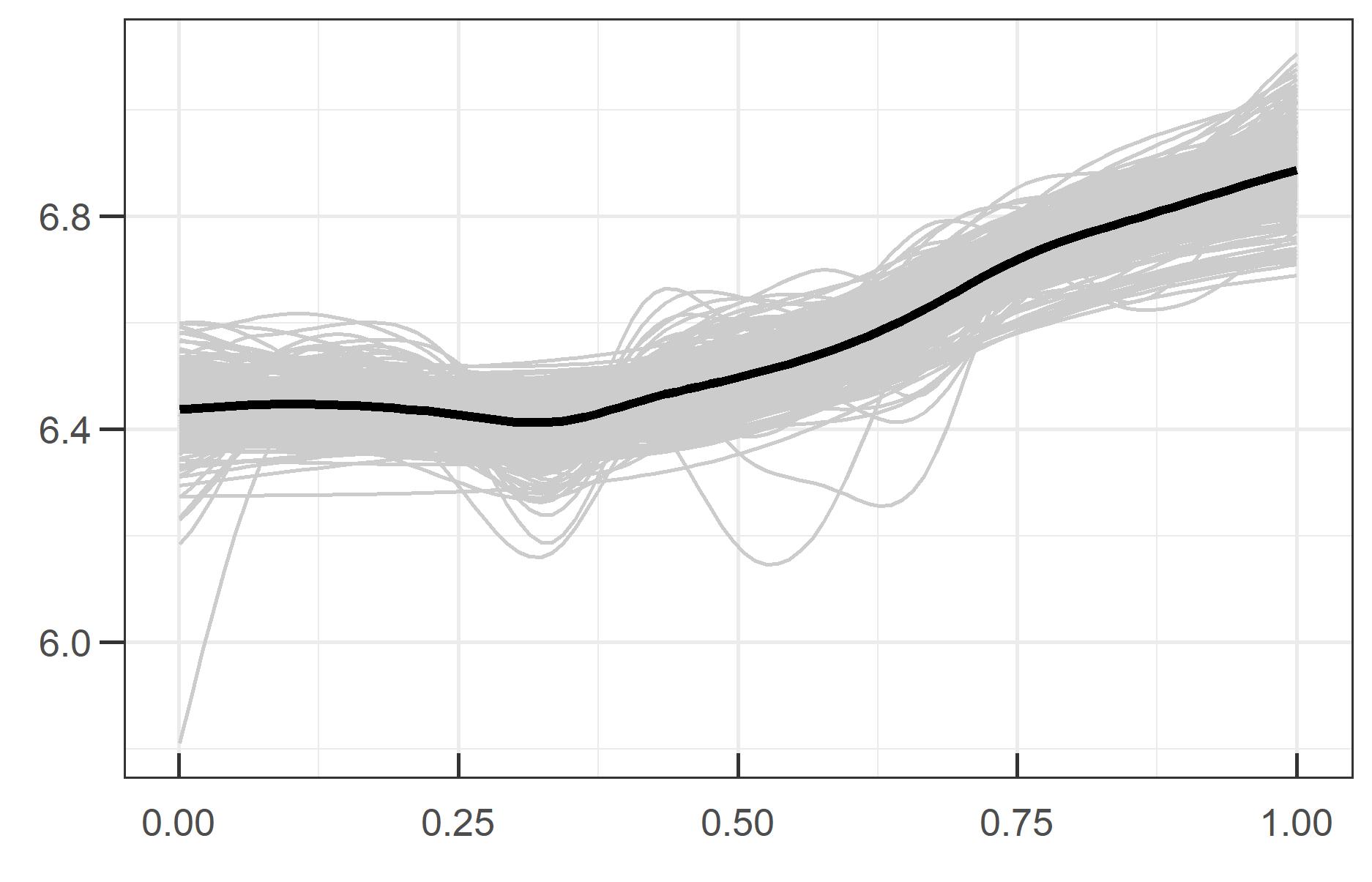}\vspace{0em}
\end{subfigure}\begin{subfigure}{.31\textwidth}\subcaption{relative share of immigrants\label{fig1b}}
\includegraphics[width=\textwidth]{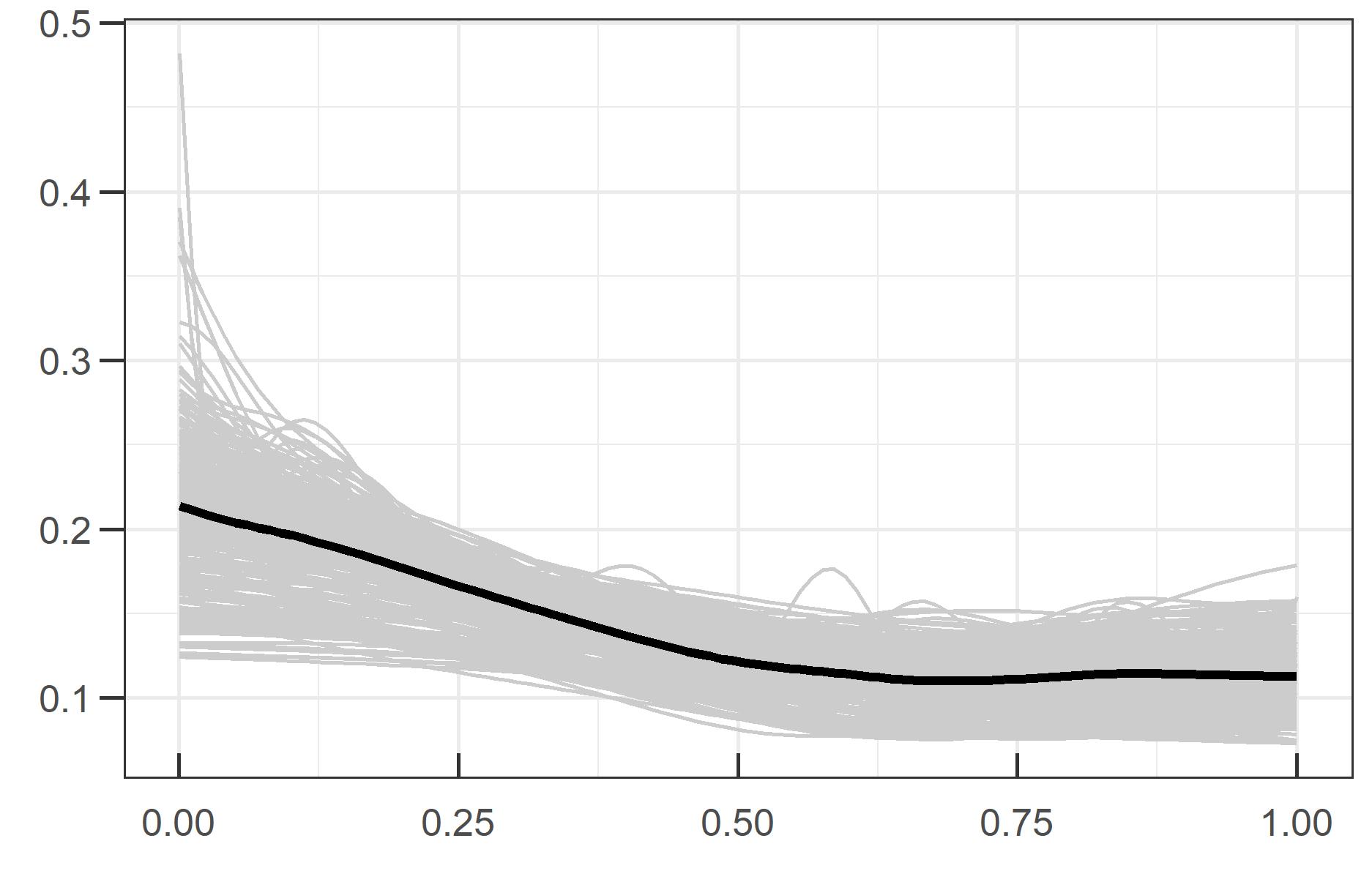}\vspace{0em}
\end{subfigure}	 \begin{subfigure}{.31\textwidth}\subcaption{imputed share of immigrants\label{fig1c}}
\includegraphics[width=\textwidth]{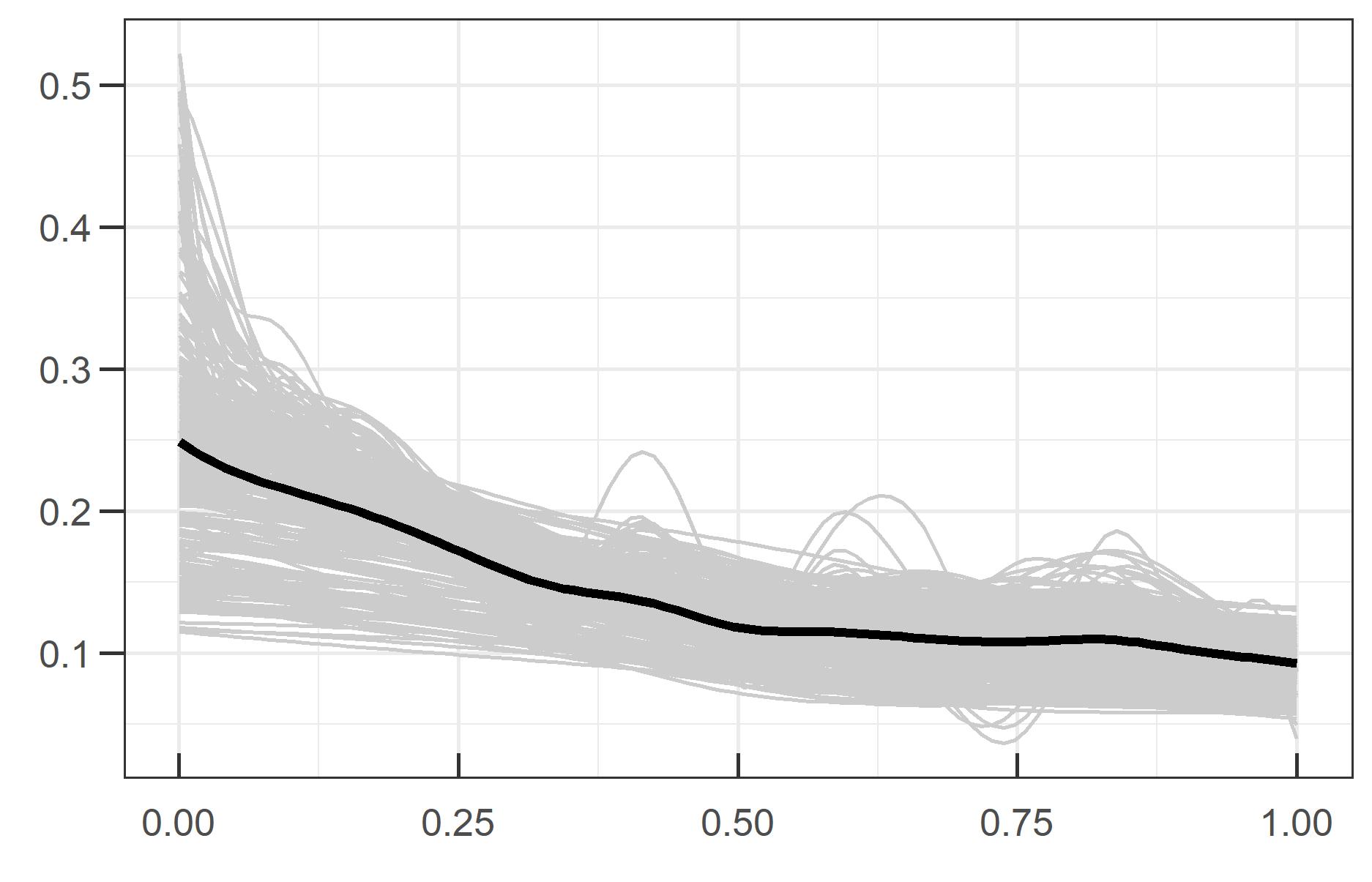}\vspace{0em}
\end{subfigure}	 
\end{figure}

The smoothed curves are reported in Figure~\ref{fig1}.  The solid lines in Figure~\ref{fig1} indicate the mean functions of $  w_t$, $  h_t$, and $  z_t$. Figure~\ref{fig1a} shows that native workers tend to be better paid if they are in occupations needing relatively higher communication skills. On the other hand, the share of immigrants tends to decrease in such occupations, and so does the imputed share of immigrants; this may be because natives have a comparative advantage in communication intensive tasks.

\begin{figure}[b!]\begin{center}
\caption{Estimated effects of immigration computed from the FIVE   \label{est.curves}}\vspace{-.5em}
\begin{subfigure}{.31\textwidth}\subcaption{$\zeta = 1\{ 0 \leq s < 1/3  \}$\label{eff.fig1}}
\includegraphics[width=\textwidth]{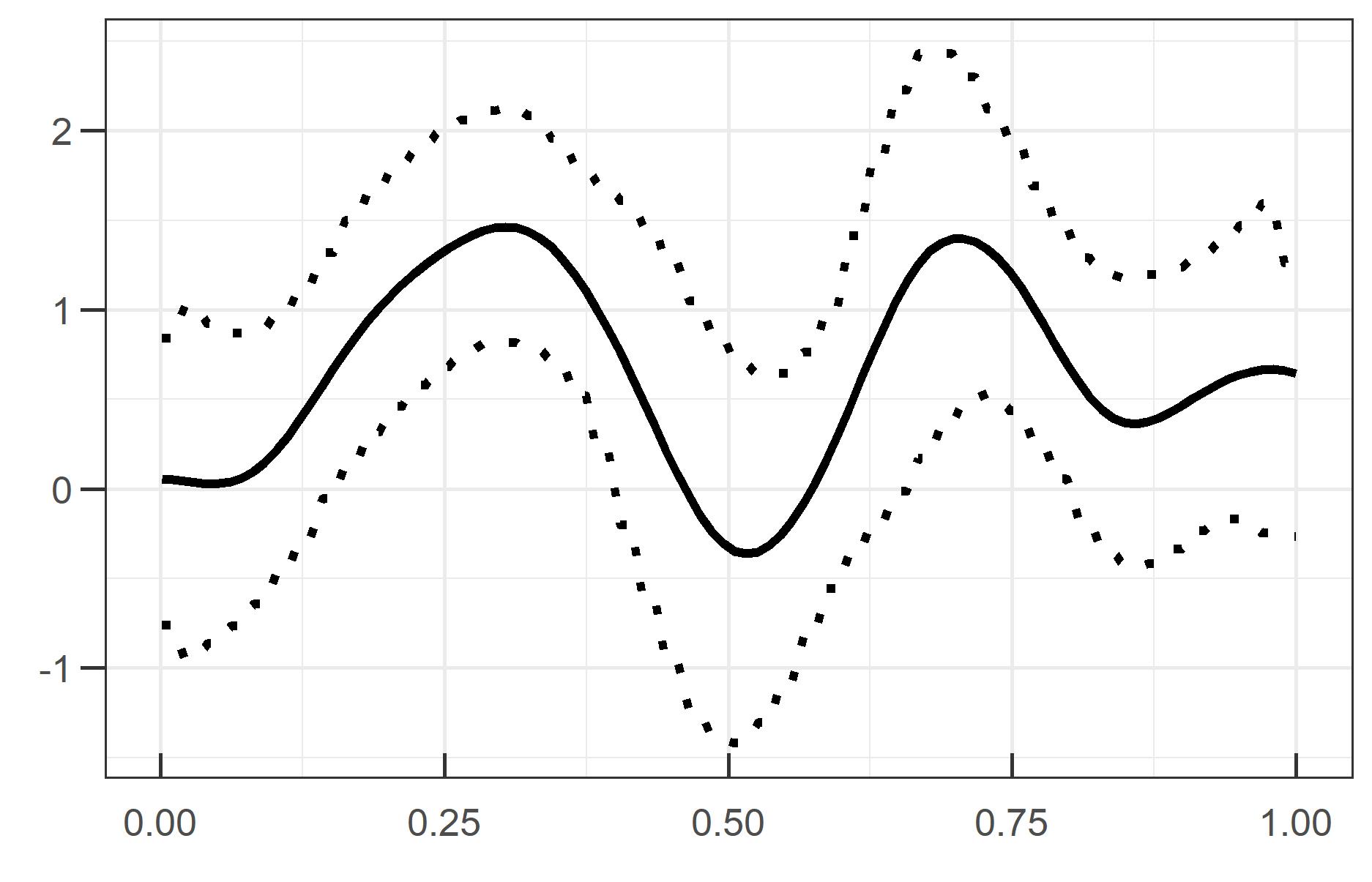}
\end{subfigure}\begin{subfigure}{.31\textwidth}\subcaption{$\zeta = 1\{ 1/3\leq s < 2/3  \}$\label{eff.fig2}}
\includegraphics[width=\textwidth]{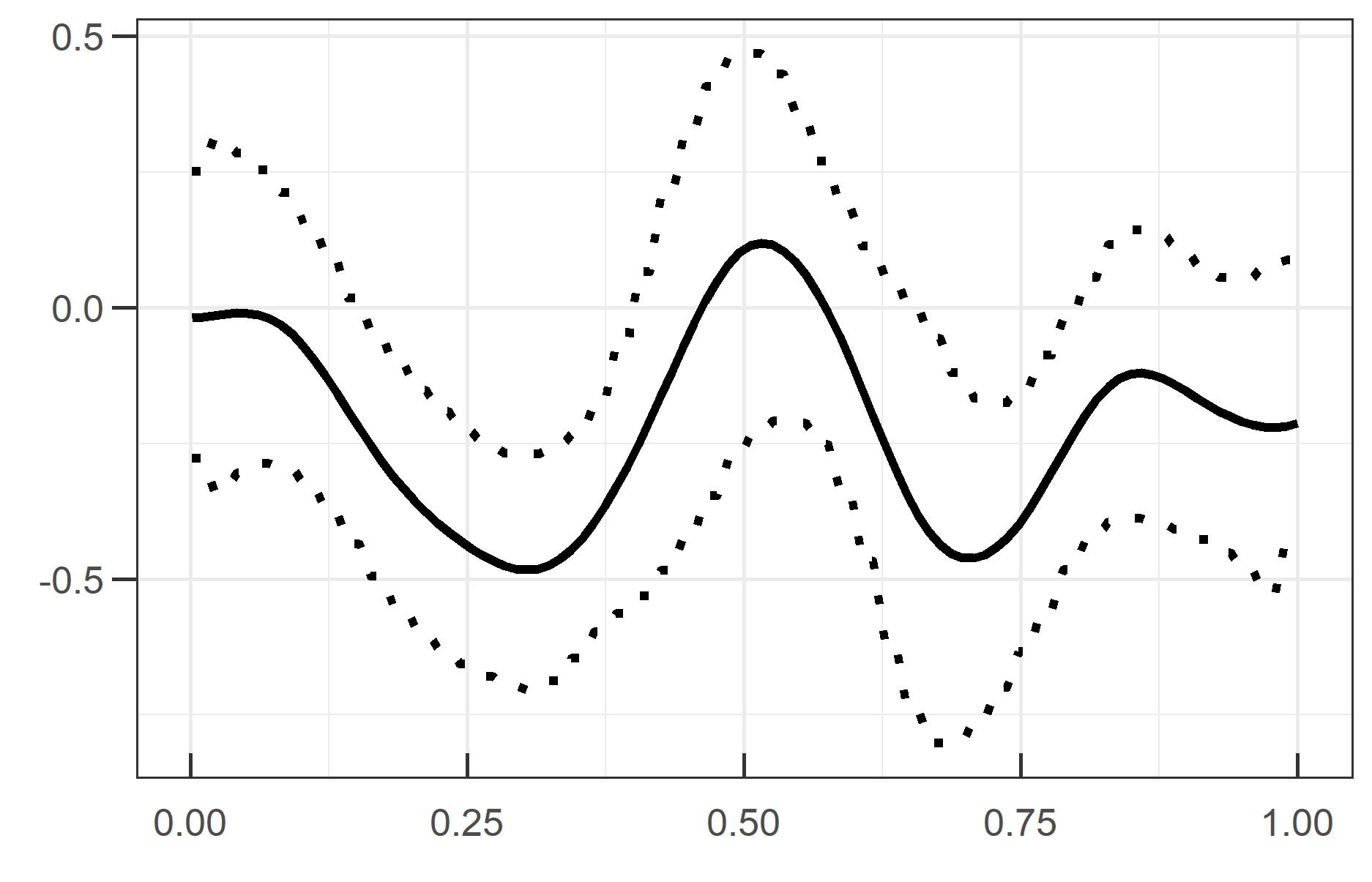}
\end{subfigure}\begin{subfigure}{.31\textwidth}\subcaption{$\zeta =1\{ 2/3 \leq s < 1  \}$\label{eff.fig3}}
\includegraphics[width=\textwidth]{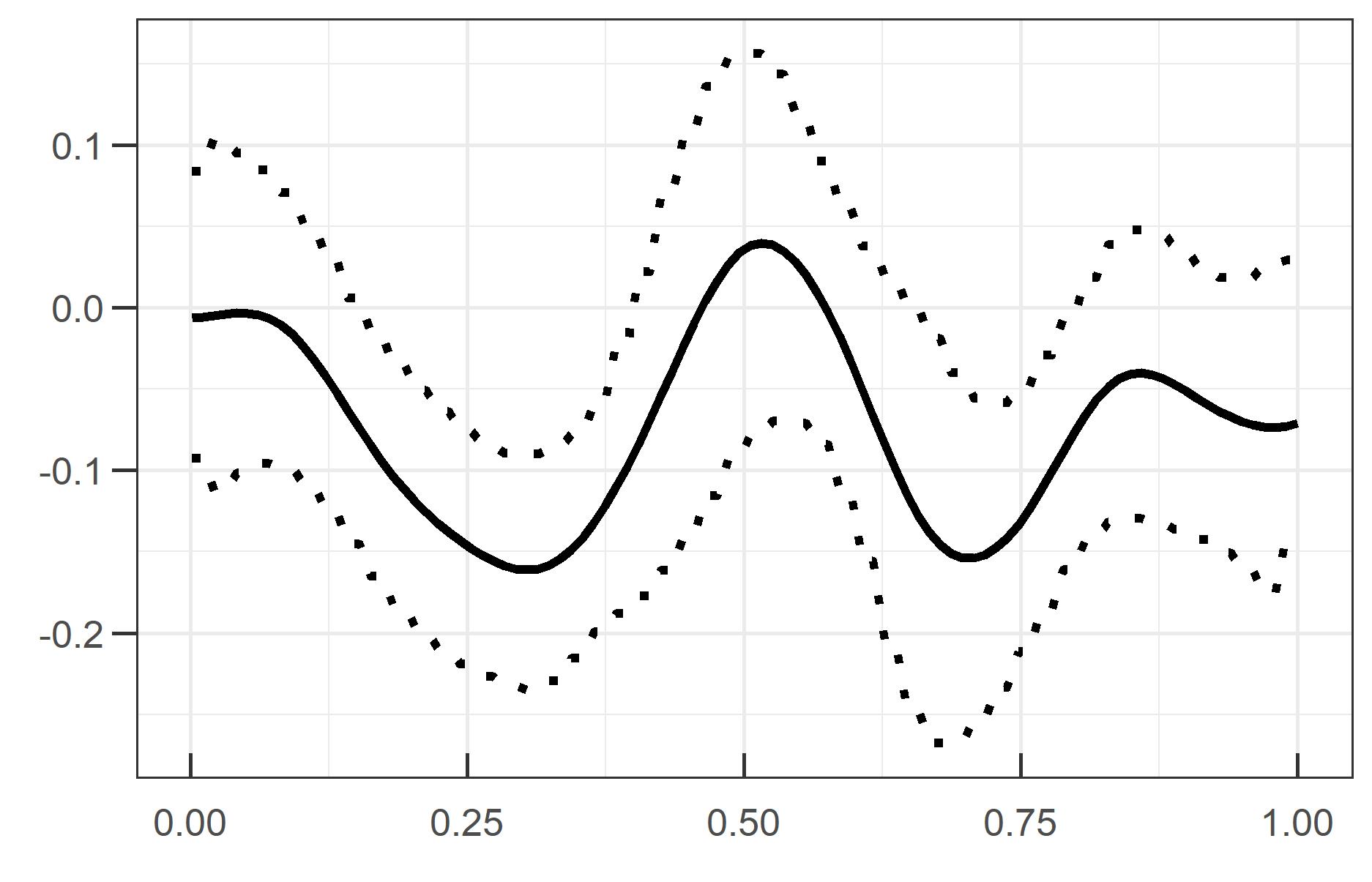}
\end{subfigure}  
\end{center} 
{\footnotesize Note: The solid line reports $ \widehat {\mathcal A} \zeta$. The dashed lines represent the collection of  confidence intervals for the local averages of $\mathcal A\widehat{\Pi}_{\K} \zeta$  over finely defined interval $[(m-1)/M, m/M]$, where $M=50$.  For each $m$, the interval is constructed as in \eqref{eq:conf} with $95\%$ significance level by noting that the local average is given by $\langle \mathcal A \widehat{\Pi}_{\K} \zeta ,\zeta_{m(M)} \rangle$, and this interval, of course, may be viewed as the confidence interval for the local average of $\mathcal A \zeta$  under certain conditions (see Remark \ref{remnormality})}.
\end{figure}

Then, we apply our estimation method to study if an inflow of immigrants has a heterogenous impact to native workers depending on the value of $s$. For the ease of interpretation, we focus on the case where immigrants are fully concentrated in a group of occupations with low, medium or high communication skill intensity, but they are evenly distributed within the group; that is, we set $\zeta$ in Theorem~\ref{thm2w2} to $1\{ 0 \leq s < 1/3  \}$, $1\{ 1/3\leq s < 2/3  \}$, and $1\{ 2/3 \leq s < 1  \}$. The regulariation parameter is chosen as in Section~\ref{subsub: exp1.1}. The results computed from the FIVE are summarized in Figure~\ref{est.curves}. 
The estimation results from the F2SLSE are similar and thus omitted. 
Overall,   our findings in Figure~\ref{est.curves} reconfirm the existing evidence that an inflow of immigrants heterogeneously affects the labor market outcomes of native workers according to workers' skills.    For example, in Figure~\ref{eff.fig1}, if the share of immigrants increases in occupations with a low value of $s_j$, then native wages are overall positively affected, although the size of this effect depends on the value of native workers' $s_j$ as well. In particular,  Figure~\ref{eff.fig1} suggests that native workers in the occupations with $s \in[0.1,0.4]$ will experience the most significant positive wage effects. This is somewhat consistent with \citepos{Peri2009} finding that native workers in occupations intensive in manual skills take advantage of having better-paid jobs when similarly skilled immigrants enter into the market. On the other hand, in Figure~\ref{eff.fig2}, it seems that the natives in occupations with $s \in [0.1,0.4]$ are negatively affected if the share of immigrants increases in occupations of which $s\in[1/3,2/3]$.

Before concluding this section, recall that the earlier literature mostly relies on the strategy of reducing dimensionality of the model by classifying workers into a few groups according to a measure of their skill. We will make a comparison of our estimation results with those based on this strategy. To this end, we let $\zeta_{j(J)} = J \times 1\{  {j -1}< J s \leq {j }  \}$. Then, the inner product $\langle \Delta w_t, \zeta_{j(J)}\rangle$ computes the average of changes in the log wages of natives in the occupations of which $s$ is between $({j}-1)/J$ and ${j}/J$, at time $t$. We then estimate the following using the standard 2SLSE,\begin{equation}
{\Delta   w_{ t(J) }} = \beta {\Delta   h_{ t(J) }} + u_{t(J)} ,  \label{emp:eq2}\end{equation}
where ${\Delta   w_{ t(J) }} = ( \langle \Delta w_t, \zeta_{1(J)}\rangle , \ldots, \langle \Delta w_t ,  \zeta_{J(J)}\rangle )' $ and ${\Delta   h_{ t(J) }} = ( \langle \Delta h_t, \zeta_{1(J)}\rangle, \ldots, \langle \Delta h_t ,  \zeta_{J(J)}\rangle )'$. The IV that is used to compute the 2SLSE is $ {\Delta   z_{ t(J) } } = ( \langle \Delta z_t, \zeta_{1(J)}\rangle , \ldots, \langle \Delta z_t ,  \zeta_{J(J)}\rangle )'$. For example, if $J=3$, we have $\langle \Delta w_t , \zeta_{1(3)}\rangle $, $\langle \Delta w_t , \zeta_{2(3)}\rangle $, and $\langle \Delta w_t, \zeta_{3(3)} \rangle$, which are plotted in  Figure~\ref{fig2}. In the following, we consider three different values for $J$: $3,$ $7$, and $11$.

\begin{figure}[t!]
\caption{Group characteristics (first difference of log native wages, $\langle \Delta w_t , \zeta_{j(3)}\rangle$)\label{fig2}}
\begin{subfigure}{.31\textwidth}\subcaption{$\zeta = \zeta_{1(3)}$\label{fig2a}}
\includegraphics[width=\textwidth]{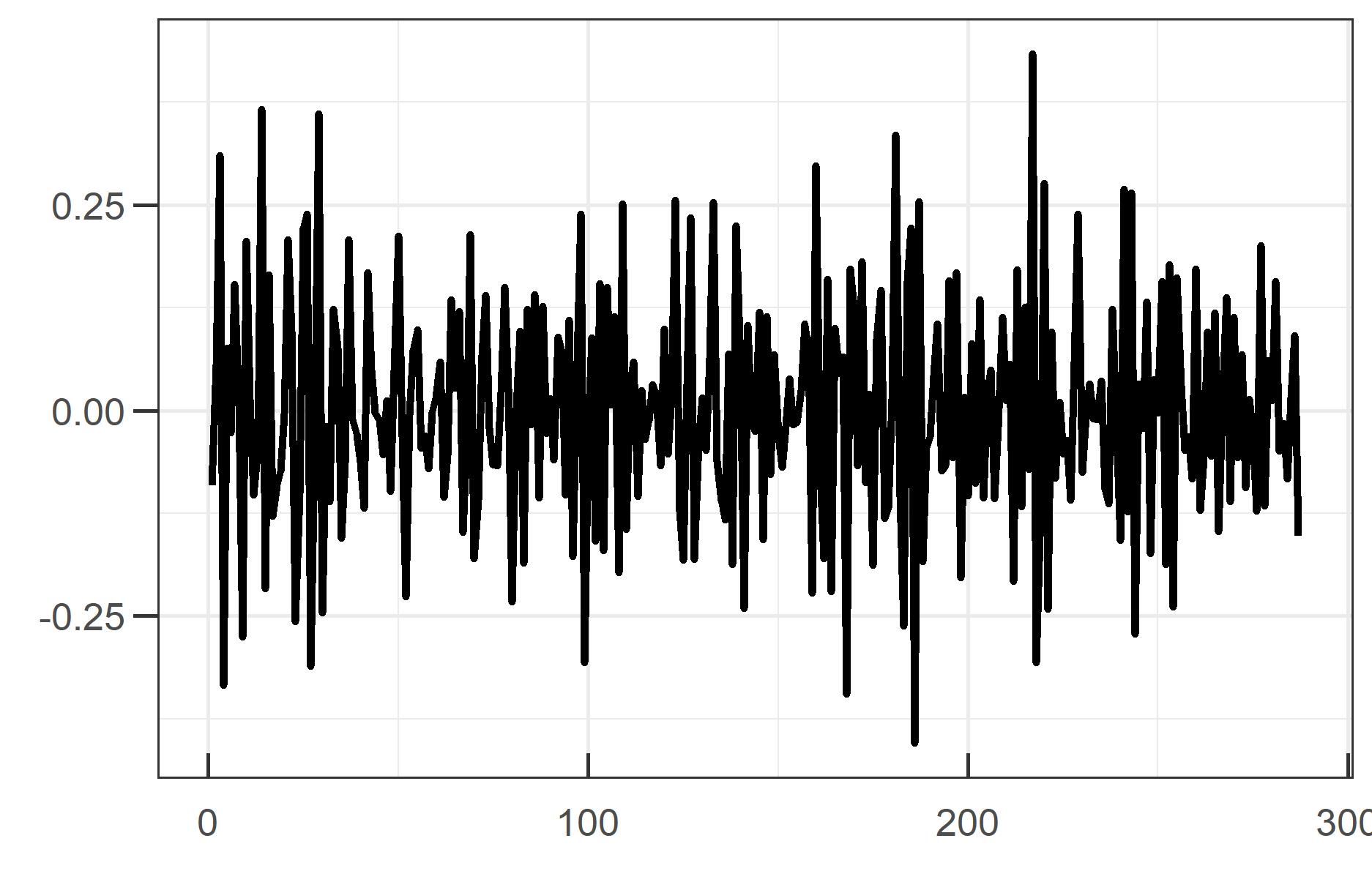}\vspace{0em}
\end{subfigure}	\begin{subfigure}{.31\textwidth}\subcaption{$\zeta = \zeta_{2(3)}$\label{fig2b}}
\includegraphics[width=\textwidth]{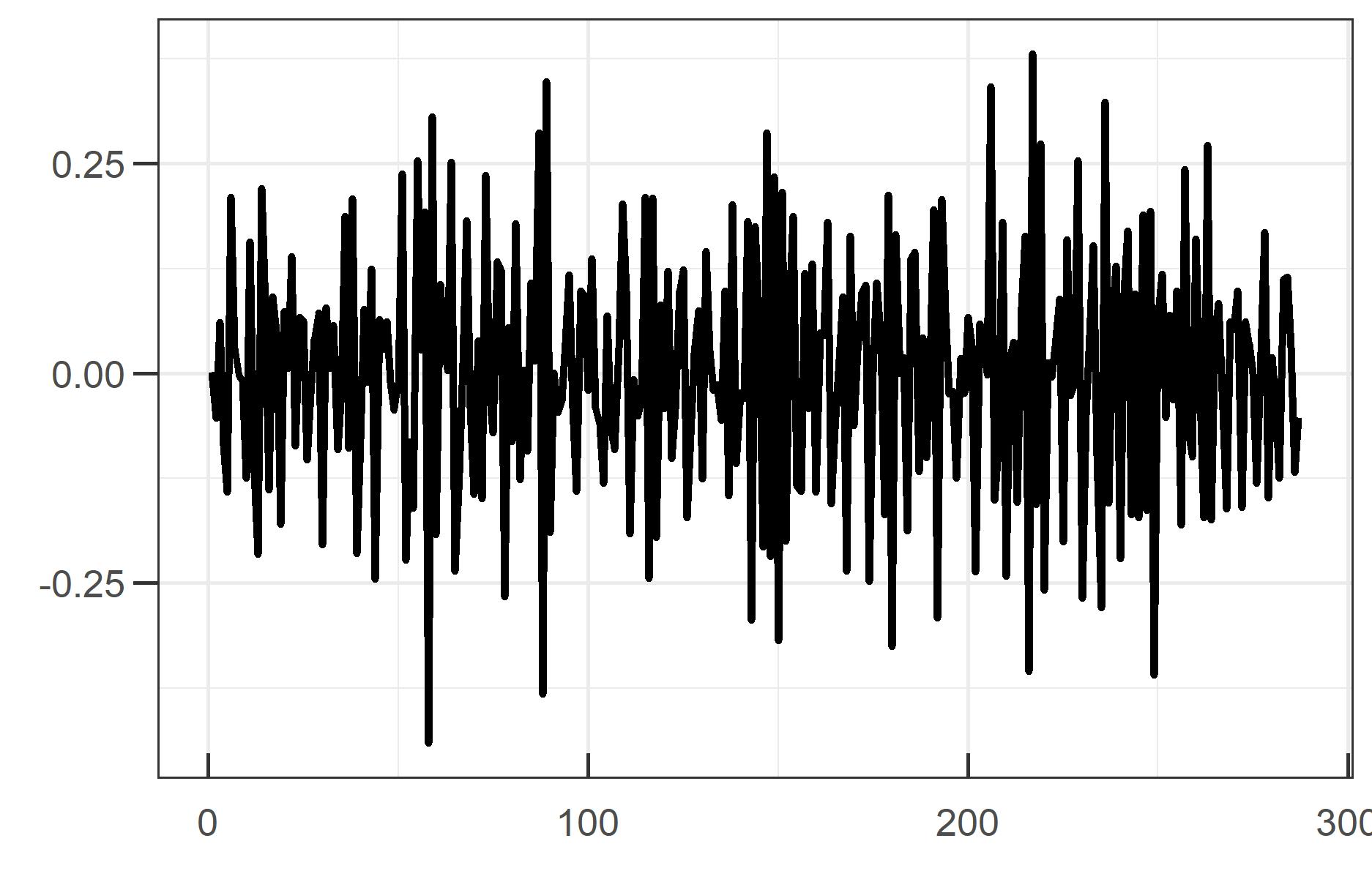}\vspace{0em}
\end{subfigure}	\begin{subfigure}{.31\textwidth}\subcaption{$\zeta = \zeta_{3(3)}$\label{fig2c}}
\includegraphics[width=\textwidth]{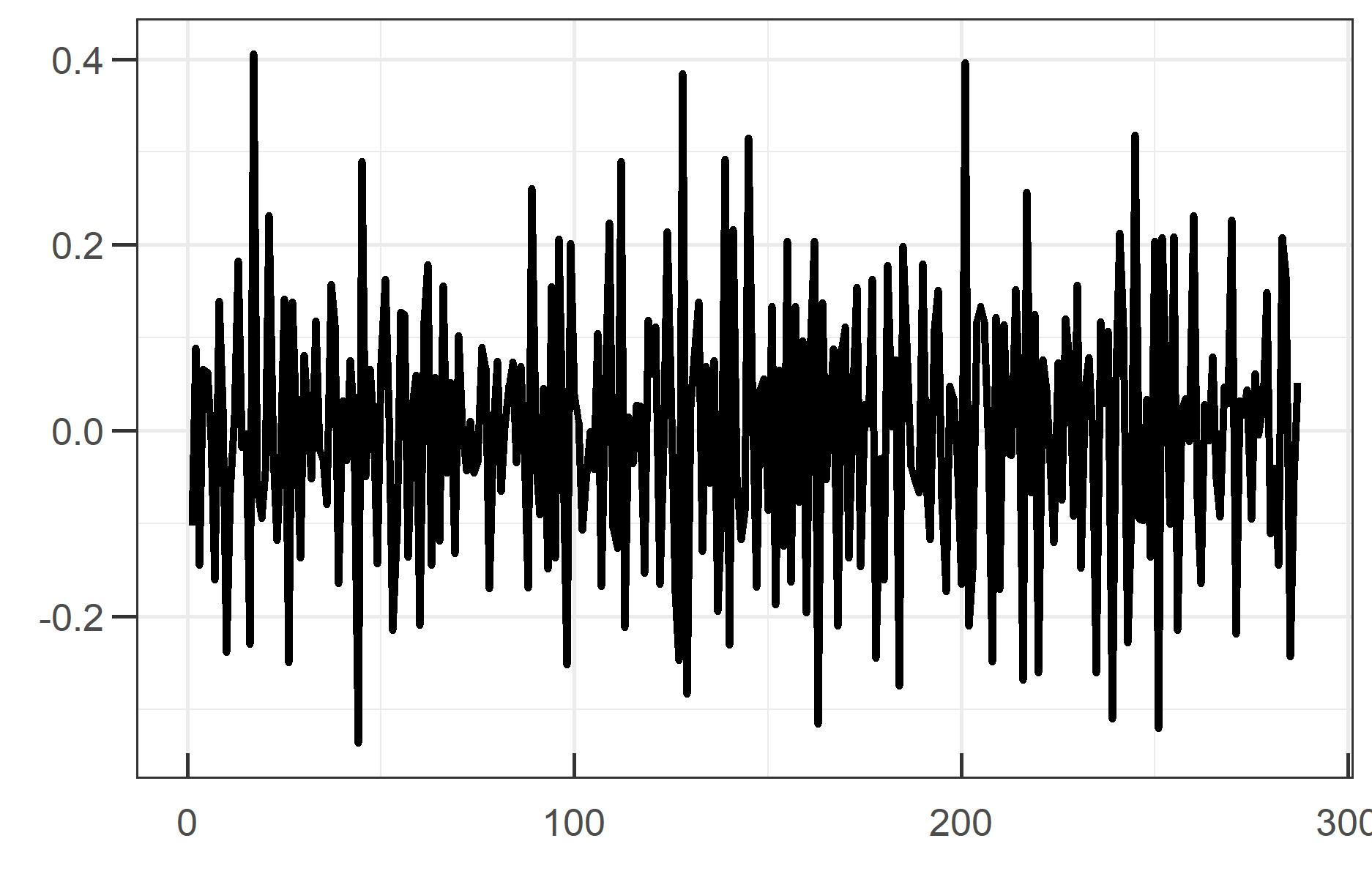}\vspace{0em}
\end{subfigure}	
\end{figure}


We compare the performance of estimators by using the root mean squared prediction error (RMSPE), which is computed with a rolling window for three test sets, with setting their starting points respectively as 2013/01, 2015/01, and 2017/01. Specifically, if we let \(\bar{u}_h\) be the forecasting error computed from the FIVE, F2SLSE or RIVE, then the RMSPE is given by \((H^{-1}\sum_{h=1}^H \int \bar{u}_h(s)^2 ds)^{1/2}\)  and the regularization parameters are chosen in such a way as to minimize the RMSPE. 
 Let $\hat{u}_{h,j(J)}$ denote the $j$th element of the forecasting error $\hat{u}_{h(J)}$ computed from the 2SLSE for each $J$. The standard RMSPE of the 2SLSE is given by $(H^{-1}\sum_{h=1} ^H \sum_{j=1} ^J\hat{u}_{h,j(J)}^2)^{1/2}$. {Because this measure is nondecreasing in $J$ and thus does not provide a fair comparison between RMSPEs,} we instead consider the normalized RMSPE, given by $((JH)^{-1} \sum_{h=1} ^H \sum_{j=1} ^J\hat{u}_{h,j(J)}^2)^{1/2}$, which can be reasonably compared to the RMSPEs of our estimators since, for each $h\geq 1$, (i) both $\hat{u}_{h,j(J)}$ and $\langle \bar{u}_h, \zeta_{j(J)} \rangle$ are estimates of the local average of $u_h$ over the interval $[(j-1)/J,j/J]$ and (ii) $\int \bar{u}_h(s)^2 ds$ may be approximated by $J^{-1} \sum_{j=1}^J \langle \bar{u}_h, \zeta_{j(J)} \rangle^2$.
\begin{table}[t!]
\caption{Root mean squared prediction errors }\label{tab.emp1}
\vspace{-.5em} 	\rule{1\textwidth}{.5pt}\vspace{.5em} 
\begin{tabular*}{1\textwidth}{@{\extracolsep{\fill}}lcccccc}  
\multirow{2}{*}{Test period} & \multirow{2}{*}{FIVE} & \multirow{2}{*}{F2SLSE}& \multirow{2}{*}{RIVE} &\multicolumn{3}{c}{2SLSE}\\\cmidrule{5-7}
&&&&$J=3$&$J=7$&$J=11$\\\midrule
2013/01$\sim$ &  0.1886 & 0.1889 & 0.1883 & 0.1514 & 0.2025 & 2.5877 \\ 
2015/01$\sim$ &0.1828 & 0.1829 & 0.1823& 0.1432 & 0.1884 & 3.4587 \\ 
2017/01$\sim$ &0.1679 & 0.1678 & 0.1676& 0.1236 & 0.1734 & 1.0399 \\ 
\end{tabular*}
\rule{1\textwidth}{1pt} 
{\footnotesize Notes: Each cell reports the estimated (normalized) RMSPE which is computed using three test sets.} 
\end{table}

Estimation results are reported in Table~\ref{tab.emp1}. We first note that the results from our estimators and those from the RIVE are very similar to each other. 
 These estimators report smaller RMSPEs than those of the 2SLSE except for the case $J=3$. Even if the 2SLSE reports the smallest RMSPEs when $J=3$, we should note that in this case,  223 different levels of skill are aggregated into only three groups, resulting in a lot of information loss. Moreover, the normalized RMSPE of the 2SLSE rapidly increases as we consider more finely defined skill groups.  This may be because, as $J$ gets larger, the number of parameters to be estimated rapidly increases. 
In addition, for a large $J$, the sample (cross-)covariance matrices for computing the 2SLSE tend to be singular,  
and thus the 2SLSE is expected to perform poorly. {This result also suggests that the pre-classification strategy can have a significant effect on the estimation results and our interpretation of them.} Therefore, the results given by Table~\ref{tab.emp1} imply that our functional IV methodology can be an appealing alternative to practitioners. 


\section{{Conclusion} \label{sec:con}}

 \phantomsection\label{rvlabel02}\commRV{This paper extends the existing results on the endogenous functional linear model to a more general setup, allowing for weakly dependent errors and without assuming a specific type of endogeneity. Additionally, it suitably extends the asymptotic approach of \cite{Hall2007} to encompass cases with endogeneity, a common occurrence in practical applications. 
Consequently, this paper proposes two novel estimators and provides their detailed asymptotic properties under this broader setting. Notably, even in the case where there is no endogeneity and hence the FIVE reduces to the FLSE estimator of \cite{Park2012} (see Section \ref{sec:estimators1}), most of the asymptotic results that we obtain in the present paper have not been explored in the literature, to the best of the authors' knowledge.  
Given the potential prevalence of endogeneity in the functional linear model (see Section \ref{model1}), we believe that the theoretical results presented in this paper hold value beyond the existing findings.}
 \phantomsection\label{rvlabel04}\commRV{From a practical perspective, our methodology can be applied to study relationships between economic functional variables of which availability has been being increasing; potential examples include density-on-density regression model (see e.g., \citealp{Park2012}). }

\newpage

\makeatletter 
\renewcommand{\thepage}{S\arabic{page}} 
\renewcommand{\thesection}{S\arabic{section}}  
\renewcommand{\thetable}{S\arabic{table}}  
\renewcommand{\thefigure}{S\arabic{figure}} 
\renewcommand{\theassumption}{S\arabic{assumption}} 
\renewcommand{\thelemma}{S\arabic{lemma}} 
\renewcommand{\thermk}{S\arabic{rmk}} 
\renewcommand{\hat}{\widehat}
\setcounter{page}{1}
\setcounter{section}{0}
\setcounter{rmk}{0}
\renewcommand{\abar}{\upalpha}
\renewcommand{\alpha}{\mathbbmss{a}}
\makeatother

\vspace{3em}
\noindent \LARGE{\textbf{Supplementary Material}} \normalsize 
\vspace{1em}

	\section{Preliminaries} \label{appprelim}
		Let $(\mathbb S,\mathbb F, \mathbb P)$ denote the underlying probability space and let $\mathcal H$ be a separable Hilbert space equipped with the inner product $\langle \cdot,\cdot\rangle$ and the usual Borel $\sigma$-field.

	\subsection{Random elements of Hilbert spaces}	\label{appintro}
A $\mathcal H$-valued random variable $X$ is defined by a measurable map from  $\mathbb S$ to $\mathcal H$. We say that such a random variable $X$ is integrable (resp.\ square-integrable) if $\mathbb{E} [\|X\|] < \infty$ (resp.\  $\mathbb{E}[ \|X\|^2] < \infty$), where $\|\cdot\|$ is the norm induced by the inner product.   If $X$ is integrable, there exists a unique element $\mathbb{E}[X] \in \mathcal H$ satisfying $\mathbb{E}[\langle X,\zeta\rangle] = \langle \mathbb{E}[X],\zeta\rangle$ for every $\zeta \in \mathcal H$. The element $\mathbb{E}[X]$ is called the expectation of $X$. 
	
	Let $Y$ be another $\mathcal H$-valued random variable. We let $\otimes$ denote the tensor product defined as follows: for all  $\zeta_1, \zeta_2\in \mathcal H$,
	\begin{equation}\label{eqtensor}
		\zeta_1 \otimes \zeta_2 (\cdot) = \langle \zeta_1,\cdot \rangle \zeta_2.
	\end{equation}
	Note that  $\zeta_1 \otimes \zeta_2$  is a linear map from $\mathcal H$ to $\mathcal H$. If  $\mathbb{E}[\|X\|\|Y\|] < \infty$,  we may  well define a linear map $\mathcal C_{XY}$ from $\mathcal H$ to $\mathcal H$  as follows: $\mathcal C_{XY} =   \mathbb{E}[(X-\mathbb{E}[X]) \otimes (Y-\mathbb{E}[Y])]$. $\mathcal C_{XY}$ is called the cross-covariance operator of $X$ and $Y$. If $X=Y$ and $X$ is square-integrable, we then may define $\mathcal C_{XX}$ similarly, and this is called the covariance  operator of $X$. If the cross-covariance  operator of two random variables $X$ and $Y$ are a nonzero operator, $X$ is said to be correlated with $Y$.  
		\subsection{Bounded linear operators on $\mathcal H$}
	Let $\mathcal{L}_{\mathcal H}$ denote the space of bounded  linear operators acting on $\mathcal H$, equipped with the  operator norm $\|\mathcal T\|_{\op} = \sup_{ \|\zeta\|\leq 1} \|\mathcal T\zeta\|$. For any $\mathcal T \in \mathcal L_{\mathcal H}$,   the adjoint of $\mathcal T$ (denoted  $\mathcal T^\ast$) is the unique element of $\mathcal L_{\mathcal H}$ satisfying that $\langle \mathcal T \zeta_1 ,\zeta_2 \rangle=\langle \zeta_1 ,  \mathcal T^\ast \zeta_2 \rangle$ for all $\zeta_1,\zeta_2 \in \mathcal H$. {The range (denoted $\ran \mathcal T$) and kernel (denoted $\ker \mathcal T$) of $\mathcal T\in \mathcal{L}_{\mathcal H}$ are respectively given by $\ran \mathcal T = \{\mathcal T\zeta : \zeta \in \mathcal H\}$ and $\ker \mathcal T = \{\zeta \in \mathcal H : \mathcal T\zeta = 0\}$.}  $\mathcal T$ is  said to be nonnegative if  $\langle \mathcal T \zeta,\zeta \rangle \geq 0$ for all $\zeta\in \mathcal H$, and positive if the inequality is strict.  An element $\mathcal T \in \mathcal L_{\mathcal H}$ is called compact if there exist two orthonormal bases $\{\zeta_{1j}\}_{j \geq 1}$ and $\{\zeta_{2j}\}_{j \geq 1}$ of $\mathcal H$, and a sequence of real numbers $\{a_j\}_{j \geq 1}$ tending to zero, such that  $\mathcal T = \sum_{j=1}^\infty a_j\zeta_{1j} \otimes \zeta_{2j} $. In this expression, it may be assumed that  $\zeta_{1j} = \zeta_{2j}$ and  $a_1 \geq a_2 \geq \ldots \geq 0$ if $\mathcal T$ is self-adjoint (i.e., $\mathcal T = \mathcal T^\ast$) and nonnegative \citep[p.35]{Bosq2000}. In this case, $a_j$ becomes an eigenvalue of $\mathcal T$ and hence $\zeta_{1j}$ is the corresponding eigenfunction, and moreover, we may define $\mathcal T^{1/2}$ by replacing $a_j$ with $\sqrt{a_j}$.  It is well known that the covariance of a $\mathcal H$-valued random variable is self-adjoint, nonnegative and compact if exists.  A linear operator $\mathcal T$ is  called a Hilbert-Schmidt operator if its Hilbert-Schmidt norm $\|\mathcal T\|_{\HS} = (\sum_{j=1} ^\infty \Vert  \mathcal T \zeta_j \Vert ^2)^{1/2}$ is finite, where $\{  \zeta_j \}_{j\geq 1}$ is an arbitrary orthonormal basis of $\mathcal H$. It is well known that $\|\mathcal T\|_{\op}\leq \|\mathcal T\|_{\HS}$ holds and the collection of Hilbert-Schmidt operators consists of a strict subclass of $\mathcal L_{\mathcal H}$; see Section 1.5  of \cite{Bosq2000}.

	\section{Appendix to Section \ref{sec:estimators} on ``Functional IV estimator" \label{sec:pf}}
We will hereafter let \(\alpha = \abar^{-1}\) and use both interchangeably for convenience. 
	\subsection{Proofs of the results in Section \ref{sec:asym1}}
	\subsubsection*{Proof of Theorem \ref{thm2w1}}
	Note that
	\begin{equation}
		\widehat{\mathcal A}  = \widehat{\mathcal C}_{yz}^\ast \widehat{\mathcal C}_{xz}	(\widehat{\mathcal C}_{xz}^\ast  \widehat{\mathcal C}_{xz})_{\K}^{-1}	=   \mathcal A \widehat{\Pi}_{\K} +   \widehat{\mathcal C}_{uz}^\ast \widehat{\mathcal C}_{xz}(\widehat{\mathcal C}_{xz}^\ast  \widehat{\mathcal C}_{xz})_{\K}^{-1}, \label{eq:b0}
	\end{equation}
	where $\widehat \Pi_{\K}= \sum_{j=1} ^{\K} \widehat f_j \otimes \widehat f_j$. 
	Since $\|\widehat{\mathcal C}_{xz}(\widehat{\mathcal C}_{xz}^\ast  \widehat{\mathcal C}_{xz})_{\K}^{-1}\|_{\op} \leq \alpha^{-1/2}$ and $\|\widehat{\mathcal C}_{uz}\|_{\HS}=O_p(T^{-1/2})$, we find that $	\|\widehat{\mathcal A}- \mathcal A \widehat{\Pi}_{\K}\|_{\HS} \leq O_p(\alpha^{-1/2} T^{-1/2})$. Thus the proof becomes complete if $\|\mathcal A\widehat{\Pi}_{\K} - \mathcal A\|_{\HS}\pto 0$ is shown. From  
	nearly identical arguments used to derive (8.63) of \cite{Bosq2000}, it can be shown that 
	\begin{equation}\label{eqpf0006add2wadd}
		\|\mathcal A\widehat{\Pi}_{\K} - \mathcal A\|^2 _{\HS}\leq \sum_{j=\K+1}^\infty \|\mathcal A\widehat{f}_j\|^2 \leq 	\sum_{j=\K+1}^\infty \|\mathcal A{f}^s_j\|^2 + 2 \|\mathcal A\|_{\op}^2 \|\widehat{\mathcal C}_{xz}^\ast  \widehat{\mathcal C}_{xz}-\mathcal C_{xz}^{\ast}\mathcal C_{xz}\|_{\op} \sum_{j=1}^{\K} \tau_j,
	\end{equation}
	where ${f}^s_j = \sgn\{\langle \widehat{f}_j, f_j \rangle \} f_j$.
	Since $\mathcal A$ is Hilbert-Schmidt, $\sum_{j=\K+1}^\infty \|\mathcal A{f}^s_j\|^2$ converges in probability to zero as $T$ gets larger (note that $\K$ diverges almost surely  as $T \to \infty$). In addition, 
	\begin{equation}
		\|\widehat{\mathcal C}_{xz}^\ast  \widehat{\mathcal C}_{xz}-\mathcal C_{xz}^{\ast}\mathcal C_{xz}\|_{\op} \leq \|\widehat{\mathcal C}_{xz}^\ast\|_{\op}\|\widehat{\mathcal C}_{xz}-\mathcal C_{xz}\|_{\op} + \|\widehat{\mathcal C}_{xz}^\ast-\mathcal C_{xz}^\ast\|_{\op}\|\mathcal C_{xz}\|_{\op} = O_p(T^{-1/2}), \label{addeq1}
	\end{equation} which in turn implies that the second term of the right hand side of \eqref{eqpf0006add2wadd} is $o_p(1)$.   Combining all these results, we find that  $\|\mathcal A\widehat{\Pi}_{\K} - \mathcal A\|_{\HS}$ is $o_p(1)$, which implies the desired result.\qed
	
	\subsubsection*{Proof of Theorem \ref{thm2w2}}
	To show (i), we note that $\mathcal C_{xz}\mathcal C_{xz}^\ast = \sum_{j=1}^\infty {\lambda}_j^2 {\xi}_j \otimes {\xi}_j$ and there exists an orthonormal basis $\{\widehat{\xi}_j\}_{j\geq 1}$ such that 
	$\widehat{\mathcal C}_{xz}\widehat{\mathcal C}_{xz}^{\ast}= \sum_{j=1}^\infty \widehat{\lambda}_j^2 \widehat{\xi}_j \otimes \widehat{\xi}_j$. 
 Moreover, the following can be shown: $\widehat {\mathcal C}_{xz} ^\ast \widehat \xi_j = \widehat{\lambda}_j \widehat f_j$, 	$\widehat {\mathcal C}_{xz}  \widehat f_j = \widehat{\lambda}_j \widehat \xi_j$, $  {\mathcal C}_{xz} ^\ast   \xi_j =  {\lambda}_j   f_j$, and 	$  {\mathcal C}_{xz}    f_j =  {\lambda}_j   \xi_j$. We first note that 	${\mathcal C}_{xz} ({\mathcal C}_{xz}^\ast{\mathcal C}_{xz})_{\K}^{-1} = \sum_{j=1}^{\K} (\lambda_j^s)^{-1} {f}_j^s\otimes   {\xi}_j^s$, where   ${f}^s_j = \sgn\{\langle \widehat{f}_j, f_j \rangle \} f_j$, ${\xi}^s_j = \sgn\{\langle \widehat{\xi}_j, \xi_j \rangle \} \xi_j$ and $\lambda_j^s = \sgn\{\langle \widehat{f}_j, f_j \rangle \} \cdot\sgn\{\langle \widehat{\xi}_j, \xi_j \rangle \}\cdot \lambda_j$.   Observe that
	\begin{equation} 
		\|\widehat{ \mathcal C}_{xz}(\widehat{ \mathcal C}_{xz}^\ast\widehat{\mathcal C}_{xz})_{\K}^{-1} -  { \mathcal C}_{xz} ({\mathcal C}_{xz}^\ast{\mathcal C}_{xz})_{\K}^{-1} \|_{\op} \leq \|\sum_{j=1}^{\K} ({({\lambda}_j^s)}^{-1}-\widehat{\lambda}_j^{-1}) {f}_j^s\otimes   {\xi}_j^s \|_{\op} +   \|\sum_{j=1}^{\K} \widehat{\lambda}_j^{-1} (\widehat{f}_j\otimes \widehat{\xi}_j-{f}^s_j\otimes{\xi}^s_j )\|_{\op}.\label{eqpf0001a0}
	\end{equation}
	The first term of \eqref{eqpf0001a0} is bounded above by $ \sup_{1 \leq j \leq \K} | \widehat{\lambda}_j ^{-1} -  {({\lambda}_j^s)}^{-1}|$, where $|\widehat{\lambda}_j ^{-1} -  {({\lambda}_j^s)} ^{-1}| = | ({\lambda}_j^s - \widehat \lambda_j){\widehat {\lambda}_j^{-1} \lambda_j^{-1}} | = | ({\lambda}_j^2 - \widehat{\lambda}_j^2){\widehat{\lambda}_j^{-1}\lambda_j^{-1} ({\lambda}_j^s + \widehat{\lambda}_j)^{-1}}|$. Since $ \sup_{1 \leq j \leq\K} |\widehat{\lambda}_{j}^2-\lambda_j^2| \leq \|\widehat{\mathcal C}_{xz}\widehat{\mathcal C}_{xz}^\ast -  {\mathcal C}_{xz}{\mathcal C}_{xz}^\ast\|_{\op}$ \citep[Lemma 4.2]{Bosq2000}  and $|\lambda_j ^s|| \widehat{\lambda}_j {\lambda}_j^s| \leq |\lambda_j ^s|| \widehat{\lambda}_j {\lambda}_j^s + \widehat{\lambda}_j^2|$,  we find that  
	\begin{align}
		\|\sum_{j=1}^{\K} ({({\lambda}_j^s)}^{-1}-\widehat{\lambda}_j^{-1}) {f}_j^s\otimes   {\xi}_j^s \|_{\op}  \leq \sup_{1 \leq j \leq\K}  \frac{|\widehat{\lambda}_j^2 - \lambda_j^2|}{|\lambda_j ^s|| \hat{\lambda}_j {\lambda}_j^s + \widehat{\lambda}_j^2|}\leq \sup_{1 \leq j \leq\K}  \frac{|\hat{\lambda}_j^2 - \lambda_j^2|}{|\lambda_j ^s|| \hat{\lambda}_j {\lambda}_j^s|} 
		\leq  \frac{\|\widehat{\mathcal C}_{xz}\widehat{\mathcal C}_{xz}^\ast -  {\mathcal C}_{xz}{\mathcal C}_{xz}^\ast\|_{\op}}{\sqrt{\alpha}\lambda_{\K}^2}. \label{eqpf001a}
	\end{align}
	As $\|\hat{f}_j - f^s_j\| \leq \tau_j \|\widehat{\mathcal C}_{xz}^\ast\widehat{\mathcal C}_{xz} -  {\mathcal C}_{xz}^\ast{\mathcal C}_{xz}\|_{\op}$ and $\|\hat{\xi}_j - \xi^s_j\| \leq \tau_j \|\widehat{\mathcal C}_{xz}\widehat{\mathcal C}_{xz}^\ast -  {\mathcal C}_{xz}{\mathcal C}_{xz}^\ast\|_{\op}$ \citep[Lemma 4.3]{Bosq2000}, 
	\begin{align}
		\|\sum_{j=1}^{\K} \hat{\lambda}_j^{-1} (\hat{f}_j\otimes \hat{\xi}_j-{f}^s_j\otimes{\xi}^s_j ) \|_{\op} &\leq  \alpha^{-1/2}  \sum_{j=1}^{\K} 	(\|\hat{f}_j -  f^s_j\| + 	\| \hat{\xi}_j - \xi^s_j\|)  \notag\\ &\leq \alpha^{-1/2}  \sum_{j=1}^{\K} 	\tau_j (\|\widehat{\mathcal C}_{xz}^\ast\widehat{\mathcal C}_{xz} -  {\mathcal C}_{xz}^\ast{\mathcal C}_{xz}\|_{\op} + \|\widehat{\mathcal C}_{xz}\widehat{\mathcal C}_{xz}^\ast -  {\mathcal C}_{xz}{\mathcal C}_{xz}^\ast\|_{\op}). \label{eqpf001b}
	\end{align}
	From \eqref{addeq1}, we know that $\|\widehat{\mathcal C}_{xz}^\ast\widehat{\mathcal C}_{xz} -  {\mathcal C}_{xz}^\ast{\mathcal C}_{xz}\|_{\op} = O_p(T^{-1/2})$ and $\|\widehat{\mathcal C}_{xz}\widehat{\mathcal C}_{xz}^\ast -  {\mathcal C}_{xz}{\mathcal C}_{xz}^\ast\|_{\op} = O_p(T^{-1/2})$. Moreover, it can be shown that $\lambda_{\K}^{-2} \leq (\lambda_{\K}^2 - \lambda_{\K+1}^2)^{-1} \leq \tau_{\K} \leq \sum_{j=1} ^{\K}\tau_j$, so the terms given in the right hand sides of \eqref{eqpf001a} and \eqref{eqpf001b} are  $o_p(1)$ under our assumptions. We thus deduce from \eqref{eqpf0001a0} that \begin{equation*}
		\sqrt{\frac{T}{\theta_{\K}(\zeta)}}(\widehat{\mathcal A}- \mathcal A \widehat{\Pi}_{\K}) \zeta  = \left(\frac{1}{\sqrt{T\theta_{\K}(\zeta)}} \sum_{t=1}^Tz_{t} \otimes u_{t}\right) {\mathcal C}_{xz}({\mathcal C}_{xz}^\ast{\mathcal C}_{xz})_{\K}^{-1}\zeta+  o_p(1).
	\end{equation*}
	Let $\zeta_t = (\theta_{\K}(\zeta))^{-1/2} [z_{t} \otimes u_t]  {\mathcal C}_{xz}({\mathcal C}_{xz}^\ast{\mathcal C}_{xz})_{\K}^{-1}\zeta = (\theta_{\K}(\zeta))^{-1/2} \langle z_t , {\mathcal C}_{xz}({\mathcal C}_{xz}^\ast{\mathcal C}_{xz})_{\K}^{-1} \zeta\rangle u_t .$ Then, we have 
	\begin{align}
		\mathbb{E}[\zeta_t\otimes \zeta_t] 
		&= (\theta_{\K} (\zeta))^{-1}  \langle {\mathcal C}_{xz}({\mathcal C}_{xz}^\ast{\mathcal C}_{xz})_{\K}^{-1}\zeta, \mathcal C_{zz}{\mathcal C}_{xz}({\mathcal C}_{xz}^\ast{\mathcal C}_{xz})_{\K}^{-1}\zeta \rangle \mathcal C_{uu}  =\mathcal C_{uu}, \label{eqlater1}
	\end{align}
	by Assumption \ref{assum1}.\ref{assum1.3}. Thus, under Assumption \ref{assum1}, $\{\langle \zeta_t, \psi\rangle\}_{t\geq 1}$ is a real-valued martingale difference sequence  for any $\psi \in \mathcal H$. 
	By the standard central limit theorem for such a sequence, we have 
	\begin{equation}
		\frac{1}{\sqrt{{T} }}\sum_{t=1}^T \langle \zeta_t,\psi\rangle  \dto N(0, \langle \mathcal C_{uu}\psi,\psi\rangle). \label{eqpf0008add}
	\end{equation}
	Let $	\ddot\zeta_T = {T}^{-1/2}\sum_{t=1}^T  \zeta_t$. 
	If we show that there exists an orthonormal basis $\{\ell_j\}_{j\geq 1}$ satisfying 
	\begin{equation}\label{eqpf0009add}
		\limsup_{n\to \infty}	\limsup_{T} \mathbb{P}\left( \sum_{j=n+1}^\infty  \langle \ddot\zeta_T, \ell_j  \rangle^2  > m \right) = 0  
	\end{equation} 
	for every $m >0$, then \eqref{eqpf0008add} implies that $\ddot\zeta_T\dto N(0,\mathcal C_{uu})$ \citep[Theorem 1.8.4]{van1996weak}. To show this, let  $\{\ell_j\}_{j\geq 1}$ be the eigenfunctions of $\mathcal C_{uu}$ and define $\mathcal L_n = \sum_{j=n+1}^{\infty} \ell_j\otimes \ell_j$. Then,
	\begin{equation}
		\mathbb{E}\left[\sum_{j=n+1}^\infty  \langle \ddot \zeta_T, \ell_j \rangle^2\right]  
		\leq  \frac{1}{T\theta_{\K}(\zeta)} \sum_{t=1}^T \mathbb{E}  \left[\langle z_t, \mathcal C_{xz} (\mathcal C_{xz} ^\ast \mathcal C_{xz})_{\K} ^{-1}\zeta \rangle^2 \|\mathcal L_nu_t\|^2 \right]
		=  \sum_{j=n+1}^\infty \langle \mathcal C_{uu} \ell_j,\ell_j\rangle^2, \label{eq0010add}
	\end{equation}
	where the equality follows from that $\{u_t\}_{t\geq 1}$ is a martingale difference sequence (with respect to $\mathfrak F_{t-1}$). Since $\mathcal C_{uu}$ is Hilbert-Schmidt, the right hand side of \eqref{eq0010add} converges to zero as $n \to \infty$. Combining this result with Markov's inequality, we find that for any $m>0$, $\mathbb{P} ( \sum_{j=n+1}^\infty  \langle \ddot\zeta_T, \ell_j  \rangle^2  > m ) \leq  m^{-1}\sum_{j=n+1}^\infty \langle \mathcal C_{uu} \ell_j,\ell_j\rangle^2$,
	from which \eqref{eqpf0009add} immediately follows. Thus, the desired result is obtained. 
	
	(ii)  follows from that  $\|\widehat{\mathcal C}_{zz}-\mathcal C_{zz}\|_{\op}$ and $	\| \widehat{ \mathcal C}_{xz}(\widehat{ \mathcal C}_{xz}^\ast\widehat{\mathcal C}_{xz})_{\K}^{-1} -  {\mathcal C}_{xz}({\mathcal C}_{xz}^\ast{\mathcal C}_{xz})_{\K}^{-1} \|_{\op}$ are all $o_p(1)$.\qed
	
	\subsection{Proofs of the results in Section \ref{sec:asym2}}	\label{sec:asym2proof}
	We first state a useful lemma and then provide our proofs of the main results given in Section \ref{sec:asym2}.	
	\begin{lemma}\label{lem:convrate}
		Suppose that Assumptions \ref{assumconvrate}.\ref{assumconvrate.1} and \ref{assumconvrate}.\ref{assumconvrate.2} hold and $\delta>1$. Then the following hold: \begin{enumerate*}[(i)]
			\item\label{lemconvrate8} 	$\sum_{\ell \neq j} \frac{ \lambda_\ell ^2 \ell ^{-\delta }}{(\lambda_j ^2  -\lambda _\ell ^2 )^{2} } \leq O(j^{\rho+2-\delta}) $ and
			\item\label{lemconvrate9} 	$\sum_{\ell \neq j} \frac{ \lambda_j ^2 \ell ^{-\delta }}{(\lambda_j ^2  -\lambda _\ell ^2 )^{2} } \leq O(j^{\rho+2-\delta})$.
		\end{enumerate*}  
	\end{lemma}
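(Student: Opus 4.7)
The plan is to reduce both sums to three canonical regimes based on the ratio $\ell/j$, obtain sharp lower bounds for $|\lambda_j^2 - \lambda_\ell^2|$ in each regime, and then estimate the resulting elementary series. I would fix $j$ large and split $\{\ell \neq j\}$ as $\ell \leq j/2$, $j/2 < \ell \leq 2j$ (with $\ell \neq j$), and $\ell > 2j$.

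First, I would establish the two‑sided bound $\lambda_j^2 \asymp j^{-\rho}$: the upper bound is Assumption~\ref{assumconvrate}.\ref{assumconvrate.1}, and the matching lower bound follows by telescoping, $\lambda_j^2 = \sum_{k \geq j}(\lambda_k^2-\lambda_{k+1}^2) \geq c_\circ^{-1} \sum_{k \geq j} k^{-\rho-1}$, which is $\gtrsim j^{-\rho}$ since $\rho > 2$. Next, for the gap $|\lambda_j^2-\lambda_\ell^2|$ I would telescope again. For $\ell \in (j,2j]$, $\lambda_j^2-\lambda_\ell^2 \geq c_\circ^{-1}\sum_{k=j}^{\ell-1}k^{-\rho-1} \gtrsim (\ell-j)\,j^{-\rho-1}$, and symmetrically for $\ell \in [j/2,j)$ one gets $\lambda_\ell^2-\lambda_j^2 \gtrsim (j-\ell)\,j^{-\rho-1}$. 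For $\ell \leq j/2$ or $\ell > 2j$, the same telescoping (over a block of length $\geq j$ or $\geq \ell$) yields the "far" lower bound $|\lambda_j^2-\lambda_\ell^2| \gtrsim \min(j,\ell)^{-\rho}$.

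With these bounds in hand, each of the three subsums in (i) and (ii) reduces to an explicit series. In the far-small regime $\ell \leq j/2$, both sums are dominated by $\sum_{\ell \leq j/2} \ell^{\rho-\delta}$ (for (i)) or $j^{-\rho}\sum_{\ell\leq j/2}\ell^{2\rho-\delta}$ (for (ii)), which are $O(j^{\rho-\delta+1})$. In the far-large regime $\ell > 2j$, the summands become $j^{2\rho}\sum_{\ell>2j}\ell^{-\rho-\delta}$ and $j^{\rho}\sum_{\ell>2j}\ell^{-\delta}$, both $O(j^{\rho+1-\delta})$ using $\delta>1$ (and $\rho>2$). The dominant contribution comes from the comparable regime $j/2 < \ell \leq 2j$, $\ell\neq j$, where $\lambda_\ell^2 \asymp \lambda_j^2 \asymp j^{-\rho}$, $\ell^{-\delta}\asymp j^{-\delta}$, and the squared gap is $\asymp (j-\ell)^2 j^{-2\rho-2}$; both sums then collapse to $j^{\rho+2-\delta}\sum_{m\geq 1}m^{-2} = O(j^{\rho+2-\delta})$.

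The expected obstacle is essentially bookkeeping — verifying that the "close" lower bound really kicks in on a whole dyadic window around $j$ and dovetails with the "far" bound at the boundaries $\ell = j/2$ and $\ell = 2j$, and checking that the worst of the three regime estimates is indeed $O(j^{\rho+2-\delta})$ (the comparable regime) under all admissible $\delta > 1$ and $\rho > 2$. Once the window decomposition and the two regime-specific gap estimates are set up cleanly, the rest is elementary comparison with $\sum m^{-2}$ and geometric/$p$-series tail estimates.
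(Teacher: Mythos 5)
Your plan follows essentially the same route as the paper's proof: the same three-regime split of $\{\ell\neq j\}$ according to the ratio $\ell/j$ (the paper's constant $C$, chosen so that $\lambda_j^2/\lambda_{\lfloor j/C\rfloor}^2\leq 1/2$, plays exactly the role of your dyadic window), the same telescoped two-sided bound $\lambda_j^2\asymp j^{-\rho}$ from Assumptions \ref{assumconvrate}.\ref{assumconvrate.1}--\ref{assumconvrate}.\ref{assumconvrate.2}, the same gap estimate $(\lambda_j^2-\lambda_\ell^2)^2\gtrsim (j-\ell)^2 j^{-2\rho-2}$ on the comparable window, and the same comparison with $\sum_{m\geq1} m^{-2}$ for the dominant contribution. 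The argument is correct; if anything, your $O(j^{\rho+1-\delta})$ bound for the two far regimes dovetails with the claimed order $O(j^{\rho+2-\delta})$ more cleanly than the paper's cruder $O(j^{\rho})$ there.
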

	\begin{proof}[\normalfont\textbf{Proof of  Lemma \ref{lem:convrate}}] We only show \ref{lemconvrate8}, because the remaining result can be obtained in a similar manner. As in \cite{imaizumi2018}, we can choose $j_0 \geq 1$ and $C>1 $ large enough so that $\lambda_j ^2 / \lambda_{\lfloor j/C\rfloor} ^2  \leq 1/2$ and $\lambda_{\lfloor jC\rfloor+1}  ^2 /\lambda_j^2 \leq 1/2$ for all $j \geq j_0$, where $\lfloor\cdot\rfloor$ denotes the floor function. In addition, because of Assumption \ref{assumconvrate}.\ref{assumconvrate.2} we have $(\lambda_j^2 - \lambda_\ell^2)^{2} \geq O(1) j^{-2\rho-2} (j-\ell) ^2 $ for $\ell \neq j$ and $\lfloor j/C\rfloor < \ell < \lfloor Cj \rfloor$ (\citealt[p.\ 29]{imaizumi2018}). Using these, for $\delta >1$, we find that
		\begin{equation*}
			\sum_{\ell=1} ^{\lfloor j/C \rfloor} \frac{\lambda_\ell ^2 \ell ^{-\delta}}{(\lambda_j ^2 -\lambda_\ell ^2 )^2} \leq 	\sum_{\ell=1} ^{\lfloor j/C \rfloor}  \frac{\lambda_\ell ^2 \ell^{-\delta}}{\lambda_\ell ^4 (1-\lambda_j ^2 /\lambda_{\lfloor j/C \rfloor} ^2)} \leq 4 \sum_{\ell=1} ^{\lfloor j/C \rfloor} \frac{\ell^{-\delta}}{\lambda_j ^2} \leq O(j^{\rho}),
		\end{equation*}and
		\begin{equation*}
			\sum_{\ell=\lfloor jC \rfloor+1 } ^{\infty} \frac{\lambda_\ell ^2 \ell ^{-\delta}}{(\lambda_j ^2 -\lambda_\ell ^2 )^2} \leq 4 \lambda_j ^{-2} \sum_{\ell=\lfloor jC\rfloor +1 } ^{\infty} \ell^{-\delta} \leq O(j^{\rho}). 
		\end{equation*}  Moreover, by using the inequality $\lambda_\ell ^2 \leq |\lambda_\ell^2 - \lambda_j ^2 | +\lambda_j^2$ and the property stated at the beginning of the proof, we have
		\begin{align*}
			\sum_{\ell=\lfloor j/C\rfloor+1 , \neq j} ^{\lfloor jC  \rfloor} \frac{\lambda_\ell ^2 \ell ^{-\delta}}{(\lambda_j ^2 -\lambda_\ell ^2 )^2}&\leq  \sum_{\ell=\lfloor j/C\rfloor+1 , \neq j}  ^{\lfloor jC  \rfloor}  \frac{\ell^{-\delta}}{|\lambda_\ell ^2 -\lambda_j^2|} +   \sum_{\ell=\lfloor j/C\rfloor+1 , \neq j}  ^{\lfloor jC  \rfloor} \frac{\lambda_j^2 \ell^{-\delta}}{(\lambda_j^2 -\lambda_\ell ^2)^2}\nonumber\\
			& \leq  j^{1+\rho} \sum_{\ell=\lfloor j/C\rfloor+1 , \neq j}  ^{\lfloor jC  \rfloor}  \frac{\ell ^{-\delta}}{|\ell-j|} + j^{2+\rho}\sum_{\ell=\lfloor j/C\rfloor+1 , \neq j}  ^{\lfloor jC  \rfloor}  \frac{\ell ^{-\delta}}{|\ell-j|^2}\nonumber\\
			& \leq  j^{1+\rho} \sum_{\ell=\lfloor j/C\rfloor+1 , \neq j}  ^{\lfloor jC  \rfloor}  \frac{\ell ^{-\delta+1}}{|\ell-j|^2} + j^{2+\rho}\sum_{\ell=\lfloor j/C\rfloor+1 , \neq j}  ^{\lfloor jC  \rfloor}  \frac{\ell ^{-\delta}}{|\ell-j|^2} \leq    O(j^{2+\rho-\delta}) ,
		\end{align*} 
		where the last two inequalities are obtained by using the fact that $  |\ell-j|<\ell$ and $\ell^{-\delta+1} \leq ( j/C)^{-\delta+1} \leq C^{\delta-1} j^{-\delta+1}$ for all $\ell > \lfloor j/C\rfloor+1 $. 
	\end{proof} 

	\subsubsection*{Proof of Theorem \ref{thm:convrate}}
	We first note that $\alpha \leq \widehat{\lambda}_{\K} ^2  = \widehat{\lambda}_{\K}^2  - \lambda_{\K}^2 +\lambda_{\K} ^2 \leq \Vert \widehat{\mathcal C}_{xz} ^\ast \widehat{\mathcal C}_{xz} - \mathcal C_{xz} ^\ast \mathcal C_{xz} \Vert_{\op} + c_\circ\K^{-\rho} \leq O_p (T^{-1/2}) + c_{\circ} \K^{-\rho}$ and  $\alpha^{-1} T^{-1/2}   = o(1)$. These imply that  
	\begin{equation}
		\K \leq (1+o_p(1))\alpha^{-1/\rho}.
		\label{eq:kalpha}
	\end{equation} 
	Using the fact that $\lambda_j ^2 \geq \sum_{\ell=j} ^{\infty} (\lambda_{\ell} ^2 -\lambda_{\ell+1}^2 )\geq \rho^{-1}c_\circ^{-1} j^{-\rho}$ and  $\alpha^{-1} T^{-1/2} = o(1)$, we also find that \begin{equation}
		(c_\circ \rho)^{-1}{(\K+1)^{-\rho }} \leq {\lambda}_{\K+1} ^2  = \lambda_{\K+1} ^2 - \widehat{\lambda}_{\K+1} ^2  +\widehat \lambda_{\K+1} ^2 \leq   O_p(T^{-1/2})  +\alpha 
		\leq (o_p(1) + 1) \alpha  \label{eq:kalpha2} . 
	\end{equation} 
	
	We will now obtain stochastic orders of $\Vert \widehat{\mathcal A} - \mathcal A \widehat{\Pi}_{\K} \Vert_{\HS}$,  $\Vert \mathcal A \widehat{\Pi}_{\K} - \mathcal A \Pi_{\K}\Vert_{\HS}$, and  $\Vert \mathcal A (\mathcal I - \Pi_{\K})\Vert_{\HS}$.
	From \eqref{eq:b0}, we know that $\Vert \widehat{\mathcal A} - \mathcal A \widehat{\Pi}_{\K} \Vert _{\HS} \leq \Vert \widehat{\mathcal C}_{uz} \Vert _{\HS} \Vert \widehat{\mathcal C}_{xz} (\widehat{\mathcal C}_{xz} ^{\ast} \widehat{\mathcal C}_{xz})_{\K} ^{-1} \Vert _{\op}\leq O_p (\alpha ^{-1/2} T^{-1/2})$. Moreover,  we have  
	\begin{equation}
		\Vert \mathcal A (\mathcal I  - \Pi_{\K}) \Vert_{\HS} ^2 = \sum_{\ell = \K+1} ^{\infty } \sum_{j=1} ^\infty \langle \mathcal A f_\ell,  {\xi_j} \rangle ^2 \leq c_\circ\sum_{\ell = \K+1} ^{\infty } \sum_{j=1} ^\infty  \ell ^{-2\varsigma} j ^{-2\gamma} \leq  O((\K+1)^{-2\varsigma+1 }) \leq O_p(\alpha^{(2\varsigma -1)/\rho} ),\label{eq:b07}
	\end{equation}
	where the first  inequality immediately follows from Assumption \ref{assumconvrate}.\ref{assum1.3}, and the second and third inequalities follow from \eqref{eq:kalpha2}, Assumption~\ref{assumconvrate}, and the Euler-Maclaurin summation formula for the Riemann zeta-function (see e.g., (5.6) of \citealp{ibukiyama2014euler}).  We then focus on the remaining term $\Vert \mathcal A \widehat{\Pi}_{\K} - \mathcal A\Pi_{\K}\Vert_{\HS} $ and find that 
	\begin{equation}
		\Vert \mathcal A \widehat{\Pi}_{\K} - \mathcal A\Pi_{\K}\Vert_{\HS} ^2 \leq 2\Vert  \sum_{j=1} ^{\K} \widehat{f}_j  \otimes \mathcal A (\widehat{f}_j - f_j ^s )\Vert_{\HS} ^2 + 2 \Vert \sum_{j=1} ^{\K} (\widehat{f}_j - f_j ^s) \otimes \mathcal A f_j ^s \Vert_{\HS} ^2 . \label{eq:b08} 
	\end{equation}	
We	observe that  \vspace{-0.75em}
	\begin{align}
		\Vert \widehat{f}_j - f_j^s \Vert^2 &= O_p(j^{2} T^{-1}), \label{eq000} \\
		\|\mathcal A (\widehat{f}_j - f_j ^s)\|^2 &= O_p(T^{-1}) (j^{2-2\varsigma} + j^{\rho+2-2\varsigma}).\label{eq001}
	\end{align}
	These will be proved below after discussing the main result of interest. Specifically, from \eqref{eq001}, we can show the second term in \eqref{eq:b08} satisfies that 
	\begin{align}
		\Vert \sum_{j=1} ^{\K} (\widehat{f}_j - f_j ^s) \otimes \mathcal A f_j ^s \Vert_{\HS} ^2 &= 
		\sum_{\ell=1}^{\infty}\Vert \sum_{j=1} ^{\K} \langle \mathcal A f_j,f_{\ell} \rangle (\widehat{f}_j - f_j ^s) \Vert ^2   \leq 	\sum_{\ell=1}^{\infty} \left(\sum_{j=1}^{\K} |\langle \mathcal A f_j,f_{\ell} \rangle| \|\widehat{f}_j - f_j ^s\| \right)^2 \notag \\
		&\leq \sum_{\ell=1}^{\infty} \ell^{-2\gamma} \left(\sum_{j=1}^{\K}  j^{-\varsigma}  \|\widehat{f}_j - f_j ^s\| \right)^2  =  O_p(T^{-1})  \left(\sum_{j=1}^{\K}  j^{1-\varsigma} \right)^2 \notag \\
		&= \begin{cases}O_p(T^{-1}) & \text{if }\varsigma>2,
			\\
			O_p(T^{-1}\max\{\log^2 \alpha^{-1}, \alpha^{(2\varsigma-4)/\rho} \}) &\text{if }\varsigma \leq 2, \end{cases} 
		\label{eq:b09}		\end{align}
	where the first equality follows from the properties of the Hilbert-Schmidt norm and the remaining relationships follow from \eqref{eq000}, Assumption \ref{assumconvrate}, and the fact that $\sum_{j=1}^{\K}  j^{1-\varsigma}=O_p(1)$ if $\varsigma>2$ and  $\sum_{j=1}^{\K}  j^{1-\varsigma}=O_p(\max\{\log \alpha^{-1}, \alpha^{(\varsigma-2)/\rho} \})$ otherwise.  Similarly, the first term in \eqref{eq:b08} satisfies that
	\begin{align}
		\Vert \sum_{j=1} ^{\K}   \widehat{f}_j   \otimes \mathcal A(\widehat{f}_j - f_j)\Vert_{\HS} ^2 &= 		\sum_{j=1} ^{\K}  \Vert \mathcal A(\widehat{f}_j - f_j)\Vert ^2 = O_p(T^{-1})  \sum_{j=1} ^{\K} (j^{2-2\varsigma} + j^{\rho+2-2\varsigma}) \notag \\
		&= \begin{cases}O_p(T^{-1}) & \text{if } \varsigma> \rho/2+3/2,
			\\
			O_p(T^{-1}\max\{\log \alpha^{-1}, \alpha^{(2\varsigma-\rho-3)/\rho} \}) &\text{if } \varsigma \leq \rho/2+3/2,\end{cases}   \label{eq:b11}
	\end{align}
	where \eqref{eq001} is used to establish the second equality.
	Since $2\varsigma-\rho-3 < 2\varsigma-4$, $\alpha \log \alpha^{-1} = o(1)$, and $\alpha \log^2 \alpha^{-1} = o(1)$, 
	\eqref{eqthmconvrate} may be deduced  from \eqref{eq:b07}, \eqref{eq:b09} and \eqref{eq:b11}. \\[9pt]
	\noindent \textbf{Proofs of \eqref{eq000} and \eqref{eq001}}:
	We first show \eqref{eq000}. Note that for each $j$, 
	\begin{equation}
		\widehat{f}_j - f_j ^s = \sum_{\ell \neq j }(\widehat{\lambda}_j ^2 -\lambda_\ell ^2 )^{-1} \langle (\widehat{\mathcal C}_{xz} ^\ast \widehat{\mathcal C}_{xz}  - \mathcal C_{xz} ^\ast \mathcal C_{xz})  \widehat{f}_j  , f_\ell^s \rangle f_\ell^s + \langle \widehat{f}_j - f_j ^s , f_j^s \rangle f_j ^s. \label{eq:b1}
	\end{equation}
	Then, using the arguments used to derive (4.48) of \cite{Bosq2000} and the expansion of $\langle \widehat{f}_j  - f_j ^s ,f_\ell^s \rangle $ that was used to derive \eqref{eq:b1}, it can be shown that \begin{equation}
		\Vert \widehat{f}_j - f_j^s \Vert^2 \leq 4\sum_{\ell \neq j}  (\widehat{\lambda}_j ^{2} - \lambda_\ell ^2 )^{-2} \langle  (\widehat{\mathcal C}_{xz}^\ast \widehat{\mathcal C}_{xz} - \mathcal C_{xz} ^\ast \mathcal C_{xz}   ) \widehat{f}_j , f_\ell\rangle ^2  .  \label{eq:b2}
	\end{equation}  
	Since $\widehat{\mathcal C}_{xz}^\ast \widehat{\mathcal C}_{xz} - \mathcal C_{xz} ^\ast \mathcal C_{xz}   = (\widehat{\mathcal C}_{xz} ^\ast - \mathcal C_{xz} ^\ast )\widehat{\mathcal C}_{xz} + \mathcal C_{xz}^\ast (\widehat{\mathcal C}_{xz}  - \mathcal C_{xz}  )$, the sum given in \eqref{eq:b2} satisfies that
	\begin{align}
		&\sum_{\ell \neq j}  (\widehat{\lambda}_j ^{2} - \lambda_\ell ^2 )^{-2} \langle  (\widehat{\mathcal C}_{xz}^\ast \widehat{\mathcal C}_{xz} - \mathcal C_{xz} ^\ast \mathcal C_{xz}   ) \widehat{f}_j , f_\ell\rangle ^2  \nonumber\\
		&\leq 2 \sum_{\ell \neq j}  (\widehat{\lambda}_j ^{2} - \lambda_\ell ^2 )^{-2} \langle (\widehat{\mathcal C}_{xz} ^\ast - \mathcal C_{xz} ^\ast ) \widehat{\lambda}_j \widehat{\xi}_j,f_\ell\rangle ^2 + 2 \sum_{\ell \neq j}  (\widehat{\lambda}_j ^{2} - \lambda_\ell ^2 )^{-2}  \langle(\widehat{\mathcal C}_{xz}  - \mathcal C_{xz}  ) \widehat{f}_j , \lambda_\ell \xi_\ell \rangle ^2.  \label{eqpf01}
	\end{align}
	The second term of \eqref{eqpf01} satisfies that  
	\begin{align}
		&\sum_{\ell \neq j}  (\widehat{\lambda}_j ^{2} - \lambda_\ell ^2 )^{-2}  \langle(\widehat{\mathcal C}_{xz}  - \mathcal C_{xz}  ) \widehat{f}_j , \lambda_\ell \xi_\ell \rangle ^2 \nonumber\\
		&\leq 2 \sum_{\ell \neq j}  (\widehat{\lambda}_j ^{2} - \lambda_\ell ^2 )^{-2} \lambda_\ell ^2 \langle (\widehat{\mathcal C}_{xz}  - \mathcal C_{xz}  ) (\widehat{f}_j - f_j ^s) ,\xi_\ell \rangle ^2 + 2 \sum_{\ell \neq j}  (\widehat{\lambda}_j ^{2} - \lambda_\ell ^2 )^{-2} \lambda_\ell ^2 \langle (\widehat{\mathcal C}_{xz}  - \mathcal C_{xz}  )   f_j ^s ,\xi_\ell \rangle ^2 \nonumber\\
		&\leq 2 \Delta_{1j} \Vert \widehat{f}_j - f_j ^s \Vert ^2 + 2 \sum_{\ell \neq j}  (\widehat{\lambda}_j ^{2} - \lambda_\ell ^2 )^{-2} \lambda_\ell ^2 \langle (\widehat{\mathcal C}_{xz}  - \mathcal C_{xz}  )   f_j ^s ,\xi_\ell \rangle ^2,  \label{eqpf03}
	\end{align}
	where $\Delta_{1j} = \sum_{\ell \neq j}  (\widehat{\lambda}_j ^2 -\lambda_\ell ^2 ) ^{-2} \lambda_\ell ^2 \Vert \widehat{\mathcal C}_{xz} - \mathcal C_{xz} \Vert_{\op} ^2 $.  Similarly, for the first term of \eqref{eqpf01}, we have 
	\begin{align}
		&\sum_{\ell \neq j}  (\widehat{\lambda}_j ^{2} - \lambda_\ell ^2 )^{-2} \widehat{\lambda}_j ^2\langle (\widehat{\mathcal C}_{xz} ^\ast - \mathcal C_{xz} ^\ast )  \widehat{\xi}_j,f_\ell\rangle ^2  \nonumber\\
		&\leq  \sum_{\ell \neq j} \left( (\widehat{\lambda}_j ^{2} - \lambda_\ell ^2 )^{-2}\lambda_\ell^2\langle (\widehat{\mathcal C}_{xz} ^\ast - \mathcal C_{xz} ^\ast )  \widehat{\xi}_j,f_\ell\rangle ^2+(\widehat{\lambda}_j ^{2} - \lambda_\ell ^2 )^{-1}\langle (\widehat{\mathcal C}_{xz} ^\ast - \mathcal C_{xz} ^\ast )  \widehat{\xi}_j,f_\ell\rangle ^2 \right)   \nonumber\\
		&\leq   2\sum_{\ell \neq j}  (\widehat{\lambda}_j ^{2} - \lambda_\ell ^2 )^{-2} \lambda_\ell ^2\langle  (\widehat{\mathcal C}_{xz} ^\ast - \mathcal C_{xz} ^\ast) (\widehat{\xi}_j - {\xi_j^s} ), f_\ell \rangle ^2 + 2 \sum_{\ell \neq j}  (\widehat{\lambda}_j ^{2} - \lambda_\ell ^2 )^{-2} \lambda_\ell ^2\langle  (\widehat{\mathcal C}_{xz} ^\ast - \mathcal C_{xz} ^\ast)  {\xi_j^s} , f_\ell \rangle ^2 \nonumber\\
		&\quad + 2 \sum_{\ell \neq j } (\widehat{\lambda}_j ^2 - \lambda_\ell ^2 )^{-1} \langle (\widehat{\mathcal C}_{xz} ^\ast - \mathcal C_{xz} ^\ast ) (\widehat{\xi}_j - \xi_j ^s) ,f_\ell \rangle ^2 + 2\sum_{\ell \neq j } (\widehat{\lambda}_j ^2 - \lambda_\ell ^2 )^{-1} \langle (\widehat{\mathcal C}_{xz} ^\ast - \mathcal C_{xz} ^\ast ) \xi_j ^s ,f_\ell \rangle ^2\nonumber\\
		&\leq 2(\Delta_{1j} + \Delta_{2j}) \Vert \widehat{\xi}_j - \xi_j^s \Vert ^2 +  2 \sum_{\ell \neq j}  (\widehat{\lambda}_j ^{2} - \lambda_\ell ^2 )^{-2}  {\widehat{\lambda}_j ^2}\langle  (\widehat{\mathcal C}_{xz} ^\ast - \mathcal C_{xz} ^\ast)  {\xi}_j^s  , f_\ell \rangle ^2, \label{eqpf01a}
	\end{align}
	where  $\Delta_{2j} = \max_{\ell \neq j, 1\leq j \leq \K} (\widehat{\lambda}_j ^2 -\lambda_\ell ^2)^{-1} \Vert \widehat{\mathcal C}_{xz} - \mathcal C_{xz}\Vert_{\op}^2$  and the second inequality simply follows from  the decomposition $\widehat{\xi}_j = (\widehat{\xi}_j - \xi_j^s) + \xi_{j}^s$ and the fact that $(a+b)^2 \leq  2a^2+ 2b^2$ for $a,b \in \mathbb{R}$. Let  \begin{equation}
		\Delta_{3j} =  \Delta_{3j,1} + \Delta_{3j,2}, \label{delta3}
	\end{equation}
	where \begin{equation}
		\Delta_{3j,1} = 	\sum_{\ell \neq j} (\widehat{\lambda}_j ^2 -\lambda_\ell ^2 ) ^{-2} \lambda_\ell ^2   \langle (\widehat{\mathcal C}_{xz}  -\mathcal C_{xz}  )  {{f}_j^s}, \xi_\ell \rangle ^2,\quad\Delta_{3j,2}  = \sum_{\ell \neq j} (\widehat{\lambda}_j ^2 -\lambda_\ell ^2 ) ^{-2}  {\widehat{\lambda}_j ^2}\langle (\widehat{\mathcal C}_{xz}  ^\ast- \mathcal C_{xz} ^\ast )  {{\xi}_j^s}, f_\ell \rangle^2 .  \label{delta32}
	\end{equation} 
	We then deduce from \eqref{eq:b2}-\eqref{delta32} that  
	\begin{equation} \label{eqpf01aa}
		\Vert \widehat{f}_j  - f_j ^s \Vert ^2 \leq 16 \Delta_{1j} \Vert \widehat{f}_j - f_j ^s \Vert ^2 +16 (\Delta_{1j} +\Delta_{2j}) \Vert \widehat{\xi}_j - \xi_j^s \Vert^2 +16 \Delta_{3j}.
	\end{equation}
	A similar bound of $\Vert \widehat{\xi}_j - \xi_j ^s \Vert ^2 $ can be obtained from nearly identical arguments to derive \eqref{eqpf01aa}, from which the following can be deduced with a little algebra: 
	\begin{equation*}
		\Vert \widehat{f}_j - f_j^s \Vert ^2 \leq \frac{16(1+16\Delta_{2j})}{(1-16\Delta_{1j})^2 - 16^2(\Delta_{1j}+\Delta_{2j})^2} \Delta_{3j} . 
	\end{equation*} 	
	From \eqref{eq:kalpha2}, the condition  $\abar = \alpha^{-1} = o(T^{\rho/(2\rho+2)})$, and  similar arguments used in the proof of Theorem 1 of \cite{imaizumi2018}, we find that  
	\begin{equation}\label{eqpf02}
		\mathbb P \{  | \widehat{\lambda}_j ^{2} - \lambda_\ell ^2   | \geq |\lambda_j ^2- \lambda_{\ell }^2|/\sqrt{2},\text{for }j=1,\ldots, \K \text{ and }\ell \neq j\} \to 1.
	\end{equation} 
	Because of \eqref{eq:kalpha}, \eqref{eqpf02}, and the condition $\abar = \alpha^{-1} = o(T^{\rho/(2\rho+2)})$, we first find that $\Delta_{1j} \leq O_p(T^{-1}) \max_{\ell \neq j, 1\leq j \leq \K}  |\widehat{\lambda}_j ^2 -\lambda_\ell ^2 |^{-2} \leq O_p(T^{-1}\K^{2\rho+2}) \leq O_p(T^{-1}\alpha^{-(2\rho+2)/\rho})= o_p(1)$.  Similarly, from \eqref{eq:kalpha}, \eqref{eqpf02} and Assumption \ref{assumconvrate}.\ref{assumconvrate.2}, we also deduce that $\Delta_{2j} \leq   O_p(  \K^{ \rho+1 } \Vert \widehat{\mathcal C}_{xz} ^\ast - \mathcal C_{xz } \Vert_{\op} ^2)  \leq  O_p(\alpha ^{-(\rho+1)/\rho} T^{-1}) =o_p(1)$.  We thus find that 
	\begin{equation}\label{eqfhatbound}
		\Vert \widehat{f}_j - f_j ^s\Vert ^2 \leq 16(1+o_p(1))\Delta_{3j}.
	\end{equation}
	We now focus on $\Delta_{3j}$. \bt{First note that} 
	\begin{equation*} 
		|\Delta_{3j,1}| \leq O_p(1) \sum_{\ell \neq j}  (\lambda_j ^2 -\lambda_\ell ^2 ) ^{-2} \lambda_\ell ^2    \langle (\widehat{\mathcal C}_{xz}  -\mathcal C_{xz}  ) {{f}_j^s}, \xi_\ell \rangle ^2\leq  O_p(j^{2 }T^{-1});
	\end{equation*}
	this may be deduced from \eqref{eqpf02}, Assumption~\ref{assumconvrate}, Lemma \ref{lem:convrate}\ref{lemconvrate8} and the fact that
	\begin{align*} 
		&	T\mathbb{E}[\langle (\widehat{\mathcal C}_{xz}  -\mathcal C_{xz}  ) {{f}_j^s}, \xi_\ell \rangle ^2]  \leq \sum_{s=0}^T  		\mathbb{E}[\upsilon_{t}(j,\ell)\upsilon_{t-s}(j,\ell)]  \leq O(1) \mathbb E[ \langle x_t, f_j\rangle ^2 \langle z_t, \xi_\ell \rangle ^2 ]\nonumber\\
		&\leq O(1) \mathbb E [\Vert \langle x_t, f_j \rangle z_t \Vert ^2 ]^{1/2} \mathbb E [\Vert \langle z_t, \xi_\ell \rangle x_t \Vert ^2 ]^{1/2} \leq O(1) |\lambda_j \lambda_\ell |,
	\end{align*}
	where the second inequality follows from 
	Assumption~\ref{assumconvrate}.\ref{assumconvrate.4}. To obtain the third inequality, note that  $\mathbb E[\langle x_t, f_j \rangle^2 \langle z_t, \xi_\ell \rangle^2 ] = \mathbb E [\langle \langle x_t, f_j \rangle z_t, \xi_\ell \rangle \langle \langle z_t, \xi_\ell \rangle x_t, f_j \rangle  ] \leq \mathbb E [\Vert \langle x_t, f_j \rangle z_t \Vert \Vert \langle z_t, \xi_\ell \rangle x_t \Vert  ]$ and then apply the Cauchy-Schwarz inequality and Assumption~\ref{assumconvrate}.\ref{assumconvrate.4}.
	In a similar manner, we also find that $|\Delta_{3j,2}| \leq O_p(j^{2}T^{-1})$, and thus $|\Delta_{3j}| \leq O_p(j^{2}T^{-1})$. Combining this result with \eqref{eqfhatbound}, we find that the desired result \eqref{eq000} holds. 
	
	We next show \eqref{eq001}. Note that
	\begin{equation}
		\mathcal A (\widehat{f}_j - f_j ^s) = \sum_{\ell \neq j }(\widehat{\lambda}_j ^2 -\lambda_\ell ^2 )^{-1} \langle (\widehat{\mathcal C}_{xz} ^\ast \widehat{\mathcal C}_{xz}  - \mathcal C_{xz} ^\ast \mathcal C_{xz})  \widehat{f}_j  , f_\ell^s \rangle \mathcal A f_\ell^s + \langle \widehat{f}_j - f_j ^s , f_j ^s\rangle \mathcal A f_j^s, \label{eqafhat0}
	\end{equation}
	where $\Vert \langle \widehat{f}_j - f_j ^s , f_j ^s\rangle \mathcal A f_j ^s \Vert ^2 \leq O_p(T^{-1})j^{2-2\varsigma} $. For each $j=1,\ldots, \K$, the first term in \eqref{eqafhat0} is bounded above as follows:
	\begin{align}
		&( \sum_{\ell \neq j} (\widehat{\lambda}_j ^2 - \lambda_\ell^2) ^{-1} \langle  (\widehat{\mathcal C}_{xz} ^\ast \widehat{ \mathcal C}_{xz}- \mathcal C_{xz} ^\ast \mathcal C_{xz}) \widehat{f}_j ,  f_\ell^s \rangle \mathcal A f_\ell^s   )^2
		\leq O(1)(\sum_{\ell \neq j}|\widehat{\lambda}_j ^2 - \lambda_\ell^2| ^{-1} \ell^{-\varsigma} | \langle  (\widehat{\mathcal C}_{xz} ^\ast \widehat{ \mathcal C}_{xz}- \mathcal C_{xz} ^\ast \mathcal C_{xz}) \widehat{f}_j ,  f_\ell \rangle| )^2\nonumber\\
		&\leq O(1) ( \sum_{\ell \neq j} |\widehat{\lambda}_j ^2 - \lambda_\ell^2| ^{-1} \ell^{-\varsigma} \lambda_\ell | \langle ( \widehat{\mathcal C}_{xz} - \mathcal C_{xz})\widehat{f}_j,\xi_\ell \rangle|     )^2+ O(1)  ( \sum_{\ell \neq j}  |\widehat{\lambda}_j ^2 - \lambda_\ell^2|^{-1} \ell^{-\varsigma}\widehat{\lambda}_j  |\langle ( \widehat{\mathcal C}_{xz} ^\ast  - \mathcal C_{xz} ^\ast )\widehat{\xi}_j,f_\ell  \rangle |    )^2 \nonumber\\
		&\leq O(1)\Vert \widehat{\mathcal C}_{xz} - \mathcal C_{xz}\Vert_{\op} ^2 \left( \sum_{\ell \neq j} ( \widehat{\lambda}_j ^2  -\lambda _\ell ^2 )^{-2}   \lambda_\ell ^2 \ell^{-2\varsigma} +  \sum_{\ell \neq j} ( \widehat{\lambda}_j ^2  -\lambda _\ell ^2 )^{-2}   \widehat\lambda_j ^2 \ell^{-2\varsigma} \right) \nonumber\\
		&  {\leq O_p(T^{-1}) \left( \sum_{\ell \neq j} ( \widehat{\lambda}_j ^2  -\lambda _\ell ^2 )^{-2}   \lambda_\ell ^2 \ell ^{-2\varsigma} +  \sum_{\ell \neq j} ( \widehat{\lambda}_j ^2  -\lambda _\ell ^2 )^{-2}   (\widehat\lambda_j ^2 - \lambda_j ^2 ) \ell ^{-2\varsigma}+  \sum_{\ell \neq j} ( \widehat{\lambda}_j ^2  -\lambda _\ell ^2 )^{-2}     \lambda_j ^2   \ell ^{-2\varsigma} \right) }\nonumber \\
		& \leq  O_p(T^{-1})\left(  j^{\rho-2\varsigma+2}+  (O_p(T^{-1/2}) \lambda_j ^{-2}+1)\sum_{\ell \neq j} ( \widehat{\lambda}_j ^2  -\lambda _\ell ^2 )^{-2}   \lambda_j ^2  \ell ^{-2\varsigma} \right)\nonumber\\
		&\leq (1+O_p(T^{-1/2}\alpha^{-1}))O_p(T^{-1} j^{\rho-2\varsigma+2})   \leq (1+o_p(1))O_p(T^{-1} j^{\rho-2\varsigma+2}), \label{eqafhat1} 
	\end{align}
	where the second inequality follows from Assumption~\ref{assumconvrate}, and the remaining inequalities follow from H\"older's inequality, Lemma \ref{lem:convrate}\ref{lemconvrate8}-\ref{lemconvrate9}, and the fact that $\Vert \widehat{\mathcal C}_{xz} - \mathcal C_{xz}\Vert_{\op}  =O_p(T^{-1/2})$. From \eqref{eqafhat0} and \eqref{eqafhat1}, we find that \eqref{eq001} holds. 
	\qed 
	
	\subsubsection*{Proof of Theorem \ref{thm:convrate2}} 

In this proof, we consider the scenario where $\rho/2+2\geq\varsigma+\delta_{\zeta}$, thus incorporating the complementary result presented in Section \ref{sec_thm_conti1}. The whole proof is divided into two parts.\\
	\noindent	{\textbf{1. Proof of the convergence results}}: 	
	For the subsequent discussion, we first need to obtain an upper bound of  $ \langle \widehat{f}_j - f_j ^s,\zeta\rangle$. From the expansion given in \eqref{eq:b1}, we have
	\begin{equation}
		\langle \widehat{f}_j - f_j ^s,\zeta\rangle  =\sum_{\ell \neq j} (\widehat{\lambda}_j ^2 - \lambda_\ell^2) ^{-1} \langle  (\widehat{\mathcal C}_{xz} ^\ast  \widehat{\mathcal C}_{xz}- \mathcal C_{xz} ^\ast \mathcal C_{xz}) \widehat{f}_j ,  f_\ell \rangle \langle f_\ell ,\zeta\rangle  + \langle \widehat{f}_j - f_j ^s , f_j ^s \rangle \langle f_j ^s ,\zeta\rangle  . \label{eqbb6}
	\end{equation}
	Note that the second term in \eqref{eqbb6} satisfies that  $
	( \langle \widehat{f}_j - f_j , f_j^s \rangle  \langle f_j ^s ,\zeta\rangle) ^2  \leq O_p (T^{-1}  j^{-2\delta_{\zeta}+2 }) ,  
	$  for all $j=1,\ldots,\K$. 
	Moreover, the first term in \eqref{eqbb6} satisfies the following:
{\allowdisplaybreaks\begin{align}
		&( \sum_{\ell \neq j} (\widehat{\lambda}_j ^2 - \lambda_\ell^2) ^{-1} \langle  (\widehat{\mathcal C}_{xz} ^\ast \widehat{ \mathcal C}_{xz}- \mathcal C_{xz} ^\ast \mathcal C_{xz}) \widehat{f}_j ,  f_\ell \rangle \langle f_\ell ,\zeta\rangle  )^2  \leq O(1) (\sum_{\ell \neq j}|\widehat{\lambda}_j ^2 - \lambda_\ell^2| ^{-1} \ell^{-\delta_{\zeta}} | \langle  (\widehat{\mathcal C}_{xz} ^\ast \widehat{ \mathcal C}_{xz}- \mathcal C_{xz} ^\ast \mathcal C_{xz}) \widehat{f}_j ,  f_\ell \rangle| )^2\nonumber\\
		&\leq O(1)  ( \sum_{\ell \neq j} |\widehat{\lambda}_j ^2 - \lambda_\ell^2| ^{-1} \ell^{-\delta_{\zeta}} \lambda_\ell | \langle ( \widehat{\mathcal C}_{xz} - \mathcal C_{xz})\widehat{f}_j,\xi_\ell \rangle|     )^2+ O(1)   ( \sum_{\ell \neq j}  |\widehat{\lambda}_j ^2 - \lambda_\ell^2|^{-1} \ell^{-\delta_{\zeta}}\widehat{\lambda}_j  |\langle ( \widehat{\mathcal C}_{xz} ^\ast  - \mathcal C_{xz} ^\ast )\widehat{\xi}_j,f_\ell  \rangle |    )^2 \nonumber\\
		&\leq O(1)\Vert \widehat{\mathcal C}_{xz} - \mathcal C_{xz}\Vert_{\op} ^2 \left( \sum_{\ell \neq j} ( \widehat{\lambda}_j ^2  -\lambda _\ell ^2 )^{-2}   \lambda_\ell ^2 \ell ^{-2\delta_{\zeta}} +  \sum_{\ell \neq j} ( \widehat{\lambda}_j ^2  -\lambda _\ell ^2 )^{-2}   \widehat\lambda_j ^2 \ell ^{-2\delta_{\zeta}} \right) \nonumber\\
		&  {\leq O_p(T^{-1}) \left( \sum_{\ell \neq j} ( \widehat{\lambda}_j ^2  -\lambda _\ell ^2 )^{-2}   \lambda_\ell ^2 \ell ^{-2\delta_{\zeta}} +  \sum_{\ell \neq j} ( \widehat{\lambda}_j ^2  -\lambda _\ell ^2 )^{-2}   (\widehat\lambda_j ^2 - \lambda_j ^2 ) \ell ^{-2\delta_{\zeta}}+  \sum_{\ell \neq j} ( \widehat{\lambda}_j ^2  -\lambda _\ell ^2 )^{-2}     \lambda_j ^2   \ell ^{-2\delta_{\zeta}} \right) }\nonumber \\
		& \leq  O_p(T^{-1})\left(  j^{\rho-2\delta_{\zeta}+2}+  (O_p(T^{-1/2}) \lambda_j ^{-2}+1)\sum_{\ell \neq j} ( \widehat{\lambda}_j ^2  -\lambda _\ell ^2 )^{-2}   \lambda_j ^2  \ell ^{-2\delta_{\zeta}} \right)\nonumber\\
		&\leq O_p(T^{-1} )j^{\rho-2\delta_{\zeta}+2} (1+O_p(T^{-1/2})j^{\rho}), \label{eqb14} 
	\end{align}}where the first inequality follows from the assumption on $\langle f_\ell, \zeta\rangle$ and the remaining inequalities are deduced from similar arguments that are used to obtain \eqref{eqafhat1}.   Because $T^{-1/2}j^{\rho} \leq O_p(T^{-1/2} \alpha ^{-1}) = o_p(1)$ uniformly in $j=1,\ldots, \K$, we conclude that, for each $j=1,\ldots, \K$, \begin{equation}
		\langle \hat{f}_j-f_j^s,\zeta\rangle^2 =
		O_p(T^{-1})j^{-2\delta_{\zeta}+2} +  O_p(T^{-1})j^{-2\delta_{\zeta}+2+\rho}(1+o_p(1)).\label{eqfhatdelta}
	\end{equation} 
	
	We next show the following:
	\begin{align}\label{eqdesire0}
		\Vert \widehat{\mathcal C}_{xz} (\widehat{\mathcal C}_{xz} ^\ast \widehat{\mathcal C}_{xz})_{\K} ^{-1}\zeta - \mathcal C_{xz} (\mathcal C_{xz} ^\ast \mathcal C_{xz})_{\K} ^{-1}\zeta\Vert_{\op} = o_p(1).  
	\end{align}
	Given that $\|\widehat{\mathcal C}_{zz}-\mathcal C_{zz}\|_{\op}=o_p(1)$, the asymptotic results given in Theorem \ref{thm2w2}.\ref{thm2w2a} and \ref{thm2w2}.\ref{thm2w2b} are deduced without difficulty from \eqref{eqdesire0} and similar arguments used in our proofs given in Section \ref{sec:pf}. From the decomposition given in \eqref{eqpf0001a0}, it can be deduced that the desired result \eqref{eqdesire0} is established if the following terms are all $o_p(1)$: $\|\sum_{j=1}^{\K} ({({\lambda}_j^s)}^{-1}-\hat{\lambda}_j^{-1})  \langle {f}_j^s , \zeta \rangle  \widehat{\xi}_j \|$, $\|\sum_{j=1}^{\K} ({({\lambda}_j^s)}^{-1}-\hat{\lambda}_j^{-1})  \langle \widehat{f}_j - f_j ^s , \zeta \rangle  \widehat  {\xi}_j \|$, $\Vert \sum_{j=1} ^{\K} ( {\lambda}_j ^s) ^{-1} \langle   f_j ^s, \zeta \rangle (\widehat{\xi}_j  - \xi_j ^s )\Vert$, and $	\Vert \sum_{j=1} ^{\K} ( {\lambda}_j ^s) ^{-1} \langle \widehat{f}_j - f_j ^s, \zeta \rangle \widehat{\xi}_j\Vert$.	
	
	First, note that
	\begin{align}
		&\|\sum_{j=1}^{\K} ({({\lambda}_j^s)}^{-1}-\hat{\lambda}_j^{-1})  \langle {f}_j^s , \zeta \rangle  \widehat{\xi}_j \| ^2  = \sum_{j=1} ^{\K} ({({\lambda}_j^s)}^{-1}-\hat{\lambda}_j^{-1})^2 \langle f_j ^s , \zeta\rangle ^2 	\leq \sum_{j=1} ^{\K} \frac{(\lambda_j ^2 -\widehat{\lambda}_j ^2 )^2}{   {\lambda}_{j} ^2 (  \widehat{\lambda}_j ^2 + \lambda_j ^s \widehat {\lambda}_j)^2 } c_\zeta j^{-2\delta_{\zeta}} \nonumber\\
		&\leq \sum_{j=1} ^{\K} \frac{(\lambda_j ^2 -\widehat{\lambda}_j ^2 )^2 }{\lambda_{j} ^4 \widehat{\lambda}_{j} ^2 }c_{\zeta} j^{-2\delta_{\zeta} } 
		\leq    O_p(T^{-1}\alpha^{-1}\sum_{j=1} ^{\K}     j^{  2\rho-2\delta_{\zeta}} ) 
		\leq   O_p(T^{-1}\max\{\alpha^{-(3\rho-2\delta_{\zeta} +1)/\rho}, \alpha ^{-(1+\rho)/\rho}\}  ),\label{eqbb1}
	\end{align}
	where the last inequality  follows from \eqref{eq:kalpha}.  In addition, using \eqref{eqfhatdelta} and the arguments used to derive \eqref{eqbb1}, it can be shown that
	\begin{align}
		&\|\sum_{j=1}^{\K} ({({\lambda}_j^s)}^{-1}-\hat{\lambda}_j^{-1})  \langle \widehat{f}_j - f_j ^s , \zeta \rangle  \widehat  {\xi}_j \| ^2   =\sum_{j=1} ^{\K} ({({\lambda}_j^s)}^{-1}-\hat{\lambda}_j^{-1})^2 \langle \widehat f_j - f_j ^s , \zeta\rangle^2\nonumber\\
		&\leq O_p(T^{-2}\alpha^{-1})\sum_{j=1} ^{\K} j^{2\rho-2\delta_{\zeta}+2} +O_p(T^{-2}\alpha^{-1})\sum_{j=1} ^{\K} j^{3\rho-2\delta_{\zeta}+2} \nonumber\\
		&\leq O_p(T^{-2}) \max\{\alpha^{-(4\rho-2\delta_{\zeta}+3)/\rho} , \alpha^{-(1+\rho)/\rho}    \} .\label{eqbb2}
	\end{align}		
	Note that $T^{-1}\alpha^{-(1+\rho)/\rho} = o(1)$ and $T^{-2} \alpha^{-(4\rho-2\delta_{\zeta}+3)/\rho} = T^{-1} \alpha ^{-(3\rho-2\delta_{\zeta}+1)/\rho} T^{-1} \alpha ^{-(\rho+2)/\rho} = o(1)$ by \eqref{eqrbalpha1}. These imply that  the terms given in \eqref{eqbb1} and \eqref{eqbb2} are all $o_p(1)$. 
We also find that 
\begin{align}
	&\Vert \sum_{j=1} ^{\K} ( {\lambda}_j ^s) ^{-1} \langle   f_j ^s, \zeta \rangle (\widehat{\xi}_j  - \xi_j ^s )\Vert \leq \sum_{j=1} ^{\K}  \Vert  \lambda_j ^{-1}   \langle  f_j ^s, \zeta \rangle     (\widehat{\xi}_j  - \xi_j ^s)\Vert  \nonumber \\
	&\leq O_p(T^{-1/2})\sum_{j=1} ^{\K} j^{-\delta_{\zeta}+\rho/2+1 } \leq O_p(T^{-1/2}\max\{\alpha^{-1/\rho},  \alpha^{(\delta_{\zeta}-\rho/2-2)/\rho}\}) =o_p(1),     \label{eqbb4}
\end{align}  
where the first inequality follows from the triangular inequality and the second inequality is deduced from Assumption~\ref{assumconvrate}.\ref{assumconvrate.2} and the fact that $\Vert \widehat{\xi}_j - \xi_j^s\Vert^2 \leq O_p(T^{-1}j^{2})$ for $j=1,\ldots, \K$ (this can obtained from nearly identical arguments used to derive \eqref{eq000}). The remaining inequalities are deduced since {$2\delta_{\zeta} >1$} and \eqref{eq:kalpha} holds.  It only remains to show that $	\Vert \sum_{j=1} ^{\K} ( {\lambda}_j ^s) ^{-1} \langle \widehat{f}_j - f_j ^s, \zeta \rangle \widehat{\xi}_j\Vert ^2 = o_p(1)$. This can be obtained from Assumption \ref{assumconvrate}.\ref{assumconvrate.1} and \eqref{eqfhatdelta}; specifically, we observe that \begin{equation}
	\Vert \sum_{j=1} ^{\K} ( {\lambda}_j ^s) ^{-1} \langle \widehat{f}_j - f_j ^s, \zeta \rangle \widehat{\xi}_j\Vert ^2 \leq  O_p(T^{-1}) \max\{\alpha^{-(2\rho-2\delta_{\zeta}+3)/\rho}, \alpha^{-1/\rho}\}   = o_p(1).   \label{eqbb5}
\end{equation}		 
Hence, from the results given in \eqref{eqbb1}-\eqref{eqbb5},  \eqref{eqdesire0} is established.
\\[10pt]
\noindent	{\textbf{2. Analysis on the regularization bias}}: 
Next, we focus on the regularization bias term, $\|\mathcal A(\widehat{\Pi}_{\K} -\Pi_{\K})\zeta\| $.  For convenience, we let 
\begin{align*}
	\mathcal{A}(\widehat\Pi_{\K}-\mathcal I)\zeta = F_1 + F_2 + F_3 +F_4,	
\end{align*}
where $F_4 = \mathcal A(\Pi_{\K} - \mathcal I)\zeta$, 
\begin{align*}
	F_1 = \sum_{j=1}^{\K} \langle \hat{f}_j-f_j^s,\zeta\rangle \mathcal A(\hat{f}_j-f_j^s),
	\quad	F_2 = \sum_{j=1}^{\K} \langle f_j^s,\zeta\rangle \mathcal A(\hat{f}_j-f_j^s), 
	\quad	F_3 =  \sum_{j=1}^{\K} \langle \hat{f}_j-{f}_j^s,\zeta\rangle \mathcal A{f}_j^s,\quad  
\end{align*}
and thus $F_1+F_2+F_3 = \mathcal A(\widehat{\Pi}_{\K} - \Pi_{\K})\zeta$. Then, by using \eqref{eq001} and \eqref{eqfhatdelta}, we find that	\begin{equation*}
	\Vert F_1 \Vert   \leq  \sum_{j=1} ^{\K} |\langle \widehat{f}_j - f_j ^s,\zeta\rangle| \Vert\mathcal A(\widehat{f}_j - f_j ^s) \Vert  \leq   O_p(T^{-1})   \sum_{j=1} ^{\K}j^{\rho-\varsigma-\delta_{\zeta}+2 } \leq o_p(T^{-1/2}) \sum_{j=1} ^{\K} j^{\rho/2-\varsigma-\delta_{\zeta}+1 } ,
\end{equation*}
where the last bound is obtained because $\abar = o(T^{\rho/(2\rho+2)})$. In a similar manner, it can be shown that \begin{equation*}
	\|F_2\|  \leq  \sum_{j=1}^{\K} |\langle f_j^s,\zeta\rangle|  \| \mathcal A(\hat{f}_j-f_j^s)\|  \leq     O_p(T^{-1/2}) \sum_{j=1} ^{\K} j^{\rho/2-\varsigma-\delta_{\zeta}+1}  ,
\end{equation*}	
and		
\begin{equation*}
	\Vert F_3 \Vert  \leq \sum_{j=1} ^{\K} |\langle \widehat{f}_j - f_j ^s,\zeta\rangle| \Vert \mathcal A f_j ^s\Vert \leq O_p(T^{-1/2})  \sum_{j=1} ^{\K}j^{\rho/2-\varsigma-\delta_{\zeta}+1 }    .
\end{equation*}
Therefore, $\Vert F_1\Vert$, $\Vert F_2\Vert$ and $\Vert F_3\Vert  $ are bounded by the following quantity:\begin{equation*}
	O_p(T^{-1/2})  \sum_{j=1} ^{\K}j^{\rho/2-\varsigma-\delta_{\zeta}+1 } 	\leq  \begin{cases}O_p(T^{-1/2}    )  &\text{if }\rho/2+2<\varsigma+\delta_{\zeta},
		\\ 
		O_p(T^{-1/2}\max\{\log\alpha^{-1}, \alpha^{-(\rho/2-\varsigma - \delta_{\zeta}+2)/\rho} \}) &\text{if }\rho/2+2\geq\varsigma+\delta_{\zeta}.\end{cases}  
\end{equation*}
Lastly, the following can be shown:
\begin{equation*}
	\|F_4\|^2  \leq \sum_{j=\K+1} ^\infty \|  \langle f_j,\zeta \rangle \mathcal Af_j\|^2 \leq  \sum_{j=\K+1} ^\infty  j^{-2\delta_{\zeta}}\| \mathcal Af_j\|^2 =   O_p(\sum_{j=\K+1} ^\infty j^{-2\delta_{\zeta} -2\varsigma})  \leq O_p(\alpha^{(2\varsigma +2\delta_{\zeta} -1)/\rho}).   
\end{equation*}
This concludes the  proof. \qed

\subsection{Supplementary results}
	\subsubsection{Strong consistency of the FIVE}\label{sec:strongfive}
We first review some essential mathematics to establish the strong consistency of our estimators. The space of Hilbert-Schmidt operators, denoted $\mathcal S_{\mathcal H}$, is a separable Hilbert space with respect to the inner product given by $\langle \mathcal T_1,\mathcal T_2 \rangle_{\mathcal S_{\mathcal H}} = \sum_{j,k \geq 1} \langle \mathcal T_1\zeta_{1j},\zeta_{2k} \rangle\langle \mathcal T_2\zeta_{1j},\zeta_{2k} \rangle$ for two arbitrary orthonormal bases $\{\zeta_{1j}\}_{j \geq 1}$ and $\{\zeta_{2j}\}_{j \geq 1}$ of $\mathcal H$; this inner product does not depend on the choice of orthonormal bases \citep[Chapter 1]{Bosq2000}. We then note that $\{x_t \otimes z_t - \mathcal C_{xz}\}_{t\geq 1}$ is a zero-mean stationary and geometrically strongly mixing sequence in $\mathcal S_{\mathcal H}$, and, in the sequel, employ the following additional assumption: below, $\{ \Lambda_j  \}_{j \geq 1}$ is the sequence of eigenvalues of the covariance operator of $d_t=x_t \otimes z_t - \mathcal C_{xz}$.
\begin{assumption} \label{assum2ad}\begin{enumerate*}[(a)]
		\item\label{ass2ad-1} $\sup_{t\geq 1} \|x_t\| \leq m_x$, $\sup_{t\geq 1} \|z_t\| \leq m_z$, and $\sup_{t\geq 1} \|u_t\| \leq m_u$ a.s., \item\label{ass2ad-2}$\Lambda_j \leq  a b^j$ for some $a>0$ and $0<b<1$.
	\end{enumerate*}
\end{assumption}
As shown by Corollaries 2.4 and 4.2 of \cite{Bosq2000}, Assumption \ref{assum2ad} combined with Assumption \ref{assum1}.\ref{assum1.2} helps us obtain a stochastic bound of $\|\widehat{\mathcal C}_{xz}-{\mathcal C}_{xz}\|_{\op}$, which is given as follows:
\begin{lemma} \label{lem1ad} Under Assumptions \ref{assum1}.\ref{assum1.2} and \ref{assum2ad}, the following holds almost surely:
	\begin{equation*}
		\| \widehat{\mathcal C}_{xz}-\mathcal C_{xz}\| _{\op}=  O(T^{-1/2} \log^{3/2}T) .  
	\end{equation*}
\end{lemma}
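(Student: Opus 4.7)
} The plan is to view the centered outer products $d_t = x_t \otimes z_t - \mathcal C_{xz}$ as a sequence in the separable Hilbert space $\mathcal S_{\mathcal H}$ of Hilbert--Schmidt operators and then invoke the exponential inequality of \citet[Corollary 2.4]{Bosq2000} for geometrically strongly mixing sequences in a separable Hilbert space, together with the associated Borel--Cantelli upgrade in \citet[Corollary 4.2]{Bosq2000}. Since the operator norm is dominated by the Hilbert--Schmidt norm, it suffices to establish the rate $O(T^{-1/2}\log^{3/2}T)$ for $\|\widehat{\mathcal C}_{xz} - \mathcal C_{xz}\|_{\HS} = \|T^{-1}\sum_{t=1}^T d_t\|_{\mathcal S_{\mathcal H}}$.

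First I would verify the hypotheses of Bosq's exponential inequality for $\{d_t\}_{t\geq 1}$. Stationarity and geometric strong mixing of $\{d_t\}_{t\geq 1}$ in $\mathcal S_{\mathcal H}$ follow directly from Assumption~\ref{assum1}.\ref{assum1.2} applied to $\{(x_t, z_t)\}_{t\geq 1}$ because $(x,z)\mapsto x\otimes z$ is a continuous map into $\mathcal S_{\mathcal H}$. Boundedness is immediate from Assumption~\ref{assum2ad}.\ref{ass2ad-1}: $\|d_t\|_{\mathcal S_{\mathcal H}} \leq \|x_t\otimes z_t\|_{\HS} + \|\mathcal C_{xz}\|_{\HS} \leq 2 m_x m_z$ almost surely. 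The geometric decay of the eigenvalues of the covariance operator of $d_t$ required by Bosq's inequality is exactly Assumption~\ref{assum2ad}.\ref{ass2ad-2}.

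Next, Bosq's Corollary 2.4 yields an exponential bound of the form
\begin{equation*}
\mathbb P\left( \Big\| \tfrac{1}{T}\sum_{t=1}^T d_t\Big\|_{\mathcal S_{\mathcal H}} > \eta \right) \leq c_1 \exp(-c_2 T \eta^2 / (\log T)^{3})
\end{equation*}
uniformly in $\eta$ in an appropriate range, where $c_1,c_2>0$ depend only on $m_x, m_z$, the mixing coefficients, and the constants $a,b$ in Assumption~\ref{assum2ad}.\ref{ass2ad-2}. Taking $\eta_T = M T^{-1/2}(\log T)^{3/2}$ for $M$ large, the right-hand side is summable in $T$, and Borel--Cantelli (the content of Bosq's Corollary 4.2) gives $\|\widehat{\mathcal C}_{xz} - \mathcal C_{xz}\|_{\HS} = O(T^{-1/2}(\log T)^{3/2})$ almost surely. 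Combining with $\|\cdot\|_{\op}\leq \|\cdot\|_{\HS}$ concludes.

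The only mildly delicate step is justifying that the geometric-decay hypothesis on the eigenvalues of the covariance operator of $d_t$ (rather than on some other nuisance quantity) is indeed what drives the $(\log T)^{3/2}$ factor in Bosq's inequality, and that the constants in that inequality only depend on quantities we have already controlled. Everything else is a direct translation of Bosq's general Hilbertian results to the specific sequence $\{x_t\otimes z_t\}$.
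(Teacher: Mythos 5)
Your route is the same as the paper's: the paper gives no argument beyond observing that Lemma \ref{lem1ad} is a direct consequence of Theorem 2.12 and Corollary 2.4 of \cite{Bosq2000} (see also its Corollary 4.2) applied to the bounded, stationary, geometrically strongly mixing sequence $d_t = x_t\otimes z_t - \mathcal C_{xz}$ in $\mathcal S_{\mathcal H}$, which is exactly the reduction you set up; your checks of stationarity, mixing, boundedness ($\|d_t\|_{\HS}\le 2m_xm_z$ since $\|x\otimes z\|_{\HS}=\|x\|\,\|z\|$) and the eigenvalue decay condition are all fine, as is dominating $\|\cdot\|_{\op}$ by $\|\cdot\|_{\HS}$.

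There is, however, a quantitative slip in the one step you wrote out explicitly. With the exponential bound in the form $c_1\exp(-c_2T\eta^2/(\log T)^3)$, substituting $\eta_T=MT^{-1/2}(\log T)^{3/2}$ gives $T\eta_T^2/(\log T)^3=M^2$, so the right-hand side equals the constant $c_1e^{-c_2M^2}$ for every $T$ and is not summable; Borel--Cantelli does not apply as written. For the argument to close, the exponent evaluated at $\eta_T$ must be proportional to $\log T$, i.e.\ the effective Gaussian term must be of the form $c_1\exp(-c_2T\eta^2/(\log T)^2)$, which at $\eta_T$ yields $T^{-c_2M^2}$, summable for $M$ large; there is additionally a geometrically small mixing remainder from the blocking scheme (block length of order $\log T$, which is where the geometric mixing in Assumption \ref{assum1}.\ref{assum1.2} is consumed). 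The extra half power of $\log T$ in the final rate comes from optimizing that blocking trade-off in Bosq's Theorem 2.12, not from the form you displayed. So the idea is right and matches the paper, but the displayed inequality needs to be corrected before the summability claim is valid.
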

We omit the proof of Lemma \ref{lem1ad} since it is a direct consequence of Theorem 2.12 and Corollary 2.4 of \cite{Bosq2000}, and the fact that $\sup_{t\geq 1}\|d_t\|\leq m_d$ holds for some $m_d > 0$ under the employed assumptions. Based upon this result, we can establish the strong consistency of the FIVE as follows:
\setcounter{theorem}{0}
\begin{theorem}[continued] \label{thm2.1}
	If Assumption \ref{assum2ad} is additionally satisfied, $\tau(\abar)  = o(T^{1/2}\log^{-3/2} T)$ a.s., and $\abar T^{-1} \log T \to 0$, then   
	$\|\widehat{\mathcal A} - \mathcal A\|_{\op} \to 0$ a.s.
\end{theorem}

\begin{proof}
	It can be easily shown from our proof of Theorem \ref{thm2w1} that $\|\widehat{\mathcal A}-\mathcal A \widehat{\Pi}_{\K}\|_{\op}
	\leq  \alpha^{-1/2}  \|T^{-1}\sum_{t=1}^Tz_{t}  \otimes  u_t \|_{\op}$ holds a.s.
	Under Assumptions \ref{assum1} and \ref{assum2ad}, the sequence of $z_{t}  \otimes  u_t$ is a martingale difference, and $\|z_t\otimes u_t\|_{\HS}$ and $\mathbb{E}\|z_t\otimes u_t\|_{\HS}^2$ are uniformly bounded, and we thus know from Theorem 2.14 of \cite{Bosq2000} that $\|T^{-1}\sum_{t=1}^Tz_{t}  \otimes  u_t\|_{\op} = O(T^{-1/2}\log^{1/2}T)$, a.s. This implies that $\|\widehat{\mathcal A}-  \mathcal  A \widehat{\Pi}_{\K}\|_{\op}  = O(\alpha^{-1/2}T^{-1/2} \log^{1/2}T)$ a.s. 
	Moreover, we note that $\|\mathcal A\widehat{\Pi}_{\K} - \mathcal A\|^2 _{\op}$ is bounded above by the term given in the right hand side of \eqref{eqpf0006add2wadd}, and deduce from Lemma \ref{lem1ad} that $\Vert \widehat{\mathcal C}_{xz} ^\ast \widehat{\mathcal C}_{xz} -\mathcal C_{xz} ^\ast \mathcal C_{xz}\Vert_{\op} = O(T^{-1/2} \log ^{3/2} T)$ a.s. These results imply that $\|\mathcal A\widehat{\Pi}_{\K} - \mathcal A\|^2 _{\op} = o(1)$ a.s.
\end{proof}

\subsubsection{Refinements of the general asymptotic results for the FIVE}\label{sec_thm_conti1}
We now provide a complementaty result to Theorem \ref{thm:convrate2} for the case where \(\rho/2+2\geq\varsigma+\delta_{\zeta}\). Specifically, the following can be shown:
\setcounter{theorem}{3}
\begin{theorem}[Continued]\label{thm:convrate2add}
Let everything be as in Theorem \ref{thm:convrate2} but with $\rho/2+2\geq\varsigma+\delta_{\zeta}$.
Then,  Theorem \ref{thm2w2} holds and 
\begin{align*}
&\Vert \mathcal A(\widehat \Pi_{\K}-\Pi_{\K}) \zeta \Vert   =	O_p( T^{-1/2}\max\{\log\abar, \abar^{(\rho/2-\varsigma-\delta_{\zeta}+2)/\rho} \}), \\
&	\|\mathcal A(\Pi_{\K} - \mathcal I)\zeta\| = O_p(\abar^{(1/2-\varsigma- \delta_{\zeta})/\rho}).  \label{eqthmrb4}
\end{align*}
\end{theorem}
Our proof of the above result is already contained in the proof of Theorem \ref{thm:convrate2} given in Section \ref{sec:asym2proof}, and hence omitted.


\section{Appendix to Section \ref{sec:f2sls00} on ``Functional two-stage least square estimator" \label{sec:pfe}}	
As in Section \ref{sec:pf}, we will hereafter let \(\alpha_1 = \abar_1^{-1}\) and \(\alpha_2 = \abar_2^{-1}\),  and use them interchangeably. We first provide a useful lemma that is related to our discussion on the F2SLSE in Section \ref{sec:f2sls00}.
	\begin{lemma}\label{lem:ftsls} There exist unique bounded linear operators $\mathcal R_{xz}$ and $\mathcal R_{yz}^\ast$ satisfying the following: 
		\begin{align*}
			&\mathcal C_{zz}^{1/2} \mathcal R_{xz} \mathcal C_{xx}^{1/2} = \mathcal C_{xz}, \quad \mathcal R_{xz}[\ran \mathcal C_{xx}^{1/2}]^\perp = \{0\}, \quad  \mathcal R_{xz}^\ast[ \ran \mathcal C_{zz}^{1/2}]^\perp = \{0\}, \\
			&\mathcal C_{yy}^{1/2} \mathcal R_{yz}^\ast \mathcal C_{zz}^{1/2} = \mathcal C_{yz}^\ast, \quad \mathcal R_{yz}^\ast [ \ran \mathcal C_{zz}^{1/2}]^\perp = \{0\}, \quad  \mathcal R_{yz}[ \ran \mathcal C_{yy}^{1/2}]^\perp = \{0\},
		\end{align*}
		where $V^\perp$ denotes the  orthogonal complement of $V \subset \mathcal H$.
	\end{lemma}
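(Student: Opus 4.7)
\textbf{Proof plan for Lemma \ref{lem:ftsls}.} This is essentially Baker's (1973) factorization theorem for cross-covariance operators of Hilbert-valued random variables, and the proof is based on the Cauchy--Schwarz inequality together with the Riesz representation theorem. The plan is to construct $\mathcal R_{xz}$ explicitly by first building a bounded sesquilinear form on the ranges of the square-roots of the covariances; the construction of $\mathcal R_{yz}^\ast$ is entirely symmetric, applied to the reversed-order cross-covariance $\mathcal C_{yz}^\ast = \mathbb{E}[(z_t-\mathbb{E} z_t)\otimes(y_t-\mathbb{E} y_t)]$, so I will only describe the first case in detail.

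First I would define the sesquilinear form $B(h,k) := \langle \mathcal C_{xz}h,k\rangle = \mathbb{E}[\langle x_t-\mathbb{E} x_t,h\rangle\langle z_t-\mathbb{E} z_t,k\rangle]$ on $\mathcal H\times \mathcal H$. Applying the Cauchy--Schwarz inequality in $L^2(\mathbb S,\mathbb F,\mathbb P)$ to the two scalar random variables inside the expectation yields
\begin{equation*}
|B(h,k)|^2 \leq \mathbb{E}[\langle x_t-\mathbb{E} x_t,h\rangle^2]\,\mathbb{E}[\langle z_t-\mathbb{E} z_t,k\rangle^2] = \langle \mathcal C_{xx}h,h\rangle\langle \mathcal C_{zz}k,k\rangle = \|\mathcal C_{xx}^{1/2}h\|^2\,\|\mathcal C_{zz}^{1/2}k\|^2.
\end{equation*}
This is the key domination estimate. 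In particular, $\mathcal C_{xx}^{1/2}h=0$ implies $B(h,k)=0$ for all $k\in\mathcal H$, and symmetrically for $k$.

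Next I would descend $B$ to a sesquilinear form $\tilde B$ on $\ran\mathcal C_{xx}^{1/2}\times \ran\mathcal C_{zz}^{1/2}$ by setting $\tilde B(\mathcal C_{xx}^{1/2}h,\mathcal C_{zz}^{1/2}k):=B(h,k)$; the previous paragraph shows this is well defined and $|\tilde B(u,v)|\leq \|u\|\,\|v\|$, so $\tilde B$ extends by continuity to a bounded sesquilinear form of norm at most $1$ on the closed subspaces $\overline{\ran\mathcal C_{xx}^{1/2}}\times \overline{\ran\mathcal C_{zz}^{1/2}}$. The Riesz representation theorem then yields a unique bounded operator $\mathcal R_{xz}:\overline{\ran\mathcal C_{xx}^{1/2}}\to \overline{\ran\mathcal C_{zz}^{1/2}}$ with $\|\mathcal R_{xz}\|\leq 1$ and $\langle \mathcal R_{xz}u,v\rangle = \tilde B(u,v)$. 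I would extend $\mathcal R_{xz}$ to all of $\mathcal H$ by declaring it zero on $[\overline{\ran\mathcal C_{xx}^{1/2}}]^\perp=[\ran\mathcal C_{xx}^{1/2}]^\perp$, which immediately gives the second stated property. Since $\ran\mathcal R_{xz}\subseteq \overline{\ran\mathcal C_{zz}^{1/2}}$, taking adjoints yields $\mathcal R_{xz}^\ast[\ran\mathcal C_{zz}^{1/2}]^\perp=\{0\}$, the third stated property. The factorization $\mathcal C_{xz}=\mathcal C_{zz}^{1/2}\mathcal R_{xz}\mathcal C_{xx}^{1/2}$ is then verified by the chain $\langle \mathcal C_{zz}^{1/2}\mathcal R_{xz}\mathcal C_{xx}^{1/2}h,k\rangle=\langle \mathcal R_{xz}\mathcal C_{xx}^{1/2}h,\mathcal C_{zz}^{1/2}k\rangle=\tilde B(\mathcal C_{xx}^{1/2}h,\mathcal C_{zz}^{1/2}k)=B(h,k)=\langle \mathcal C_{xz}h,k\rangle$ for all $h,k\in\mathcal H$.

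Finally, uniqueness: any candidate $\mathcal R_{xz}'$ satisfying the three conditions must agree with $\mathcal R_{xz}$ on $\ran\mathcal C_{xx}^{1/2}$ up to an element of $[\ran\mathcal C_{zz}^{1/2}]^\perp$ (by the factorization), but the range-annihilation condition on $(\mathcal R_{xz}')^\ast$ forces $\mathcal R_{xz}'u\in\overline{\ran\mathcal C_{zz}^{1/2}}$ for each $u$, so equality holds on $\ran\mathcal C_{xx}^{1/2}$, hence on its closure by continuity, and on the complement by the second condition. There is no serious obstacle in the argument; the only place demanding care is the simultaneous bookkeeping of the two closed-range conditions that pin down $\mathcal R_{xz}$ uniquely, and I expect the proof to be short once the Cauchy--Schwarz step and the well-definedness of $\tilde B$ are in place.
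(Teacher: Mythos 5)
Your argument is correct: the paper itself offers no proof, simply stating that the lemma ``directly follows from Theorem 1 of Baker (1973),'' and what you have written is precisely the standard derivation of that cited result --- the Cauchy--Schwarz domination $|\langle \mathcal C_{xz}h,k\rangle|\leq \|\mathcal C_{xx}^{1/2}h\|\,\|\mathcal C_{zz}^{1/2}k\|$, the descent to a bounded form on $\overline{\ran \mathcal C_{xx}^{1/2}}\times\overline{\ran \mathcal C_{zz}^{1/2}}$, Riesz representation, and the uniqueness bookkeeping via the two range-annihilation conditions, all of which check out (including the step $\overline{\ran\mathcal C_{zz}^{1/2}}\cap[\ran\mathcal C_{zz}^{1/2}]^\perp=\{0\}$ that pins down uniqueness). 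The only cosmetic remark is that on a real Hilbert space the form $B$ is bilinear rather than sesquilinear, which changes nothing in the argument.
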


	Lemma \ref{lem:ftsls} directly follows from Theorem 1 of \cite{baker1973joint}.	From the properties of $\mathcal R_{xz}$ (resp.\ $\mathcal R_{yz}$) given above, it can be understood as the cross-correlation operator of $x_t$ and $z_t$ (resp.\ $y_t$ and $z_t$). Let $\mathcal C_{zz}^{-1/2}$ be defined by $\sum_{j=1}^\infty \mu_j^{-1/2} g_j \otimes g_j$, which is not a bounded linear operator since $\mu_j \to 0$ as $j \to \infty$. However, even with this property, we know as a direct consequence of Lemma~\ref{lem:ftsls} that $\mathcal C_{zz}^{-1/2} \mathcal C_{xz}$ and $\mathcal C_{yz}^\ast \mathcal C_{zz}^{-1/2}$ are well defined bounded linear operators and they are respectively given by 
	\begin{equation*}
		\mathcal C_{zz}^{-1/2} \mathcal C_{xz} =   \mathcal R_{xz} 	\mathcal C_{xx}^{1/2} \quad\text{and}\quad \mathcal C_{yz}^{\ast}  \mathcal C_{zz}^{-1/2}=   \mathcal C_{yy}^{1/2} \mathcal R_{yz}^\ast.
	\end{equation*}
	We thus find that\vspace{-0.7em} 
	\begin{equation*}\begin{aligned}
			&\mathcal C_{yz} ^\ast \mathcal C_{zz}^{-1}\mathcal C_{xz}= \mathcal C_{yz} ^\ast \mathcal C_{zz}^{-1/2}\mathcal C_{zz}^{-1/2}\mathcal C_{xz} =  \mathcal C_{yy}^{1/2}\mathcal R_{yz}^\ast\mathcal R_{yz}  	 \mathcal C_{zz}^{1/2} \eqqcolon \mathcal P,  \\
			&\mathcal C_{xz} ^\ast \mathcal C_{zz}^{-1}\mathcal C_{xz}= \mathcal C_{xz} ^\ast \mathcal C_{zz}^{-1/2}\mathcal C_{zz}^{-1/2}\mathcal C_{xz} =  \mathcal C_{xx}^{1/2}\mathcal R_{xz}^\ast\mathcal R_{xz}  	 \mathcal C_{xx}^{1/2} \eqqcolon \mathcal Q.\end{aligned}
	\end{equation*}
	As desired, $\mathcal P$ and $\mathcal Q$ are uniquely defined elements of $\mathcal L_{\mathcal H}$, and moreover, they are compact since $\mathcal C_{xx}^{1/2}$ and $\mathcal C_{yy}^{1/2}$ are compact.
	\subsection{Proofs of the results in Section \ref{sec:asym1a}} \label{sec:f2sls0proof}	
	\subsubsection*{Proof of Theorem \ref{thm2w}}
	Since $\|\widehat{\mathcal C}_{uz}\|_{\HS}=O_p(T^{-1/2})$, we find that
	\begin{equation}
		\|\widetilde{\mathcal A}-\mathcal A \widetilde{\Pi}_{\K_2}\|_{\HS} 
		\leq \|\widehat{\mathcal C}_{uz }\|_{\HS}\|(\widehat{\mathcal C}_{zz })_{\K_1}^{-1/2}\|_{\op}\|(\widehat{\mathcal C}_{zz })_{\K_1}^{-1/2}\widehat{\mathcal C}_{xz}\widehat{\mathcal Q}_{\K_2}^{-1} \|_{\op} \leq O_p(\alpha_1^{-1/4}\alpha_2^{-1/4} T^{-1/2}).
		\label{eqpf0001vv}
	\end{equation}
	Thus, $\|\widetilde{\mathcal A}-\mathcal A \widetilde{\Pi}_{\K_2}\|_{\HS} =  o_p(1)$, and hence it suffices to show that $	\|\mathcal A \widetilde{\Pi}_{\K_2} -\mathcal A \|_{\HS}^2 = o_p(1)$.   Note  that 
	\begin{equation}
		\|\mathcal A \widetilde{\Pi}_{\K_2} -\mathcal A \|_{\HS}^2  \leq  \sum_{j= \K_2+1}^{\infty} \|\mathcal A {h}^s_j\|^2 + |\mathcal R|, \label{eq001a}
	\end{equation}
	where ${h}^s_j = \sgn\{\langle \hat{h}_j, {h}_j \rangle \} {h}_j$ and $\mathcal R = 	\sum_{j= \K_2+1}^{\infty} (\|\mathcal A \hat{h}_j\|^2 -  \|\mathcal A {h}^s_j\|^2)$.  Since $\mathcal A$ is Hilbert-Schmidt, the first term of \eqref{eq001a} is $o_p(1)$. It thus only remains to verify that $|\mathcal R| = o_p(1)$. To show this, we first deduce the following inequality from similar arguments used to derive the equation between (8.62) and (8.63) of \cite{Bosq2000}: 
	\begin{equation} \label{eq01a}
		|\mathcal R| \leq 2 \|\mathcal A\|_{\op}^2 \sum_{j=1}^{\K_2}\|\hat{h}_j-{h}^s_j\|.
	\end{equation}
	We find that $\mathcal Q \hat{h}_j - {\nu}_j \hat{h}_j =  (\mathcal Q-\widehat{\mathcal Q}) \hat{h}_j + (\hat{\nu}_j- {\nu}_j) \hat{h}_j$. Hence, by Lemma 4.2 of  \cite{Bosq2000}, 
	\begin{equation}
		\|\mathcal Q \hat{h}_j - {\nu}_j \hat{h}_j \|\leq 2 \|\widehat{\mathcal Q}- {\mathcal Q}\|_{\op}.
		\label{eq001b}	\end{equation} 
	Moreover, it can be shown from similar arguments used in the proof of Lemma 4.3 of  \cite{Bosq2000} that $\|\hat{h}_j - h^s_j\| \leq \frac{\tau_{2,j}}{2} \|\mathcal Q \hat{h}_j - \nu_j \hat{h}_j\|$,	which, combined with \eqref{eq001b},  implies that   
	\begin{equation} \label{eqeigenvector}
		\|\hat{h}_j - h^s_j\| \leq \tau_{2,j} \|\widehat{\mathcal Q}-\mathcal Q\|_{\op}.
	\end{equation}
	We then deduce from \eqref{eq01a} and \eqref{eqeigenvector} that
	\begin{equation}
		|\mathcal R| \leq  2 \|\mathcal A\|_{\op}^2 \left( \sum_{j=1}^{\K_2} \tau_{2,j}\right)  \|\widehat{\mathcal Q}-\mathcal Q\|_{\op} . \label{eqpfpf01}
	\end{equation}
	Then the following can be shown:
	\begin{equation}
		\|\widehat{\mathcal Q} -  {\mathcal Q}\|_{\op} \leq \|\mathcal B\widehat{\Pi}_{\K_1}\widehat{\mathcal C}_{zz}\widehat{\Pi}_{\K_1}\mathcal B^\ast - \mathcal B {\mathcal C}_{zz}\mathcal B^\ast \|_{\op} + \|\mathcal S\|_{\op},\label{eqReq00}
	\end{equation}			
	where $\mathcal S = \widehat{\mathcal C}_{vz}^\ast\widehat{\Pi}_{\K_1}\mathcal B^\ast + \mathcal B \widehat{\Pi}_{\K_1}\widehat{\mathcal C}_{vz} + \widehat{\mathcal C}_{vz}^\ast(\widehat{\mathcal C}_{zz})_{\K_1}^{-1} \widehat{\mathcal C}_{vz}$.
	Let  $\Pi_{\K_1} = \sum_{j=1}^{\K_1} g_j \otimes g_j$ and let 
	\begin{equation*}
		\mathcal T = \mathcal B(\mathcal I-\Pi_{\K_1}){\mathcal C}_{zz}(\mathcal I-\Pi_{\K_1})\mathcal B^\ast. 
	\end{equation*}
	We further find that{\allowdisplaybreaks\begin{align}
		&	\|\mathcal B\widehat{\Pi}_{\K_1}\widehat{\mathcal C}_{zz}\widehat{\Pi}_{\K_1}\mathcal B^\ast - \mathcal B{\mathcal C}_{zz}\mathcal B^\ast \|_{\op} \leq \|\mathcal B\widehat{\Pi}_{\K_1}\widehat{\mathcal C}_{zz}\widehat{\Pi}_{\K_1}\mathcal B^\ast - \mathcal B\Pi_{\K_1}{\mathcal C}_{zz}\Pi_{\K_1}\mathcal B^\ast \|_{\op}  +  \|\mathcal T\|_{\op} \notag \\ &\leq  \|\mathcal B\|_{\op}^2\|\sum_{j=1}^{\K_1} (\hat{\mu}_j-\mu_j)\hat{g}_j \otimes \hat{g}_j\|_{\op} + 2 \|\mathcal B\|_{\op}^2 \sum_{j=1}^{\K_1} \mu_j \|\hat{g}_j- g^s_j \| +  \|\mathcal T\|_{\op}  \notag\\ 
		&\leq  \left(\|\mathcal B\|_{\op}^2 +2\|\mathcal B\|_{\op}^2 \sum_{j=1}^{\K_1} \mu_j \tau_{1,j}  \right) \|\widehat{\mathcal C}_{zz}-\mathcal C_{zz}\|_{\op}+  \|\mathcal T\|_{\op}  \leq   O_p\left(\sum_{j=1}^{\K_1} \mu_j \tau_{1,j} \right) \|\widehat{\mathcal C}_{zz}-\mathcal C_{zz}\|_{\op}  +  \|\mathcal T\|_{\op}, \label{eqnormpf01}	
	\end{align}}
	where ${g}^s_j = \sgn\{\langle \hat{g}_j, g_j \rangle \} g_j$.
	From \eqref{eqpfpf01}, \eqref{eqReq00} and  \eqref{eqnormpf01}, the following is established:
	\begin{equation}
		|\mathcal R|	\leq  \left\{ O_p\left(\sum_{j=1}^{\K_1} \mu_j \tau_{1,j}\right) \|\widehat{\mathcal C}_{zz}-\mathcal C_{zz}\|_{\op} + \|\mathcal S\|_{\op} +  \|\mathcal T\|_{\op} \right\} O_p\left(\sum_{j=1}^{\K_2} \tau_{2,j} \right). \label{eqnormpf02adad}
	\end{equation}
	Since $\|\widehat{\mathcal C}_{vz}\|_{\HS}= O_p(T^{-1/2})$, $\|\widehat{\Pi}_{\K_1}\|_{\op} \leq 1$ and $\|(\widehat{\mathcal C}_{zz})_{\K_1}^{-1}\|_{\op} \leq \alpha_1^{-1/2}$, we have
	\begin{equation} \label{eqReq}
		\|\mathcal S\|_{\op} \leq O_p(T^{-1/2}) + O_p(\alpha_1^{-1/2} T^{-1}). 
	\end{equation}			
	Note that $\|\widehat{\mathcal C}_{zz}-\mathcal C_{zz}\|_{\op} = O_p(T^{-1/2})$ and $\|\mathcal T\|_{\op}\sum_{j=1}^{\K_2} \tau_{2,j}= o_p(1)$ (which follows from the fact that $\|\mathcal T\|_{\op} \leq \|\mathcal B\|^2_{\op} \sum_{j=\K_1+1}^{\infty} \mu_j$).  Combining these results with \eqref{eqnormpf02adad} and \eqref{eqReq}, we find that $|\mathcal R|	\leq  O_p(T^{-1/2}(\sum_{j=1}^{\K_1} \mu_j \tau_{1,j})(\sum_{j=1}^{\K_2} \tau_{2,j})) + O_p((T^{-1/2} + \alpha_1^{-1/2}T^{-1}) \sum_{j=1}^{\K_2} \tau_{2,j} ) $. Given that $\alpha_1^{-1} T^{-1} \to 0$  and  $\sum_{j=1}^{\K_2} \tau_{2,j}\leq O_p((\sum_{j=1}^{\K_1} \mu_j  \tau_{1,j})(\sum_{j=1}^{\K_2} \tau_{2,j}))=o_p(T^{1/2})$ (the  inequality follows from that $\mu_1^{-1} \tau_{1,1}^{-1}(\sum_{j=1}^{\K_1} \mu_j \tau_{1,j}) \geq 1$), it may be easily deduced that $|\mathcal R| =  o_p(1)$ 	as desired.  \qed
	\subsubsection*{Proof of Theorem \ref{thm2wa}}
	To show (i), we will first verify that 
	\begin{align}
		\|(\widehat{\mathcal C}_{zz})_{\K_1}^{-1}\widehat{\mathcal C}_{xz}\widehat{\mathcal Q}_{\K_2}^{-1}-({\mathcal C}_{zz})_{\K_1}^{-1}{\mathcal C}_{xz}{\mathcal Q}_{\K_2}^{-1}\|_{\op} &\leq  E_1+E_2+E_3 = o_p(1), \label{eqdesired}
	\end{align} 
	where $E_1$, $E_2$ and $E_3$ are defined as follows:
	\begin{align}
		&E_1=	\|((\widehat{\mathcal C}_{zz})_{\K_1}^{-1}-(\mathcal C_{zz})_{\K_1}^{-1})\widehat{\mathcal C}_{xz}\widehat{\mathcal Q}_{\K_2}^{-1}\|_{\op}, \quad\quad E_2= \|(\mathcal C_{zz})_{\K_1}^{-1}(\widehat{\mathcal C}_{xz}-\mathcal C_{xz})\widehat{\mathcal Q}_{\K_2}^{-1}\|_{\op}, \nonumber\\
		& E_3=\|(\mathcal C_{zz})_{\K_1}^{-1}\mathcal C_{xz}(\widehat{\mathcal Q}_{\K_2}^{-1}-{\mathcal Q}_{\K_2}^{-1})\|_{\op}. \nonumber
	\end{align}
	Note first that $\|\widehat{\mathcal C}_{xz}\widehat{\mathcal Q}_{\K_2}^{-1}\|_{\op}=O_p(\alpha_2^{-1/2})$, and thus 
	\begin{equation*} 
		E_1  \leq  O_p(\alpha_2^{-1/2}) \left(\|\sum_{j=1}^{\K_1} ( {\mu}_j ^{-1}-\hat{\mu}_j^{-1}) {g}_j^s\otimes{g}_j^s \|_{\op} +   \|\sum_{j=1}^{\K_1} \hat{\mu}_j^{-1} (\hat{g}_j\otimes \hat{g}_j-{g}_j^s\otimes{g}_j^s )\|_{\op}\right),
	\end{equation*}
	where ${g}^s_j = \sgn\{\langle \hat{g}_j, g_j \rangle \} g_j$.	We then find that
	\begin{equation}
		\|\sum_{j=1}^{\K_1} ({\mu}_j^{-1}-\hat{\mu}_j^{-1}) {g}_j^s\otimes{g}_j^s \|_{\op} \leq \sup_{1 \leq j \leq \K_1}|\hat{\mu}_j^{-1}-\mu_j^{-1}| = \sup_{1 \leq j \leq \K_1}\left|\frac{\hat{\mu}_j-\mu_j}{{\mu}_j\hat{\mu}_j}\right|
		\leq \alpha_1^{-1/2}\mu_{\K_1}^{-1}\|\widehat{\mathcal C}_{zz} -  {\mathcal C}_{zz}\|_{\op}   \label{eqnormpf02ad}
	\end{equation}
	and
	\begin{equation}
		\|\sum_{j=1}^{\K_1} \hat{\mu}_j^{-1} (\hat{g}_j\otimes \hat{g}_j-{g}_j^s\otimes{g}_j ^s)\|_{\op} \leq 	2\alpha_1^{-1/2} \sum_{j=1}^{\K_1}  \| \hat{g}_j-g^s_j\|\leq 2 \alpha_1^{-1/2}\|\widehat{\mathcal C}_{zz} -  {\mathcal C}_{zz}\|_{\op} \sum_{j=1}^{\K_1}\tau_{1,j}.   \label{eqnormpf02ad2}
	\end{equation}	
	Since	$ \mu_{\K_1}^{-1} \leq\sum_{j=1}^{\K_1} \tau_{1,j} = o_p(\alpha_{1}^{1/2}T^{1/2})$ and $\|\widehat{\mathcal C}_{zz}-\mathcal C_{zz}\|_{\op}=O_p(T^{-1/2})$, the right hand sides of \eqref{eqnormpf02ad} and  \eqref{eqnormpf02ad2} are $o_p(1)$, and hence $E_1 = o_p(1)$. Since $\|(\mathcal C_{zz})_{\K_1}^{-1}\|_{\op} \leq \mu_{\K_1}^{-1}$, $\|\widehat{\mathcal Q}_{\K_2}^{-1}\|_{\op}\leq \alpha_2^{-1/2}$, and $\|\widehat{\mathcal C}_{xz}-\mathcal C_{xz}\|_{\op} = O_p(T^{-1/2})$, we also find that 
	\begin{align*}
		E_2 \leq \mu_{\K_1}^{-1} O_p(\alpha_2^{-1/2} T^{-1/2} ) \leq  O_p( \alpha_2^{-1/2} T^{-1/2}) \sum_{j=1}^{\K_1} \tau_{1,j} = o_p(1).  
	\end{align*}
	Given that $\|(\mathcal C_{zz})_{\K_1}^{-1}\mathcal C_{xz}\|_{\op} \leq  \Vert\mathcal B^\ast\Vert_{\op} = O_p(1)$, {it remains to show that $\|\widehat{\mathcal Q}_{\K_2}^{-1} - {\mathcal Q}_{\K_2}^{-1}\|_{\op}=o_p(1)$ since this implies $E_3 = o_p(1)$ (and thus the desired result \eqref{eqdesired} is obtained).} To show this, we first note that 
	\begin{equation*} 
		\|\widehat{\mathcal Q}_{\K_2}^{-1} - {\mathcal Q}_{\K_2}^{-1}\|_{\op}\leq  \|\sum_{j=1}^{\K_2} ({\nu}_j^{-1}-\hat{\nu}_j^{-1}) {h}_j\otimes{h}_j \|_{\op} +   \|\sum_{j=1}^{\K_2} \hat{\nu}_j^{-1} (\hat{h}_j\otimes \hat{h}_j-{h}_j\otimes{h}_j )\|_{\op}.  
	\end{equation*}
	Let $\mathcal S = \widehat{\mathcal C}_{vz}^\ast\widehat{\Pi}_{\K_1} \mathcal B^\ast + \mathcal B \widehat{\Pi}_{\K_1}\widehat{\mathcal C}_{vz} + \widehat{\mathcal C}_{vz}^\ast(\widehat{\mathcal C}_{zz})_{\K_1}^{-1} \widehat{\mathcal C}_{vz}$. We then deduce from our proof of Theorem \ref{thm2w} that 
	\begin{equation}
		\|\widehat{\mathcal Q} -  {\mathcal Q}\|_{\op} \leq 	 O_p\left(\sum_{j=1}^{\K_1} \mu_j \tau_{1,j}\right) \|\widehat{\mathcal C}_{zz}-\mathcal C_{zz}\|_{\op} + \|\mathcal S\|_{\op} + \|\mathcal T\|_{\op}.  \label{eqpf0002w2}
	\end{equation}
	As in \eqref{eqnormpf02ad}, it can be shown that $\|\sum_{j=1}^{\K_2} ({\nu}_j^{-1}-\hat{\nu}_j^{-1}) {h}_j\otimes{h}_j \|_{\op}  \leq \alpha_2^{-1/2}\nu_{\K_2}^{-1}	\|\widehat{\mathcal Q}-  {\mathcal Q}\|_{\op}$, hence
	\begin{equation}
		\|\sum_{j=1}^{\K_2} ({\nu}_j^{-1}-\hat{\nu}_j^{-1}) {h}_j\otimes{h}_j \|_{\op} \leq  \alpha_2^{-1/2}\nu_{\K_2}^{-1}  \left(O_p\left(\sum_{j=1}^{\K_1} \mu_j \tau_{1,j}\right)   \|\widehat{\mathcal C}_{zz}-\mathcal C_{zz}\|_{\op} + \|\mathcal S\|_{\op} + \|\mathcal T\|_{\op} \right).   \label{eqnormpf02}
	\end{equation}
	We also deduce the following from \eqref{eqeigenvector}, \eqref{eqnormpf02ad2} and \eqref{eqpf0002w2}:
	\begin{equation}
		\|\sum_{j=1}^{\K_2} \hat{\nu}_j^{-1} (\hat{h}_j\otimes \hat{h}_j-{h}_j\otimes{h}_j )\|_{\op} \leq  2 \alpha_2^{-1/2} \sum_{j=1}^{\K_2}\tau_{2,j} \left( O_p\left(\sum_{j=1}^{\K_1} \mu_j \tau_{1,j}\right)  \|\widehat{\mathcal C}_{zz}-\mathcal C_{zz}\|_{\op} + \| \mathcal S\|_\op + \|\mathcal T\|_{\op}\right).
		\label{eqnormpf03}
	\end{equation}
	We find that $\nu_{\K_2}^{-1} \leq \sum_{j=1}^{\K_2}\tau_{2,j} \leq O_p((\sum_{j=1}^{\K_1} \mu_j \tau_{1,j})(\sum_{j=1}^{\K_2}\tau_{2,j}))= o_p(\alpha_2^{1/2}T^{1/2})$, which follows from that $\mu_1^{-1} \tau_{1,1}^{-1}(\sum_{j=1}^{\K_1} \mu_j \tau_{1,j}) \geq 1$. Moreover, note that
	\begin{equation*} 
		\sum_{j=1}^{\K_1} \mu_j \tau_{1,j} \leq \nu_1 \nu_{\K_2} ^{-1} \sum_{j=1}^{\K_1} \mu_j\tau_{1,j} \leq \nu_1 \left(\sum_{j=1}^{\K_1} \mu_j \tau_{1,j}\right)\left(\sum_{j=1}^{\K_2}\tau_{2,j}\right)  = o_p(\alpha_2^{1/2}T^{1/2}) ,\end{equation*} and
	\begin{equation*}
		\nu_{\K_2} ^{-1} \|\mathcal T\|_{\op} \leq  \sum_{j=1}^{\K_2}\tau_{2,j}  \|\mathcal T\|_{\op} \leq  \|\mathcal B\|_{\op}^2\left(\sum_{j=\K_1+1}^\infty \mu_j \right) \left(\sum_{j=1}^{\K_2}\tau_{2,j}\right)  =o_p(\alpha_2^{1/2}). 
	\end{equation*}
	From  these results, \eqref{eqReq}  and the fact that $\|\widehat{\mathcal C}_{zz}-\mathcal C_{zz}\|_{\op} = O_p(T^{-1/2})$, we may deduce that  the right hand sides of \eqref{eqnormpf02} and \eqref{eqnormpf03} are all $o_p(1)$, and thus $E_3 =o_p(1)$ and \eqref{eqdesired} holds. 
	
	We thus know that 
	\begin{equation*}
		\sqrt{T}(\widetilde{\mathcal A}- \mathcal A \widetilde{\Pi}_{\K_2}) 
		\zeta = \left(\frac{1}{\sqrt{T}} \sum_{t=1}^Tz_{t} \otimes u_{t}\right) ({\mathcal C}_{zz})_{\K_1}^{-1} {\mathcal C}_{xz}\mathcal Q_{\K_2}^{-1}\zeta +  o_p(1).
	\end{equation*}
	Define $\zeta_t =  (\phi_{\K_2}(\zeta))^{-1/2}  [z_{t} \otimes u_t]   ({\mathcal C}_{zz})_{\K_1}^{-1} {\mathcal C}_{xz}\mathcal Q_{\K_2}^{-1}\zeta$ and let $\ddot\zeta_T = T^{-1/2}\sum_{t=1}^T  \zeta_t$. Then from nearly identical arguments used to derive \eqref{eqlater1}, \eqref{eqpf0008add} and \eqref{eqpf0009add}, we find that, for any $\psi \in \mathcal H$ and $m>0$, $T^{-1/2}\sum_{t=1}^T \langle \zeta_t,\zeta_1\rangle  \dto N(0, \langle \mathcal C_{uu}\psi,\psi\rangle)$ and  		$\limsup_{n\to \infty}	\limsup_{T} \mathbb{P}( \sum_{j=n+1}^\infty  \langle\ddot \zeta_T, \ell_j  \rangle^2  > m) = 0$,
	where $\{\ell_j\}_{j \geq 1}$ denote the eigenfunctions of $\mathcal C_{uu}$.  Hence (i) is established. 
	
	Given that $\|\widehat{\mathcal Q}_{\K_2}^{-1} -  {\mathcal Q}_{\K_2}^{-1}\|_{\op}=o_p(1)$, (ii) is immediately deduced.  \qed

	\subsection{Proofs of the results in Section \ref{sec:asym2a}} \label{sec:f2sls0proof2}	
We hereafter define
	\begin{equation*}
		\mathcal Q_{\K_1} = \mathcal C_{xz}^\ast (\mathcal C_{zz})_{\K_1}^{-1}\mathcal C_{xz},
	\end{equation*}
	which is repeatedly used in the subsequent proofs.

	\subsubsection*{Proof of Theorem \ref{thm3:convrate}}
	We will show the following: 
	\begin{align}
		\K_2  &\leq (1+o_p(1))\alpha_2^{-1/\rho_{\nu}},\label{eq:k2alpha1} \\
		(c_\circ \rho)^{-1}{(\K_2+1)^{-\rho_{\nu} }} &\leq (1+o_p(1)) \alpha_2, \label{eq:k2alpha2} \\ 
		\Vert \widehat{h}_j - h_j^s \Vert ^2 &\leq O_p(\alpha_1 )\bt{j^{\rho_{\nu}-4\gamma_{\mu}+2}}, \label{eqadd01a} \\
		\Vert \mathcal A(\widehat{h}_j - h_j^s)\Vert ^2 &\leq O_p(\alpha_1) \bt{j^{\rho_{\nu}-2\varsigma_\nu-4\gamma_{\mu}+2}} + O_p(d_T)j^{\rho_\nu-2\varsigma_{\nu}+2}, \label{eqahh}
	\end{align}
	where $h_j^s$ is defined as in our proof of Theorem \ref{thm2w} and $d_T$ is defined by 
	\begin{equation} \label{eqdt}
		d_T =\alpha_1^{(4\varsigma_{\mu}+\rho_{\mu}-2)/\rho_{\mu}}  + T^{-1}\max\{  \alpha_1 ^{-1/\rho_{\mu}},  \alpha_1^{-(\rho_\mu-2\varsigma_{\mu}+3)/\rho_{\mu}} \}.
	\end{equation}	
	Note that
	\begin{equation*}
		\Vert \widetilde{\mathcal A} - {\mathcal A} \Vert_{\HS} \leq \Vert \widetilde{\mathcal A} - \mathcal A \widetilde{\Pi}_{\K_2} \Vert_{\HS} + \Vert \mathcal A \widetilde{\Pi}_{\K_2} - \mathcal A \Pi_{\K_2} \Vert _{\HS} + \Vert (\mathcal I - \Pi_{\K_2}) \mathcal A \Vert_{\HS},
	\end{equation*}
	where $\Vert \widetilde{\mathcal A} - \mathcal A \widetilde {\Pi}_{\K_2} \Vert_{\HS} = O_p(\alpha_1 ^{-1/4}\alpha_2 ^{-1/4} T^{-1/2})$ as shown in \eqref{eqpf0001vv}. 
	Using \eqref{eq:k2alpha2}, we also find that \begin{equation}
		\Vert (\mathcal I - \Pi_{\K_2}) \mathcal A \Vert_{\HS} ^2 =\sum_{\ell = \K_2+1} ^\infty \Vert \mathcal A h_\ell  \Vert ^2 \leq O(1)\sum_{\ell = \K_2+1} ^\infty \sum_{j=1} ^\infty \ell ^{-2\varsigma_{\nu}} \bt{j^{-2\gamma_{\nu}}} 
		\leq O_p(\alpha_2 ^{(2\varsigma_{\nu}-1)/\rho_{\nu}}).\label{eqqhatt00}
	\end{equation}
	We next focus on the remaining term $\Vert \mathcal A \widetilde{\Pi}_{\K_2} - \mathcal A \Pi_{\K_2} \Vert _{\HS}$. Note that \begin{equation}
		2^{-1}\Vert \mathcal A \widetilde{\Pi}_{\K_2} - \mathcal A \Pi_{\K_2} \Vert _{\HS} ^2 \leq \Vert \sum_{j = 1}^{ \K_2} \widehat{h}_j \otimes \mathcal A(\widehat{h}_j - h_j^s )\Vert _{\HS} ^2 + \Vert \sum_{j = 1}^{ \K_2} (\widehat{h}_j - h_j^s ) \otimes \mathcal Ah_j\Vert _{\HS} ^2 . \label{eqqhatt0}
	\end{equation}
	We know from \eqref{eqadd01a} and \eqref{eqqhatt0} that
	{\allowdisplaybreaks\begin{align}
		\Vert& \sum_{j=1} ^{\K_2} (\widehat{h}_j - h_j ^s) \otimes \mathcal A h_j ^s \Vert_{\HS} ^2 = 
		\sum_{\ell=1}^{\infty}\Vert \sum_{j=1} ^{\K_2} \langle \mathcal A h_j,h_{\ell} \rangle (\widehat{h}_j - h_j ^s) \Vert ^2   \leq 	\sum_{\ell=1}^{\infty} \left(\sum_{j=1}^{\K_2} |\langle \mathcal A h_j,h_{\ell} \rangle| \|\widehat{h}_j - h_j ^s\| \right)^2 \nonumber\\
		&\leq   O_p(\alpha_1)  \left(\sum_{j=1}^{\K_2}  \bt{j^{\rho_{\nu}/2-2\gamma_{\mu}-\varsigma_{\nu}+1}} \right)^2   \nonumber\\ &= \begin{cases}O_p(\alpha_1) & \text{if }\varsigma_{\nu}>2 + \bt{\rho_{\nu}/2 -2\gamma_{\mu}},
			\\
			O_p(\alpha_1\max\{\log^2 \alpha_2^{-1}, \bt{\alpha_2^{(2\varsigma_{\nu}-\rho_{\nu}+4\gamma_{\mu}-4)/\rho_{\nu}}} \}) &\text{if }\varsigma_{\nu} \leq 2 + \bt{\rho_{\nu}/2 -2\gamma_{\mu}} .\end{cases} 
		\label{eqadd10}		\end{align}}
	Moreover, from \eqref{eqahh} and the fact that  $d_T \alpha_1 ^{-1} = o(1)$, the following may be deduced:
		\begin{align}
			&\Vert \sum_{j=1} ^{\K_2} \widehat{h}_j \otimes \mathcal A(\widehat{h}_j - h_j^s)\Vert ^2_{\HS} \leq  \sum_{j=1} ^{\K_2} 
			\Vert\mathcal A(\widehat{h}_j - h_j^s)\Vert ^2 \leq O_p(\alpha_1) \sum_{j=1}^{\K_2} \bt{j^{\rho_{\nu}-2\varsigma_\nu-4\gamma_{\mu}+2}} + O_p(d_T)\sum_{j=1}^{\K_2}j^{\rho_\nu-2\varsigma_{\nu}+2} \nonumber \\ 
			&\leq \begin{cases}
				O_p(\alpha_1 )	&\text{if }\rho_\nu/2+3/2 <\varsigma_\nu,\\
				O_p(\alpha_1 \max\{ \log \alpha_2 ^{-1}, \alpha_2 ^{\bt{(2\varsigma_\nu-\rho_\nu-3)}/\rho_\nu }   \})    &\text{if }\rho_\nu/2+3/2 \geq \varsigma_\nu. \label{eqadd11}
			\end{cases}
		\end{align}
		Since \bt{$4\gamma_{\mu}-4 >-3$}, $\alpha_2 \log \alpha_2^{-1} = o(1)$, and $\alpha_2 \log^2 \alpha_2^{-1} = o(1)$,  
		\eqref{eqthmconvrate2} may be deduced  from \eqref{eqqhatt00}, \eqref{eqadd10} and \eqref{eqadd11}.  \\[9pt]
		\noindent \textbf{Proofs of \eqref{eq:k2alpha1}-\eqref{eqahh}}: To obtain the desired results, we first need to discuss on  $\Vert (\widehat{\mathcal Q} - \mathcal Q)h_\ell \Vert$ and  $\Vert \widehat{\mathcal Q} - \mathcal Q\Vert_{\HS}$. Note that, for any $\ell$,  \begin{equation}
			2^{-1}\Vert (\widehat{\mathcal Q} - \mathcal Q)h_\ell \Vert  ^2 \leq \Vert (\widehat{\mathcal Q} - \mathcal Q_{\K_1})h_\ell \Vert ^2 + \Vert  ({\mathcal Q}_{\K_1} - \mathcal Q )h_\ell \Vert  ^2 . \label{eqqhat0}
		\end{equation}
		The second term in \eqref{eqqhat0} is bounded above as follows,  
		\begin{align}
			\Vert ({\mathcal Q}_{\K_1} - \mathcal Q)h_\ell \Vert ^2  & = \Vert \sum_{j = \K_1+1} ^\infty \mu_j  \langle \mathcal B g_j, h_\ell \rangle  \mathcal B g_j  \Vert ^2 \leq  O(1)  \sum_{j = \K_1+1} ^\infty \mu_j ^2 \Vert \mathcal B g_j \Vert ^2  \sum_{j = \K_1+1} ^\infty  j^{-2\varsigma_{\mu}} \bt{\ell ^{-2\gamma_{\mu}}}\nonumber\\
			&\leq   O(1)\mu_{\K_1+1} ^2(\K_1+1)^{-4\varsigma_{\mu}+2} \bt{\ell ^{-2\gamma_{\mu}}} \leq O_p(1)  \alpha_1^{(4\varsigma_{\mu}+\rho_{\mu}-2)/\rho_{\mu}} \bt{\ell ^{-2\gamma_{\mu}}},  \label{eqadd1}
		\end{align} 
		where the first inequality follows from the H\"older's inequality and the second is obtained because $\bt{2\gamma_{\mu}}>1$, $\sum_{j=\K_1+1} ^\infty  j^{-2\varsigma_{\mu}} \leq(\K_1+1) ^{-2\varsigma_{\mu}+1}$, and $\mu_j ^2 \leq \mu_{\K_1+1} ^2$ for $j > \K_1$.   The last inequality is  obtained using the arguments that are used to derive \eqref{eq:kalpha2}. We now focus on the first term in \eqref{eqqhat0}. Note that \begin{equation}\begin{aligned}
				4^{-1}\Vert (\widehat{\mathcal Q} - \mathcal Q_{\K_1} )h_\ell \Vert ^2 &\leq  \Vert \widehat{\mathcal C}_{vz}^\ast (\widehat{\mathcal C}_{zz})_{\K_1} ^{-1} \widehat{\mathcal C}_{vz} h_\ell   \Vert ^2 +  {\Vert \widehat{\mathcal C}_{vz} ^\ast  \widehat{\Pi}_{\K_1} \mathcal B^\ast h_\ell \Vert ^2} +  {\Vert \mathcal B\widehat{\Pi}_{\K_1}  \widehat{\mathcal C}_{vz}  h_\ell \Vert ^2} \nonumber\\ &\quad+ \Vert (\mathcal B(\widehat{\Pi}_{\K_1} \widehat{\mathcal C}_{zz} \widehat{\Pi}_{\K_1}  - \Pi_{\K_1} \mathcal C_{zz} \Pi_{\K_1} ) \mathcal B^\ast )h_\ell \Vert^2, \label{eqqhat1}
		\end{aligned}\end{equation}
		where $\Vert \widehat{\mathcal C}_{zv} \widehat{\Pi}_{\K_1} \mathcal B^\ast h_\ell  \Vert  ^2 \leq \bt{\ell ^{-2\gamma_{\mu}}}O_p(T^{-1})$.  Moreover, we have ${\Vert \mathcal B\widehat{\Pi}_{\K_1}  \widehat{\mathcal C}_{vz}  h_\ell \Vert ^2} 
		\leq  \bt{\ell ^{-2\gamma_{\mu}}} O_p(T^{-1})$ and $\Vert \widehat{\mathcal C}_{vz} ^\ast  (\widehat{\mathcal C}_{zz})_{\K_1} ^{-1} \widehat{\mathcal C}_{vz}  h_\ell \Vert ^2 \leq O_p(\alpha_1^{-1}T^{-1}) \Vert \widehat{\mathcal C}_{vz} h_\ell \Vert ^2 \leq O_p(T^{-2}\alpha_1 ^{-1}) \ell^{-\rho_\nu/2}$ since 
		\begin{equation}
			T\mathbb E[ \Vert \widehat{\mathcal C}_{vz}    h_j \Vert ^2  ] = T^{-1} \mathbb E[ \Vert \sum_{t=1} ^{T} \langle v_t , h_j \rangle z_t \Vert ^2 ] \leq  O(1) \mathbb E [ \Vert  \langle v_t,h_j \rangle z_t\Vert ^2  ]   \leq {O(1)\nu_j}\bt{\leq O(j^{-\rho_{\nu}/2}) }, \label{eqcvhat}
		\end{equation}
		where the inequalities are obtained from Assumption~\ref{assconvratetsls}; specifically, under the assumption, \bt{we have that $ \mathbb E[\Vert\langle v_t, h_j \rangle z_t\Vert ^2 ]\leq\mathbb E[\Vert \langle x_t, h_j \rangle z_t \Vert ^2] \leq  c_\circ\Vert \mathcal C_{xz} h_j\Vert ^2 \leq \Vert \Pi_{\K_1} \mathcal C_{zz} \Vert_{\op} \Vert (\mathcal C_{zz})_{\K_1} ^{-1/2} \mathcal C_{xz} h_j\Vert ^2 \leq O(1)\nu_j $}. Lastly, using the arguments used to obtain \eqref{eq001}, we can show that \bt{$\Vert  \widehat{g}_j - g_j ^s\Vert ^2 \leq O_p(T^{-1}) j^{2}$} and $\Vert \mathcal B(\widehat{g}_j - g_j ^s)\Vert ^2 \leq O_p(T^{-1})(j^{2-2\varsigma_{\mu}}+   j^{\rho_\mu+2-2\varsigma_{\mu}})$. Using this bound, we find that 
		\begin{align}
			&4^{-1}	\Vert (\mathcal B(\widehat{\Pi}_{\K_1} \widehat{\mathcal C}_{zz} \widehat{\Pi}_{\K_1}  - \Pi_{\K_1} \mathcal C_{zz} \Pi_{\K_1} ) \mathcal B^\ast )h_\ell \Vert^2\nonumber\\
			&\leq \Vert \sum_{j=1}^{\K_1} (\widehat{\mu}_j - \mu_j ) \langle \widehat{g}_j , \mathcal B^\ast h_\ell \rangle \mathcal B\widehat{g}_j \Vert ^2 + \Vert  \sum_{j=1}^{\K_1} \mu_j \langle \widehat{g}_j - g_j^s, \mathcal B^\ast h_\ell \rangle \mathcal B g_j \Vert ^2 \nonumber\\
			&\quad+\Vert  \sum_{j=1}^{\K_1} \mu_j \langle     g_j ^s, \mathcal B^\ast h_\ell \rangle \mathcal B (\widehat{g}_j - g_j^s) \Vert ^2+ \Vert  \sum_{j=1}^{\K_1} \mu_j \langle \widehat{g}_j - g_j^s, \mathcal B^\ast h_\ell \rangle \mathcal B(\widehat{g}_j - g_j^s ) \Vert ^2\nonumber\\
			&\leq \K_1 \Vert \mathcal B\Vert_{\op} ^2 \Vert \widehat{\mathcal C}_{zz} - \mathcal C_{zz} \Vert_{\op} ^2 \Vert  \mathcal B ^\ast h_\ell \Vert ^2 +  \Vert \mathcal C_{zz}\Vert_{\HS} ^2  \sum_{j=1}^{\K_1} \Vert \widehat{g}_j - g_j^s \Vert ^2 \Vert \mathcal B^\ast h_\ell \Vert ^2\Vert \mathcal B g_j\Vert ^2 \nonumber\\
			&\quad+ \Vert \mathcal C_{zz}\Vert_{\HS}^2\sum_{j=1} ^{\K} \Vert \mathcal B^\ast h_\ell\Vert ^2 \Vert \mathcal B(\widehat{g}_j - g_j^s)\Vert^2 + \Vert \mathcal C_{zz} \Vert_{\HS} ^2 \Vert \mathcal B^\ast h_\ell \Vert^2  \sum_{j=1}^{\K_1}\Vert \widehat{g}_ j - g_j^s \Vert ^4 \nonumber\\
			&\leq\bt{\ell ^{-2\gamma_{\mu}}} \alpha^{-1/\rho_\mu}O_p(T^{-1}) + \bt{\ell ^{-2\gamma_{\mu}}} O_p(T^{-1}) \sum_{j=1} ^{\K_1} \bt{j^{-2\varsigma_{\mu}+2}}\nonumber\\
			&\quad+\bt{\ell ^{-2\gamma_{\mu}}} O_p(T^{-1}) \sum_{j=1} ^{\K_1} (  j^{-2\varsigma_{\mu}+2}+  j^{\rho_\mu-2\varsigma_{\mu}+2}) + \bt{\ell ^{-2\gamma_{\mu}}} O_p(T^{-2})\sum_{j=1} ^{\K_1} j^4 \nonumber\\
			&\leq \bt{\ell ^{-2\gamma_{\mu}}} O_p(T^{-1}\max\{  \alpha_1 ^{-1/\rho_{\mu}}, \alpha_1^{-(\rho_\mu-2\varsigma_{\mu}+3)/\rho_{\mu}} \} ),\label{eqqhat2}
		\end{align} 
		where the second inequality is obtained by using Lemma 4.2 in \cite{Bosq2000} and noting that $ \Vert \sum_{j =1}^{ \K_1} \mu_j \langle \widehat{g}_j - g_j, \mathcal B^\ast h_\ell \rangle \mathcal B g_j \Vert ^2 \leq (\sum_{j =1}^{ \K_1} \mu_j ^2 )(\sum_{j =1}^{ \K_1} \Vert \mathcal B g_j \Vert ^2 \langle \widehat{g_j} - g_j ,\mathcal B^\ast h_\ell \rangle ^2)$ holds by the H\"older's inequality. The last two inequalities are deduced from  Assumption~\ref{assum1}, Assumption~\ref{assconvratetsls}, and the the fact that $\Vert \mathcal B^\ast h_\ell \Vert ^2 \leq O(1) \bt{\ell ^{-2\gamma_{\mu}}}$ and $\alpha_1 ^{-1} = \abar_1= o(T^{\rho_\mu/(2\rho_\mu+2)})$.	Then, from the results given in \eqref{eqqhat0} to \eqref{eqqhat2} and the definition of $d_T$ given in \eqref{eqdt}, we conclude the following: for any $\ell\leq\K_2$, 	 
		\begin{equation}
			\Vert (\widehat{\mathcal Q}- \mathcal Q)h_\ell \Vert  ^2 \leq     \bt{\ell ^{-2\gamma_{\mu}}} O_p(d_T), \label{eqqhat3}
		\end{equation}
		from which we also find that
		\begin{equation}
			\Vert \widehat{\mathcal Q} - \mathcal Q \Vert_{\HS}  ^2 = O_p(d_T).
			\label{eqqhat3a}
		\end{equation}
		
		We now verify \eqref{eq:k2alpha1} and \eqref{eq:k2alpha2}.	It can be shown without difficulty that $d_T = O(\alpha_1)$. Given that $\alpha_1 ^{-1} = o(T^{1/2})$ and $\alpha_2 ^{-1}\alpha_1^{1/2} =o(1)$, we find that $\alpha_2 ^{-1}d_T^{1/2} =o(1)$, from which the following is deduced:
		\begin{equation*}
			\alpha_2 = \widehat{\nu}_{\K_2}^2  - \nu_{\K_2}^2 +\nu_{\K_2} ^2 \leq \Vert \widehat{\mathcal Q} - \mathcal Q \Vert_{\op} + c_\circ\K_2^{-\rho} \leq o(1) \alpha_2 + c_\circ\K_2 ^{-\rho_\nu}.
		\end{equation*}
		Then \eqref{eq:k2alpha1} follows from the above. Similarly as in \eqref{eq:kalpha2}, it can be shown that \begin{equation*}
			(c_\circ \rho)^{-1}{(\K_2+1)^{-\rho_{\nu} }} \leq {\nu}_{\K_2+1} ^2  = \nu_{\K_2+1} ^2 - \widehat{\nu}_{\K_2+1} ^2  +\widehat \nu_{\K_2+1}   \leq (1 + o_p(1)) \alpha_2,  
		\end{equation*}
		and thus  we find that \eqref{eq:k2alpha2} holds. 
		
		We then show \eqref{eqadd01a}. To this end, it should first be noted that, under the employed assumptions,  \eqref{eqpf02} holds if $\widehat{\lambda}_{j}$ (resp.\ $\lambda_{\ell}$) is replaced by $\widehat{\nu}_j$ (resp.\ $\nu_\ell$) . Moreover, note that the eigenfunctions of $\mathcal Q ^2$ and $\widehat{\mathcal Q} ^2$ are, respectively, equivalent to those of $\mathcal Q$ and $\widehat{\mathcal Q}$. Therefore, by applying the arguments that are used in our proof of Theorem~\ref{thm:convrate}, we can show that \begin{equation}
			8^{-1}\Vert \widehat{h}_j - h_j^s \Vert ^2 
			\leq\sum_{\ell \neq j} (\widehat{\nu}_j ^2 - \nu_\ell ^2 )^{-2}  {\nu}_\ell ^2  \langle (\widehat{\mathcal Q}  -\mathcal Q ) \widehat{h}_j , h_\ell \rangle ^2  +\sum_{\ell \neq j} (\widehat{\nu}_j ^2 - \nu_\ell ^2 )^{-2} \widehat{\nu}_j^2  \langle (\widehat{\mathcal Q}  -\mathcal Q ) \widehat{h}_j , h_\ell \rangle ^2    .\label{eqhhat1}
		\end{equation}
		From similar arguments used to derive \eqref{eqpf03}, the first term in \eqref{eqhhat1} is bounded above as follows:
		\begin{equation}
			\sum_{\ell \neq j} (\widehat{\nu}_j ^2 - \nu_\ell ^2 )^{-2}  {\nu}_\ell ^2  \langle (\widehat{\mathcal Q}  -\mathcal Q ) \widehat{h}_j , h_\ell \rangle ^2  
			\leq 2 \widetilde{\Delta}_{1j} \Vert \widehat{h}_j - h_j^s\Vert ^2 + 2 \sum_{\ell \neq j} (\widehat{\nu}_j ^2 - \nu_\ell ^2 )^{-2} \nu_\ell ^2 \langle  h_j , (\widehat{\mathcal Q} - \mathcal Q) h_\ell \rangle ^2,
		\end{equation}
		where $ \widetilde{\Delta}_{1j}  = \sum_{\ell \neq j}(\widehat{\nu}_j ^2 - \nu_\ell ^2 )^{-2} \nu_\ell ^2 \Vert (\widehat{\mathcal Q} - \mathcal Q) h_\ell \Vert ^2$. 
		Moreover, by using similar arguments that are used to obtain \eqref{eqpf01a}, we can show that
		\begin{equation}
			\sum_{\ell \neq j} (\widehat{\nu}_j ^2 - \nu_\ell ^2 )^{-2} \widehat{\nu}_j^2  \langle (\widehat{\mathcal Q}  -\mathcal Q ) \widehat{h}_j , h_\ell \rangle ^2  
			\leq 2\sum_{\ell \neq j} (\widehat{\nu}_j ^2 - \nu_\ell ^2 )^{-2} \widehat{\nu}_j^2  \langle (\widehat{\mathcal Q}  -\mathcal Q )  {h}_j  , h_\ell \rangle ^2  +  {2} (\widetilde{\Delta}_{1j} + \widetilde{\Delta}_{2j}) \Vert \widehat{h}_j - h_j^s \Vert ^2, 
		\end{equation}
		where $\widetilde{\Delta}_{2j} =  \sum_{\ell \neq j}(\widehat{\nu}_j ^2 - \nu_\ell ^2 )^{-1}   \Vert (\widehat{\mathcal Q} - \mathcal Q) h_\ell \Vert ^2 $. 	We then use the results given in \eqref{eqqhat3} and \eqref{eqqhat3a} to obtain the following bounds of $\widetilde{\Delta}_{1j}$ and $\widetilde{\Delta}_{2j}$: for $j=1,\ldots, \K_2$, \begin{equation}
			\widetilde{\Delta}_{1j} \leq  O_p(d_T)\sum_{\ell \neq j} (\nu_j ^2 -\nu_\ell ^2)^{-2} \nu_\ell^2 \bt{\ell ^{-2\gamma_{\mu}}} \leq \bt{j^{\rho_{\nu} + 2 - 2\gamma_{\mu}}}O_p(d_T)  \leq   \bt{O_p(\alpha_2 ^{-(2\gamma_{\mu}-2-\rho_{\nu})/\rho_\nu}d_T)} = o_p(1),   
		\end{equation} 
		where the second inequality follows from Lemma~\ref{lem:convrate}\ref{lemconvrate8} and \bt{and the last equality follows from that  $\gamma_{\mu} \leq 1+\frac{3}{2}\rho_{\nu}$ and $\alpha_2^{-1}d_T^{1/2} = o_p(1)$}. Similarly, for $j=1,\ldots, \K_2$,   
		\begin{equation} \label{eqhhat1end}
			\widetilde{\Delta}_{2j}   
			\leq   {\max_{1\leq \ell \leq \K_2}} (\nu_j ^2 - \nu_\ell ^2)^{-1}     \sum_{\ell \neq j} \Vert (\widehat{\mathcal Q} - \mathcal Q)h_\ell \Vert ^2   \leq j^{1+\rho_{\nu}} O_p(d_T) \leq O(\alpha_2^{-(1+\rho_{\nu})/\rho_{\nu}}d_T) = o_p(1). 
		\end{equation} 
		From \eqref{eqhhat1}-\eqref{eqhhat1end}, we have
		\begin{equation}
			\Vert \widehat{h}_j - h_j^s \Vert ^2 \leq O(1) (1+o_p(1)) \widetilde{\Delta}_{3j} \label{eqadd00}
		\end{equation}
		for $j=1,\ldots,\K_2,$		where \begin{equation}
			\widetilde	\Delta_{3j} = \sum_{\ell \neq j} (\widehat{\nu}_j ^2 - \nu_\ell ^2 )^{-2} \widehat{\nu}_j^2  \langle (\widehat{\mathcal Q}  -\mathcal Q )  {h}_j , h_\ell \rangle ^2 +\sum_{\ell \neq j} (\widehat{\nu}_j ^2 - \nu_\ell ^2 )^{-2}  {\nu}_\ell^2  \langle (\widehat{\mathcal Q}  -\mathcal Q )  {h}_j , h_\ell \rangle ^2.   \label{eqadd01}
		\end{equation}
		We will analyze the above term using the decomposition $\widehat{\mathcal Q}  -\mathcal Q  = \widehat{\mathcal Q} - \mathcal Q_{\K_1} + \mathcal Q_{\K_1} - \mathcal Q$. Note that
		\begin{equation}
			\langle (\mathcal Q_{\K_1} - \mathcal Q) h_j , h_\ell \rangle ^2  \leq \left(\sum_{i = \K_1+1} ^\infty \mu_i ^2  \langle g_i , \mathcal B^\ast h_j \rangle ^2\right) \left(\sum_{i = \K_1+1} ^\infty \langle g_i , \mathcal B^\ast h_\ell \rangle ^2\right)   
			\leq O_p(\alpha_1)\bt{\ell^{-2\gamma_{\mu}} j^{-2\gamma_{\mu}}}, \label{eqqhat00}
		\end{equation}
		where the first inequality follows from the H\"older's inequality, and the second is deduced from Assumption~\ref{assconvratetsls} and similar arguments used to derive \eqref{eq:kalpha2}. 
		From \eqref{eqqhat00}, Lemma \ref{lem:convrate}\ref{lemconvrate8} \bt{and a result similar to \eqref{eqpf02}}, we find that  \begin{equation}
			\sum_{\ell \neq j} (\widehat{\nu}_j ^2 - \nu_\ell ^2 )^{-2}  {\nu}_\ell^2 \langle (\mathcal Q_{\K_1} - \mathcal Q) h_j , h_\ell \rangle ^2 \leq O_p(\alpha_1)\bt{j^{\rho_{\nu}-4\gamma_{\mu}+2}}. \label{eqqhat01}
		\end{equation}
		Similarly, for $j=1,\ldots, \K_2$, we have\begin{align} 
			&\sum_{\ell \neq j} (\widehat{\nu}_j ^2 - \nu_\ell ^2 )^{-2}  \widehat{\nu}_j^2 \langle (\mathcal Q_{\K_1} - \mathcal Q) h_j , h_\ell \rangle ^2 \nonumber\\
			& \leq |\widehat{\nu}_j ^2 -\nu_j ^2| \sum_{\ell \neq j} (\widehat{\nu}_j ^2 - \nu_\ell ^2)^{-2} \nu_j ^2 \nu_j ^{-2} \langle (\mathcal Q_{\K_1} - \mathcal Q) h_j , h_\ell \rangle ^2 + \sum_{\ell \neq j} (\widehat{\nu}_j ^2 - \nu_\ell ^2)^{-2} \nu_j ^2   \langle (\mathcal Q_{\K_1} - \mathcal Q) h_j , h_\ell \rangle ^2 \nonumber\\
			&\leq    (O_p(d_T^{1/2})j^{\rho_\nu} +1)  O_p(\alpha_1) \bt{j^{\rho_{\nu}-4\gamma_{\mu}+2}}  ,
			\label{eqqhat02}
		\end{align}
		where the second inequality follows from that $|\widehat{\nu}_j ^2 - \nu_j^2 | \leq \Vert \widehat{\mathcal Q}^2 - \mathcal Q ^2 \Vert_{\op} \leq O_p(1)\Vert  \widehat{\mathcal Q} - \mathcal Q\Vert_{\op}$.  Given that   $  \alpha_2^{-1}d_T^{1/2} =o(1)$, \eqref{eqqhat02} is bounded above by $O_p(\alpha_1)\bt{j^{\rho_{\nu}-4\gamma_{\mu}+2}}$. Next, we will obtain an upper bound of $\langle (\widehat{\mathcal Q} - \mathcal Q_{\K_1 })h_j, h_\ell \rangle ^2$. Note that \begin{align} 
			4^{-1}\langle  (\widehat{\mathcal Q} - \mathcal Q_{\K_1} )h_\ell, h_j \rangle  ^2 	&\leq  \langle  \widehat{\mathcal C}_{vz}^\ast (\widehat{\mathcal C}_{zz})_{\K_1} ^{-1} \widehat{\mathcal C}_{vz} h_\ell, h_j \rangle ^2 +  \langle  \widehat{\mathcal C}_{vz} ^\ast  \widehat{\Pi}_{\K_1} \mathcal B^\ast h_\ell , h_j \rangle  ^2 +  \langle  \mathcal B\widehat{\Pi}_{\K_1}  \widehat{\mathcal C}_{vz}  h_\ell ,h_j\rangle ^2 \nonumber\\ &\quad+ \langle  (\mathcal B(\widehat{\Pi}_{\K_1} \widehat{\mathcal C}_{zz} \widehat{\Pi}_{\K_1}  - \Pi_{\K_1} \mathcal C_{zz} \Pi_{\K_1} ) \mathcal B^\ast )h_\ell , h_j \rangle ^2 \nonumber\\
			&  \leq \alpha_1 ^{-1} \nu_j   \nu_\ell   O_p(T^{-2})  + \nu_j   \bt{\ell ^{-2\gamma_{\mu}}}O_p(T^{-1}) + \nu_\ell   \bt{j^{-2\gamma_{\mu}}} O_p(T^{-1}) \nonumber\\
			&\quad + \langle  (\mathcal B(\widehat{\Pi}_{\K_1} \widehat{\mathcal C}_{zz} \widehat{\Pi}_{\K_1}  - \Pi_{\K_1} \mathcal C_{zz} \Pi_{\K_1} ) \mathcal B^\ast )h_\ell , h_j \rangle ^2,  \label{eqqhat11}
		\end{align}
		where  the last inequality is obtained by using \eqref{eqcvhat}. The last term in \eqref{eqqhat11} satisfies the following: \begin{align}
			&4^{-1} \langle  (\mathcal B(\widehat{\Pi}_{\K_1} \widehat{\mathcal C}_{zz} \widehat{\Pi}_{\K_1}  - \Pi_{\K_1} \mathcal C_{zz} \Pi_{\K_1} ) \mathcal B^\ast )h_\ell , h_j \rangle ^2 	\nonumber\\
			&\leq ( \sum_{i=1} ^{\K_1} (\widehat{\mu}_i - \mu_i ) \langle \widehat{g}_i , \mathcal B^\ast h_\ell \rangle \langle \mathcal B\widehat{g}_i, h_j \rangle  )^2 + (  \sum_{i=1} ^{\K_1} \mu_i \langle \widehat{g}_i - g_i, \mathcal B^\ast h_\ell \rangle \langle \mathcal B \widehat g_i , h_j \rangle ) ^2 \nonumber\\
			&\quad +(  \sum_{i=1} ^{\K_1} \mu_i \langle     g_i, \mathcal B^\ast h_\ell \rangle \langle \mathcal B (\widehat{g}_i - g_i) , h_j \rangle ) ^2  \nonumber\\
			&\leq  \max_{1\leq i \leq \K_1} |\widehat{\mu}_i - \mu_i|^2 \Vert \widehat{\Pi}_{\K_1} \mathcal B^\ast h_\ell \Vert ^2 \Vert \widehat{\Pi}_{\K_1} \mathcal B^\ast h_j\Vert ^2 +  \sum_{i=1} ^{\K_1} \mu_i ^2 \Vert \widehat{g}_i - g_i \Vert ^2 \Vert \mathcal B^\ast h_\ell \Vert ^2 \Vert \widehat{\Pi}_{\K_1} \mathcal B^\ast h_j \Vert ^2  \nonumber\\
			&\quad\quad+  \sum_{i=1} ^{\K_1}\mu_i ^2 \Vert \widehat{g}_i - g_i \Vert ^2 \Vert \mathcal B^\ast h_\ell \Vert ^2 \Vert {\Pi}_{\K_1} \mathcal B^\ast h_j \Vert ^2 \nonumber\\
			&\leq \bt{\ell ^{-2\gamma_{\mu}} j^{-2\gamma_{\mu}}} O_p (T^{-1}\alpha_1 ^{-1/\rho_\mu}),\label{eqqhat12}
		\end{align}
		where 
		the last inequality follows from Assumptions \ref{assum1a} and \ref{assconvratetsls}. Combining the results given in \eqref{eqqhat11} and \eqref{eqqhat12}, we find that\begin{align*}
			&4^{-1}\langle  (\widehat{\mathcal Q} - \mathcal Q_{\K_1} )h_\ell, h_j \rangle  ^2 \notag \\ &\leq O_p(T^{-1}) (\alpha_1 ^{-1}T^{-1} j^{-\rho_{\nu}/2} \ell ^{-\rho_{\nu}/2} + \bt{j^{-\rho_{\nu}/2} \ell^{-2\gamma_{\mu}} + j^{-2\gamma_{\mu}} \ell^{-\rho_{\nu}/2}} +  \alpha_1 ^{-1/\rho_\mu}\bt{\ell ^{-2\gamma_{\mu}} j^{-2\gamma_{\mu}}})  \nonumber\\
			&\leq O_p(T^{-1})  \alpha_1 ^{-1/\rho_\mu}\bt{\ell ^{-2\gamma_{\mu}} j^{-2\gamma_{\mu}}} + \bt{O_p(T^{-1}) (j^{-\rho_{\nu}/2} \ell^{-2\gamma_{\mu}} + j^{-2\gamma_{\mu}} \ell^{-\rho_{\nu}/2})} . 
		\end{align*} 
		Together with Lemma~\ref{lem:convrate}\ref{lemconvrate8}, this implies that \begin{align}
			\sum_{\ell \neq j} (\widehat{\nu}_j ^2 - \nu_\ell ^2 )^{-2}  {\nu}_\ell^2\langle  (\widehat{\mathcal Q} - \mathcal Q_{\K_1} )h_\ell, h_j \rangle  ^2 
			&\leq O_p(T^{-1}\alpha_1 ^{-1/\rho_\mu}) \bt{j^{\rho_{\nu} -4\gamma_{\mu}+2} + O_p(T^{-1})j^{\rho_{\nu}/2-2\gamma_{\mu}+2}} \notag \\
& \leq O_p(T^{-1}\alpha_1 ^{-1/\rho_\mu}) j^{\rho_{\nu}-4\gamma_{\mu}+2},  \label{eqadd02}
		\end{align}
		where the inequalities follow from Lemma \ref{lem:convrate}\ref{lemconvrate8} and \bt{the fact that the first term is dominant under the condition $\gamma_{\mu} \leq \rho_{\nu}/4+1/2$}. In addition, from the arguments that are used to derive \eqref{eqqhat02} and the fact that $\alpha_{2}^{-1}d_T^{1/2} = o(1)$, the following may be deduced: for $j=1,\ldots, \K_2$, \begin{align}
			\sum_{\ell \neq j} (\widehat{\nu}_j ^2 - \nu_\ell ^2 )^{-2}  \widehat{\nu}_j^2\langle  (\widehat{\mathcal Q} - \mathcal Q_{\K_1} )h_\ell, h_j \rangle  ^2  &\leq (O_p(\alpha_{2}^{-1}d_T^{1/2})+1) O_p(T^{-1}\alpha_1 ^{-1/\rho_{\mu}} )\bt{j^{\rho_{\nu}-4\gamma_{\mu}+2}} \notag \\ &=   O_p(T^{-1}\alpha_1 ^{-1/\rho_\mu}) \bt{j^{\rho_{\nu}-4\gamma_{\mu}+2}}.  \label{eqadd03}
		\end{align} 
		From  \eqref{eqadd00}, \eqref{eqadd01}, \eqref{eqqhat01}, \eqref{eqqhat02}, \eqref{eqadd02}, \eqref{eqadd03}, and the decomposition $\widehat{\mathcal Q}  -\mathcal Q  = \widehat{\mathcal Q} - \mathcal Q_{\K_1} + \mathcal Q_{\K_1} - \mathcal Q$, we conclude that 
		\begin{equation*}
			\Vert \widehat{h}_j - h_j^s \Vert ^2 \leq (1+o_p(1))\widetilde{\Delta}_{3j}\leq O_p(\alpha_1) \bt{j^{\rho_{\nu}-4\gamma_{\mu}+2}} + O_p(T^{-1}\alpha_1 ^{-1/\rho_\mu}) \bt{j^{\rho_{\nu}-4\gamma_{\mu}+2}} \leq O_p(\alpha_1 )\bt{j^{\rho_{\nu}-4\gamma_{\mu}+2}},
		\end{equation*} 
		where the last inequality follows from $\alpha_1 ^{-1} = o(T^{\rho_\mu/(2\rho_\mu+2)})$. This completes our proof of \eqref{eqadd01a}. 
		
		Lastly, to show \eqref{eqahh}, we note that	
		\begin{equation}
			\mathcal A (\widehat{h}_j - h_j ^s) = \sum_{\ell \neq j }(\widehat{\nu}_j^2  -\nu_\ell^2  )^{-1} \langle (\widehat{\mathcal Q}^2  - \mathcal Q^2)  \widehat{h}_j  , h_\ell \rangle \mathcal A h_\ell + \langle \widehat{h}_j - h_j ^s , h_j \rangle \mathcal A h_j, \label{eqahhat0}
		\end{equation}
		where $\Vert \langle \widehat{h}_j - h_j ^s ,h_j \rangle \mathcal A h_j  \Vert ^2 \leq O_p(\alpha_1)\bt{j^{\rho_{\nu}-2\varsigma_\nu-4\gamma_{\mu}+2}} $. The first term in \eqref{eqahhat0} is bounded above as follows,
		{\allowdisplaybreaks\begin{align}
			&( \sum_{\ell \neq j} (\widehat{\nu}_j^2  -\nu_\ell  ^2)^{-1} \langle (\widehat{\mathcal Q}^2  - \mathcal Q^2)  \widehat{h}_j  , h_\ell \rangle \mathcal A h_\ell)^2 \leq ( \sum_{\ell \neq j} |\widehat{\nu}_j ^2 - \nu_\ell^2| ^{-1} |\langle  (\widehat{\mathcal Q}^2- \mathcal Q^2) \widehat{h}_j ,  h_\ell \rangle|\Vert  \mathcal A h_\ell \Vert  )^2   \nonumber\\ 
			&\leq O(1)\Vert \widehat{\mathcal Q} - \mathcal Q\Vert_{\op} ^2 \left( \sum_{\ell \neq j} ( \widehat{\nu}_j ^2  -\nu _\ell ^2 )^{-2}   \nu_\ell ^2 \ell^{-2\varsigma_\nu} +  \sum_{\ell \neq j} ( \widehat{\nu}_j ^2  -\nu _\ell ^2 )^{-2}   \widehat\nu_j ^2 \ell^{-2\varsigma_\nu} \right) \nonumber\\
			&  {\leq O_p(d_T) \left( \sum_{\ell \neq j} ( \widehat{\nu}_j ^2  -\nu _\ell ^2 )^{-2}   \nu_\ell ^2 \ell ^{-2\varsigma_\nu} +  \sum_{\ell \neq j} ( \widehat{\nu}_j ^2  -\nu _\ell ^2 )^{-2}   (\widehat\nu_j ^2 - \nu_j ^2 ) \ell ^{-2\varsigma_\nu}+  \sum_{\ell \neq j} ( \widehat{\nu}_j ^2  -\nu _\ell ^2 )^{-2}     \nu_j ^2   \ell ^{-2\varsigma_\nu} \right) }\nonumber \\
			& \leq  O_p(d_T)\left(  j^{\rho_\nu-2\varsigma_\nu+2}+  (O_p(d_T^{1/2}) \nu_j ^{-2}+1)\sum_{\ell \neq j} ( \widehat{\nu}_j ^2  -\nu _\ell ^2 )^{-2}   \nu_j ^2  \ell ^{-2\varsigma_\nu} \right)\nonumber\\
			& \leq (1+o_p(1))O_p(d_T) j^{\rho_\nu-2\varsigma_\nu+2}.\label{eqahhat01}
		\end{align}}
		Combining \eqref{eqahhat0} and \eqref{eqahhat01}, we obtain \eqref{eqahh} as desired. 
		\qed 

				
				\subsubsection*{Proof of Theorem \ref{thm4:convrate}}
In this proof, we allow the case where \(\rho_{\nu}/2+2\geq\varsigma_{\nu}+\delta_{\zeta}\), thereby encompassing the complementary result provided in Section \ref{sec_thm_conti2} as a special case. 
				The whole proof is divided into two parts.\\ 		
				\noindent	{\textbf{1. Proof of the convergence results}}: 			
				We need an upper bound of $\langle \widehat{h}_j - h_j ^s,\zeta\rangle $, which is importantly used in the following discussion. Using the expansion in \eqref{eq:b1}, we find that  \begin{equation*}
					\langle \widehat{h}_j - h_j ^s,\zeta\rangle  =\sum_{\ell \neq j} (\widehat{\nu}_j ^2 - \nu_\ell^2) ^{-1} \langle  (\widehat{\mathcal Q}- \mathcal Q) \widehat{h}_j ,  h_\ell \rangle \langle h_\ell ,\zeta\rangle  + \langle \widehat{h}_j - h_j ^s , h_j ^s \rangle \langle h_j ^s ,\zeta\rangle. 
				\end{equation*}
				Note that $	( \langle \widehat{h}_j - h_j^s , h_j^s \rangle  \langle h_j ^s ,\zeta\rangle) ^2 \leq O_p(\alpha_1) \bt{j^{\rho_{\nu}-4\gamma_{\mu}-2\delta_{\zeta}+2}}$ for $j=1,\ldots,\K_2$, because of \eqref{eqadd01a}. Moreover, using similar arguments that are used to derive \eqref{eqb14} and \eqref{eqahhat01}, we find that   \begin{align}
					&( \sum_{\ell \neq j} (\widehat{\nu}_j ^2 - \nu_\ell^2) ^{-1} \langle  (\widehat{\mathcal Q}- \mathcal Q) \widehat{h}_j ,  h_\ell \rangle \langle h_\ell ,\zeta\rangle  )^2  
					\leq (1+o_p(1))O_p(d_T )j^{\rho_{\nu}-2\delta_{\zeta}+2}. \label{eqb14hh} 
				\end{align} 
				Hence, we conclude that, for $j=1,\ldots, \K_2$, \begin{equation}
					\langle \widehat{h}_j - h_j ^s,\zeta\rangle^2 \leq O_p(\alpha_1) \bt{j^{\rho_{\nu}-4\gamma_{\mu}-2\delta_{\zeta}+2}} +  O_p(d_T )j^{\rho_{\nu}+2-2\delta_{\zeta}} \leq O_p(\alpha_1) j^{\rho_\nu-2\delta_{\zeta}+2}, \label{eqbhh}
				\end{equation}
				where the last inequality follows from the fact that $d_T \alpha_1 ^{-1} = o(1)$ and \bt{$j^{-\gamma_{\mu}} \leq 1$}.
				
				Using the result given in \eqref{eqbhh}, we will show that 
				\begin{align}\label{eqdesire1}
					\Vert (\widehat{\mathcal C}_{zz})_{\K_1} ^{-1} \widehat{\mathcal C}_{xz} \widehat{\mathcal Q}_{\K_2} ^{-1}\zeta - (\mathcal C_{zz})_{\K_1} ^{-1} \mathcal C_{xz} \mathcal Q _{\K_2} ^{-1}\zeta  \Vert &= o_p(1), \\
					\Vert \widehat{\mathcal Q}_{\K_2} ^{-1}\zeta - \mathcal Q_{\K_2} ^{-1}\zeta   \Vert &= o_p(1). \label{eqdesire2}
				\end{align}
				To show \eqref{eqdesire1}, note that 
				\begin{equation}
					\Vert (\widehat{\mathcal C}_{zz} )_{\K_1}^{-1} \widehat{\mathcal C}_{xz} \widehat{\mathcal Q}_{\K_2} ^{-1}\zeta -(\mathcal C_{zz})_{\K_1} ^{-1} \mathcal C_{xz} \mathcal Q_{\K_2} ^{-1} \zeta\Vert \leq   	\Vert (\widehat{\mathcal C}_{zz} )_{\K_1}^{-1} \widehat{\mathcal C}_{xz} (\widehat{\mathcal Q}_{\K_2} ^{-1}  - \mathcal Q_{\K_2} ^{-1})\zeta\Vert + \Vert ( (\widehat{\mathcal C}_{zz} )_{\K_1}^{-1} \widehat{\mathcal C}_{xz} - (\mathcal C_{zz})_{\K_1} ^{-1} \mathcal C_{xz})\mathcal Q_{\K_2} ^{-1} \zeta\Vert.  \label{tmpeq1}
				\end{equation}
				Because $ (\widehat{\mathcal C}_{zz} )_{\K_1}^{-1} \widehat{\mathcal C}_{xz} = \widehat{\Pi}_{\K_1}\mathcal B^\ast + (\widehat{\mathcal C}_{zz} )_{\K_1}^{-1} \widehat{\mathcal C}_{vz}$ and  $\| (\widehat{\mathcal C}_{zz})_{\K_1} ^{-1} \widehat{\mathcal C}_{vz} \|_{\op}=O_p(\alpha_1^{-1/2}T^{-1/2})$,  the first term in \eqref{tmpeq1} satisfies that\begin{equation} \label{tmpeq2}
					\Vert (\widehat{\mathcal C}_{zz} )_{\K_1}^{-1} \widehat{\mathcal C}_{xz} (\widehat{\mathcal Q}_{\K_2} ^{-1}  - \mathcal Q_{\K_2} ^{-1})\zeta\Vert \leq  O_p(1)\Vert (\widehat{\mathcal Q}_{\K_2} ^{-1}  - \mathcal Q_{\K_2} ^{-1}) \zeta\Vert .
				\end{equation}
				Moreover, under the employed assumptions, the following holds:
				\begin{align}
					\Vert \mathcal Q_{\K_2} ^{-1} \zeta\Vert ^2 &= \Vert \sum_{j=1} ^{\K_2} \nu_j^{-1} \langle h_j , \zeta\rangle h_j\Vert ^2  
					\leq O(1) \sum_{j=1} ^{\K_2} j^{\rho_\nu -2\delta_\zeta} \leq O (\max\{  \alpha_2 ^{-1/\rho_\nu}, \alpha_2 ^{-(\rho_\nu-2\delta_\zeta+1)/\rho_\nu}\}).
				\end{align}
				Given that $\alpha_1^{1/2} \alpha_2 ^{-1} = o(1)$, we have $\alpha_1 \alpha_2 ^{-1/\rho_\nu} \leq \alpha_1 \alpha_2 ^{-1} \alpha_2 ^{(\rho_\nu-1)/\rho_\nu}  = o(1)$ and $\alpha_1 \alpha_2 ^{-(\rho_\nu-2\delta_\zeta+1)/\rho_\nu} = \alpha_1 \alpha_2 ^{-1} \alpha_2 ^{(2\delta_\zeta-1)/\rho_\nu} = o(1)$. This implies that $\Vert \mathcal Q_{\K_2} ^{-1} \zeta\Vert ^2  \leq o_p(\alpha_1 ^{-1})$. Note also that 
				\begin{align}
					((\widehat{\mathcal C}_{zz} )_{\K_1}^{-1} \widehat{\mathcal C}_{xz} - (\mathcal C_{zz})_{\K_1} ^{-1} \mathcal C_{xz})\mathcal Q_{\K_2} ^{-1} \zeta = (\widehat{\Pi}_{\K_1} - \Pi_{\K_1})\mathcal B^\ast \mathcal Q_{\K_2} ^{-1} \zeta  + (\widehat{\mathcal C}_{zz})_{\K_1} ^{-1} \widehat{\mathcal C}_{vz}\mathcal Q_{\K_2} ^{-1} \zeta,
				\end{align}
				where the second term on the right hand side is $o_p(1)$ since $\| (\widehat{\mathcal C}_{zz})_{\K_1} ^{-1} \widehat{\mathcal C}_{vz} \|_{\op}=O_p(\alpha_1^{-1/2}T^{-1/2})$, $\Vert \mathcal Q_{\K_2} ^{-1} \zeta\Vert  \leq o_p(\alpha_1 ^{-1/2})$, and $\alpha_1^{-1}T^{-1/2} = o(1)$.
				Furthermore, since $\|\widehat{\Pi}_{\K_1} - \Pi_{\K_1}\|_{\op} \leq O(1)\sum_{j=1}^{\K_1} \|\widehat{g}_j - g_j ^s\| \leq O_p(T^{-1/2}\K_1^2) = O_p(T^{-1/2}\alpha_1^{-2/\rho_{\mu}})$ and $\alpha_1^{-1}=o(T^{\rho_{\mu}/(2\rho_{\mu}+2)})$, we find that 
				\begin{align}\label{thmeq3}
					\|(\widehat{\Pi}_{\K_1} - \Pi_{\K_1})\mathcal B^\ast \mathcal Q_{\K_2} ^{-1}\zeta\| &\leq \|\widehat{\Pi}_{\K_1} - \Pi_{\K_1}\|_{\op}\|\mathcal B\|_{\op} \|\mathcal Q_{\K_2} ^{-1}\zeta\| = O_p(T^{-1/2}\alpha_1^{-2/\rho_{\mu}-1/2})\nonumber \\
					&= O_p(T^{-1/2}\alpha_1^{-(\rho_{\mu}+1)/\rho_{\mu}})\alpha_1^{(\rho_{\mu}/2-1)/\rho_{\mu}} =o_p(1).
				\end{align}
				From \eqref{tmpeq2}-\eqref{thmeq3}, it is deduced that \eqref{eqdesire2} implies \eqref{eqdesire1} and thus we only need to show \eqref{eqdesire2} for the desired results. From a similar decomposition to that given in \eqref{eqpf0001a0}, it can be shown that \eqref{eqdesire2} holds if the following terms are all $o_p(1)$: $\|\sum_{j=1}^{\K_2} ({\nu}_j^{-1}-\hat{\nu}_j^{-1})  \langle {h}_j , \zeta \rangle  \widehat{h}_j \| $, $\|\sum_{j=1}^{\K_2} ({\nu}_j^{-1}-\hat{\nu}_j^{-1})  \langle \widehat{h}_j - h_j ^s , \zeta \rangle  \widehat  {h}_j \|$, $	\Vert \sum_{j=1} ^{\K}  {\nu}_j^{-1} \langle   h_j  , \zeta \rangle (\widehat{h}_j  - h_j ^s )\Vert$, and 	$\Vert \sum_{j=1} ^{\K}  {\nu}_j ^{-1} \langle \widehat{h}_j - h_j ^s, \zeta \rangle \widehat{h}_j\Vert$.	
				
				As in \eqref{eqbb1} and \eqref{eqbb2}, we obtain the following: 
				{\allowdisplaybreaks\begin{align}
					&\|\sum_{j=1}^{\K_2} ({\nu}_j^{-1}-\hat{\nu}_j^{-1})  \langle {h}_j , \zeta \rangle  \widehat{h}_j \| ^2  = \sum_{j=1} ^{\K_2} ({\nu}_j^{-1}-\hat{\nu}_j^{-1})^2 \langle h_j, \zeta\rangle ^2 	\leq \sum_{j=1} ^{\K_2} \frac{(\nu_j ^2 -\widehat{\nu}_j ^2 )^2}{  {\nu}_{j} ^2 ( \widehat {\nu}_j ^2 + \nu_j \widehat {\nu}_j)^2 } c_\zeta j^{-2\delta_{\zeta}} \nonumber\\
					&\leq \sum_{j=1} ^{\K_2} \frac{(\nu_j ^2 -\widehat{\nu}_j ^2 )^2 }{\nu_{j} ^4 \widehat{\nu}_{j} ^2 }c_{\zeta} j^{-2\delta_{\zeta} } 
					\leq   O_p(d_T \max \{ \alpha_2^{-(3\rho_{\nu}-2\delta_{\zeta} +1)/\rho_{\nu}} ,  \alpha_2 ^{-(1+\rho_\nu)/\rho_\nu}\} ) 
					,\label{eqbb1hh} \\
					&\|\sum_{j=1}^{\K_2} ({\nu}_j^{-1}-\hat{\nu}_j^{-1})  \langle \widehat{h}_j - h_j ^s , \zeta \rangle  \widehat  {h}_j \| ^2   =O_p(d_T \alpha_2 ^{-1}) \sum_{j=1} ^{\K_2} j^{2\rho_\nu} \langle \widehat{h}_j - h_j ^s ,\zeta\rangle ^2 \leq O_p(d_T \alpha_1 \alpha_2 ^{-1}) \sum_{j=1} ^{\K_2} j^{3\rho_\nu-2\delta_{\zeta}+2}\nonumber\\
					&\leq O_p(d_T \alpha_1  ) \max\{ \alpha_2 ^{-(4\rho_\nu -2\delta_{\zeta} +3)/\rho_\nu} \alpha_2 ^{-(1+\rho_\nu)/\rho_\nu }   \}. \label{eqbb2hh}
				\end{align}}
				Under the conditions on $\alpha_1$ and $\alpha_2$ given in {Theorem \ref{thm4:convrate}},  we have  $d_T = O(\alpha_1)$, and $\alpha_1^{1/2}\alpha_2^{-1} = o(1)$. These imply that the right hand sides of \eqref{eqbb1hh} and \eqref{eqbb2hh} are $o_p(1)$. 
				We then use the arguments that are used to show \eqref{eqbb4} and find that
				\begin{equation}
					\Vert \sum_{j=1} ^{\K_2}  {\nu}_j^{-1} \langle   h_j  , \zeta \rangle (\widehat{h}_j  - h_j ^s )\Vert   \leq  O_p(\alpha_1^{1/2})\sum_{j=1} ^{\K_2} \bt{j^{-\delta_{\zeta}+\rho_{\nu}-2\gamma_{\mu} +1}}  \leq O_p(\alpha_1^{1/2} \max\{  \alpha_2 ^{-1/\rho_\nu} , \alpha_2 ^{\bt{(\delta_\zeta -\rho_\nu+2\gamma_{\mu}-2)}/\rho_\nu}  \}  ).
					\label{eqbb4hh}
				\end{equation}  
				Due to $\alpha_1 ^{1/2} \alpha_2 ^{-1} = o(1)$ and \eqref{eq001thm}, we can show that the above term is $o_p(1)$. We then note that
				\begin{equation}
					\Vert \sum_{j=1} ^{\K_2}  {\nu}_j ^{-1} \langle \widehat{h}_j - h_j ^s, \zeta \rangle \widehat{h}_j\Vert ^2 
					\leq O_p(\alpha_1)\sum_{j=1} ^{\K_2} j^{2\rho_{\nu}-2\delta_{\zeta}+2} \leq O_p(\alpha_1  \max\{  \alpha_2 ^{-1/\rho_\nu} , \alpha_2 ^{(2\delta_\zeta -2\rho_\nu-3)/\rho_\nu}  \}  )   ,  \label{eqbb5hh}
				\end{equation}		 
				where the inequalities can be deduced from \eqref{eqbhh}. Because of \eqref{eq001thm}, the right hand side of \eqref{eqbb5hh} is also $o_p(1)$. Thus, by combining the results given in \eqref{eqbb1hh}-\eqref{eqbb5hh}, we conclude that \eqref{eqdesire2} holds. Then the results given in Theorem \ref{thm2wa}.\ref{thm2waa} and \ref{thm2wa}.\ref{thm2wab} immediately follow from \eqref{eqdesire1} and \eqref{eqdesire2}, and hence the details are omitted. \\[10pt]
				\noindent	{\textbf{2. Analysis on the regularization bias}}: 			
				We next focus on the regularization bias term, $\|\mathcal A(\widetilde{\Pi}_{\K_2} -\Pi_{\K_2})\zeta\| $. Note that $	\|\mathcal A(\widetilde{\Pi}_{\K_2} -\Pi_{\K_2})\zeta\| \leq G_1+G_2+G_3 + 	\|\mathcal A(\Pi_{\K_2}-I)\zeta\|^2$, where 
				\begin{equation*}
					G_1 = \Vert  \sum_{j=1}^{\K_2} \langle \hat{h}_j-h_j^s,\zeta\rangle \mathcal A(\hat{h}_j-h_j^s)\Vert, \quad G_2= \Vert    \sum_{j=1}^{\K_2} \langle h_j^s,\zeta\rangle \mathcal A(\hat{h}_j-h_j^s)\Vert, \quad G_3=\Vert  \sum_{j=1}^{\K_2} \langle \hat{h}_j-{h}_j^s,\zeta\rangle \mathcal A{h}_j^s\Vert.
				\end{equation*}		
				Then, by using \eqref{eqahh}, \eqref{eqbhh} \bt{and the fact that $d_T \alpha_1 ^{-1} = o(1)$}, we find that 
				\begin{equation*}
					 G_1    \leq   \sum_{j=1} ^{\K_2} |\langle \widehat{h}_j - h_j ^s, \zeta\rangle | \Vert \mathcal A (\widehat{h}_j - h_j^s) \Vert    
					\leq O_p(\alpha_1) \sum_{j=1}^{\K_2} j^{\rho_\nu-\delta_\zeta-\varsigma_{\nu}+2} \leq o_p(\alpha_1 ^{1/2})  \sum_{j=1} ^{\K_2} j^{\rho_\nu/2-\varsigma_{\nu}-\delta_{\zeta}+1},
				\end{equation*} 
				\begin{equation*}
					G_2   \leq  \sum_{j=1}^{\K_2}  |\langle h_j^s,\zeta\rangle |  \| \mathcal A(\hat{h}_j-h_j^s)\|  \leq   O_p(\alpha_1^{1/2})\sum_{j=1} ^{\K_2} j^{\rho_{\nu}/2-\varsigma_\nu-\delta_{\zeta}+1 },
				\end{equation*}  
				and	
				\begin{equation*}
					 G_3  \leq \sum_{j=1} ^{\K_2} |\langle \widehat{h}_j - h_j ^s,\zeta\rangle| \Vert \mathcal A h_j ^s\Vert \leq  O_p(\alpha_1^{1/2}) \sum_{j=1} ^{\K_2} j^{\rho_\nu/2-\varsigma_{\nu}-\delta_{\zeta}+1}.
					\label{eqf3hh}
				\end{equation*}	
				Hence, $G_1, $ $G_2$ and $G_3$ are bounded by the following.\begin{equation*}
					O_p(\alpha_1^{1/2})\sum_{j=1} ^{\K_2} j^{\rho_{\nu}/2-\varsigma_\nu-\delta_{\zeta}+1 } 	\leq \begin{cases}
						O_p(\alpha_1^{1/2}) &\text{if }\rho_\nu/2+2 < \varsigma_\nu +\delta_{\zeta},\\
						O_p(\alpha_1^{1/2}) \max\{  \log \alpha_2 ^{-1},\alpha_2 ^{-(\rho_\nu/2-\varsigma_{\nu}-\delta_{\zeta}+2)/\rho_{\nu}}\} &\text{if }\rho_\nu/2+2 \geq  \varsigma_\nu +\delta_{\zeta}.
					\end{cases}
				\end{equation*}
				Lastly, we have  
				\begin{equation*}
					\|\mathcal A(\Pi_{\K_2}-I)\zeta\|^2  \leq \sum_{j=\K_2+1} ^\infty \|  \langle h_j,\zeta \rangle \mathcal Ah_j\|^2 \leq 
					O(\sum_{j=\K_2+1} ^\infty j^{-2\delta_{\zeta} -2\varsigma_{\nu}})
					\leq O_p(\alpha_2^{(2\varsigma_{\nu}+2\delta_{\zeta}-1)/\rho_{\nu}}), 
				\end{equation*}
				from which the desired result follows. \qed 

	\subsection{Supplementary results}
\subsubsection{Strong consistency of the F2SLSE} \label{sec:strong2sls}
As in the case of the FIVE, we need some additional assumptions: below, as we did for the sequence of $d_t$ in Section \ref{sec:strongfive}, we let $\{ M_j  \}_{j \geq 1}$ be the sequence of eigenvalues of the covariance of $z_t \otimes z_t - \mathcal C_{zz}$.
\begin{assumption} \label{assum2ad2}\begin{enumerate*}[(a)] \item $\sup_{t\geq 1} \|x_t\| \leq m_x$, $\sup_{t\geq 1} \|z_t\| \leq m_z$, and $\sup_{t\geq 1} \|u_t\| \leq m_u$ a.s., \item  ${M}_j^\dag \leq  a b^j$ for some $a>0$ and $0<b<1$,  \item   \label{assum2ad2.3} the sequence of $v_t$ is a martingale difference with respect to $\mathcal G_t = \sigma(\{z_s\}_{s \leq t+1}, \{v_s\}_{s \leq t})$.
	\end{enumerate*}
\end{assumption}
We may establish the following preliminary results: 
\begin{lemma} \label{lem1ad2} Under  Assumptions \ref{assum1}.\ref{assum1.2}, \ref{assum1a}\ref{assum1a2} and \ref{assum2ad2},  the following hold almost surely: \vspace{-0.7em}
	\begin{equation*}
		\|\widehat{\mathcal C}_{zz}-\mathcal C_{zz}\|_{\op}  =  O(T^{-1/2} \log^{3/2}T) \quad \text{and}\quad
		\|\widehat{\mathcal C}_{vz}\|_{\op} = O(T^{-1/2}\log^{1/2} T).  
	\end{equation*}
\end{lemma}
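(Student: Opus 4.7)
The plan is to establish the two bounds separately, with the first being a direct analogue of Lemma \ref{lem1ad} and the second exploiting the martingale difference structure imposed by Assumption \ref{assum2ad2}.\ref{assum2ad2.3}.

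For the first bound, I would apply the strategy used for Lemma \ref{lem1ad} verbatim, just replacing the sequence $\{d_t\}$ by $\{z_t \otimes z_t - \mathcal C_{zz}\}_{t\geq 1}$. Under Assumption \ref{assum1}.\ref{assum1.2}, $\{z_t\}_{t\geq 1}$ is stationary and geometrically strongly mixing in $\mathcal H$, so the tensorized sequence inherits both properties and lives in the separable Hilbert space $\mathcal S_{\mathcal H}$. By Assumption \ref{assum2ad2}, $\|z_t \otimes z_t\|_{\HS} \leq \|z_t\|^2 \leq m_z^2$ almost surely, so the sequence is uniformly bounded in $\mathcal S_{\mathcal H}$, and the eigenvalues of its covariance operator decay geometrically by hypothesis. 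Theorem 2.12 and Corollary 2.4 of \cite{Bosq2000} then yield
\begin{equation*}
\|\widehat{\mathcal C}_{zz} - \mathcal C_{zz}\|_{\op} \leq \|\widehat{\mathcal C}_{zz} - \mathcal C_{zz}\|_{\HS} = O(T^{-1/2}\log^{3/2} T) \text{ a.s.}
\end{equation*}

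For the second bound, the key observation is that $\{v_t \otimes z_t\}_{t \geq 1}$ is a martingale difference sequence in $\mathcal S_{\mathcal H}$ with respect to the filtration $\{\mathcal G_{t-1}\}_{t\geq 1}$. Indeed, by construction of $\mathcal G_{t-1} = \sigma(\{z_s\}_{s\leq t}, \{v_s\}_{s \leq t-1})$, we have $z_t \in \mathcal G_{t-1}$, and Assumption \ref{assum2ad2}.\ref{assum2ad2.3} gives $\mathbb E[v_t | \mathcal G_{t-1}] = 0$. Therefore, for any $\zeta \in \mathcal H$,
\begin{equation*}
\mathbb E[(v_t \otimes z_t)\zeta \mid \mathcal G_{t-1}] = \mathbb E[\langle v_t, \zeta\rangle z_t \mid \mathcal G_{t-1}] = \langle \mathbb E[v_t \mid \mathcal G_{t-1}], \zeta\rangle z_t = 0,
\end{equation*}
so $\mathbb E[v_t \otimes z_t \mid \mathcal G_{t-1}] = 0$. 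Moreover, the uniform boundedness assumptions give $\|v_t \otimes z_t\|_{\HS} \leq \|v_t\|\|z_t\| \leq (m_x + \|\mathcal B\|_{\op} m_z) m_z$ almost surely (since $v_t = x_t - \mathcal B z_t$ under \eqref{eq2sls2}). Applying Theorem 2.14 of \cite{Bosq2000} to this bounded martingale difference sequence in $\mathcal S_{\mathcal H}$, we obtain
\begin{equation*}
\|\widehat{\mathcal C}_{vz}\|_{\op} \leq \|\widehat{\mathcal C}_{vz}\|_{\HS} = O(T^{-1/2}\log^{1/2} T) \text{ a.s.}
\end{equation*}

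The proof should be essentially mechanical once the martingale difference verification in the second step is written out; the only mild subtlety is the careful bookkeeping of the filtration $\mathcal G_{t-1}$ to confirm that $z_t$ (but not $v_t$) is measurable with respect to it, which is exactly how Assumption \ref{assum2ad2}.\ref{assum2ad2.3} was stated. No new probabilistic ingredient beyond those already invoked for Lemma \ref{lem1ad} is needed.
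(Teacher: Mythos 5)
Your proposal is correct and follows essentially the same route as the paper's proof: Theorem 2.12 and Corollary 2.4 of Bosq (2000) for the mixing sequence $\{z_t\otimes z_t-\mathcal C_{zz}\}$, and Theorem 2.14 of Bosq (2000) for the bounded martingale difference sequence $\{v_t\otimes z_t\}$ after noting $\sup_t\|v_t\|\leq \sup_t\|x_t\|+\|\mathcal B\|_{\op}\sup_t\|z_t\|<\infty$. Your explicit verification that $z_t$ is $\mathcal G_{t-1}$-measurable and hence $\mathbb E[v_t\otimes z_t\mid \mathcal G_{t-1}]=0$ is a detail the paper leaves implicit, but the argument is the same.
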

\begin{proof}
	The first result follows from Theorem 2.12 and Corollary 2.4 of \cite{Bosq2000}. 	We then note that under Assumptions \ref{assum1}.\ref{assum1.2} and \ref{assum1a}\ref{assum1a2}, $\sup_{t\geq 1} \|v_t\| \leq \sup_{t\geq 1}\|x_t\| + \|\mathcal B\|_{\op}\sup_{t\geq 1}\|z_t\| < \infty$, and apply Theorem 2.14 of \cite{Bosq2000} to find that $\|\widehat{\mathcal C}_{zv}\|_{\op} = O(T^{-1/2}\log^{1/2}T)$ a.s. 
\end{proof}
In our proof of the strong consistency of the F2SLSE, what we want to have by employing Assumption \ref{assum2ad2}.\ref{assum2ad2.3} is the asymptotic order of $\|\widehat{\mathcal C}_{vz}\|_{\op}$ given in Lemma~\ref{lem1ad2}; in fact, our proof does not require any change once the following weaker condition holds:
\begin{equation} \label{alteq}
	\|\widehat{\mathcal C}_{vz}\|_{\op}  = O(T^{-1/2}\log T), \quad a.s.
\end{equation} 
Hence, in the sequel,  \eqref{alteq} may replace  Assumption \ref{assum2ad2}.\ref{assum2ad2.3}.  We now establish the strong consistency.
\setcounter{theorem}{4}
\begin{theorem}[continued] \label{thm2s} If Assumption \ref{assum2ad2} is additionally satisfied,  $(\sum_{j=1}^{\K_1} \mu_j \tau_{1,j})(\sum_{j=1}^{\K_2}\tau_{2,j}) = o(T^{1/2}\log^{-3/2}T)$ a.s., $(\sum_{j=\K_1+1}^{\infty}\mu_j)(\sum_{j=1}^{\K_2}\tau_{2,j} )  = o(1)$ a.s.,  $\abar_1^{-1}\abar_2\to0$, and  $\abar_1 T^{-1} \log T \to 0$,  then	$\|\widetilde{\mathcal A} - \mathcal A\|_{\op} \to 0$ a.s.
\end{theorem}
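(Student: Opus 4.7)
The plan is to mirror the proof of Theorem \ref{thm2w} but to upgrade every $O_p$ bound used there to the corresponding almost sure bound supplied by Lemma \ref{lem1ad2}, supplemented by an a.s.\ control on $\|\widehat{\mathcal C}_{uz}\|_{\op}$. Under Assumptions \ref{assum1}.\ref{assum1.3} and \ref{assum2ad2}, the sequence $\{z_t\otimes u_t\}_{t\geq 1}$ is a uniformly bounded (by $m_zm_u$) martingale difference in $\mathcal S_{\mathcal H}$ adapted to $\mathfrak F_{t-1}$, so Theorem 2.14 of \cite{Bosq2000} yields $\|\widehat{\mathcal C}_{uz}\|_{\op} = O(T^{-1/2}\log^{1/2}T)$ a.s. Combined with the bound \eqref{eqpf0001vv}, this gives
\[
\|\widetilde{\mathcal A}-\mathcal A\widetilde{\Pi}_{\K_2}\|_{\op} \leq O\bigl(\alpha_1^{-1/2}\alpha_2^{-1/2}T^{-1/2}\log^{1/2}T\bigr)= O\bigl((\alpha_1\alpha_2^{-1})^{1/2}(\alpha_1^{-1}T^{-1}\log T)^{1/2}\bigr),
\]
which is $o(1)$ a.s.\ by the hypotheses $\alpha_1\alpha_2^{-1}\to 0$ and $\alpha_1^{-1}T^{-1}\log T\to 0$.

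Next, for the regularization bias, I would reuse verbatim the decomposition $\|\mathcal A\widetilde{\Pi}_{\K_2}-\mathcal A\|_{\HS}^2 \leq \sum_{j=\K_2+1}^\infty \|\mathcal A h_j^s\|^2 + |\mathcal R|$ from the proof of Theorem \ref{thm2w}. The tail sum vanishes a.s.\ by Hilbert-Schmidtness of $\mathcal A$, once one notes that $\K_2\to\infty$ a.s., which follows from $\alpha_2\to0$ together with the a.s.\ convergence $\widehat\nu_j\to \nu_j$ implied by the a.s.\ operator-norm convergence of $\widehat{\mathcal Q}$ to $\mathcal Q$ (itself a direct consequence of Lemma \ref{lem1ad2} applied inside the bound~\eqref{eqReq00}). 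For $|\mathcal R|$, the chain of inequalities leading to \eqref{eqnormpf02adad} is purely algebraic and lifts to the a.s.\ statement
\[
|\mathcal R| \leq \Bigl\{O\Bigl(\sum_{j=1}^{\K_1}\mu_j\tau_{1,j}\Bigr)\|\widehat{\mathcal C}_{zz}-\mathcal C_{zz}\|_{\op} + \|\mathcal S\|_{\op} + \|\mathcal T\|_{\op}\Bigr\}\,O\Bigl(\sum_{j=1}^{\K_2}\tau_{2,j}\Bigr),
\]
where, by Lemma \ref{lem1ad2}, $\|\widehat{\mathcal C}_{zz}-\mathcal C_{zz}\|_{\op}=O(T^{-1/2}\log^{3/2}T)$ and $\|\mathcal S\|_{\op}\leq O(T^{-1/2}\log^{1/2}T)+O(\alpha_1^{-1/2}T^{-1}\log T)$, both a.s.

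Finally, I would verify term-by-term that the theorem's hypotheses absorb the logarithmic factors. The first inner term gives $(\sum_j\mu_j\tau_{1,j})(\sum_j\tau_{2,j})\cdot O(T^{-1/2}\log^{3/2}T) = o(1)$ a.s.\ by the hypothesis $(\sum_j\mu_j\tau_{1,j})(\sum_j\tau_{2,j})=o(T^{1/2}\log^{-3/2}T)$ a.s. Using $\sum_{j=1}^{\K_2}\tau_{2,j}\leq \mu_1^{-1}\tau_{1,1}^{-1}(\sum_j\mu_j\tau_{1,j})(\sum_j\tau_{2,j}) = o(T^{1/2}\log^{-3/2}T)$ a.s., the contribution $\|\mathcal S\|_{\op}\sum_j\tau_{2,j}$ is $o(\log^{-1}T)+o(\alpha_1^{-1/2}T^{-1/2}\log^{-1/2}T)=o(1)$ a.s., the latter piece because $\alpha_1^{-1}T^{-1}\log T\to 0$. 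The remaining contribution $\|\mathcal T\|_{\op}\sum_j\tau_{2,j}\leq \|\mathcal B\|_{\op}^2(\sum_{j=\K_1+1}^{\infty}\mu_j)(\sum_j\tau_{2,j})$ is $o(1)$ a.s.\ directly from the second hypothesis.

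The main obstacle I anticipate is bookkeeping rather than conceptual: one must check that every stochastic rate in the proof of Theorem \ref{thm2w} is matched by exactly the correct a.s.\ rate (with its extra $\log^{1/2}T$ or $\log^{3/2}T$ factor), and that the strengthened hypothesis $\alpha_1^{-1}T^{-1}\log T\to 0$ (versus $\alpha_1^{-1}T^{-1}\to 0$ in the weak-consistency statement) is precisely what absorbs the $\log T$ factor in $\|\widehat{\mathcal C}_{vz}\|_{\op}^2$. No new probabilistic input beyond Lemma \ref{lem1ad2} and the a.s.\ bound on $\|\widehat{\mathcal C}_{uz}\|_{\op}$ is required, and no additional eigenstructure assumption is needed because the algebraic decompositions are identical to those in the proof of Theorem \ref{thm2w}.
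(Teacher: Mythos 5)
Your argument is essentially the paper's own proof: the same a.s.\ martingale bound on $\|T^{-1}\sum_{t=1}^T z_t\otimes u_t\|_{\op}$ from Theorem 2.14 of \cite{Bosq2000}, the same reuse of the algebraic chain leading to \eqref{eqnormpf02adad}, the same inputs from Lemma~\ref{lem1ad2}, and the same term-by-term absorption of the logarithmic factors by the strengthened hypotheses. The one step that fails as written is your bound on the DR component. Equation \eqref{eqpf0001vv} gives $\|\widetilde{\mathcal A}-\mathcal A\widetilde{\Pi}_{\K_2}\|_{\op}\leq \|\widehat{\mathcal C}_{uz}\|_{\HS}\cdot O(\alpha_1^{-1/4}\alpha_2^{-1/4})$, not the $O(\alpha_1^{-1/2}\alpha_2^{-1/2})$ you wrote, and with your exponents the displayed identity is false: $(\alpha_1\alpha_2^{-1})^{1/2}(\alpha_1^{-1}T^{-1}\log T)^{1/2}$ equals $\alpha_2^{-1/2}T^{-1/2}\log^{1/2}T$, not $\alpha_1^{-1/2}\alpha_2^{-1/2}T^{-1/2}\log^{1/2}T$. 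Moreover the latter quantity need not vanish under the stated hypotheses, since $\alpha_1^{-1}\alpha_2^{-1}T^{-1}\log T\leq (\alpha_1\alpha_2^{-1})\,\alpha_1^{-2}T^{-1}\log T$ and only $\alpha_1^{-1}T^{-1}\log T\to 0$ is assumed, so $\alpha_1^{-2}T^{-1}\log T$ can diverge. With the correct exponents from \eqref{eqpf0001vv} the term becomes $(\alpha_1\alpha_2^{-1})^{1/4}(\alpha_1^{-1}T^{-1}\log T)^{1/2}=o(1)$ a.s., which is exactly the computation in the paper; once that transcription slip is fixed, the remainder of your proof goes through unchanged.
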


\begin{proof}
	From \eqref{eqpf0001vv}, we know that $	\|	\widetilde{\mathcal A}-\mathcal A \widetilde{\Pi}_{\K_2}\|_{\op} 	\leq  \alpha_1^{-1/4} \alpha_2^{-1/4}\|T^{-1}\sum_{t=1}^Tz_{t}  \otimes  u_t\|_{\op}$.	Moreover, $\{z_{t}  \otimes  u_t\}_{t\geq 1}$ is a martingale difference sequence satisfying  that $\sup_t\|z_t\otimes u_t\|_{\HS}<\infty$ a.s and $\sup_t\mathbb{E}\|z_t\otimes u_t\|_{\HS}^2 < \infty$. We therefore deduce from Theorem 2.14 of \cite{Bosq2000} that $	\|	\widetilde{\mathcal A}-\mathcal A \widetilde{\Pi}_{\K_2}\|_{\op} = O(\alpha_1^{-1/4}\alpha_2^{-1/4}T^{-1/2}\log^{1/2}T)$ a.s., and hence, $	\|\widetilde{\mathcal A}-\mathcal A \widetilde{\Pi}_{\K_2}\|_{\op} =o(1)$ a.s.  
	Note also that $	\|\mathcal A \widetilde{\Pi}_{\K_2} -\mathcal A \|_{\op}^2 \leq  \sum_{j= \K_2+1}^{\infty} \|\mathcal A {h}^s_j\|^2 + |\mathcal R|$,
	where $h^s_j$ and $\mathcal R$ are defined as in our proof of Theorem \ref{thm2w}. Since $\mathcal A$ is a Hilbert-Schmidt operator, we have $\sum_{j= \K_2+1}^{\infty} \|\mathcal A {h}^s_j\|^2 = o(1)$ a.s. It thus only remains to show that $|\mathcal R|=o(1)$ a.s. We know  from Lemma \ref{lem1ad2}  that that $\|\widehat{\mathcal C}_{zz}-\mathcal C_{zz}\|_{\op} =  O(T^{-1/2} \log^{3/2}T)$ a.s., and hence 
	\begin{equation}
		O\left(\sum_{j=1}^{\K_1} \mu_j \tau_{1,j}\right)   \sum_{j=1}^{\K_2} \tau_{2,j} \|\widehat{\mathcal C}_{zz}-\mathcal C_{zz}\|_{\op} = o(1), \quad a.s. \label{eq001a3}
	\end{equation}
	As shown in Lemma \ref{lem1ad2}, we have $\|\widehat{\mathcal C}_{zv}\|_{\op} = O(T^{-1/2}\log^{1/2}T)$, a.s., and also find that
	\begin{align}
		\|\mathcal S\|_{\op} \sum_{j=1}^{\K_2}\tau_{2,j}& \leq (O(T^{-1/2} \log^{1/2}T) + O( \alpha_1^{-1/2}T^{-1} \log T))  \sum_{j=1}^{\K_2}\tau_{2,j}  \notag \\ &= o(\log^{-1}T) + o(\alpha_1^{-1/2}T^{-1/2}\log^{-1/2}T) = o(1), \quad a.s. \label{eq001a4a}
	\end{align}
	by the definition of $\mathcal S$. Moreover, 
	\begin{equation}
		\|\mathcal T\|_{\op}\sum_{j=1}^{\K_2} \tau_{2,j}\leq O\left(\sum_{j=\K_1 +1}^\infty \mu_j\right)\sum_{j=1}^{\K_2} \tau_{2,j} = o(1), \quad a.s. \label{eq001as}
	\end{equation}	
	From \eqref{eqnormpf02adad},  \eqref{eq001a3},  \eqref{eq001a4a}, and \eqref{eq001as}, it immediately follows that  $|\mathcal R| = o(1)$ a.s. 
\end{proof}		

\subsubsection{Refinements of the general asymptotic results for the F2SLSE}\label{sec_thm_conti2}
\setcounter{theorem}{7}
We in this section provide a complementaty result to Theorem \ref{thm4:convrate} for the case where \(\rho_\nu/2+2\geq\varsigma_\nu+\delta_{\zeta}\). Specifically, the following can be shown:
\begin{theorem}[Continued]\label{thm4:convrateadd}
Let everything as in Theorem \ref{thm4:convrate} but with $\rho_\nu/2+2\geq\varsigma_\nu+\delta_{\zeta}$.
Then, Theorem \ref{thm2wa} holds and
\begin{align*}
&\Vert \mathcal A(\widetilde \Pi_{\K_2}-\Pi_{\K_2}) \zeta \Vert   =
O_p(\abar_1^{-1/2} \max\{  \log \abar_2, \abar_2 ^{(\rho_\nu/2-\varsigma_{\nu}-\delta_{\zeta}+2)/\rho_{\nu}} \}),\nonumber\\
&	\|\mathcal A(\Pi_{\K_2} - \mathcal I)\zeta\| =O_p(\abar_2^{(1/2-\varsigma_{\nu}-\delta_{\zeta})/\rho_{\nu}}).  
\end{align*}
\end{theorem}
Our proof of the above result is contained in the proof of Theorem \ref{thm4:convrate} given in Section \ref{sec:f2sls0proof}, and hence omitted.

\section{Appendix to Section \ref{sec:sim} on ``Numerical studies"}\label{sec_app_num0}
\subsection{Appendix to Section \ref{subsub: exp1.1}}\label{sec_app_num1}
We show that the simulation DGP considered in Section \ref{subsub: exp1.1} satisfies the employed assumptions.  From our construction of the variables ($\{y_t,x_t,z_t,u_t,v_t\}_{t\geq 1}$) and  operators ($\mathcal A$ and $\mathcal B$) in Section \ref{subsub: exp1.1} and from Theorem 2.7 of \cite{Bosq2000}, it is not difficult to show that Assumptions~\ref{assum1} and \ref{assum1a} are satisfied. Moreover, this setup simplifies the representation of the eigenvalues and eigenfunctions associated with the FIVE and F2SLSE.  Specifically, we have $\lambda_j ^2  = b_j^2 \mu_j ^2$, $\nu_j^2 = b_j^4\mu_j^2$, $f_j  = \xi_j =h_j = g_j$,   $\langle z_t,g_j \rangle \sim N(0,\mu_j^2)$  and   $\langle x_t,g_j \rangle = \langle \vartheta \mathcal Bz_t,g_j \rangle + \langle v_t,g_j \rangle \sim N(0,(\vartheta^2 b_j^2+1) \mu_j^2)$. 
Since $\mathbb{E}[\Vert x_t \Vert ^4 ] < \infty$ and $ \mathbb{E}[\Vert z_t \Vert ^4 ] < \infty$,  we find the following: for some $c_{\circ}>0$, 
\begin{align}
&\mathbb E[ \Vert \langle z_t , \xi_j\rangle x_t \Vert ^2 ] 
\leq \mathbb E[ |\langle z_t , \xi_j\rangle|^4]^{1/2} \mathbb{E}[\Vert x_t \Vert ^4 ]^{1/2} \leq  c_{\circ} \lambda_j^2, \label{eqaddpf01}\\
&\mathbb E[ \Vert \langle x_t , \xi_j\rangle z_t \Vert ^2 ] 
\leq \mathbb E[ |\langle x_t , \xi_j\rangle|^4]^{1/2} \mathbb{E}[\Vert z_t \Vert ^4 ]^{1/2} \leq  c_{\circ} \lambda_j^2. \label{eqaddpf02}
\end{align} Moreover, we have $$|\langle \mathcal Af_j,\xi_{\ell} \rangle| = 2 j^{-n_a} {1}\{j=\ell\} \leq 2 j^{-n_{a}/2}\ell^{-n_{a}/2}.$$
From these results and the fact that $\{x_t,z_t\}_{t\geq 1}$ is an iid sequence,  Assumptions~\ref{assum1eigen}, \ref{assum1eigen2}  and  \ref{assumconvrate} are also satisfied. It is obvious that Assumption \ref{assconvratetsls}.\ref{assconvratetsls.1}-\ref{assconvratetsls.4} from our construction of the variables and operators, and  Assumption \ref{assconvratetsls}.\ref{assconvratetsls.70} can be shown to hold from the fact that $h_j = g_j$. We next note that 
$$|\langle   h_j ,\mathcal Bg_{\ell} \rangle|  \leq b_j 1\{j=\ell\} \leq j^{-n_b/2} \ell^{-n_b/2},$$
and, in the considered setup, $\rho_{\nu} = 4n_b  + 4$. and thus $n_b/2 \leq \rho_{\nu}/4 + 1/2$.  Thus, Assumption \ref{assconvratetsls}.\ref{assconvratetsls.7} is satisfied. Lastly, it can also be shown that Assumption \ref{assconvratetsls}.\ref{assconvratetsls.5} holds from similar arguments used to show \eqref{eqaddpf01} and \eqref{eqaddpf02} and the facts that $\{z_t\}_{t\geq 1}$ is an iid sequence and $\mathcal C_{xz}h_j = \mathcal C_{xz}g_j = \lambda_j$.

				\subsection{Appendix to Section \ref{sec:sim2}}\label{sec:simul}
				In Section~\ref{sec:sim2}, we compute the local likelihood estimate of $\log p_t^\circ$ using random samples $\{s_{i,t}\}_{i=1}^{n}$ drawn from the distribution $p_t ^\circ$. Consider the  following log-likelihood:
				\begin{equation}
					l(\{s_{i,t}\}_{i=1}^{n}) = \sum_{i=1}^n \log p_t(s_{i,t}) - n \left(\int p_t(v) dv - 1 \right). \label{locallike}
				\end{equation} 
				Under some local smoothness assumptions \citep{loader1996}, we can obtain a localized version of \eqref{locallike} by approximating   $\log p_t(s)$ using polynomial functions, as follows.
				\begin{equation} \label{locallf}
					l(\{s_{i,t}\}_{i=1}^{n})(s) = \sum_{i=1}^n  \, w\left(\frac{s_{i,t}-s}{h_s}\right) H(s_{i,t}-s ; \beta_t) - n \int  w \left(\frac{v-s}{h_s}\right) \exp(H(v-s ; \beta_t)) dv,
				\end{equation} 
				where  $w(\cdot)$ is a suitable weight function,  $h_s$ is a bandwidth, and $H(v ; \beta_t)$ is polynomial in $v$ with coefficients $\beta_t$, i.e., $H(v ; \beta_t)= \sum_{j=0} ^q \beta_{j,t} v^{j}$ for some nonnegative integer $q$.   For a fixed $s \in [0,1]$, let $\hat{\beta}_{t}$ be the maximizer of \eqref{locallf}, then the local likelihood log-density estimate is given by $\widehat{\log p}_t (s) = \hat{\beta}_{0,t}$. By repeating this procedure for a find grid of points and interpolating the results as described by \citet[Chapter 12]{Loader2006}, we can obtain $\widehat{\log p}_t$. In our simulation experiment in Section \ref{sec:sim2},	$w(\cdot)$ is set to the tricube kernel that is used in many examples given by \cite{Loader2006}, $q=1$, and $h_s$ is set to the nearest neighbor bandwidth covering 33.3\% of observations \citep[Section 2.2.1]{Loader2006}.  
				
				\subsection{Appendix to Section \ref{sec:emp}}\label{sec:app:emp}
We here define a measure of worker's skill similar to \citepos{Peri2009} measure of occupation-specific relative provision of communication versus manual skills. Specifically, we use the O*Net ability survey data,\footnote{Version 24, provided by the US Department of Labor} in which the importance of each of 52 distinct abilities required by each occupation is quantified. Using the data, we construct the communication skill measure $(c_j ^\circ)$ and the manual skill measure ($m_j ^\circ$) for each occupation $j$, where the definitions of communication and manual skills are equivalent to the extended definitions of those in Table~A.1 of \cite{Peri2009}. We merge the values of $c_j ^\circ$ and $m_j ^\circ$ to individuals in the 2000 census using the monthly US Current Population Survey (CPS) data. Then, as done similarly by \cite{Peri2009}, the measure of occupation-specific skill intensity ($s_j$) is obtained by converting the value of $c_{j}^\circ / m_j ^\circ$ to its percentile score ($s_{jt} ^\circ$) for each month in 2000 and averaging the monthly scores for each occupation $j$, i.e., $s_j = 12^{-1}\sum_{t=1} ^{12} s_{jt} ^\circ$.  The number of distinct skill levels, $s_j$, is 223, and, by construction, each occupation is uniquely identified by the skill score $s_j\in[0,1]$.\footnote{{A similar rescaling procedure is taken by \cite{Peri2009}. Specifically, they first converted the values of $c_j ^\circ$ and $m_j^\circ$ into their percentile scores in 2000 for each $j$, and then their measure of relative provision of communication versus manual skills is given by the ratio of the percentile scores for each $j$. In contrast, we first take the ratio of $c_j ^\circ$ and $m_j ^\circ$  and then convert the ratio into the percentile score for each $j$. This is to ensure that our measure of relative communication skill provision takes values in [0,1].}} In Table~\ref{tab.cmscore} in Appendix, we report occupations with the lowest and highest scores of relative communication skill provision.

In addition to estimation results reported in Section~\ref{sec:emp}, we examine the null hypothesis $H_0: \mathcal A^\ast \psi = 0$ using the significance test given in Section~\ref{sec:hypo}. Note that $\psi$ can be any arbitrary element of $\mathcal H$, but here we consider only a few cases, where $\psi\in \{\zeta_{j(3)}\}_{j=1} ^3$, for the purpose of illustration. Then, by testing $H_0$, we can examine whether  the average of changes in the wages of native workers in the  occupations of which $s \in [(j-1)/3,j/3]$ is affected by an inflow of immigrants. Similarly to the simulation experiments in Section~\ref{sec:sim}, we set $D$ to $\lceil T^{1/3} \rceil$ and compute the critical values based on 10,000 Monte Carlo simulations. The testing results are reported in Table~\ref{tab.emp2}. In the table, we found that an inflow of immigrants significantly affects the wages of native workers who are in occupations intensive in either manual or communication skills. 

\begin{table}[h!]
\caption{Significance testing results}\label{tab.emp2}
\vspace{-.5em}	\rule{1\textwidth}{.5pt}\vspace{.1em} 
\begin{tabular*}{1\textwidth}{@{\extracolsep{\fill}}lclll }  
&$\psi$&$\zeta_{1(3)}$&$\zeta_{2(3)}$&$\zeta_{3(3)}$   \\  \midrule	 
&Test statistic&0.00050$^{*}$ & 0.00038$^{}$ & 0.00084$^{**}$ \\  
\end{tabular*}
\rule{1\textwidth}{1pt} 
{\footnotesize Notes: 
{We use $^{\ast}$ and $^{\ast\ast}$ to denote rejection at 10\% and 5\% significance levels, respectively. } }
\end{table}

				
				\subsection{Additional tables}\label{sec:addtab}
			
\begin{figure}[H]
\centering	\caption{Simulation results for Experiment 1: boxplots of the empirical MSEs ($T=250$)}\label{fig:box} 
	\begin{subfigure}{.4\textwidth}\subcaption{$(n_{a}, n_b ) =(3,3/4)$ }
	\includegraphics[width=\textwidth ]{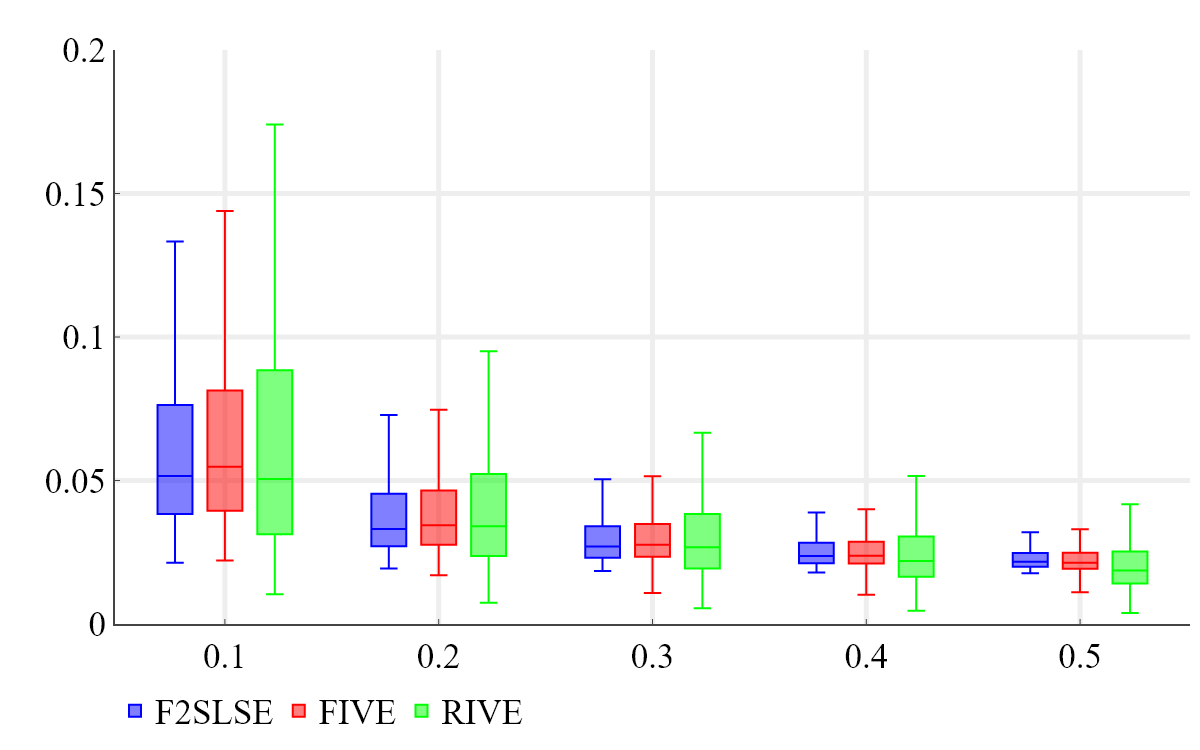}
	\end{subfigure}
\begin{subfigure}{.4\textwidth}\subcaption{$(n_{a}, n_b ) =(3,1.5)$ }
	\includegraphics[width=\textwidth ]{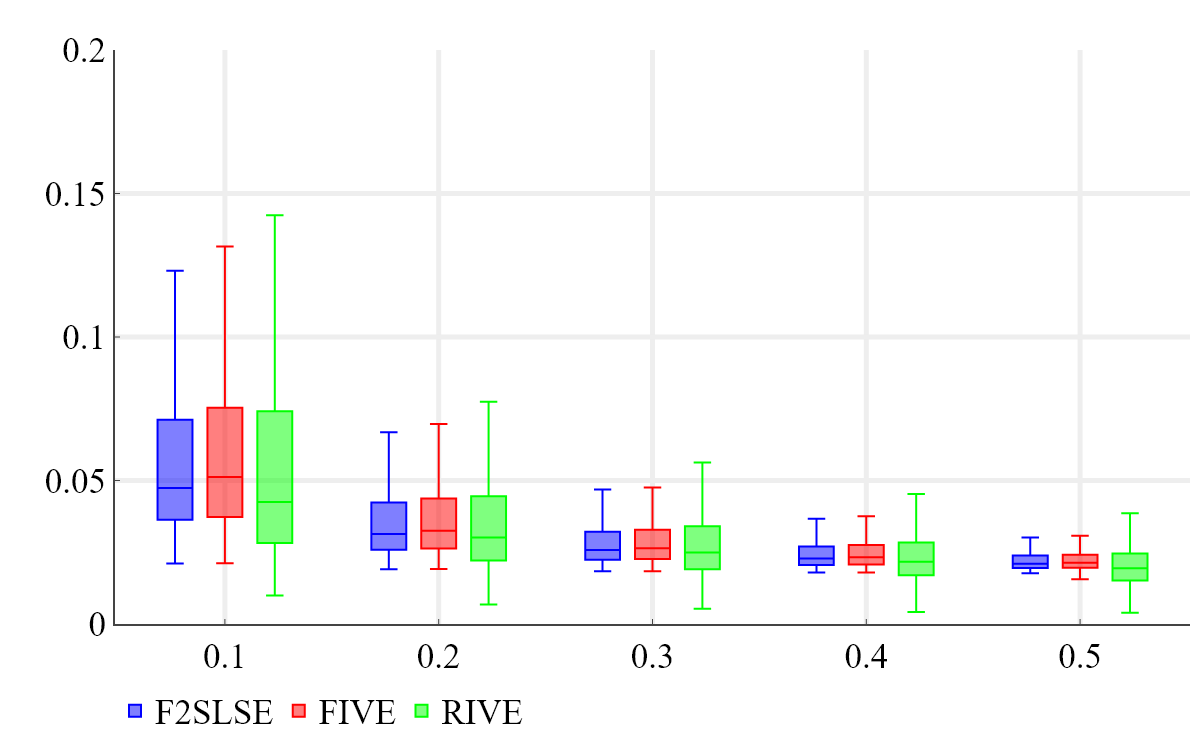}
\end{subfigure} \\
	\begin{subfigure}{.4\textwidth}\subcaption{$(n_{a}, n_b ) =(5,3/4)$ }
	\includegraphics[width=\textwidth ]{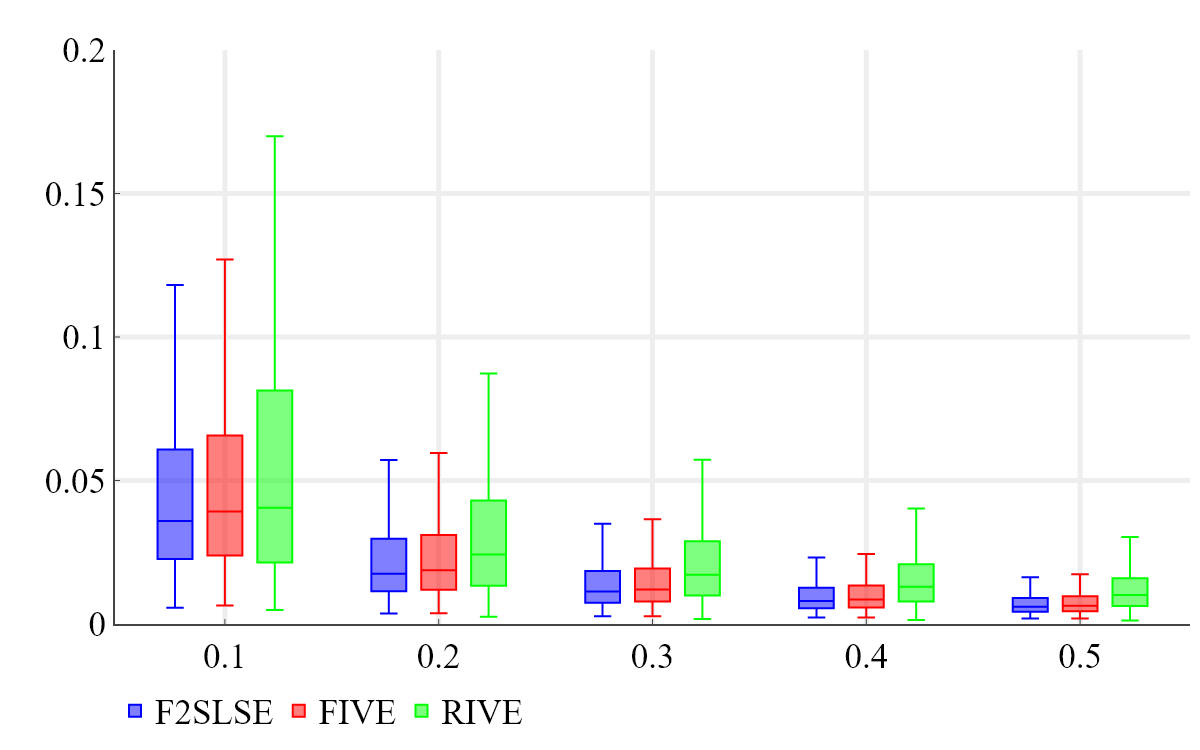}
	\end{subfigure}
\begin{subfigure}{.4\textwidth}\subcaption{$(n_{a}, n_b ) =(5,1.5)$ }
	\includegraphics[width=\textwidth ]{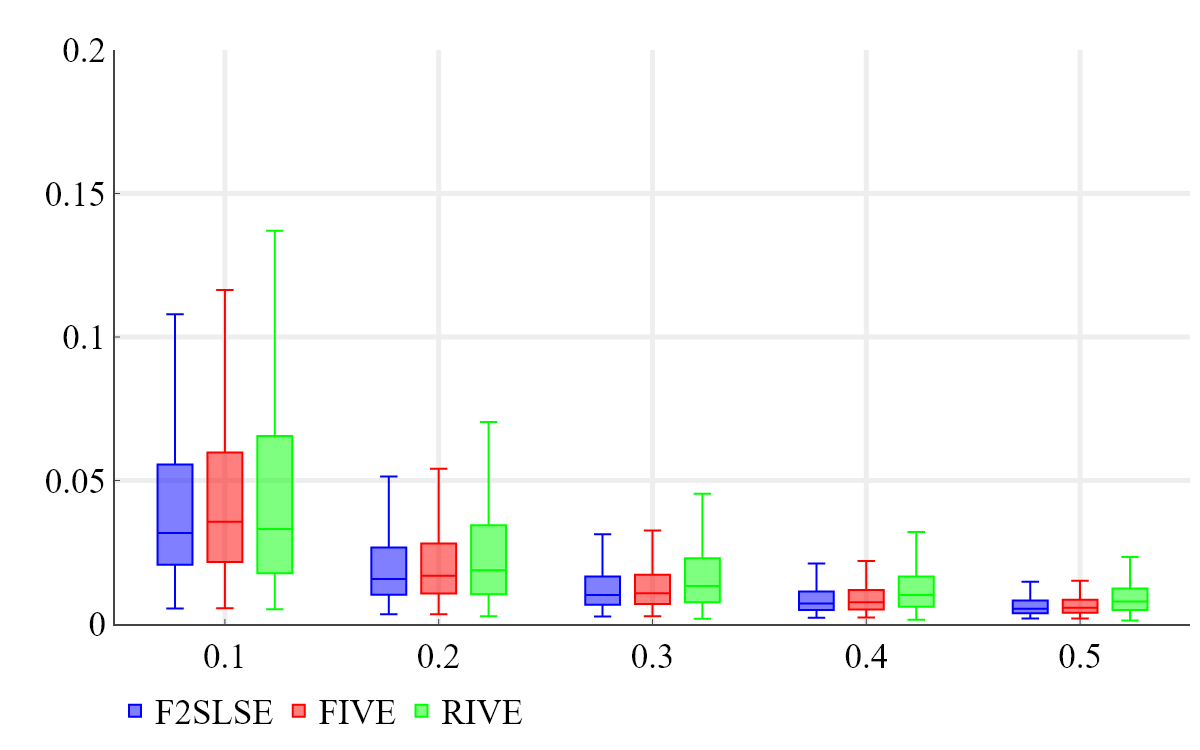}
\end{subfigure} \\ 
\footnotesize{Notes: Boxplots of the empirical MSEs of the FIVE (red), the F2SLSE (blue) and the RIVE (green) are reported for each value of the first-stage functional coefficient of determinations $\mathtt{r}^2 \in \{0.1,0.2,0.3,0.4,0.5\}$. }
\end{figure}

		\begin{table}[H]
		\caption{Simulation results for Experiment 3: empirical MSEs ($a_1,a_2\geq 0.6$)}\label{tab2aa}
		\vspace{-.5em}	\rule{1\textwidth}{.5pt}\vspace{.5em}
		\begin{tabular*}{1\textwidth}{@{\extracolsep{\fill}}llcccccccc}
			&&\multicolumn{4}{c}{Sparse Design}&\multicolumn{4}{c}{Exponential Design}\\\cmidrule{3-6}\cmidrule{7-10}
			&$n$    &\multicolumn{2}{c}{$100$}&\multicolumn{2}{c}{$ 150$}    &\multicolumn{2}{c}{$  100$}&\multicolumn{2}{c}{$150$}\\\cmidrule{3-4}\cmidrule{5-6}\cmidrule{7-8}\cmidrule{9-10}
			&T & 200 & 500 & 200 & 500 & 200 & 500 & 200 & 500 \\ \midrule
			\multirow{4}{*}{Loader's}
			& FIVE & ${0.185}$ & ${0.152}$ & ${0.174}$ & ${0.149}$ & ${0.314}$ & ${0.194}$ & ${0.255}$ & ${0.173}$ \\ 
			& F2SLSE & ${0.188}$ & ${0.151}$ & ${0.174}$ & ${0.148}$ & ${0.313}$ & ${0.188}$ & ${0.256}$ & ${0.170}$ \\ 
			& RIVE & ${0.194}$ & ${0.154}$ & ${0.179}$ & ${0.150}$ & ${0.342}$ & ${0.205}$ & ${0.276}$ & ${0.174}$ \\ 
			& FLSE & ${0.248}$ & ${0.221}$ & ${0.206}$ & ${0.183}$ & ${0.419}$ & ${0.347}$ & ${0.322}$ & ${0.260}$ \\ 
			\midrule
			\multirow{4}{*}{Silverman's}
			& FIVE & ${0.273}$ & ${0.231}$ & ${0.222}$ & ${0.194}$ & ${0.339}$ & ${0.242}$ & ${0.271}$ & ${0.207}$ \\ 
			& F2SLSE & ${0.275}$ & ${0.228}$ & ${0.225}$ & ${0.193}$ & ${0.352}$ & ${0.234}$ & ${0.277}$ & ${0.201}$ \\ 
			& RIVE & ${0.271}$ & ${0.226}$ & ${0.224}$ & ${0.192}$ & ${0.342}$ & ${0.237}$ & ${0.278}$ & ${0.199}$ \\ 
			& FLSE & ${0.351}$ & ${0.318}$ & ${0.267}$ & ${0.240}$ & ${0.422}$ & ${0.350}$ & ${0.319}$ & ${0.257}$ \\ 
		\end{tabular*}
		\rule{1\textwidth}{1pt}
		{\footnotesize Notes: Based on 1,000  replications. Each cell reports the empirical mean squared error (MSE) of the four considered estimators: FIVE, F2SLSE, \citepos{Benatia2017} RIVE and \citepos{Park2012} FLSE.}
	\end{table}	
\begin{table}[H]
\caption{Occupations with the lowest and highest communication skill intensity in 2000 (denoted $s$)}\label{tab.cmscore}
\vspace{-.5em}	\rule{1\textwidth}{.5pt} \vspace{.5em}
\begin{tabular*}{1\textwidth}{@{\extracolsep{\fill}}lc} 
	\multicolumn{1}{l}{\sl{Four occupations with the lowest $s$}}&$s$\\ \midrule   
	Pressing machine operators (clothing)&0.0010\\
	Construction Trades & 0.0035\\
	Machine operators & 0.0235 \\
	Garbage and recyclable material collectors& 0.0242\\
	\midrule 
	\multicolumn{1}{l}{\sl{Four occupations with the highest $s$}}&$s$\\ \midrule   
	Chief executives and public administrators&0.9926\\
	Operations and systems researchers and analysts& 0.9954\\
	Management Analysts&0.9985\\
	Economists, market researchers, and survey researchers& 1.0000	\\ \bottomrule		 
\end{tabular*}
\end{table} 

\section{Computation}\label{sec_comp}
We here only describe how to compute the FIVE $\widehat{\mathcal A}$ from observations $\{y_t,x_t,z_t\}_{t=1}^T$; in fact, computation of the F2SLSE $\widetilde{\mathcal A}$ can be done with only slight modifications and thus omitted.  Specifically, for each $T$, $\widehat{\mathcal A}$ is a finite rank operator acting on the Hilbert space of square-integrable functions defined on $[0,1]$, so it allows the following representation \citep*[Chapter 8]{Gohberg2013}: 
\begin{equation} \label{eqcompute}
	\widehat{\mathcal A} \upsilon(s_1) =  \int_{0}^1 \hat{\kappa}(s_1,s_2) \upsilon(s_2) ds_2, \quad s_1,s_2\in [0,1], 
\end{equation}
where $\upsilon$ is any arbitrary random or nonrandom element  $\upsilon$ taking values in $\mathcal H$ (for example, $\upsilon$ can be $x_t$ or any fixed element in $\mathcal H$). Therefore, computation of $\widehat{\mathcal A}$ reduces to obtaining an explicit formula for the integral kernel $\hat{\kappa}(s_1,s_2)$ for $s_1,s_2 \in [0,1]$. We here present a way to compute this integral kernel from  the eigenelements of $\widehat {\mathcal C}_{xz}^\ast \widehat {\mathcal C}_{xz}$ and $\widehat {\mathcal C}_{xz} \widehat {\mathcal C}_{xz}^\ast$, which can be obtained by the standard functional principal component method; see e.g., \citet[Chapter 8]{Ramsay2005} and \citet[Chapter 3]{HK2012}.

Let $\{ \hat \xi_j  \}_{j \geq 1}$ be the collection of the eigenfunctions of $\widehat {\mathcal C}_{xz} \widehat {\mathcal C}_{xz}^\ast$, and then note that $\widehat{\mathcal C}_{xz} \hat{f}_j = \hat{\lambda}_j \hat \xi_j$ and $\widehat{\mathcal C}_{xz} ^\ast \hat {\xi}_j = \hat{\lambda}_j \hat{f}_j$,  
where $\hat{\lambda}_j =\langle \widehat{\mathcal C}_{xz} \hat f_j , \hat{\xi}_j \rangle $ \citep[Section 4.3]{Bosq2000}. Then,  $\widehat{\mathcal A}$ is given by 
\begin{equation}
	\widehat{\mathcal A} = \widehat{\mathcal C}_{yz}^\ast  \widehat{\mathcal C}_{xz} (\widehat{\mathcal C}_{xz}^\ast  \widehat{\mathcal C}_{xz})_{\K}^{-1} = \frac{1}{T}\sum_{t=1}^T \sum_{j=1}^{\K} \hat{\lambda}_j^{-1}   \langle\hat {\xi}_j , z_t \rangle \hat{f}_j \otimes y_t. \label{eq:int:kernel0}
\end{equation}
It is  quite  obvious from  \eqref{eq:int:kernel0} that the integral kernel $\hat{\kappa}(s_1,s_2)$ for $s_1,s_2 \in [0,1]$ is given by $T^{-1}\sum_{t=1} ^T\sum_{j=1} ^{\K} \hat{\lambda}_j^{-1}   \langle\hat {\xi}_j , z_t \rangle \hat f_j (s_1) y_t(s_2)$, and this can be equivalently expressed as follows,\begin{equation}
	\hat{\kappa}(s_1,s_2)   =   T^{-1}\hat F_{\K} (s_1)'\text{diag}(\hat{\lambda}_j ^{-1}) \hat G_{\K} Y_T(s_2),\label{eq:int:kernel}
\end{equation}
where $Y_T(s) = (y_1 (s), \ldots , y_T (s))'$, $\hat F_{\K} (s) = (\hat f_1 (s) , \ldots, \hat f_{\K} (s))'$ for $s \in [0,1]$, and $\hat G_{\K}$ is the $\K \times T$ matrix whose $(i,t)$-th element is given by $ \langle\hat{\xi}_i , z_t\rangle$. Thus, for each choice of $s_1$ and $s_2$, $\hat{\kappa}(s_1,s_2)$ can be obtained by simple matrix multiplications.	In view of \eqref{eq:int:kernel0} and \eqref{eq:int:kernel},  $\widehat{\mathcal A} \upsilon$, for any arbitrary random or nonrandom element  $\upsilon$ taking values in $\mathcal H$, is computed as follows:
$	\widehat{\mathcal A}  \upsilon(s)  = T^{-1}\hat F_{\K}({\upsilon})' \text{diag}(\hat{\lambda}_j ^{-1}) \hat G_{\K} Y_T(s),$	where $s \in [0,1]$ and $ \hat F_{\K}({\upsilon})' = (\langle \hat{f}_1 , \upsilon \rangle , \ldots,\langle \hat{f}_{\K}, \upsilon\rangle  )'$.

Computing the FIVE requires choosing $\alpha = \abar^{-1}$ defined in  \eqref{eqdef2}. The eigenvalue $\hat{\lambda}_j^2$ of $\widehat{\mathcal C}_{xz}^\ast \widehat{\mathcal C}_{xz}$ depends on the scales of $x_t$ and $z_t$, and this needs to be considered in choosing $\alpha$. In practice, it thus may be of interest to have a scale-invariant choice of $\alpha$. This can be done by computing the contribution of each of the eigenvalues to a magnitude of the operator $\widehat{\mathcal C}_{xz}$  and then viewing $\alpha$ as the threshold parameter for such computed contributions. We illustrate an easy-to-implement way here. Define $\hat{r}_k = \widehat{\lambda}_k^2 \Big/ \sum_{j=1}^\infty \widehat{\lambda}_j^2$.
Since $\|\widehat{\mathcal C}_{xz}\|_{\HS}^2=\sum_{j=1}^T \widehat{\lambda}_j^2$, the ratio $\hat{r}_k$ computes the contribution of the $k$-th eigenvalue to the squared Hilbert-Schmidt norm of $\widehat{\mathcal C}_{xz}$. Of course, the above quantity does not depend on the scales of $x_t$ and $z_t$, and hence, a scale-invariant version of \eqref{eqdef2} may be written as 
\begin{equation}
	\K = \#\{j  :	\hat{r}_k > \alpha_T\},  \label{eqdef2a}
\end{equation}
where $\alpha_T \in (0,1)$ depends only on $T$, and shrinks to zero as $T$ increases.  An alternative way  is to directly choose $\K$, instead of $\alpha$, as the minimal number of the eigenvalues whose sum exceeds a pre-specified proportion of $\|\widehat{\mathcal C}_{xz}\|_{\HS}^2$. (Of course, even in this case, it is more natural to understand $\K$ as a random variable.) To be more specific, let
\begin{equation}
	\widehat{\mathrm{R}}_{k} = {\sum_{j=1}^k \widehat{\lambda}_j^2  \Big/
		\sum_{j=1}^\infty \widehat{\lambda}_j^2} \quad\text{and}\quad\K = \min_k \{k : \widehat{\mathrm{R}}_{k} >  (1-\alpha_T)\}, \label{eqdetermine1} 
\end{equation}
where $\alpha_T \in (0,1)$ is similarly defined. This choice is obviously scale-invariant as well. 

It is also possible to pursue a data-driven selection of $\alpha_T$ in \eqref{eqdef2a} and \eqref{eqdetermine1}, such as a cross-validation approach, proposed by \cite{Benatia2017} developed in an iid setting. Such a procedure may be adapted for dependent non iid data, but this will not be further studied in this paper.  

\section{FIVE with a general weighting operator} \label{sec_five_weight}
The theoretical results given for the FIVE in Section \ref{sec:estimators} can be extended to the case involving a general weighting operator as in Euclidean space setting. 
 Let $\mathcal W$ be a self-adjoint positive definite operator, and define $$\overline{z}_t = {\mathcal W}^{1/2} z_t.$$ If $\ker \mathcal C_{xz} = \{0\}$, due to the positive definiteness of $\mathcal W$, $\mathcal A$ is uniquely identified from the equation: $\mathcal C_{y \bar{z}}^\ast \mathcal C_{y \bar{z}} = \mathcal A  \mathcal C_{y \bar{z}}^\ast \mathcal C_{y \bar{z}}$. A natural extension of the FIVE can be defined as follows:
\begin{equation} \label{appendeqadd01}
	\widehat{\mathcal A}({\mathcal W}) = \widehat{\mathcal C}_{y\overline{z}}^\ast \widehat{\mathcal C}_{x\overline{z}} \left(\widehat{\mathcal C}_{x\overline{z}}^\ast\widehat{\mathcal C}_{x\overline{z}}\right)_{\K}^{-1}  = \widehat{\mathcal C}_{y{z}}^\ast {\mathcal W}\widehat{\mathcal C}_{x{z}} \left(\widehat{\mathcal C}_{x{z}}^\ast {\mathcal W} \widehat{\mathcal C}_{x{z}}\right)^{-1}_{\K}. 
\end{equation}
In this case, the theoretical results given in Section \ref{sec:estimators} can easily be extended by an obvious conversion of the assumptions given for $z_t$ into those for $\overline{z}_t$. This can further be extended to the case with a possibly random weighting operator $\widehat{\mathcal W}$ satisfying, which will be discussed in this section. 

It will be useful to define additional notation. We let 
\begin{equation}
\breve{z}_t = \widehat{\mathcal W}^{1/2} z_t,
\end{equation}  
and let 
$\widehat{\mathcal A}(\widehat{\mathcal W})$ (resp.\ $\K$) be defined by replacing $\mathcal W$ (resp.\ $\lambda_j$) with $\widehat{\mathcal W}$ in \eqref{appendeqadd01} (the $j$-th ordered eigenvalue $\breve{\lambda}_j$ of $\widehat{\mathcal C}_{x\breve{z}}^\ast \widehat{\mathcal C}_{x\breve{z}} =\widehat{\mathcal C}_{x{z}}^\ast \widehat{\mathcal W} \widehat{\mathcal C}_{x{z}}$ in \eqref{eqdef2}).  Let $\mathfrak L_{+}$ be the collection of self-adjoint positive definite operators in $\mathcal L_{\mathcal H}$, and write the spectral representations of $\mathcal C_{x\bar{z}}^\ast\mathcal C_{x\bar{z}}$ and $\mathcal C_{x\bar{z}}\mathcal C_{x\bar{z}}^\ast$ as  
\begin{equation}
\mathcal C_{x\bar{z}}^\ast\mathcal C_{x\bar{z}} = \sum_{j=1}^\infty \bar{\lambda}_j^2 \bar{f}_j \otimes \bar{f}_j \quad \text{and} \quad   \mathcal C_{x\bar{z}}\mathcal C_{x\bar{z}}^\ast = \sum_{j=1}^\infty \bar{\lambda}_j^2 \bar{\xi}_j \otimes \bar{\xi}_j,
\end{equation}
respectively. We also let $\bar{\Pi}_{\K}$ denote the projection on the span of the first $\K$ eigenvectors of $\widehat{\mathcal C}_{x\breve{z}}^\ast \widehat{\mathcal C}_{x\breve{z}}$. 
\begin{assumption}\label{assumadd}
 $\widehat{\mathcal W} \in \mathfrak L_{+}$ a.s.\ and $\|\widehat{\mathcal W} - \mathcal W\|_{\op} = O_p(T^{-1/2})$ for some fixed $\mathcal W \in \mathfrak L_{+}$. 
\end{assumption}
Note that the F2SLSE discussed in Section \ref{sec:f2sls00} does not satisfy Assumption \ref{assumadd} as \((\widehat{\mathcal C}_{zz})_{\K_1}^{-1}\) diverges in operator norm. 

\begin{theo}\label{thm2w1addone}
	Suppose that Assumption \ref{assumadd} holds. 
\begin{enumerate}[(i)] 
		\item\label{thm2w1addonea} 	If the assumptions in Theorem \ref{thm2w1} when $\lambda_j$ is replaced with $\bar{\lambda}_j$ are also satisfied,  
	\begin{equation*}
		\|\widehat{\mathcal A}(\widehat{\mathcal W}) - \mathcal A\bar{\Pi}_{\K}\|_{\op}^2 =O_p(T^{-1}\abar)\quad \text{and} \quad 	\|\mathcal A (\mathcal I-\bar{\Pi}_{\K})\|_{\op}^2 =o_p(1).	
	\end{equation*}  
\item\label{thm2w1addoneb} 	 If the assumptions in Theorem \ref{thm2w2} when $\lambda_j$ is replaced with $\bar{\lambda}_j$ are also satisfied, 
\begin{align}
\sqrt{{{T}}/{{\bar{\theta}_{\K}(\zeta)}}}(\widehat{\mathcal A}(\widehat{\mathcal W})- \mathcal A \bar{\Pi}_{\K})\zeta \dto  N(0, \mathcal  C_{uu}) \quad \text{and}\quad  |\breve{\theta }_{\K}(\zeta) -  \bar{\theta}_ {\K}(\zeta) | \pto 0,
\end{align} where 
\begin{equation}
	{\bar{\theta}_{\K}(\zeta)}\coloneqq \langle \zeta,  ({\mathcal C}_{x\bar{z}}^\ast{\mathcal C}_{x\bar{z}})_{\K}^{-1}{\mathcal C}_{x\bar{z}}^\ast \mathcal C_{\bar{z}\bar{z}}{\mathcal C}_{x\bar{z}}({\mathcal C}_{x\bar{z}}^\ast{\mathcal C}_{x\bar{z}})_{\K}^{-1}\zeta
\end{equation}
and $\breve{\theta }_{\K}(\zeta)$ is defined by replacing $\bar{z}_t$ with $\breve{z}_t$ in the above.
	\end{enumerate} 
\end{theo}
\begin{proof}
	We note that 
	\begin{equation}
	\widehat{\mathcal A}(\widehat{\mathcal W})  = \widehat{\mathcal C}_{y\breve{z}}^\ast \widehat{\mathcal C}_{x\breve{z}}	(\widehat{\mathcal C}_{x\breve{z}}^\ast  \widehat{\mathcal C}_{x\breve{z}})_{\K}^{-1}	=   \mathcal A \bar{\Pi}_{\K} +   \widehat{\mathcal C}_{u\breve{z}}^\ast \widehat{\mathcal C}_{x\breve{z}}(\widehat{\mathcal C}_{x\breve{z}}^\ast  \widehat{\mathcal C}_{x\breve{z}})_{\K}^{-1}. \label{eq:b0}
	\end{equation}
From Assumption \ref{assumadd} and our construction of $\K$, the following can be shown: (a) $\|\widehat{\mathcal C}_{u\breve{z}}\|_{\HS} = O_p(T^{-1/2})$, (b) $\|\widehat{\mathcal C}_{x\breve{z}}	(\widehat{\mathcal C}_{x\breve{z}}^\ast  \widehat{\mathcal C}_{x\breve{z}})_{\K}^{-1}\|_{\op} = \alpha^{-1/2} = \abar^{1/2}$, and (c) $\|\widehat{\mathcal C}_{x\breve{z}}^\ast  \widehat{\mathcal C}_{x\breve{z}}-{\mathcal C}_{x\bar{z}}^\ast  {\mathcal C}_{x\bar{z}}\|_{\op}=\|\widehat{\mathcal C}_{x{z}}^\ast \widehat{\mathcal W} \widehat{\mathcal C}_{x{z}}-{\mathcal C}_{x{z}}^\ast  \mathcal W {\mathcal C}_{x{z}}\|_{\op} = O_p(T^{-1/2})$. Combining these with similar arguments used in our proofs of Theorems \ref{thm2w1} and \ref{thm2w2}, the desired results are obtained. 
\end{proof}

\begin{theo}\label{thm2w1addone2}
	Suppose that Assumption \ref{assumadd} holds. 
\begin{enumerate}[(i)] 
		\item\label{thm2w1addone2a} 	If the assumptions in Theorem \ref{thm:convrate} when $\lambda_j$, $f_j$ and $\xi_j$ are, respectively, replaced with $\bar{\lambda}_j$, $\bar{f}_j$ and $\bar{\xi}_j$ are also satisfied, then $	\|\widehat{\mathcal A}(\widehat{\mathcal W}) - \mathcal A\bar{\Pi}_{\K}\|_{\op}^2=O_p(T^{-1}\abar)$ as in Theorem \ref{thm2w1addone} and  and
			\begin{equation} \label{eqthmconvrate}
				\|\mathcal A (\mathcal I-\bar{\Pi}_{\K})\|_{\op}^2 = O_p (T^{-1}\abar \max\{1, \abar^{(3-2\varsigma)/\rho}\} + \abar^{(1-2\varsigma)/\rho} ). 
			\end{equation} 	Thus, $\|\widehat{\mathcal A}(\widehat{\mathcal  W}) - \mathcal A\|_{\op} = o_p(1)$ for any $\rho > 2$ and $\varsigma > 1/2$. 
\item If the assumptions in Theorem \ref{thm:convrate2} when $\lambda_j$, $f_j$ and $\xi_j$ are, respectively, replaced with $\bar{\lambda}_j$, $\bar{f}_j$ and $\bar{\xi}_j$ are also satisfied, then  Theorem \ref{thm2w1addone}.\ref{thm2w1addoneb} holds and 
\begin{align*}
&\Vert \mathcal A(\bar \Pi_{\K}-\Pi_{\K}) \zeta \Vert   =\begin{cases}  O_p(T^{-1/2})  & \text{if }\rho/2+2<\varsigma+\delta_{\zeta},\\
	O_p( T^{-1/2}\max\{\log\abar, \abar^{(\rho/2-\varsigma-\delta_{\zeta}+2)/\rho} \} )  & \text{if }\rho/2+2\geq \varsigma+\delta_{\zeta},\\
\end{cases} \nonumber\\
&	\|\mathcal A(\Pi_{\K} - \mathcal I)\zeta\| = O_p(\abar^{(1/2-\varsigma -\delta_{\zeta})/\rho}).  
\end{align*}
	\end{enumerate} 
\end{theo}
\begin{proof}
Let $\breve{f}_j$ be the eigenvector corresponding to the $j$-th largest eigenvalue of $\widehat{\mathcal C}_{x\breve{z}}^\ast\widehat{\mathcal C}_{x\breve{z}}$ and let   $\bar{f}^s_j = \sgn\{\langle \breve{f}_j, \bar{f}_j \rangle \} \bar{f}_j$. Using the fact that $\|\widehat{\mathcal C}_{x\breve{z}}^\ast  \widehat{\mathcal C}_{x\breve{z}}-{\mathcal C}_{x\bar{z}}^\ast  {\mathcal C}_{x\bar{z}}\|_{\op}=\|\widehat{\mathcal C}_{x{z}}^\ast \widehat{\mathcal W} \widehat{\mathcal C}_{x{z}}-{\mathcal C}_{x{z}}^\ast  \mathcal W {\mathcal C}_{x{z}}\|_{\op} = O_p(T^{-1/2})$ and nearly identical arguments used in our proof of \eqref{eq000} and \eqref{eq001}, the following can be shown under the assumptions employed for (i):
	\begin{align}
		\Vert \breve{f}_j - \bar{f}_j^s \Vert^2 &= O_p(j^{2} T^{-1}), \label{eq000add} \\
		\|\mathcal A (\breve{f}_j - \bar{f}_j ^s)\|^2 &= O_p(T^{-1}) (j^{2-2\varsigma} + j^{\rho+2-2\varsigma}).\label{eq001add}
	\end{align}
Then, the desired result given in (i) can easily be obtained from \eqref{eq000add}, \eqref{eq001add} and similar arguments that we used to establish \eqref{eq:b09} and  \eqref{eq:b11}. 

Moreover, under the assumptions employed for (ii), the following can be shown by nearly identical arguments that are used to obtain \eqref{eqfhatdelta}:  \begin{equation}
		\langle \breve{f}_j-\bar{f}_j^s,\zeta\rangle^2 = 
		O_p(T^{-1})j^{-2\delta_{\zeta}+2} +  O_p(T^{-1})j^{-2\delta_{\zeta}+2+\rho}(1+o_p(1)). \label{eq002add}
	\end{equation} The remaining proof is almost identical to that of Theorem \ref{thm:convrate2};  by using \eqref{eq000add}-\eqref{eq002add}, it can be shown that $\Vert \widehat{\mathcal C}_{x\breve{z}} (\widehat{\mathcal C}_{x\breve{z}} ^\ast \widehat{\mathcal C}_{x\breve{z}})_{\K} ^{-1}\zeta - \mathcal C_{x\bar{z}} (\mathcal C_{x\bar{z}} ^\ast \mathcal C_{x\bar{z}})_{\K} ^{-1}\zeta\Vert = o_p(1)$. We can also analyze the term $\|\mathcal A(\bar{\Pi}_{\K}-\mathcal I)\zeta\|$ as $\|\mathcal A(\widehat{\Pi}_{\K}-\mathcal I)\zeta\|$ in our proof  of Theorem \ref{thm:convrate2}. The details are omitted.
\end{proof}


\section{Significance testing in functional endogenous linear model}\label{sec:hypo}
\phantomsection\label{rvlabel05}\commRV{Practitioners may often be interested in examining if various characteristics of $y_t$ depend on $x_t$.}  \commRV{For example, in our empirical application where $y_t(s)$ represents the wage of workers of skill level $s \in [0,1]$, practitioners might be interested in examining if the average wage ($\int_{0}^1 y_t(s)ds$) is affected by the considered explanatory variable $x_t$.} Likewise, because various characteristics of $y_t$ can be written as $\langle y_t,\psi \rangle$ for $\psi \in \mathcal H$, we in this section develop a significance test for examining if $\langle y_t,\psi \rangle$ is affected by $x_t$. Specifically, for any  $\psi \in \mathcal H$,  observe that 
\begin{equation*}
\langle y_t,\psi \rangle = \langle x_t,\mathcal A^\ast \psi \rangle + \langle u_t, \psi\rangle.
\end{equation*}
We then want to test the following null and alternative hypotheses:
\begin{equation} \label{eqhypo}
H_0: \mathcal A^\ast \psi = 0	\quad \text{v.s.} \quad H_1: \mathcal A^\ast \psi \neq 0.
\end{equation}
The null hypothesis means that the characteristic $\langle y_t,\psi \rangle $ of $y_t$ does not linearly depend on $x_t$. Note that $\widehat{\mathcal C}_{yz}\psi= \widehat{\mathcal C}_{xz} \mathcal A^\ast \psi+  \widehat{\mathcal C}_{uz}\psi$, and hence $\widehat{\mathcal C}_{yz}\psi$  reduces to $ \widehat{\mathcal C}_{uz}\psi$ if the null is true; moreover, in this case, $\sqrt{T}\widehat{\mathcal C}_{uz}\psi$ turns out to weakly converge to a $\mathcal H$-valued Gaussian random element under relevant assumptions.   Using this property, we develop a significance test, which is described by Theorem \ref{thmsigtest}.


\begin{theo}\label{thmsigtest}
Suppose that (i) $\mathcal C_{uu}$ is positive definite, (ii)  either Assumption \ref{assum1} or Assumption \ref{assum1a} holds, (iii) $\bar{\mathcal A}$ is the FIVE (if  Assumption \ref{assum1} holds) or the F2SLSE (if Assumption \ref{assum1a} holds) and  the other assumptions for $\|\bar{\mathcal A}-\mathcal A\|_{\op} \pto 0$ are satisfied (see Theorems \ref{thm2w1}, \ref{thm:convrate},  \ref{thm2w} and \ref{thm3:convrate}).   Let $\bar{u}_t = y_t- \bar{\mathcal A}x_t$ and let $\mathcal J$ denote $T \Vert  \hat{c}_{\psi}  ^{-1} \widehat{\mathcal C}_{yz} \psi\Vert^2$, where $\hat{c}_{\psi}^2 = \langle T^{-1}\sum_{t=1}^T \bar{u}_t\otimes \bar{u}_t (\psi), \psi \rangle$. Then, the following hold (below $\varkappa_j \sim_{\text{iid}} N (0,1)$). \begin{enumerate}[(i)]
\item \label{thm5.1}  $\mathcal J \dto \sum_{j=1} ^{\infty} \mu_j \varkappa_j^2$ under $H_0$ of \eqref{eqhypo} while $\mathcal J \pto \infty$ under $H_1$  of \eqref{eqhypo}.
\item \label{thm5.2} \commRV{Let $\hat{q}_{1-a_0}$ be the $100(1-a_0)\%$ quantile of $\sum_{j=1} ^{D} \hat{\mu}_j \varkappa_j^2$ for $a_0 \in (0,1)$, $D \to \infty$ and $D = o(T^{1/2})$. Then $\mathbb{P}\left\{ \mathcal J >\hat{q}_{1-a_0} \right\} \to a_0$ under $H_0$ of \eqref{eqhypo} while $\mathbb{P}\left\{ \mathcal J >\hat{q}_{1-a_0} \right\} \to 1$ under $H_1$ of \eqref{eqhypo}.} 
\end{enumerate}
\end{theo}
\begin{proof}
	For notational convenience,  let $c_\psi = \langle \mathcal C_{uu}\psi, \psi \rangle ^{1/2}$. To show \ref{thm5.1}, first note that \begin{equation}
					\sqrt{T} \widehat {\mathcal C}_{yz} \psi =   \sqrt{T}\widehat{\mathcal C}_{xz} \mathcal A^\ast \psi   +   \sqrt{T}\widehat{\mathcal C}_{uz} \psi .\label{tm5.eq1}
				\end{equation} 
				Under $H_0$, the first term in \eqref{tm5.eq1} is equal to zero, and thus $\sqrt{T}\widehat{\mathcal C}_{yz} \psi = T^{-1/2} \sum_{t=1} ^Tc_\psi \psi_t$, where $\psi_t =c_\psi^{-1} \langle u_t, \psi \rangle z_t$. Then, note that $\mathbb E [ \psi_t ] = 0$, $\mathbb E[\psi_t \otimes \psi_t] =   \mathcal C_{zz}$ and  $\{   \langle \psi_t , \zeta \rangle  \}_{t \geq1}$ is a real-valued martingale difference sequence for any $\zeta \in \mathcal H$. Thus, by applying  nearly identical arguments that are used to show \eqref{eqpf0008add} and \eqref{eqpf0009add}, it can be shown that, for any $\zeta \in \mathcal H$ and $m > 0$, $T^{-1/2}\sum_{t=1}^T \langle \psi_t , \zeta \rangle \dto N (0, \langle\mathcal C_{zz} \zeta, \zeta\rangle)$ and $	\limsup_{n\to \infty}	\limsup_{T} \mathbb{P}( \sum_{j=n+1}^\infty  \langle  T^{-1/2}\sum_{t=1} ^T \psi_t, g_j  \rangle^2  > m) = 0$ since $\mathcal C_{zz}$ is Hilbert-Schmidt. Therefore, $T^{-1/2} \sum_{t=1} ^T \psi_t \dto N(0,\mathcal C_{zz})$, and we note that $N(0,\mathcal C_{zz}) \stackrel{d}{=} \sum_{j=1} ^\infty \sqrt{\mu}_j \varkappa_j g_j$,  where $\varkappa_j \sim_{\text{iid}} N(0,1)$ across $j$. The rest of the proof follows from the consistency of $\hat{c}_\psi $ (Corollaries~\ref{cor1} and~\ref{cor2}), continuous mapping theorem, and orthonormality of $\{  g_j \}_{j \geq 1}$. 
				
				Under $H_1$, the first term in \eqref{tm5.eq1} is not equal to zero, and, by combining the results given above, we find that $\Vert \sqrt{T} \widehat{\mathcal C}_{yz} \psi\Vert ^2 = T\Vert  \mathcal C_{xz} \mathcal A^\ast \psi\Vert^2 + O_p(1)$ holds in this case. Since $\ker \mathcal C_{xz} = \{ 0 \}$ under either of Assumptions~\ref{assum1} or \ref{assum1a}, we find that $\Vert  \mathcal C_{xz} \mathcal A^\ast \psi\Vert^2 >0$ under $H_1$, and therefore $\mathcal J \pto \infty$.
				
				To show \ref{thm5.2}, note that $
				|\sum_{j=1} ^{\infty} \mu_j \varkappa_j ^2 - \sum_{j=1} ^{D} \hat\mu_j \varkappa_j ^2| \leq \sum_{j=1} ^{D} |\mu_j - \hat \mu_j| |\varkappa_j ^2| + \sum_{j=D+1} ^\infty \mu_j \varkappa_j^2.$ 	The first term in the right hand side is bounded above by $\sup_{j\geq 1} |\mu_j - \hat{\mu}_j| \sum_{j=1} ^{D} \varkappa_j ^2$, where $D^{-1}\sum_{j=1} ^{D} \varkappa_j ^2 = O_p(1)$ by Markov's inequality, and $\sup_{j\geq 1} |\mu_j -  \hat{\mu}_j| \leq \Vert \widehat{\mathcal C}_{zz} - \mathcal C_{zz} \Vert_{\op} \leq O_p(T^{-1/2})$ under Assumption~\ref{assum1}.\ref{assum1.7} and Lemma 4.2 in \cite{Bosq2000}. Since $\mathcal C_{zz}$ is nonnegative, we also have $\mathbb E[|\sum_{j=D+1} ^\infty \mu_j \varkappa_j^2|] \leq  \sum_{j=D+1} ^\infty \mu_j \to 0$ as $D \to \infty$, which implies that $\sum_{j=D+1} ^\infty \mu_j \varkappa_j^2$ is $o_p(1)$. Since $D\to \infty$ and $D/\sqrt{T} \to 0$ as $T \to \infty$, we find that $	|\sum_{j=1} ^{\infty} \mu_j \varkappa_j ^2 - \sum_{j=1} ^{D} \hat\mu_j \varkappa_j ^2| = o_p(1)$. \commRV{From this result, we find that $\hat{\eta}_{1-a_0}$ converges to $\eta_{1-a_0}$ (see e.g., Lemma 21.2 of \citealp{vaart_1998}), where $\eta_{1-a_0}$ is the  $100(1-a_0)\%$ quantile of the distribution function $G$ of $\sum_{j=1} ^{\infty} {\mu}_j \varkappa_j^2$. It is then obvious that $1-\mathbb{P}\left\{ \mathcal J >\hat{\eta}_{1-a_0} \right\}$ converges to $G(\eta_{1-a_0})$. By combining this with the previous result Theorem \ref{thmsigtest}.\ref{thm5.1}, the desired limiting behavior of $\mathbb{P}\{ \mathcal J >\hat{q}_{1-a_0}\}$ is established. }
\end{proof}
\commRV{Theorem \ref{thmsigtest}.\ref{thm5.1} shows that the asymptotic null distribution of the proposed statistic does not depend on  $\psi \in \mathcal H$, but does depend on all the eigenvalues of $\mathcal C_{zz}$; this means that there are infinitely many nuisance parameters. However, we can approximate the limiting distribution using the estimated eigenvalues of $\mathcal C_{zz}$ and thus implement a valid asymptotic test without a significant difficulty, as detailed in Theorem~\ref{thmsigtest}.\ref{thm5.2}; once the estimated eigenvalues $\hat{\mu}_j$ are obtained, then the approximate quantile $\hat{q}_{1-a_0}$ for any significance level $a_0 \in (0,1)$ can easily be computed by Monte Carlo simulations. Thus, implementation of the proposed test in practice is straightforward, which will be further illustrated in Section~\ref{sec:sim}.}

\begin{remark}\label{remtest} \normalfont
The test proposed in Theorem \ref{thmsigtest} can obviously be extended to examine the following hypotheses:  
\begin{equation} \label{eqremtest}
H_0: \mathcal A^\ast \psi = \psi_0	\quad \text{v.s.} \quad H_1: \mathcal A^\ast  \psi \neq \psi_0,
\end{equation}
for any  $\psi_0 \in \mathcal H$.  The extension only requires redefining  $\mathcal J$ as $T\Vert \hat c_\psi ^{-1} (\widehat{\mathcal C}_{yz}\psi - \widehat{\mathcal C}_{xz}\psi_0)\Vert^2 $, and this does not make any change in the convergence results given in Theorem \ref{thmsigtest}.  
\end{remark} 

\subsection*{Finite sample performance}\label{sec_app_num2}
In this section, we explore the finite sample performance of the test for examining the null and alternative hypotheses \eqref{eqremtest}, which is proposed in Remark \ref{remtest} of Section~\ref{sec:hypo}. To this end, we consider the DGP employed in Section~\ref{subsub: exp1.2}.
The test statistic $\mathcal J$ is computed with $\hat{c}_\psi$ obtained from the FIVE (see Theorem \ref{thmsigtest} and Corollary \ref{cor1}).   For each realization of the DGP, the critical value  at 5\% significance level is obtained from 500 Monte Carlo simulations of the distribution given in Theorem~\ref{thmsigtest}.\ref{thm5.2} with $D = \lceil T^{1/3} \rceil$.  The finite sample properties of the test are investigated by computing its rejection probabilities when $\mathcal A^\ast \zeta = \psi_0 + c_\zeta \tilde \psi$ holds, where $ \zeta$  is defined in Section \ref{subsub: exp1.2}, $c_\zeta  \in \{0,0.01,\ldots,0.5\}$, and  $\tilde \psi$ is a perturbation element with unit norm and randomly generated for each realization of the DGP. Specifically,  $\tilde \psi = \ddot{\psi}/\|\ddot{\psi}\|$ and $\ddot{\psi}=\sum_{j=1} ^{11} \ddot q_{3,j} \xi_j$, where $\ddot q_{3,j}\sim_{\text{iid}}\text{N} (0,0.5^{2(j-1)})$ across $j$ and $\mathbb{E}[\ddot q_{1,i}\ddot q_{3,j}]=0$ for all $i$ and $j$. This section only considers the case where $\sigma_{\eta} = 0.5$; in unreported simulations, we also investigated the performance of the test (i) when $\hat{c}_\psi$ is computed from the F2SLSE and (ii) when $\sigma_\eta$ is given by $0.9$, but found no significant difference.

\begin{figure}[h!] 
\caption{Simulation results for Experiment 2: rejection probability of $\mathcal J$ when $\mathcal A ^\ast \psi =\psi_0 +  c_\zeta  \tilde \psi $\label{fig.power} }\vspace{-.5em}
\begin{subfigure}{.32\textwidth}\subcaption{Sparse Design \label{fig.power.j1.d}}
\includegraphics[width=\textwidth]{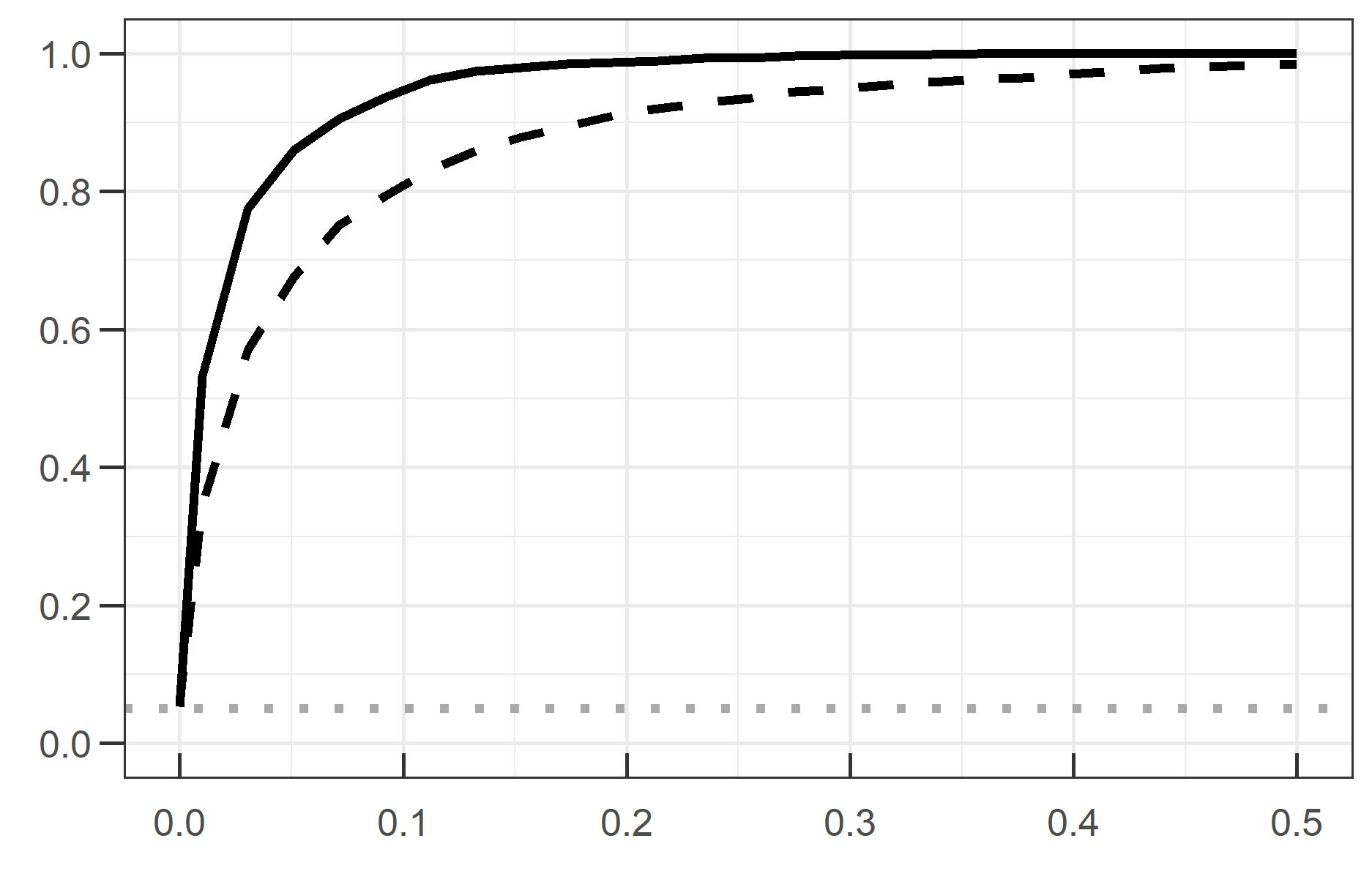} 
\end{subfigure}
\begin{subfigure}{.32\textwidth} \subcaption{Exponential Design  \label{fig.power.j1.e}}
\includegraphics[width=\textwidth]{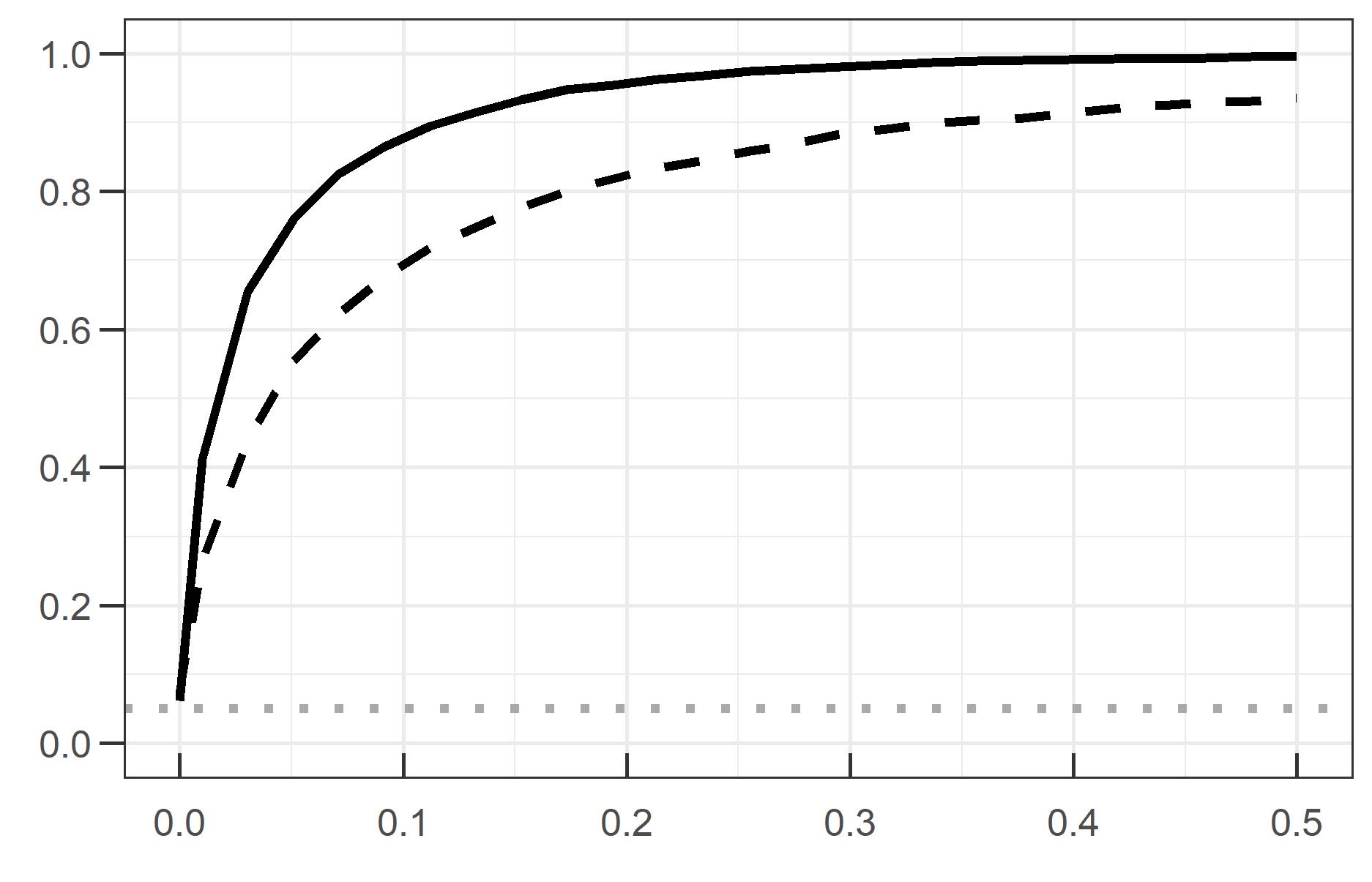} 
\end{subfigure}
\begin{subfigure}{.32\textwidth} \subcaption{Noisy Design  \label{fig.power.j1.f}}
\includegraphics[width=\textwidth]{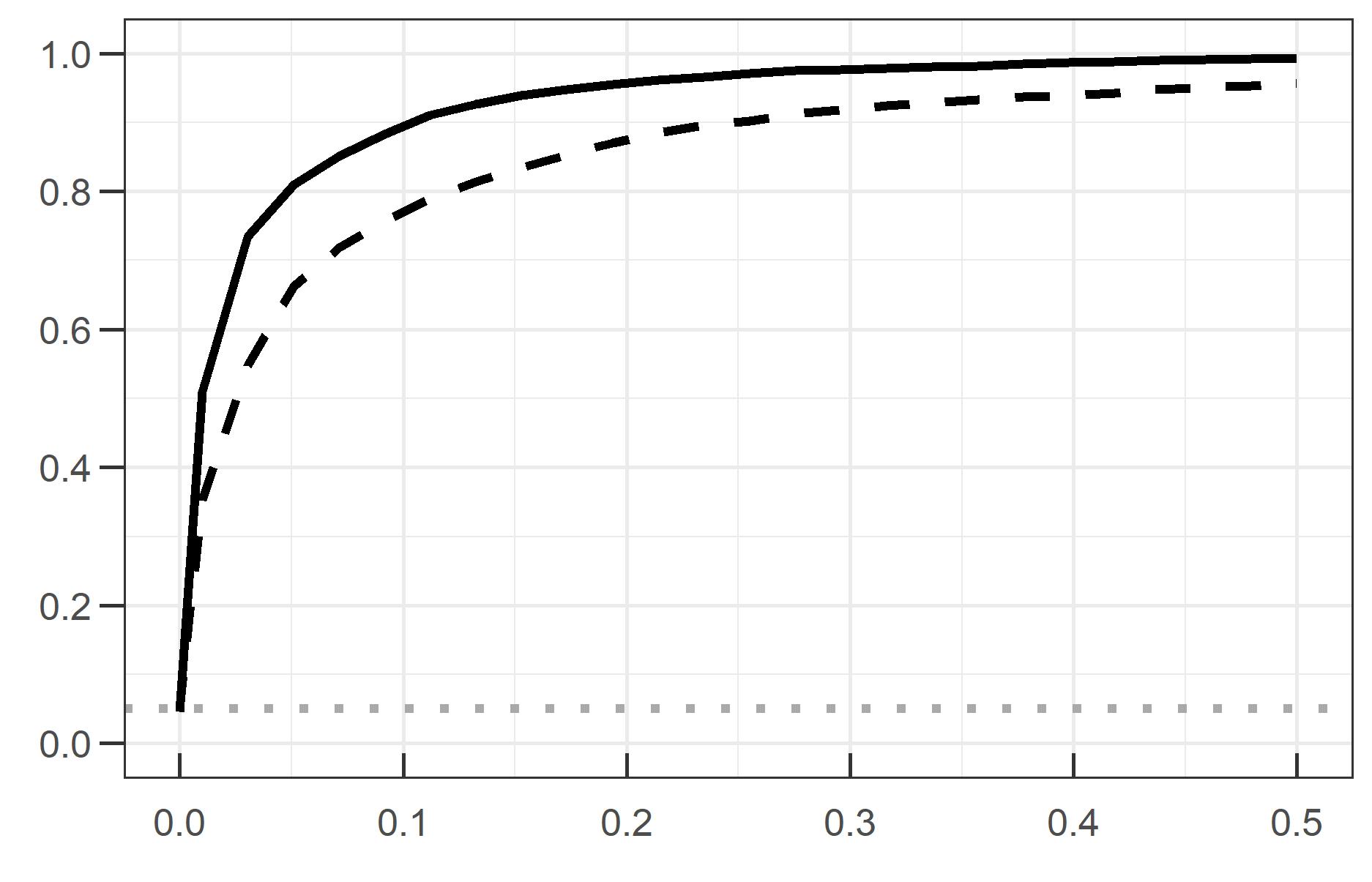} 
\end{subfigure}
{\footnotesize Notes: Based on 1,000 Monte Carlo replications. The rejection probability of $\mathcal J$ when $\hat c_\psi$ is computed from the FIVE is reported according to the value of $c_\zeta  \in \{0,0.01,\ldots, 0.5\}$. 
}
\end{figure}

Figure~\ref{fig.power} shows the rejection probability of the test depending on $c_\zeta $. The dashed and solid lines indicate the rejection probabilities when $T = 200$ and $500$, respectively, and the dotted horizontal lines indicate the nominal size of the test. As expected, the proposed test exhibits a higher power as $c_\zeta $ gets deviated from zero, and it seems that the power of the test more rapidly increases in the sparse design compared to those in the other designs. Moreover, in all the considered cases, the test has excellent size control, as it can be seen from the case where $c_\zeta  = 0$.

\bibliographystyle{apalike}

\bibliography{bib_fregiv}

\end{document}